\DeclareMathOperator*{\defeq}{\triangleq}
\newtheorem{theorem}{Theorem}
\newtheorem{corollary}{Corollary}
\newtheorem{lemma}{Lemma}
\newcommand{\bit}{\begin{itemize}}
\newcommand{\eit}{\end{itemize}}
\newcommand{\bc}{\begin{center}}
\newcommand{\ec}{\end{center}}
\newcommand{\ba}{\begin{array}}
\newcommand{\ea}{\end{array}}
\newcommand{\beq}{\begin{equation}}
\newcommand{\eeq}{\end{equation}}
\newcommand{\beqn}{\begin{equation*}}
\newcommand{\eeqn}{\end{equation*}}
\newcommand{\bean}{\begin{eqnarray*}}
\newcommand{\eean}{\end{eqnarray*}}
\newcommand{\bea}{\begin{eqnarray}}
\newcommand{\eea}{\end{eqnarray}}
\def\E{\mathbb{E}}
\newcommand{\Lc}{{\mathcal L}}
\newcommand{\Nc}{{\mathcal N}}
\newcommand{\Rc}{{\mathcal R}}
\newcommand{\Sc}{{\mathcal S}}
\newcommand{\Zc}{{\mathcal Z}}
\newcommand{\dsum}{d^{*}_{\text{sum}}}
\newcommand{\dso}{d^{*}}
\newcommand{\dco}{d_c^{*}}
\newcommand{\non}{\nonumber}
\newcommand{\Hen}{\mathbb{H}}
\newcommand{\hen}{\mathrm{h}}
\newcommand{\Imu}{\mathbb{I}}
\newcommand{\bln}{n}
\newcommand{\Ho}{\mathcal{H}_{\text{out}}}
\DeclareMathOperator*{\argmax}{arg\,max}
\begin{document}
\sloppy

\title{How to Remove the Secrecy Constraints in Wireless Networks?}
\title{How to Break the Limits of Secrecy Constraints in Communication Networks?}
\title{Adding Common Randomness at the Transmitters Can Remove the Secrecy Constraints in Communication Networks}
\title{Adding Common Randomness  Can Remove  the Secrecy Constraints in Communication Networks}
\author{Fan Li and Jinyuan Chen  
\thanks{Fan Li and Jinyuan Chen is with Louisiana Tech University, Department of Electrical Engineering, Ruston, USA (emails: fli005@latech.edu, jinyuan@latech.edu).  This work will be presented at ISIT2019.}
}

\maketitle
\pagestyle{headings}

\begin{abstract}
  
In communication networks secrecy constraints \emph{usually} incur an extra limit in capacity or generalized degrees-of-freedom (GDoF), in the sense that  a penalty in capacity or GDoF is  incurred due to the secrecy constraints. 
Over the past decades a significant amount of effort  has been made by the researchers to understand the  limits of secrecy constraints in communication networks.
In this work, we focus on how to remove the secrecy constraints in communication networks, i.e., how to remove the GDoF penalty due to  secrecy constraints. 
We begin with three basic settings:  a two-user symmetric Gaussian interference channel with confidential messages,  a symmetric Gaussian wiretap channel with a helper,   and a two-user  symmetric Gaussian multiple access wiretap channel. 
Interestingly,  in this work we show that adding common randomness at the transmitters can \emph{totally} remove the penalty in  GDoF or GDoF region of the three settings considered here. The results reveal that adding common randomness at the transmitters is a powerful way to remove the secrecy constraints in communication networks in terms of GDoF performance. Common randomness can be generated offline.  
The role of the common randomness is to jam the  information signal at the eavesdroppers, without causing too much interference at the legitimate receivers.  
To accomplish this role, a new method of Markov chain-based interference neutralization is proposed in the achievability schemes utilizing common randomness. 
From the practical point of view,  we hope to use less  common randomness to remove  secrecy constraints in terms of GDoF performance. 
With this motivation, for most of the cases we characterize  the minimal  GDoF of  common randomness to remove  secrecy constraints, based on our derived converses and achievability.

\end{abstract}

\section{Introduction}

For the secure communications  with secrecy constraints, the confidential messages need to be transmitted reliably to the legitimate receiver(s), without leaking the confidential information to the  eavesdroppers (cf.~\cite{Shannon:49, Wyner:75}). 
In  communication networks  secrecy constraints \emph{usually} impose an extra limit in capacity or generalized degrees-of-freedom (GDoF), in the sense that  a penalty in capacity or GDoF is  incurred due to secrecy constraints (cf.~\cite{Wyner:75, LMSY:08, LBPSV:09,  KGLP:11, HKY:13, XU:14, XU:15, GTJ:15, MM:16, MU:16, ChenIC:18}).  
Since Shannon's work  of \cite{Shannon:49} in 1949,  a significant amount of effort  has been made by the researchers to understand the  limits of secrecy constraints in communication networks  (cf.~\cite{ Wyner:75, YTL:08, LMSY:08, LBPSV:09,  LLPS:10, KGLP:11,  LLP:11, HKY:13, XU:14, XU:15, GTJ:15, GJ:15, MM:16, MU:16, ChenIC:18, TekinYener:08d, KG:15, CG:18arxiv, BSP15,    LYT:08, LP:08, FW:16,   MXU:17} and the references therein). 
In this work, we focus on how to remove the secrecy constraints in communication networks, i.e., how to remove the GDoF penalty due to  secrecy constraints.

In this work we consider three basic settings:  a two-user symmetric Gaussian interference channel with secrecy constraints,  a symmetric Gaussian wiretap channel with a helper,   and a two-user  symmetric Gaussian multiple access wiretap channel. 
Interestingly, we show that adding common randomness at the transmitters can remove the secrecy constraints in these three settings, i.e., it  can \emph{totally} remove the penalty in  GDoF or GDoF region of the three settings. 
Let us take a two-user symmetric Gaussian interference channel as an example. 
For this interference channel \emph{without} secrecy constraints, the  GDoF is a  ``W'' curve (see Fig.~\ref{fig:ICrGDoF} and \cite{ETW:08}).  If  secrecy constraints are imposed on this channel, 
then the secure GDoF is  significantly reduced,  compared to the original ``W'' curve (see~Fig.~\ref{fig:ICrGDoF} and \cite{ChenIC:18}). It implies that  a GDoF penalty is incurred due to  secrecy constraints.  Interestingly we show in this work  that  adding common randomness at the transmitters can totally remove the GDoF penalty due to  secrecy constraints (see Fig.~\ref{fig:ICrGDoF}).  
The results reveal that adding common randomness at the transmitters is a constructive way to remove the secrecy constraints in terms of GDoF performance in communication networks.

\begin{figure}[t!]
\centering
\includegraphics[width=7cm]{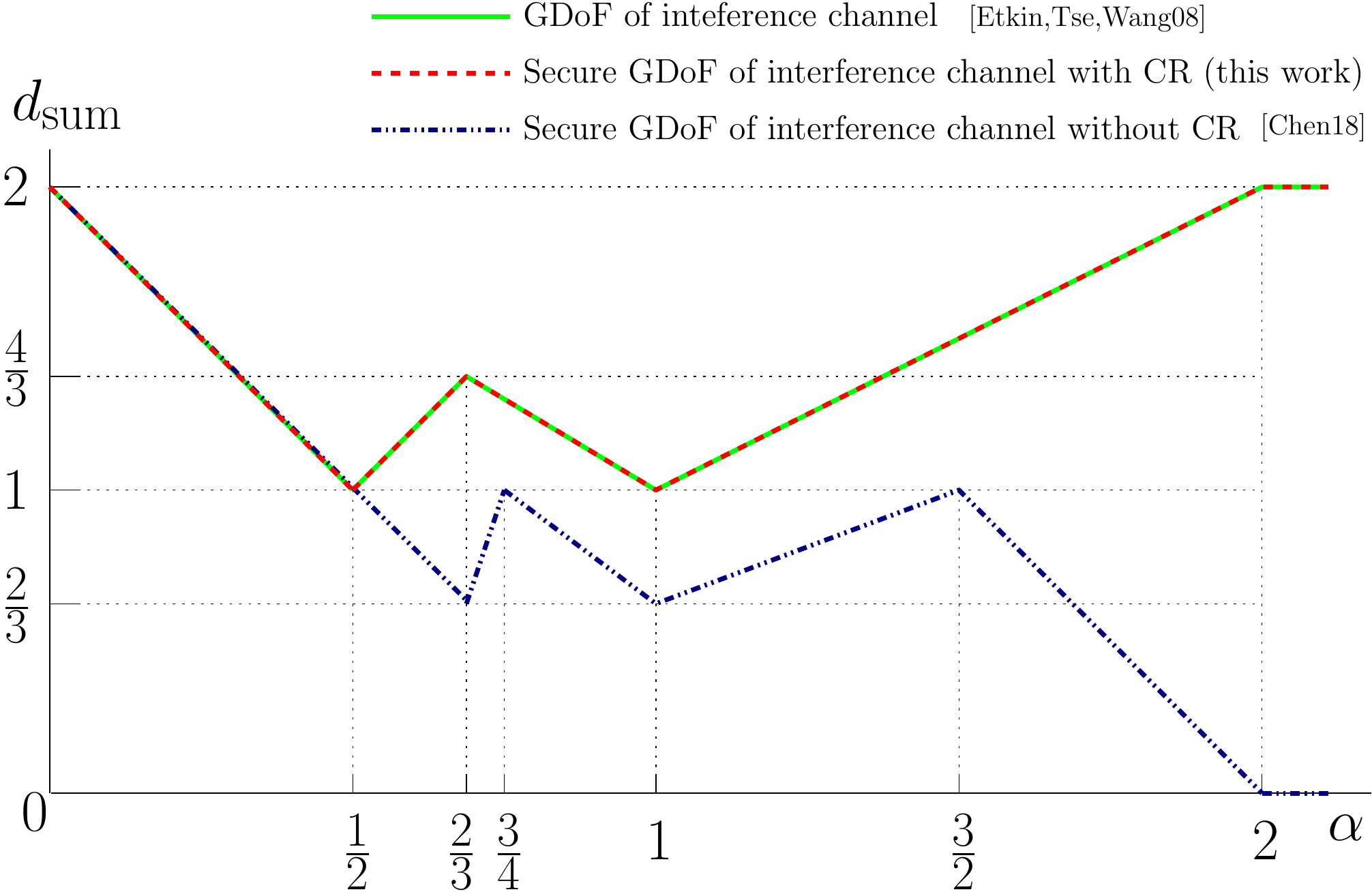}
 \vspace{-.05 in}
\caption{The optimal secure sum GDoF vs. $\alpha$, for  two-user symmetric Gaussian interference channels without and with common randomness (CR), where $\alpha$ is a channel parameter indicating the interference-to-signal ratio.} 
\label{fig:ICrGDoF}
\vspace{-15pt}
\end{figure}

The role of the common randomness is to jam the  information signal at the eavesdroppers, without causing too much interference at the legitimate receivers.   By jamming the  information signal at the eavesdroppers with common randomness, we seek to remove the penalty in GDoF.  However, the jamming signal generated from the common randomness needs to be designed carefully so that it must not create too much interference at the legitimate receivers.  Otherwise, the interference will incur a new penalty in GDoF.
To accomplish the role of the common randomness, a new method of Markov chain-based interference neutralization is proposed in the achievability schemes. 
The idea of the Markov chain-based interference neutralization method is given as follows: the common randomness  is used to generate a certain number of  signals with specific  directions and powers;   one signal is used to jam the information signal at an eavesdropper but it will create an interference at a legitimate receiver; this interference will be neutralized by another signal generated from the same common randomness; the added signal also creates another interference but will be neutralized by the next generated signal; this process repeats until the residual interference is under the noise level.  Since  one signal is used to neutralize the previous signal and will be  neutralized by the next signal, it forms a  Markov chain for this  interference neutralization process.

Common randomness can be generated offline.  
From the practical point of view,  we expect to use less  common randomness to remove  secrecy constraints,  in terms of GDoF performance. 
With this motivation,   we also characterize  the minimal  GDoF of the common randomness to remove the  secrecy constraints for most of the cases, based on our derived converses and achievability.

In terms of the organization of this work, section~\ref{sec:system} describes the system models and section~\ref{sec:mainresult} provides the  main results. The converse  is described  in Section~\ref{sec:converse}.
The achievability  is  provided  in  Sections~\ref{sec:CJGauIC}-\ref{sec:CJGauwiretap} and some of the appendices, while  a scheme example is described in  Section~\ref{sec:schemeexample}.
The work is concluded in Section~\ref{sec:concl}.
Regarding the notations,  $\Imu(\bullet)$,  $\Hen(\bullet)$  and $\hen(\bullet)$   denote the mutual information,  entropy, and differential entropy,  respectively.  
 The notations of $\Zc^+$, $\Rc$  and $\Nc$  denote the sets of positive integers, real numbers, and nonnegative integers, respectively.   
We define that $(\bullet)^+= \max\{\bullet, 0\}$.  We consider all the logarithms with base~$2$.  The notation of  $f(a)=o(g(a))$ implies that $\lim_{a \to \infty} f(a)/g(a) =0$.

\section{The three system models  \label{sec:system} }

For this work we focus on three settings:  a two-user interference channel with secrecy constraints,  a wiretap channel with a helper,   and a two-user  multiple access wiretap channel. 
These three settings share a common channel input-output  relationship,  given as 
\begin{align}
&y_{1} (t) =  \sqrt{P^{\alpha_{11}}} h_{11} x_{1} (t) +   \sqrt{P^{\alpha_{12}}} h_{12} x_{2}(t)  +z_{1} (t), \label{eq:ICchannelGen1}  \\
&y_{2}(t)  = \sqrt{P^{\alpha_{21}}} h_{21} x_{1}(t)  +   \sqrt{P^{\alpha_{22}}} h_{22} x_{2}(t)  +z_{2}(t),  \quad t \in \{1,2, 3, \cdots, n\}    \label{eq:ICchannelGen2} 
\end{align}
 where $x_{\ell}(t) $ represents the transmitted signal of transmitter~$\ell$ at time $t$, with a normalized power constraint $\E |x_{\ell}(t)|^2 \leq 1$;  $y_{k}(t) $ is the  signal received at receiver~$k$; and $z_{k}(t) \sim \mathcal{N}(0, 1)$ is the additive white Gaussian noise, for $k, \ell  \in \{1,2\}$.  The term $\sqrt{P^{\alpha_{k\ell}}} h_{k\ell}$ captures the channel gain between receiver~$k$ and transmitter~$\ell$, where $h_{k\ell} \in (1, 2]$ denotes  the  channel coefficient.   The exponent $\alpha_{k\ell}$ represents the \emph{link strength}  for the channel between receiver~$k$ and transmitter~$\ell$.   The parameter  $P \geq 1$ reflects the base of link strength of all the links.  
Note that  $ \sqrt{P^{\alpha_{k\ell}}} h_{k\ell}$ can represent any real channel gain  bigger or equal to $1$. Thus,   the  above model in \eqref{eq:ICchannelGen1} and \eqref{eq:ICchannelGen2} is able to describe  the general channels,  in the sense of  secure capacity approximation.
The channel parameters $\{h_{k\ell}, \alpha_{k\ell} \}_{k\ell}$ are assumed to be available at all the nodes.
 In this work we focus on the \emph{symmetric} case such that \[\alpha_{11} = \alpha_{22} =1, \  \alpha_{12}= \alpha_{21}=\alpha, \  \alpha >0.\]  
The three settings considered here are different, mainly on the number of confidential messages, the intended receivers of the messages, and the secrecy constraints. 
In what follows, we will present the details of  three settings.

\subsection{Interference channel with secrecy constraints (IC-SC)} \label{sec:sysICtwouser}

In the setting of interference channel, transmitter~$\ell$ intends to send the confidential message $w_{\ell}$ to receiver~$\ell$ using $n$ channel uses, where the message  $w_{\ell}$ is independently and uniformly chosen from a set $\mathcal{W}_{\ell} \defeq  \{1, 2, 3,  \cdots, 2^{nR_{\ell}}\}$, for $\ell\in\{1, 2\}$. To transmit $w_{\ell}$,  a   function 
  \[f_{\ell}: \mathcal{W}_{\ell} \times  \mathcal{W}_{c}   \to   \mathcal{R} ^{\bln}\] 
 is used to map  $w_{\ell} \in \mathcal{W}_{\ell}$  to the signal  $ x_{\ell}^{\bln}  = f_{\ell}( w_{\ell}, w_{c})   \in  \mathcal{R} ^{\bln}$, where $w_{c} \in  \mathcal{W}_{c} $ denotes the \emph{common randomness} that is available at both transmitters but not at the receivers.  
We assume that $w_{c}$ is uniformly and independently chosen from a set $\mathcal{W}_{c} \defeq  \{1, 2,  \cdots, 2^{nR_c}\}$.  In our setting,  $w_1, w_2$ and $w_{c}$ are assumed to be mutually independent. 
The  rate tuple $(R_1(P, \alpha), R_2 (P, \alpha), R_c (P, \alpha))$ is said to be achievable if there exists a sequence of  $n$-length codes such that each receiver can decode its desired message reliably, that is, \[ \text{Pr}[ \hat{w_k} \neq w_{k}  ]  \leq \epsilon,  \quad  \forall k\in \{1,2\} \]  for  any $\epsilon>0$,  and the transmission of the messages is secure, that is, \[ \Imu(w_1; y_{2}^{\bln})  \leq  \bln \epsilon \quad  \text{and} \quad  \Imu(w_2; y_{1}^{\bln})   \leq  \bln \epsilon\] (known as weak secrecy constraints), as $n$ goes large.  
The secure capacity region $\bar{C} (P, \alpha)$ represents the collection of  all the achievable  rate tuples $(R_1 (P, \alpha), R_2 (P, \alpha), R_c (P, \alpha))$.
The  secure GDoF  region $\bar{\mathcal{D}} ( \alpha)$ is defined as 
 \begin{align}
\bar{\mathcal{D}} ( \alpha)   \defeq & \Big\{ (d_1, d_2, d_c) :         \exists  \bigl( R_1 (P, \alpha), R_2 (P, \alpha), R_c (P, \alpha) \bigr) \in \bar{C}(P, \alpha) \non\\
 & \quad s. t. \quad  d_c = \lim_{P \to \infty}   \frac{  R_c (P, \alpha)}{ \frac{1}{2} \log P},   \  d_k = \lim_{P \to \infty}   \frac{  R_k (P, \alpha)}{ \frac{1}{2} \log P}, \  \forall k \in \{1,2\}    \Big\} .  \non
  \end{align}
The secure GDoF region  $\mathcal{D} ( d_c, \alpha)$ is defined as 
 \[\mathcal{D} ( d_c, \alpha)\defeq  \{  (d_1, d_2 ) :         \exists (d_1, d_2, d_c) \in   \bar{\mathcal{D}} ( \alpha) \}\]
 which is a function of $d_c$ and $\alpha$.
 The  secure sum GDoF is then defined  as \[ d_{\text{sum}}   (d_c, \alpha)\defeq     \max_{d_1, d_2: (d_1, d_2) \in \mathcal{D} ( d_c, \alpha) }    d_1+ d_2.\]
For this setting we are interested in  the maximal (optimal) secure sum GDoF defined as \[\dsum (\alpha) \defeq      \max_{d_c: d_c \geq 0}   d_{\text{sum}}   (d_c, \alpha).\]  
 We are also interested in the minimal (optimal) GDoF of the common randomness to achieve the maximal secure sum GDoF, defined as
\[\dco (\alpha) \defeq        \min_{d_c: \  d_{\text{sum}}   (d_c, \alpha) =  \dsum (\alpha) }    d_c .\] 
Note that degrees-of-freedom (DoF) can be treated as a specific point of GDoF by considering  $\alpha_{12}= \alpha_{21}= \alpha_{22}= \alpha_{11}=1$.
 
\subsection{The wiretap channel with a helper (WTH)}  \label{sec:syswchr}

In the setting of wiretap channel with a helper, transmitter~1 wishes to send the confidential message $w_1$  to receiver~1.  This setting is slightly different from the  previous interference channel setting, as transmitter~2 will just act as a helper without sending any message in this setting ($w_2$ can be set as empty).  
For transmitter~1, the mapping function $f_{1}$  is similar as that in the interference channel  described in Section~\ref{sec:sysICtwouser}. 
For transmitter~2  (helper), a  function  $f_{2}:   \mathcal{W}_{c}   \to   \mathcal{R} ^{\bln}$  maps  $ w_{c} \in  \mathcal{W}_{c} $  to the signal  $ x_{2}^{\bln}  = f_{2}(w_{c})   \in  \mathcal{R} ^{\bln}$,  where $ w_{c} \in  \mathcal{W}_{c} $ denotes the common randomness that is available at both  transmitters but not at the receivers. 
As before, we assume that $w_{c}$ is uniformly and independently chosen from a set $\mathcal{W}_{c} = \{1, 2, 3, \cdots, 2^{nR_c}\}$ and $w_{1}$ and $w_{c}$ are mutually independent.
 A  rate pair $(R_1(P, \alpha), R_c (P, \alpha))$ is said to be achievable if there exists a sequence of $n$-length codes such that   receiver~1 can reliably decode its desired message $w_1$ and the transmission of the message is secure such that   $\Imu(w_1; y_{2}^{\bln})   \leq  \bln \epsilon$, for  any $\epsilon >0$ as $n$ goes large.
 The  secure capacity region $\bar{C} (P, \alpha)$ denotes the collection of all achievable secure rate pairs $(R_1 (P, \alpha), R_c (P, \alpha))$. A   secure GDoF  region  is defined as 
\[\bar{\mathcal{D}} ( \alpha)  \defeq \Big\{ (d, d_c) :         \exists  \bigl( R_1 (P, \alpha), R_c (P, \alpha) \bigr)  \in  \bar{C}(P, \alpha ), \ s. t. \  d_c =  \lim_{P \to \infty}   \frac{  R_c (P, \alpha)}{ \frac{1}{2} \log P},   d = \lim_{P \to \infty}   \frac{  R_1 (P, \alpha)}{ \frac{1}{2} \log P}     \Big\} 
. \]
We are interested in the  maximal (optimal) secure GDoF defined as  
\[ \dso (\alpha) \defeq     \max_{d, d_c:  (d, d_c) \in  \bar{\mathcal{D}} ( \alpha) }   d.\] 
We are also interested in the minimum (optimal) GDoF of the common randomness to achieve the maximal secure  GDoF, defined as
\[ \dco (\alpha) \defeq      \min_{d_c :  (\dso(\alpha), d_c) \in  \bar{\mathcal{D}} ( \alpha)  }    d_c .\]

\subsection{Multiple access wiretap channel (MAC-WT)}\label{sec:sysmawc}

Let us now consider the two-user Gaussian multiple access wiretap channel.
The system model of this channel is similar as that of the interference channel defined in Section~\ref{sec:sysICtwouser}. One difference is that both messages $w_1$ and $w_2$ are intended to receiver~1 in this setting.  Another difference is that  receiver~2 now is the eavesdropper. Both messages need to be secure from receiver~2 and the secrecy constraint becomes  $\Imu(w_1, w_2; y_{2}^{\bln})  \leq  \bln \epsilon$. 
The definitions of the rate tuple $(R_1(P, \alpha), R_2 (P, \alpha), R_c (P, \alpha))$,     secure capacity region $\bar{C} (P, \alpha)$, and secure GDoF  regions $\bar{\mathcal{D}} ( \alpha)$ and $\mathcal{D} ( d_c, \alpha)$  follow from that in Section~\ref{sec:sysICtwouser}. 
In this setting,  the secure GDoF  region  $\mathcal{D} ( d_c, \alpha)$ might not be  symmetric due to the asymmetric links arriving at receiver~1. 
Therefore, instead of the  maximal secure sum GDoF, we will focus on the maximal (optimal) secure  GDoF  region defined as 
\[ \mathcal{D}^* ( \alpha)  \defeq \{ (d_1, d_2):   \exists (d_1, d_2)  \in   \cup_{d_c} \mathcal{D} (d_c, \alpha)  \}.\]    
We are also interested in the minimal (optimal) GDoF of the common randomness to achieve any given GDoF pair $(d_1, d_2) \in \mathcal{D}^* ( \alpha)$,  defined as
\[  \dco (\alpha, d_1, d_2) \defeq       \min_{d_c:   (d_1, d_2)   \in \mathcal{D} (d_c, \alpha)   }    d_c      .\]

\section{The main results  \label{sec:mainresult}}

We will  provide here the main results of the channels defined in Section~\ref{sec:system}.
The detailed proofs are provided in Sections~\ref{sec:CJGauIC}-\ref{sec:converse}, as well as the appendices.

\subsection{Removing the secrecy constraints \label{sec:break}}

\begin{theorem} [IC-SC]  \label{thm:ICrGDoF}
For  almost all the channel coefficients  $\{h_{k\ell}\} \in (1, 2]^{2\times 2}$ of the symmetric  Gaussian IC-SC channel with common randomness  (see Section~\ref{sec:sysICtwouser}),  the optimal characterization of the secure sum GDoF  is 
\begin{subnumcases} 
{ \dsum (\alpha)  =} 
     2(1- \alpha)    &    for   \ $ 0 \leq \alpha \leq  \frac{1}{2}$            			\label{thm:GDoFICh1} \\
     2\alpha  &  for \ $\frac{1}{2}  \leq \alpha \leq  \frac{2}{3}$           		\label{thm:GDoFICh2} \\ 
        2(1 -  \alpha / 2)  &  for  \   $\frac{2}{3}  \leq \alpha \leq  1$    			\label{thm:GDoFICh3} \\ 
            \alpha  &  for  \  $1  \leq  \alpha \leq   2$    							\label{thm:GDoFICh4} \\ 
                                               2  &  for  \   $  \alpha  \geq 2$.				\label{thm:capacitydet5}
\end{subnumcases}
This optimal secure sum GDoF  is the same as the optimal sum GDoF of the setting without any secrecy constraint.
\end{theorem}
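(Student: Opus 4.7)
The plan is to prove matching converse and achievability bounds that together yield the five-piece characterization \eqref{thm:GDoFICh1}--\eqref{thm:capacitydet5}, which coincides with the Etkin--Tse--Wang sum GDoF of the symmetric interference channel without any secrecy constraint.

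\textbf{Converse.} First I would observe that adding the weak-secrecy requirements $\Imu(w_1;y_2^{\bln})\leq \bln\epsilon$ and $\Imu(w_2;y_1^{\bln})\leq \bln\epsilon$ can only shrink the capacity region, while the common randomness $w_c$ is nothing more than a random code-book ingredient shared by the two transmitters (it provides no side information to either receiver). Hence the secure-with-common-randomness sum GDoF is upper bounded by the non-secure sum GDoF, and the classical ETW outer bound (together with its very-strong-interference extension for $\alpha\geq 2$) directly produces the right-hand side of \eqref{thm:GDoFICh1}--\eqref{thm:capacitydet5}. This gives ``$\leq$'' for free.

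\textbf{Achievability.} The non-trivial direction is to exhibit, for each of the five $\alpha$-regimes, a scheme that attains the ETW sum GDoF while satisfying weak secrecy. My proposed template follows the Han--Kobayashi-style rate-splitting tailored to each regime (private/common layers at appropriate GDoF power levels), augmented by common-randomness-based jamming:
\begin{enumerate}
\item From the shared common randomness $w_c$, generate a jamming symbol and transmit it on precisely the signal-scale that would otherwise leak information to the unintended receiver; choose its power so the jamming symbol appears at the same GDoF level as the exposed information symbol at the eavesdropper, thereby driving $\Imu(w_\ell;y_{3-\ell}^{\bln})/\bln$ to zero while paying for the secrecy with common-randomness rate rather than with message rate.
\item Invoke the Markov chain-based interference neutralization trick: the primary jamming symbol inevitably creates interference at the legitimate receiver, but the \emph{same} $w_c$ is re-used to generate a second, weaker signal from the partner transmitter whose arrival cancels that interference to within the noise floor; this second signal leaves a still-weaker residual that a third signal cancels, and so on. Because each link of the chain reduces the power by a factor dictated by $\alpha$ (and the channel coefficients $h_{k\ell}$ are generic, so the required inverses exist), the chain terminates after finitely many steps below $P^0$, yielding a bounded common-randomness GDoF cost.
\item Verify reliability via the standard ETW-style decoding analysis at each legitimate receiver (only negligible residual interference remains after neutralization), and verify weak secrecy by bounding $\Imu(w_\ell;y_{3-\ell}^{\bln})$ using the fact that the jamming layer injected on the exposed scales has at least as much entropy as the exposed information bits.
\end{enumerate}

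\textbf{Main obstacle.} The crux is the Markov-chain neutralization construction. In each regime one has to (a) identify which signal scales at the eavesdropper carry the message bits, (b) design a jamming/cancellation cascade that erases those bits at the eavesdropper without injecting more than $o(\log P)$-level interference at the legitimate receiver, and (c) prove that the cascade closes after finitely many steps so that the per-symbol power of both transmitters still obeys $\E|x_\ell(t)|^2\leq 1$. Genericity of $\{h_{k\ell}\}$ enters here: a measure-zero set of coefficients can make a neutralization equation degenerate, which is exactly why the theorem is stated for ``almost all'' channel coefficients. Once this cascade is set up for each of the five regimes in Sections~\ref{sec:CJGauIC}--\ref{sec:CJGauwiretap}, matching it against the ETW outer bound yields the claimed equality and establishes that common randomness fully removes the GDoF penalty due to secrecy.
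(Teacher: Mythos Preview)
Your high-level plan matches the paper's: the converse is exactly the ETW outer bound (common randomness is unavailable to the receivers, so it cannot enlarge the non-secure region), and the achievability uses rate-splitting plus common-randomness jamming with the Markov chain-based interference neutralization cascade. Two points, however, deserve correction because they locate the technical difficulty in the wrong place.

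First, the ``almost all $\{h_{k\ell}\}$'' caveat does \emph{not} arise from the neutralization equations becoming degenerate. In the paper the cascade coefficients $\delta_{k,\ell}$ are explicit rational functions of the $h_{k\ell}$ and the neutralization identities hold for every nonzero channel. Genericity is needed instead (i) at $\alpha=1$, where the scheme divides by $h_{11}h_{22}-h_{12}h_{21}$, and (ii) in the reliability analysis at the legitimate receiver: after neutralization one must decode two PAM symbols superimposed with incommensurate gains, and the minimum-distance (real interference alignment) argument of Niesen--Maddah-Ali type only succeeds off a Lebesgue-null set of channel coefficients. Your sketch attributes the measure-zero exclusion to the cascade, which would leave you unable to explain why it also appears in regimes where the cascade closes deterministically.

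Second, ``standard ETW-style decoding analysis'' will not do the job. ETW uses Gaussian codebooks and joint typicality; here the scheme is built on discrete PAM constellations precisely so that the jamming symbol $u$ aligns \emph{exactly} with $v_{j,c}$ at the eavesdropper (so $\Hen(v_{j,c}+u)-\Hen(u)=O(1)$) while remaining separable from $v_{k,c}$ at the legitimate receiver via rational-independence of the effective gains. The reliability proof is a minimum-distance/Q-function bound, and the secrecy proof invokes the wiretap achievability of Liu--Maric--Spasojevi\'c--Yates/Xie--Ulukus to get $R_k=\Imu(v_k;y_k)-\Imu(v_k;y_\ell\mid v_\ell)$, then shows the subtracted term is $O(1)$. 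If you replace PAM by Gaussian signaling, the alignment $\Hen(v_{j,c}+u)-\Hen(u)=O(1)$ survives, but the simultaneous decodability of the superimposed layers at the desired receiver under generic channel gains is what forces the structured-constellation route.
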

\begin{proof}
See  Section~\ref{sec:CJGauIC} for the achievability proof. The optimal  sum GDoF of the interference channel without  secrecy constraint, which is characterized in  \cite{ETW:08}, is serving as the upper bound of the secure sum GDoF of this IC-SC channel with common randomness. 
\end{proof}

Note that, without secrecy constraints, the optimal sum GDoF  of the interference channel is a ``W'' curve (see \cite{ETW:08} and Fig.~\ref{fig:ICrGDoF}).  With  secrecy constraints, the secure sum GDoF of the interference channel is then reduced to a modified ``W'' curve (cf.~\cite{ChenIC:18}). It implies that  there is a  penalty in GDoF incurred by the secrecy constraints.   
Interestingly,  Theorem~\ref{thm:ICrGDoF} reveals that we can remove this penalty by adding common randomness, in terms of sum GDoF.

\begin{theorem}  [WTH]\label{thm:GDoFwthr}
Given  the symmetric  Gaussian  WTH channel  with common randomness (see Section~\ref{sec:syswchr}),  the optimal secure  GDoF  is  expressed by
\begin{align}
\dso (\alpha) = 1,      \quad \forall  \alpha \in [0,  \infty),  \non 
\end{align}
which is the same as the maximal  GDoF of the setting  without secrecy constraint.   
\end{theorem}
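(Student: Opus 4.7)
The converse $\dso(\alpha)\leq 1$ is immediate: with only one message intended for receiver~$1$, even the unconstrained (non-secure) point-to-point link from transmitter~$1$ has capacity at most $\tfrac{1}{2}\log(1+Ph_{11}^{2})+O(1)$, so the GDoF cannot exceed $1$ regardless of the helper or the common randomness. All the content of the theorem therefore sits in the achievability. The plan is to exhibit two complementary common-randomness schemes that together cover every $\alpha\in[0,\infty)$. In both, a single shared Gaussian symbol $v_c$ drawn from the common randomness jams the eavesdropper, and the second transmitter sends a carefully scaled copy of $v_c$ so that the jamming cancels at receiver~$1$, leaving a clean direct link for a wiretap-coded message symbol $u_1$.

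\emph{Scheme~A, for $0\leq\alpha\leq 1$ (helper jams).} Let $x_2=v_c$ with $\E v_c^{2}=\sigma_v^{2}$ a small positive constant and $x_1=u_1+\tilde v_c$, where $\tilde v_c=-(h_{12}/h_{11})\,P^{(\alpha-1)/2}v_c$ and $u_1\sim\mathcal{N}(0,\sigma_u^{2})$ is independent of $v_c$; choose $\sigma_v^{2}$ small enough that both power constraints hold while $\sigma_u^{2}=\Theta(1)$. The choice of $\tilde v_c$ enforces $\sqrt{P}\,h_{11}\tilde v_c+\sqrt{P^{\alpha}}\,h_{12}v_c=0$, so $y_1=\sqrt{P}\,h_{11}u_1+z_1$ and $\Imu(u_1;y_1)=\tfrac{1}{2}\log P+O(1)$. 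At the eavesdropper the residual coefficient of $v_c$ is $\sqrt{P}\,h_{22}-(h_{12}h_{21}/h_{11})\,P^{\alpha-1/2}$, which has magnitude $\Theta(\sqrt{P})$ for every $\alpha\in[0,1)$ (and for $\alpha=1$ provided $h_{11}h_{22}\neq h_{12}h_{21}$); since $u_1$ arrives with amplitude $\Theta(\sqrt{P^{\alpha}})$, we obtain $\Imu(u_1;y_2)\leq\tfrac{1}{2}\log(1+O(P^{\alpha-1}))=o(\log P)$ for $\alpha<1$ and $O(1)$ at $\alpha=1$.

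\emph{Scheme~B, for $\alpha\geq 1$ (transmitter~$1$ jams).} Now set $x_1=(u_1+v_c)/\sqrt{2}$ with unit-variance $u_1$ and $v_c$, and $x_2=-(h_{11}/h_{12})\,P^{(1-\alpha)/2}v_c/\sqrt{2}$; the helper's power is within budget because $\alpha\geq 1$ makes $P^{1-\alpha}\leq 1$. The same algebra gives perfect cancellation, $y_1=\sqrt{P}\,h_{11}u_1/\sqrt{2}+z_1$, hence $\Imu(u_1;y_1)=\tfrac{1}{2}\log P+O(1)$. At the eavesdropper both $u_1$ and $v_c$ traverse the dominant path from transmitter~$1$ with amplitude $\Theta(\sqrt{P^{\alpha}})$, while the helper's contribution to $v_c$ is at the subdominant level $P^{1-\alpha/2}$, so $u_1$ and $v_c$ superpose at comparable power and cannot be separated: $\Imu(u_1;y_2)=O(1)$, again for almost all $\{h_{k\ell}\}$. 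Feeding these two single-letter quantities into a Wyner stochastic wiretap encoder applied to the Gaussian symbol $u_1$ yields the achievable secure rate $R_s\geq \Imu(u_1;y_1)-\Imu(u_1;y_2)=\tfrac{1}{2}\log P-O(1)$, i.e.\ secure GDoF~$1$, in both regimes.

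The main obstacle in writing this up rigorously is the ``almost-all-coefficients'' step: Schemes~A and~B both degenerate on the Lebesgue-null set $h_{11}h_{22}=h_{12}h_{21}$ at $\alpha=1$, where the two $v_c$ paths at the eavesdropper annihilate and $u_1$ is no longer jammed; one must either invoke genericity or, at that degenerate channel, insert a second-layer cancellation in the spirit of the Markov-chain interference neutralization announced in the introduction. The remaining, more routine step is the standard random-binning wiretap argument that promotes the single-letter leakage estimate $\Imu(u_1;y_2)=O(1)$ to the $n$-letter weak-secrecy condition $\Imu(w_1;y_2^{n})\leq n\epsilon$ while keeping the error probability at receiver~$1$ vanishing.
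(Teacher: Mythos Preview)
Your approach is correct for Theorem~\ref{thm:GDoFwthr} and is genuinely different from (and more elementary than) the paper's. You use a \emph{single-step} zero-forcing cancellation with Gaussian signaling: the jamming symbol $v_c$ is pre-cancelled at receiver~1 by one scaled copy sent from the other transmitter, and at the eavesdropper the residual $v_c$ either dominates $u_1$ in power (Scheme~A, $\alpha\le 1$) or sits at the same power level (Scheme~B, $\alpha\ge 1$), giving $\Imu(u_1;y_2)=O(1)$ directly from the Gaussian mutual-information formula. The paper instead uses \emph{PAM} signaling with a \emph{multi-step} Markov-chain neutralization: because alignment of discrete constellations at the eavesdropper is exact rather than power-based, any residual cross term of order above the noise floor destroys it, so $\tau=\lceil 1/(2|1-\alpha|)\rceil$ successive cancellation layers are needed to drive the residual below noise. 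Your scheme sidesteps this entirely. What the paper's extra machinery buys is (i) a common framework with the IC-SC and MAC-WT schemes, where both receivers must be kept clean and a one-shot cancellation is not available, and (ii) a direct handle on the common-randomness rate used, which is what makes Theorem~\ref{thm:GDoFwthrcr} (the minimal $d_c^*$) go through. Two small remarks on your write-up: first, since the model takes $w_c$ discrete of rate $R_c$, your continuous Gaussian $v_c$ should be replaced by a fine PAM approximation (e.g.\ $\Theta(\sqrt{P})$ points), a routine step that costs a finite $d_c$ and does not affect the GDoF conclusion; second, on the degenerate set $h_{11}h_{22}=h_{12}h_{21}$ at $\alpha=1$ the two observations become scalar multiples of each other, so the secure GDoF there is genuinely~$0$ and no additional cancellation layer can repair it --- invoking genericity (as the paper implicitly does via its factor $\varepsilon=(h_{11}h_{22}-h_{12}h_{21})/8$) is the only option.
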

\begin{proof}
See  Section~\ref{sec:CJGauwiretap}  for the achievability proof. Without secrecy constraint, the WTH channel can be enhanced to a point-to-point channel, and the maximal GDoF  of the  point-to-point channel is $1$.
\end{proof}

For the symmetric Gaussian WTH channel without common randomness, the secure GDoF  is another modified ``W'' curve (cf.~\cite{CG:18arxiv}).
Without secrecy constraint, the maximal GDoF  of the setting is $1$.
Thus, there is  a penalty  in GDoF due to secrecy constraint.   
 Theorem~\ref{thm:GDoFwthr} reveals that we can remove this GDoF penalty by adding common randomness.

\begin{theorem}  [MAC-WT]\label{thm:GDoFmawc}
Given  the symmetric  Gaussian  MAC-WT channel  with common randomness (see Section~\ref{sec:sysmawc}),  the optimal secure   GDoF region $\mathcal{D}^* ( \alpha) $ is the set of all  pairs $(d_1, d_2)$ satisfying 
\begin{align}  
d_1 + d_2 &\leq \max\{1, \alpha\}  \label{eq:MACWT11} \\
0 \leq d_1 &\leq   1   \label{eq:MACWT22} \\
0 \leq d_2 &\leq \alpha,  \label{eq:MACWT33}
\end{align}
which is the same as the optimal GDoF region of the  symmetric Gaussian multiple access channel  without eavesdropper, i.e.,  without secrecy constraint.   
\end{theorem}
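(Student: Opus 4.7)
The converse direction is essentially free. Since allowing side information at the encoders and dropping the secrecy constraint can only enlarge the achievable region, the region in \eqref{eq:MACWT11}--\eqref{eq:MACWT33} is upper bounded by the GDoF region of the symmetric Gaussian MAC from $x_1,x_2$ to $y_1$ alone. The individual bounds $d_1\leq 1$ and $d_2\leq \alpha$ come from genie-aided cut bounds (reveal the other message to receiver 1), while the sum bound $d_1+d_2\leq \max\{1,\alpha\}$ is the standard joint-decoding bound at receiver 1, since $\frac{1}{2}\log(1+P+P^{\alpha})=\frac{1}{2}\log P\cdot \max\{1,\alpha\}+o(\log P)$. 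Hence the entire problem reduces to achievability, i.e., showing that every corner of the stated region can be reached \emph{while} keeping $\Imu(w_1,w_2;y_2^{\bln})\leq \bln\epsilon$.

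For achievability the plan is to attain the two nontrivial corner points and time-share between them together with the zero points. When $\alpha\leq 1$, the corners are $(1,0)$ and $(1-\alpha,\alpha)$; when $\alpha\geq 1$, they are $(0,\alpha)$ and $(1,\alpha-1)$. At each corner I will split the unit transmit power at each node into an information layer at the highest power consistent with the stated rates and a \emph{jamming} layer driven by the common randomness $w_c$. The information codebooks at transmitters 1 and 2 are independent random Gaussian-like codebooks of rates $\tfrac{1}{2}d_1\log P$ and $\tfrac{1}{2}d_2\log P$ respectively; because both receivers have the same noise level, receiver 1 decodes the MAC with these choices exactly as in the no-secrecy case (treating the small residual jamming as noise). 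The jamming layer is what removes the secrecy penalty.

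The jamming layer implements the Markov-chain interference neutralization described in the Introduction. Given $w_c$, both transmitters independently generate the same length-$K$ sequence $(u_1,u_2,\ldots,u_K)$ of jamming symbols. Transmitter 1 places them at power levels $P^{\alpha-1},P^{2(\alpha-1)},\ldots$ (or the symmetric sequence when $\alpha>1$); transmitter 2 places them at the adjacent levels $P^{0},P^{\alpha-1},\ldots$ and scales by $h_{11}/h_{12}$ so that the image of $u_j$ at receiver 1 is erased, up to a residual of size $P^{(\alpha-1)}$ relative to the erasing copy (caused by the generic ratio $h_{21}h_{12}/(h_{22}h_{11})\neq 1$). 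The residual is then itself the ``next'' signal to be neutralized by $u_{j+1}$, and the chain terminates in $K=\lceil 1/|1-\alpha|\rceil$ stages once the residual falls below the noise floor. At the eavesdropper the same jamming occupies every power level down to the information layer and above the noise, so that the mutual information $\Imu(w_1,w_2;y_2^{\bln})$ is small in the standard sense of wiretap random-binning after one deducts the GDoF carried by $w_c$. An explicit, finite-$K$ computation confirms that the information layer at receiver 1 is impaired by only $o(\log P)$ bits and that the jamming exactly covers all visible levels of the information layer at receiver 2.

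The principal obstacle is verifying the \emph{joint} secrecy constraint $\Imu(w_1,w_2;y_2^{\bln})\leq \bln\epsilon$, which is strictly stronger than the individual constraints in the IC-SC setting of Theorem~\ref{thm:ICrGDoF}. This requires choosing the two information codebooks and the jamming chain so that, conditioned on the eavesdropper's entire vector, the pair $(w_1,w_2)$ is statistically close to uniform; because $w_c$ is shared and hence correlates the two transmitted signals, a naive superposition argument fails and a Tekin--Yener-style joint wiretap codebook must be used with the jamming chain playing the role of the random bin index. A secondary technical point is that the Markov-chain neutralization is only approximate for generic $\{h_{k\ell}\}$, so the decodability step at receiver 1 needs the residual-level accounting to be summable; this is where the paper's ``almost all channel coefficients'' genericity assumption is invoked to rule out algebraic degeneracies that would prevent the chain from geometrically shrinking.
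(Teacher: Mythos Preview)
Your converse matches the paper's. For achievability, however, your plan differs substantially from the paper's and carries a few misconceptions worth flagging. The paper does not time-share between corner points; it parametrizes the boundary by $B\in[0,\alpha]$ (for $\alpha\le 1$) and builds a PAM-based scheme for each $B$, then handles $\alpha>1$ by a channel-swap symmetry (Lemma~\ref{lm:mac1switchalpha}). The key structural ingredient you do not mention is that transmitter~2 pre-scales its information symbol by $h_{21}/h_{22}$, so that at the eavesdropper \emph{both} information symbols land with the same coefficient $\sqrt{P^{\alpha}}h_{21}$; a single jamming signal $u_1$ then aligns with the \emph{sum} $v_{1,c}+v_{2,c}$, and the joint leakage collapses to $\Hen(v_{1,c}+v_{2,c}+u_1)-\Hen(u_1)+O(1)$, which is forced to be $O(1)$ by an integer-ratio design between the three PAM lattices (the parameters $\eta_{1,c},\eta_{2,c}$ in the paper). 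The secure region then follows directly from the Bagherikaram--Motahari--Khandani achievable region \eqref{eq:macsecregion}; no special ``joint wiretap codebook'' is required.

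Two specific points in your write-up are off. First, your concern that ``$w_c$ is shared and hence correlates the two transmitted signals, so a naive superposition argument fails'' is misplaced: in the paper's construction the information codewords $v_1,v_2$ are generated from independent $(w_k,w'_k)$ and remain independent; only the additive jamming layer $u$ uses $w_c$, so the standard MAC-wiretap inner bound applies verbatim. Second, your description of the Markov-chain residual (``caused by the generic ratio $h_{21}h_{12}/(h_{22}h_{11})\neq 1$'') is inaccurate: with the correct scaling each neutralization at receiver~1 is \emph{exact}, and the chain length $\tau$ is chosen only so that the final transmitted copy of $u$ has power below the noise floor, not to compensate for imperfect cancellation. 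The ``almost all channels'' caveat in the paper enters for a different reason --- rational independence of the effective gains is needed so that receiver~1 can separate the superposed PAM symbols, not for the jamming chain to shrink.
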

\begin{proof}
 The achievability proof is  provided in  Section~\ref{sec:CJGauMAC}.  
 The optimal    GDoF region of the  multiple access channel without  secrecy constraint is serving as the outer bound of the optimal secure   GDoF region  of the  MAC-WT channel with common randomness. The optimal    GDoF region of the symmetric Gaussian multiple access channel is characterized  as in \eqref{eq:MACWT11}-\eqref{eq:MACWT33}, which can be easily derived from the capacity region of the setting (cf.~\cite{CT:06}). 
\end{proof}

For the multiple access channel, there is  a penalty in GDoF region due to secrecy constraint. For example, considering the case with $\alpha=1$,  the optimal sum GDoF of the multiple access channel without secrecy constraint is $1$.  With  secrecy constraint, i.e., with an eavesdropper, the optimal secure sum GDoF of multiple access wiretap channel is reduced to $2/3$ (cf.~\cite{XU:14}). Therefore, secrecy constraint incurs an extra limit on the   GDoF region.
Theorem~\ref{thm:GDoFmawc} reveals that by adding  common randomness we can achieve a secure GDoF region that is the same as the one without secrecy constraint. 
In other words, with common randomness, secrecy constraint will not incur any penalty in   GDoF region of the symmetric multiple access wiretap channel.

\subsection{How much  common randomness is required? \label{sec:minimumcr}}

The  results in Theorems~\ref{thm:ICrGDoF}-\ref{thm:GDoFmawc}  reveal that we can remove the secrecy constraints, i.e., remove the penalty in  GDoF, by adding common randomness for each channel considered here.  
From the practical point of view,  we hope to use less  common randomness to remove the  secrecy constraints.
Therefore, it would be interesting to characterize the minimal  GDoF of the common randomness to achieve this goal.  
The results on this perspective are given in the following theorems. 

\begin{theorem} [IC-SC]  \label{thm:ICrGDoFcr}
For the two-user symmetric  Gaussian IC-SC channel,  the minimal GDoF of the common randomness to achieve the maximal secure sum GDoF $\dsum (\alpha)$   is  
 \begin{align}
\dco (\alpha) =   \dsum(\alpha)/2 -   (1- \alpha)^+      \quad  \quad  \alpha \in [0,  \infty). 
\end{align}
\end{theorem}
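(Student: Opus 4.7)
The plan is to establish the equality $\dco(\alpha) = \dsum(\alpha)/2 - (1-\alpha)^+$ by proving matching achievability and converse bounds. The achievability will piggyback on the scheme developed for Theorem~\ref{thm:ICrGDoF}, while the converse isolates exactly how the common randomness enters the secrecy upper bound. By symmetry of the model I work at the sum-GDoF optimal symmetric point $d_1 = d_2 = \dsum(\alpha)/2$, and denote the opposite index by $\bar k$.

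For the achievability I would revisit, regime by regime, the Markov-chain interference-neutralization scheme of Section~\ref{sec:CJGauIC} and simply count the rate of common randomness it consumes. In each of the five regimes listed in Theorem~\ref{thm:ICrGDoF}, the transmitted signal of user $\ell$ decomposes as $x_{\ell} = u_{\ell} + v_{\ell}$, where $u_{\ell}$ is a private layer of power $P^{(1-\alpha)^+}$ sitting below the cross-interference floor and hence already secret without jamming, and $v_{\ell}$ is a higher-power layer that must be masked at the unintended receiver. Because the two correlated jamming streams emitted by the two transmitters are both driven by the \emph{same} $w_c$ via the Markov-chain neutralization construction, one GDoF unit of common randomness suffices to mask one GDoF unit of $v_{\ell}$-layer at both eavesdroppers simultaneously. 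Summing the power-exponent spans of the $v_{\ell}$-layers used in each regime yields precisely $d_c = \dsum(\alpha)/2 - (1-\alpha)^+$.

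For the converse I would fix a sequence of codes achieving $d_1 = d_2 = \dsum(\alpha)/2$ and prove the per-user inequality $d_c \geq d_k - (1-\alpha)^+$, which on averaging gives the required bound. Combining Fano's inequality, the secrecy constraint $I(W_k; Y_{\bar k}^{\bln}) \leq n\epsilon$, and a chain-rule split on $W_c$ gives
\begin{align*}
n R_k
&\leq I(W_k; Y_k^{\bln}) - I(W_k; Y_{\bar k}^{\bln}) + n(\epsilon + \epsilon_k) \\
&\leq I(W_k; Y_k^{\bln} \mid Y_{\bar k}^{\bln}) + n(\epsilon + \epsilon_k) \\
&\leq n R_c + I(W_k; Y_k^{\bln} \mid Y_{\bar k}^{\bln}, W_c) + n(\epsilon + \epsilon_k).
\end{align*}
The task therefore reduces to showing
\begin{align*}
I(W_k; Y_k^{\bln} \mid Y_{\bar k}^{\bln}, W_c) \leq n (1-\alpha)^+ \cdot \tfrac{1}{2} \log P + n \cdot o(\log P).
\end{align*}

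The main obstacle is this last differential-entropy estimate. Intuitively, once $W_c$ is unveiled, $X_k^{\bln}$ and $X_{\bar k}^{\bln}$ become conditionally independent (since $W_k \perp W_{\bar k}$ given $W_c$), and $Y_{\bar k}^{\bln}$ already resolves the interferer $X_k^{\bln}$ at SNR $P^{\alpha}$, so the residual uncertainty in $\sqrt{P}\,h_{kk} X_k^{\bln}$ beyond what $Y_{\bar k}^{\bln}$ already reveals has variance at most of order $P^{(1-\alpha)^+}$. To make this rigorous I would (i) apply data processing $W_k \to X_k^{\bln}$ and rewrite the mutual information as $h(Y_k^{\bln} \mid Y_{\bar k}^{\bln}, W_c) - h(Y_k^{\bln} \mid Y_{\bar k}^{\bln}, W_c, X_k^{\bln})$, (ii) use the fact that $X_{\bar k}^{\bln}$ is a function of $(W_{\bar k}, W_c)$ independent of $W_k$ to express $Y_k^{\bln}$ (given $X_k^{\bln}$ and $W_c$) as a linear combination of $Y_{\bar k}^{\bln}$ and the noises, scaled by a coefficient of order $P^{(1-\alpha)/2}$ determined by the channel gains, and (iii) invoke the Gaussian maximum-entropy bound to cap the remaining entropy by $\tfrac{n}{2}\log(1 + c\,P^{(1-\alpha)^+})$, where $c$ depends only on the channel coefficients. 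The boundary case $\alpha \geq 1$ collapses the bound to $d_c \geq d_k$, which correctly matches the required values $\dco = \alpha/2$ and $\dco = 1$ given in the theorem for the strong-interference regimes $\alpha \in [1,2]$ and $\alpha \geq 2$ respectively.
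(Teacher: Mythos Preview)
Your achievability is fine --- it is exactly the bookkeeping on the scheme of Section~\ref{sec:CJGauIC}, and the values of $\lambda_u$ recorded in Table~\ref{tab:ICpara} do add up to $\dsum(\alpha)/2-(1-\alpha)^+$ regime by regime. Your converse chain is also correct up through
\[
nR_k \ \leq\ nR_c \ +\ \Imu\bigl(W_k;\,Y_k^{\bln}\,\big|\,Y_{\bar k}^{\bln},W_c\bigr)\ +\ n\epsilon',
\]
and the target bound $\Imu(W_k;Y_k^{\bln}\mid Y_{\bar k}^{\bln},W_c)\le \tfrac{n}{2}\log(1+cP^{(1-\alpha)^+})$ is the right one.

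The gap is in step~(ii). Conditioning on $X_k^{\bln}$ and $W_c$ alone does \emph{not} let you write $Y_k^{\bln}$ as a linear combination of $Y_{\bar k}^{\bln}$ and the noises: $X_{\bar k}^{\bln}$ still depends on the unknown $W_{\bar k}$, and any linear combination $Y_k-bY_{\bar k}$ retains either an $X_k$-term of order $\sqrt{P}$ or an $X_{\bar k}$-term of order $\sqrt{P^{\max(\alpha,2-\alpha)}}$, so the resulting entropy difference does not collapse to $(1-\alpha)^+$. The missing ingredient is a \emph{second} application of Fano's inequality, this time to the interferer's message: since receiver~$\bar k$ reliably decodes $W_{\bar k}$, one has $\Hen(W_{\bar k}\mid Y_{\bar k}^{\bln})\le n\epsilon_n$, so bringing $W_{\bar k}$ into the conditioning costs only $n\epsilon_n$. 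Once $W_c$ and $W_{\bar k}$ are both given, $X_{\bar k}^{\bln}$ is known, $Y_{\bar k}^{\bln}$ yields $s_{\bar k k}^{\bln}=\sqrt{P^{\alpha}}h_{\bar k k}X_k^{\bln}+Z_{\bar k}^{\bln}$, and then
\[
Y_k \ -\ \sqrt{P^{1-\alpha}}\tfrac{h_{kk}}{h_{\bar k k}}\,s_{\bar k k}\ -\ \sqrt{P^{\alpha}}h_{k\bar k}X_{\bar k}
\ =\ Z_k \ -\ \sqrt{P^{1-\alpha}}\tfrac{h_{kk}}{h_{\bar k k}}\,Z_{\bar k},
\]
from which the Gaussian bound gives exactly $\tfrac{n}{2}\log(1+P^{1-\alpha}|h_{kk}|^2/|h_{\bar k k}|^2)$. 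This is precisely what the paper does in Lemma~\ref{lm:ICrGDoFcr}: it conditions on $(w_c,w_2)$ jointly from the outset, bounding $\Imu(w_1;w_c,w_2\mid y_2^{\bln})\le \Hen(w_c)+\Hen(w_2\mid y_2^{\bln})\le nR_c+n\epsilon_n$, and then runs the same linear-combination argument. With this one-line fix your converse is essentially the paper's proof, just organized slightly differently.
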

\begin{proof}
See  Section~\ref{sec:CJGauIC} for the achievability proof and Section~\ref{sec:converseIC} for the converse proof.
\end{proof}

\begin{theorem}  [WTH]\label{thm:GDoFwthrcr}
For  the symmetric  Gaussian  WTH channel,  the minimal GDoF of the common randomness to achieve the maximal secure  GDoF $\dso (\alpha)$  is   
 \begin{align}
\dco (\alpha) =  1-   (1- \alpha)^+      \quad \quad  \alpha \in [0,  \infty).
\end{align}
\end{theorem}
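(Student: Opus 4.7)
The plan is to establish Theorem~\ref{thm:GDoFwthrcr} via matching converse and achievability bounds: first a converse showing $d_c \geq 1 - (1-\alpha)^+$ whenever a scheme achieves $d_1 = \dso(\alpha) = 1$ securely, and then a refinement of the achievability underlying Theorem~\ref{thm:GDoFwthr} that operates with exactly this amount of common randomness.

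For the converse, the central manipulation is to augment Fano's inequality with $(y_2^{\bln}, w_c)$ as side information, yielding
\[
\bln R_1 \leq \Imu(w_1; y_2^{\bln}, w_c) + \Imu(w_1; y_1^{\bln} \mid y_2^{\bln}, w_c) + \bln \epsilon_{\bln}.
\]
The first term is at most $\Imu(w_1; y_2^{\bln}) + \Hen(w_c) \leq \bln \epsilon_{\bln} + \bln R_c$ by the weak secrecy constraint and a trivial entropy bound. For the second, the key observation is that $x_2^{\bln}$ is determined by $w_c$ in the WTH model, so after subtracting the known contribution of $x_2$ one is left with a single-user channel having two outputs $\wt{y}_1 = \sqrt{P} h_{11} x_1 + z_1$ and $\wt{y}_2 = \sqrt{P^{\alpha}} h_{21} x_1 + z_2$. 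I would then bound $\hen(\wt{y}_1^{\bln} \mid \wt{y}_2^{\bln})$ by $\sum_t \hen(\wt{y}_{1,t} - c\,\wt{y}_{2,t})$ with the MMSE choice $c = (h_{11}/h_{21}) P^{(1-\alpha)/2}$, so that the $x_1$ contribution cancels and only a Gaussian of variance $1 + O(P^{1-\alpha})$ survives. Translating to GDoF, this term contributes at most $(1-\alpha)^+$, so $d_1 \leq d_c + (1-\alpha)^+$, which forces $d_c \geq 1 - (1-\alpha)^+$ at $d_1 = 1$.

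For the achievability in the regime $\alpha \leq 1$, I would have transmitter~$2$ send a scaled jamming signal $\sqrt{P^{\alpha-1}}\, u$ generated from a Wyner wiretap codebook indexed by $w_c$ of rate $\alpha$, arriving at receiver~$2$ at exactly the information level $P^{\alpha}$ but only at level $P^{2\alpha-1}$ at receiver~$1$; transmitter~$1$ would simultaneously add a neutraliser $-(h_{12}/h_{11}) P^{(\alpha-1)/2}\, u$ (also drawn from the shared $w_c$) on top of its information codeword, cleanly cancelling the helper's interference at receiver~$1$, with both signals lying within the unit power budget since $\alpha \leq 1$. A standard equivocation argument then delivers secure rate $1$ using common randomness of GDoF exactly $\alpha$. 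For $\alpha > 1$ a single-step cancellation would violate transmitter~$1$'s power constraint, so I would instead invoke the Markov chain-based interference neutralisation sketched in the introduction: a rate-$1$ common randomness $w_c$ drives a cascade of signals split across the two transmitters, where each signal's residual interference at the legitimate receiver is knocked down by the next element of the chain, while the aggregated cascade at receiver~$2$ masks all $\alpha$ GDoF of the confidential codeword, the chain terminating once the residual drops below the noise floor.

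The converse is mechanical once the side-information decomposition is in place. The principal obstacle is the achievability for $\alpha > 1$ at the minimal rate $d_c = 1$: the Markov chain must be tuned so that (i) the per-transmitter powers remain bounded at every level, (ii) the residual interference at receiver~$1$ decays to the noise floor after finitely many stages, and (iii) the randomness effective at receiver~$2$ covers the full $\alpha$ GDoF of information leakage despite being sourced from only one GDoF of common randomness. The detailed calibration of the cascade and the accompanying equivocation analysis constitute the delicate step, carried out in Section~\ref{sec:CJGauwiretap}.
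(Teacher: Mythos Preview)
Your converse is essentially the paper's: you arrive at $d_1 \le d_c + (1-\alpha)^+$ via the chain rule $\Imu(w_1;y_1^{\bln},y_2^{\bln},w_c)=\Imu(w_1;y_2^{\bln},w_c)+\Imu(w_1;y_1^{\bln}\mid y_2^{\bln},w_c)$, whereas the paper reaches the same inequality through $\Hen(w_1)\le \Hen(w_c)+\Hen(w_1\mid w_c,y_2^{\bln})$ and then bounds the last term by reconstructing $s_{21}^{\bln}$ and applying Fano. The MMSE subtraction you describe is exactly the paper's step \eqref{eq: wiretapGaup112B}--\eqref{eq: wiretapGaup113}; only the bookkeeping differs.

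For achievability your route diverges from the paper when $\alpha\le 1$: you propose a \emph{single} neutralisation step, whereas the paper runs a full Markov chain with $\tau=\lceil 1/(2(1-\alpha))\rceil$ stages so that at receiver~2 every copy of $u$ collapses into the aligned pair $v_c+u$, which is what makes the PAM equivocation bound $\Hen(v_c+u)-\Hen(u)\le 1$ go through cleanly. Your single-step idea does work and is simpler---with Gaussian $u$ one directly gets $\Imu(v;y_2)=O(1)$ without needing alignment---but your neutraliser exponent is wrong: to kill the helper's contribution $h_{12}\sqrt{P^{2\alpha-1}}\,u$ at receiver~1 you need $-(h_{12}/h_{11})P^{\alpha-1}\,u$ in $x_1$, not $-(h_{12}/h_{11})P^{(\alpha-1)/2}\,u$; with your coefficient the two terms sit at levels $P^{\alpha}$ and $P^{2\alpha-1}$ and do not cancel. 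For $\alpha>1$ you correctly fall back on the paper's cascade.

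One conceptual slip: in your concern (iii) you speak of ``$\alpha$ GDoF of information leakage'' at receiver~2 when $\alpha>1$. The leakage is at most $\min(d_1,\alpha)=1$ GDoF, since the confidential codeword only carries one GDoF; that is precisely why one GDoF of common randomness suffices, and there is no tension to resolve.
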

\begin{proof}
See  Section~\ref{sec:CJGauwiretap} and Section~\ref{sec:conversewiretap} for the achievability and converse proofs, respectively.
\end{proof}

For the MAC-WT channel, we were able  to characterize  the minimal GDoF of the common randomness to achieve any given GDoF pair $(d_1, d_2)$ in the  maximal  secure  GDoF  region  $\mathcal{D}^* ( 1) $ expressed in  Theorem~\ref{thm:GDoFmawc},  for the case of  $\alpha=1$.

\begin{theorem}  [MAC-WT] \label{thm:GDoFmawccr}
Given  the symmetric  Gaussian  MAC-WT channel, and for $\alpha=1$, the minimal GDoF of the common randomness to achieve   any given GDoF pair $(d_1, d_2)$ in the  maximal  secure  GDoF  region  $\mathcal{D}^* ( 1) $   is  
 \begin{align}
\dco (1, d_1, d_2) =  \max\{d_1,  d_2  \}     \quad  \text{for} \quad   (d_1, d_2) \in  \mathcal{D}^* ( 1) ,  \  \alpha =1 .     \non 
\end{align}
\end{theorem}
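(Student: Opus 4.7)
The plan is to establish the theorem in two halves: a converse showing $d_c \geq \max\{d_1, d_2\}$, and a matching achievability with common randomness of GDoF exactly $\max\{d_1, d_2\}$.

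For the converse, by the symmetry between the two users it suffices to establish $d_c \geq d_1$; the bound $d_c \geq d_2$ then follows by swapping the roles of $w_1$ and $w_2$. Fix an achievable tuple $(R_1, R_2, R_c)$. Fano's inequality applied to reliable MAC decoding at receiver~$1$ yields $\Hen(w_1 \mid y_1^n, w_2) \leq n\epsilon_n$. The joint secrecy constraint $\Imu(w_1, w_2; y_2^n) \leq n\epsilon_n$, together with the independence of $w_1, w_2$, gives $\Imu(w_1; y_2^n \mid w_2) \leq n\epsilon_n$ and hence $\Hen(w_1 \mid y_2^n, w_2) \geq nR_1 - n\epsilon_n$. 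Subtracting these two estimates yields
\[
\Imu(w_1; y_1^n \mid y_2^n, w_2) \;\geq\; nR_1 - 2n\epsilon_n .
\]
I would then upper bound the same quantity by injecting $w_c$ as a genie,
\[
\Imu(w_1; y_1^n \mid y_2^n, w_2) \;\leq\; \Hen(w_c) + \Imu(x_1^n; y_1^n \mid y_2^n, x_2^n) \;\leq\; nR_c + \Imu(x_1^n; y_1^n \mid y_2^n, x_2^n),
\]
using that once $(w_2, w_c)$ is given, $x_2^n$ is deterministic and $x_1^n$ depends only on $w_1$. Because $\alpha_{11} = \alpha_{21} = 1$ in the symmetric setting at $\alpha = 1$, the residual term is the mutual information in a single-user point-to-point channel with two observations at \emph{identical} link strength, which a worst-case Gaussian bound applied to the linear MMSE combination $y_1 - (h_{11}/h_{21}) y_2$ shows is at most $\tfrac{n}{2} \log\bigl(1 + h_{11}^2/h_{21}^2\bigr) = o(n\log P)$. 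Normalizing by $\tfrac{n}{2}\log P$ and letting $n, P \to \infty$ gives $d_1 \leq d_c$, and then symmetry closes the converse.

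For the achievability, taking $d_1 \geq d_2$ without loss of generality and targeting $d_c = d_1$, I would use a cooperative-jamming scheme of the same type developed in Sections~\ref{sec:CJGauIC}--\ref{sec:CJGauwiretap}: a Gaussian signal $v$ generated from $w_c$ is injected into both transmitters along the beamforming direction $(h_{12}, -h_{11})$, so that $v$ is structurally cancelled at receiver~$1$ (via $h_{11}h_{12} - h_{12}h_{11} = 0$) and reaches receiver~$2$ at power $\Theta(P^{d_c})$, matching the stronger of the two messages. Messages $w_1, w_2$ are carried by an inner MAC codebook $(u_1, u_2)$ whose sum rate can be pushed to the MAC capacity $\tfrac{1}{2}\log P$. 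The step I expect to be the main obstacle is the secrecy analysis: a naive single-level scheme combined with ordinary Wyner binning only certifies joint secrecy when $d_c \geq d_1 + d_2$, whereas here $d_c = \max\{d_1, d_2\}$ must suffice. Closing this gap should require the Markov chain-based interference neutralization developed earlier in the paper, in which the single $w_c$-derived jamming signal is recycled across signal levels so that its effective secrecy budget at receiver~$2$ jointly masks both messages; a standard soft-covering argument then drives $\Imu(w_1, w_2; y_2^n)/n \to 0$.
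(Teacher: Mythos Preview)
Your converse is correct and, while organized differently from the paper's, is essentially equivalent in content. The paper (Section~\ref{sec:conversewiretapmac}) starts directly from the secrecy constraint, expands $\Imu(w_1,w_2;y_2^n)$ by inserting $w_c$, and then bounds $\Hen(w_2|w_1,w_c,y_2^n)-\Hen(w_2|w_1,y_1^n)$ via a genie $\{\sqrt{P}h_{22}x_2(t)+z_2(t)\}$; your route via $\Imu(w_1;y_1^n|y_2^n,w_2)$ and the linear combination $y_1-(h_{11}/h_{21})y_2$ is a clean variant that arrives at the same residual $\tfrac{n}{2}\log(1+h_{11}^2/h_{21}^2)=o(n\log P)$ when $\alpha=1$.

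The achievability, however, has a real gap, and the fix you propose is not the one that works. You correctly identify the obstacle: with a single cooperative jamming signal and Gaussian codes, Wyner binning only certifies secrecy when $d_c\geq d_1+d_2$, not $d_c=\max\{d_1,d_2\}$. But the remedy is \emph{not} Markov chain-based interference neutralization. That device is needed only when $\alpha\neq 1$, where a jamming signal cancelled at one receiver re-emerges at a different power level and must be neutralized iteratively; at $\alpha=1$ a single neutralization step already suffices (indeed $\tau=1$ or $2$ in the paper's design). What closes the gap at $\alpha=1$ is a different mechanism (Section~\ref{sec:macwiretapschemeaaa000}): (i) transmitter~2 pre-codes $v_{2,c}$ by $h_{21}/h_{22}$ so that at the eavesdropper \emph{both} messages arrive along the \emph{same} direction $h_{21}$; (ii) the jamming $u_1$ is steered so that it too lands along $h_{21}$ at receiver~2 while cancelling at receiver~1, giving $y_2=\varepsilon\sqrt{P}\,h_{21}(v_{1,c}+v_{2,c}+u_1)+z_2$; and (iii) the three signals are drawn from PAM constellations whose minimum distances are in integer ratio (the $\eta_{k,c}$ parameters in \eqref{eq:para001}--\eqref{eq:para010}), so that the sumset $v_{1,c}+v_{2,c}+u_1$ has entropy at most $\log(6Q'+1)$ with $Q'=P^{(\max\{d_1,d_2\}-\epsilon)/2}$, i.e., only $O(1)$ larger than $\Hen(u_1)$ itself. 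This structured-lattice alignment is what makes a single jamming signal of GDoF $\max\{d_1,d_2\}$ mask both messages simultaneously; Gaussian $v$ and ``recycling across signal levels'' cannot achieve this, because with Gaussian codebooks the two messages occupy independent dimensions at the eavesdropper and no amount of one-dimensional jamming covers both.
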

\begin{proof}
 The achievability and converse proofs are provided in  Section~\ref{sec:CJGauMAC} and Section~\ref{sec:conversewiretapmac}, respectively.
\end{proof}

When $ \alpha =1$,  Theorem~\ref{thm:GDoFmawccr} reveals that the minimal GDoF of the common randomness to achieve the secure GDoF pair $(d_1 =1/2,  d_2 =1/2) \in \mathcal{D}^* ( 1)$ is 1/2. It implies that  1/2 GDoF of common randomness achieves the  maximal secure  sum GDoF  1.  Without common randomness, the secure  sum GDoF cannot be more than $2/3$ for the case with $ \alpha =1$.  
Note that it is challenging to characterize  $ \dco (\alpha, d_1, d_2)$   for the general case of  $\alpha$. For  the general case,  the optimal secure  GDoF  region is non-symmetric in $(d_1, d_2)$ as shown in   Theorem~\ref{thm:GDoFmawc}. To achieve the GDoF pairs in the   asymmetric  secure  GDoF  region, it might require several converse bounds on the minimal GDoF of the common randomness, which will be studied in our future work.

\section{Scheme example  \label{sec:schemeexample} }

\begin{figure}[t!]
\centering
\includegraphics[width=8cm]{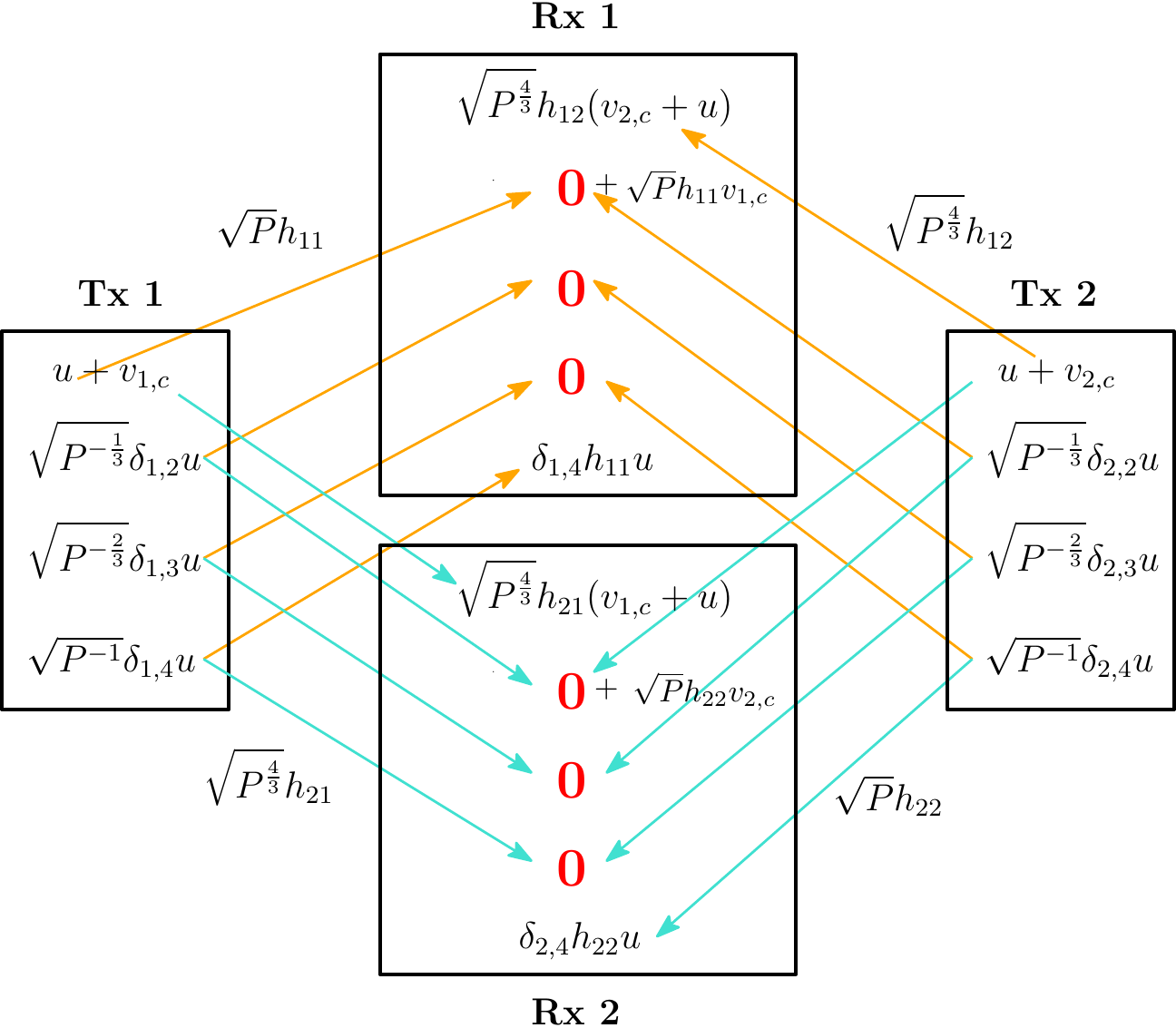}
 \vspace{-.05 in}
\caption{Markov chain-based interference neutralization at the receivers, for a two-user interference channel  with $\alpha = 4/3$.}
\label{fig:ICalpha43}
\vspace{-12pt}
\end{figure}

We will here provide  a scheme example,  focusing on the IC-SC channel with  $\alpha = 4/3$ (see Section~\ref{sec:sysICtwouser}). Note that for the case of $\alpha = 4/3$, without the consideration of secrecy constraints the sum  GDoF  is $4/3$ (cf.~\cite{ETW:08}).  With the consideration of secrecy constraints, the secure sum  GDoF is reduced to $8/9$ (cf.~\cite{ChenIC:18}).  In this example, we will show that by adding common randomness  the secure sum GDoF  can be improved to $4/3$, which matches the sum GDoF for the case \emph{without} secrecy constraints. In our scheme,   Markov chain-based  interference neutralization will be used in the signal design.
In this scheme, the transmitted signals are given as (without time index): 
\[x_k  =  v_{k,c}  + \sum_{\ell=1}^{4}  \delta_{k, \ell}  \sqrt{P^{ -  \beta_{u_{\ell}}}}   \cdot   u\] for $k\in \{1,2\}$,   
where $\beta_{u_{\ell}} =\frac{1}{3}(\ell-1)$, for $\ell \in  \{1, 2, 3, 4\}$;  and 
\begin{align}
\delta_{j,\ell} = \left\{ \begin{array}{ll}
- \frac{h_{ii}}{h_{ij}} \cdot   \big(\frac{h_{11} h_{22}}{h_{12} h_{21}}\big)^{\frac{\ell}{2}-1}  & \quad \textrm{$\ell \in\{ 2, 4\}$}\\
\big(\frac{h_{11} h_{22}}{h_{12} h_{21}}\big)^{\frac{\ell-1}{2}} & \quad  \textrm{$\ell \in \{ 1, 3\}$}
\end{array} \right.
\end{align}
for $i, j \in \{1,2\}, i \not=j.$
$u$ is the common randomness. $v_{1,c}$ and $v_{2,c}$ carry the messages of transmitters~1 and 2, respectively. The random variables  $v_{1,c}$, $v_{2,c}$  and $u$  are \emph{independently} and \emph{uniformly} drawn from a pulse  amplitude modulation (PAM) set 
 \[v_{1,c}, v_{2,c}, u    \in    \Omega ( \xi =  \frac{ \gamma}{Q} ,  \quad  Q =  P^{ \frac{  2/3  - \epsilon }{2}} )\] 
where   $\gamma  \in \bigl(0, 1/64 \bigr]$ is a constant, 
$\Omega (\xi,  Q)  \defeq   \{ \xi  a :   \    a \in  [-Q,   Q]    \cap  \Zc \}$,  and $\epsilon>0$ is a parameter that can be made arbitrarily small. 
 With this signal design,   $v_{k,c}$ carries  $ 2/3 $ GDoF, i.e., $ \Hen(v_{k,c}) =  \frac{2/3 - \epsilon}{2}\log  P + o(\log P)$, with  $ k \in \{1,2\}$. 
 One can check that the average power constraints $\E |x_1|^2 \leq 1$ and $\E |x_2|^2 \leq 1$ are satisfied.
Then, the received signals are given as  (without time index)
\begin{align}
y_{k} & \!  =   \! \sqrt{P}  h_{kk} v_{k,c}  \!+\! \underbrace{ \sqrt{P^{ 4/3 }}  h_{kj}   v_{j,c}  \!+\!    \sqrt{P^{4/3}} \delta_{j,1}      h_{kj}  u}_{\rm{aligned}}  \!+\!   \sum_{\ell=1}^{3} \! \underbrace{ (\sqrt{P^{(4 \!- \!\ell)/3}}   \delta_{k,\ell}   h_{kk}  \!+\!  \sqrt{P^{(4  \!- \! \ell)/3}}  \delta_{j,\ell\!+\!1}      h_{kj}   ) u}_{\rm{interference~neutralization}} \!+\!  \delta_{k,4}   h_{kk}  u\!+\!  z_{k}  \non\\
 & =     \sqrt{P}  h_{kk} v_{k,c} +    \sqrt{P^{ 4/3 }} h_{kj}  (   v_{j,c}  +  u)  \!+\!       \delta_{k,4}   h_{kk}  u \!+\!  z_{k}   \non
\end{align}
for $k \neq j$, $k,j\in \{1,2\}$.
The idea of the Markov chain-based interference neutralization method is given as follows.  As shown in Fig.~\ref{fig:ICalpha43},  the common randomness $u$ is used to generate a certain number of  signals with specific  directions and powers, i.e., $\{  \delta_{1, \ell}  \sqrt{P^{ -  \beta_{u_{\ell}}}}      u\}_{\ell=1}^4$ at transmitter~1 and $\{  \delta_{2, \ell}  \sqrt{P^{ -  \beta_{u_{\ell}}}}  u\}_{\ell=1}^4$ at transmitter~2;   the  signal $\delta_{2, 1}  \sqrt{P^{ -  \beta_{u_{1}}}}  u$ from transmitter~2 is used to jam the information signal $v_{2,c}$ at receiver~1 but it will create an interference at receiver~2; this interference will be neutralized by the signal $\delta_{1, 2}  \sqrt{P^{ -  \beta_{u_{2}}}}      u$  from transmitter~1; the added signal $\delta_{1, 2}  \sqrt{P^{ -  \beta_{u_{2}}}}      u$ also creates another interference at receiver~1 but will be neutralized by the next generated signal $\delta_{2, 3}  \sqrt{P^{ -  \beta_{u_{3}}}}  u$; this process repeats until the residual interference is under the noise level.  Since  one signal is used to neutralize the previous signal and will be  neutralized by the next signal, it forms a  Markov chain for this  interference neutralization process.

From our signal design, it can be proved that the secure rates  $R_k  =    \Imu(v_{k,c}; y_k) - \Imu(v_{k,c}; y_j | v_{j,c} )  \geq \frac{ 2/3 - \epsilon}{2} \log P + o(\log P) $,  $k \neq j$, $k,j\in \{1,2\}$,  and  the secure sum GDoF $d_{\text{sum}} = 4/3$,  are achievable for almost all the channel coefficients  $\{h_{k\ell}\} \in (1, 2]^{2\times 2}$, by using $d_c= 2/3$ GDoF of common randomness.
More details on the proposed scheme can be found in Section~\ref{sec:CJGauIC}.

\section{Achievability  for interference channel  \label{sec:CJGauIC} }

We will here provide the achievability  scheme for the \emph{symmetric} Gaussian IC-SC channel (see Section~\ref{sec:sysICtwouser}).  
Our scheme uses PAM modulation,   Markov chain-based  interference neutralization and  alignment technique in the signal design. 
For the case with $0 \leq  \alpha \leq 1/2$, the optimal secure sum GDoF is achievable without adding common randomness (cf.~\cite{ETW:08, ChenIC:18}). Thus,  here we will  just focus on the case with  $\alpha > 1/2$.   The  scheme details are given in the following subsections.

\subsubsection{Codebook generation}
Transmitter~$k$, $k=1,2$, at first generates a codebook as
  \begin{align}
     \mathcal{B}_{k} \defeq \Bigl\{  v^{\bln}_{k} (w_k,  w'_k):  \  w_{k} \in \{1,2,\cdots, 2^{\bln R_{k}}\},   
      w'_k \in \{1,2,\cdots, 2^{\bln R'_k}\}   \Bigr\}     \label{eq:code2341J}
     \end{align}
where $v^{\bln}_{k} $ denotes the corresponding codewords. The elements of  the codewords are generated  independently and identically based on a particular distribution.   $w'_k$ is an independent randomness that is used to protect the confidential message, and is uniformly distributed over $\{1,2,\cdots, 2^{\bln R'_k}\}$.  
 $R_{k}$ and  $R'_k$ are the rates  of   $w_{k}$  and $w'_k$, respectively.
To transmit the confidential message $w_{k}$, transmitter~$k$  \emph{randomly} chooses a codeword $v^{\bln}_{k}$ from  a sub-codebook  $\mathcal{B}_{k}( w_{k}) $   defined  by 
\begin{align}
  \mathcal{B}_{k} (w_{k})  \defeq \bigl\{ v^{\bln}_{k} (w_{k},  w'_k): \  w'_k \in \{1,2,\cdots, 2^{\bln R'_k}\}   \bigr\},  \quad k=1,2     \label{eq: codebook89284}
\end{align}
 according to a uniform distribution. Then, the selected codeword $v^{\bln}_{k}$ is mapped to the channel input based on the following signal design 
 \begin{align}
  x_{k}(t) =     \varepsilon  v_{k} (t)    + \varepsilon \sum_{\ell=1}^{\tau}  \delta_{k,\ell}   \sqrt{P^{ -  \beta_{u_{\ell}}}} \cdot   u (t)      \label{eq:xvkkk}
   \end{align}
  for $ k=1,2$, where $v_{k} (t)$ denotes the $t$th element of  $v^{\bln}_{k}$;   $\{\delta_{j,\ell}\}_{j, \ell}$ are  parameters that will be  designed specifically later on for different cases of $\alpha$,  based on the Markov chain-based  interference neutralization and  alignment technique. 
  $\varepsilon$ is a parameter designed as 
   \begin{align}\label{eq:varadef11}
 \varepsilon \defeq \begin{cases}  
1      & \quad  \rm{ if~\alpha \neq 1}\\
\frac{h_{11} h_{22}- h_{12} h_{21}}{8}     & \quad  \rm{ if~\alpha = 1}
\end{cases}
   \end{align}
   which is used to regularize the power of the transmitted signal.
$\tau$ is a parameter designed as 
   \begin{align} \label{eq:tau99}
 \tau \defeq \begin{cases}  
\lceil \frac{\alpha}{1-\alpha} \rceil      & \quad  \rm{ if~\alpha < 1}\\
\lceil \frac{\alpha}{\alpha -1} \rceil      & \quad  \rm{ if~\alpha > 1}\\
1   & \quad  \rm{ if~\alpha = 1} .
\end{cases}
   \end{align}
   $u$ is a random variable \emph{independently}  and  \emph{uniformly}  drawn from a  PAM constellation set,  which will be specified  later on.  For the proposed scheme, the common randomness $w_c$ is mapped into three random variables, i.e.,  $w'_1, w'_2$ and $u$, such that $\Hen(w_c) = \Hen(w'_1)+\Hen(w'_2) + \Hen(u)$ and  $w'_1, w'_2$ and $u$ are mutually independent.  Based on our definition, $w'_1, w'_2$ and $u$ are  available at the transmitters but not at the receivers.

\subsubsection{Signal design}

For transmitter $k$, $k=1,2$,  each element of the codeword is designed to have the  following form
 \begin{align}
   v_{k}  =   v_{k, c} +   \sqrt{P^{ - \beta_{k,p}}}     v_{k,p} .  \label{eq:xvk}  
 \end{align}
With this,  the   input signal in~\eqref{eq:xvkkk} can be expressed as 
 \begin{align}
  x_{k}  =   \varepsilon    v_{k,c} +   \varepsilon   \sqrt{P^{ - \beta_{k,p}}}    \cdot  v_{k,p}  + \varepsilon \sum_{\ell=1}^{\tau}  \delta_{k,\ell}   \sqrt{P^{ -  \beta_{u_{\ell}}}} \cdot   u  , \quad k =1,2   \label{eq:xvkkk1}  
 \end{align}
(without time index for simplicity), where  random variables 
$\{v_{k,c}, v_{k,p}, u\}$ are  \emph{independently}  and  \emph{uniformly}  drawn from the following PAM constellation sets
 \begin{align}
   v_{k,c},  u    &  \in    \Omega ( \xi =  \frac{   \gamma}{Q} ,   \   Q =  P^{ \frac{ \lambda_{k,c} }{2}} )  \label{eq:constellationGsym1}   \\ 
   v_{k,p}      &  \in    \Omega ( \xi =  \frac{\gamma}{2Q} ,   \   Q = P^{ \frac{  \lambda_{k,p} }{2}} )    \label{eq:constellationGsym2}    
 \end{align}
 where $\gamma$ is a parameter satisfying the  constraint \[\gamma  \in \bigl(0, \  \frac{1}{ \tau \cdot 2^\tau}]. \]
  In the proposed scheme,   the designed parameters $\{\beta_{k,p}, \beta_{u_{\ell}}, \lambda_{k,c},  \lambda_{k,p}, \lambda_{u}\}_{k,\ell}$  are given in Table~\ref{tab:ICpara}   for different regimes\footnote{Without loss of generality we will take the assumption that $P^{ \frac{ \lambda_{k,c} }{2}}$ and $P^{ \frac{  \lambda_{k,p} }{2}}$  are integers, for $k=1,2$. 
For example,   when $P^{ \frac{ \lambda_{2,c} }{2}}$ isn't an integer,   the parameter $\epsilon$  in Table~\ref{tab:ICpara} can be  slightly modified such that $P^{ \frac{ \lambda_{2,c} }{2}}$ is an integer, for the regime with large $P$. Similar assumption will also be used in the next channel models later.}.  
Based on the signal design in \eqref{eq:constellationGsym1} and \eqref{eq:constellationGsym2}, we have
\begin{align}
 \E |v_{k,c}|^2  &=  \frac{2 \times (\frac{ \gamma}{Q})^2 }{ 2Q +1}  \sum_{i=1}^{Q} i^2  =  \frac{  (\frac{  \gamma}{Q})^2 \cdot Q(Q+1)}{3}  \leq   \frac{  2 \gamma^2 }{3}    \label{eq:power35781}\\   
 \E |u|^2 &\leq   \frac{ 2  \gamma^2 }{3}     \label{eq:power35782}\\
   \E |v_{k,p}|^2   & \leq   \frac{    \gamma^2 }{6}.  \label{eq:power3578}
\end{align} 
From \eqref{eq:varadef11}, \eqref{eq:xvkkk1} and \eqref{eq:power35781}-\eqref{eq:power3578}, we can  verify that the signal $x_k$ satisfies the  power constraint, that is  
\begin{align}
 \E |x_k|^2   =  & \varepsilon^2  \E |v_{k,c}|^2   + \varepsilon^2   P^{ - \beta_{k,p}}   \E |v_{k,p}|^2    +   \varepsilon^2  \big(\sum_{\ell=1}^{\tau}  \delta_{k,\ell}  P^{ -  \beta_{u_{\ell}}} \big)^2 \E |u|^2       \non\\
  \leq & \varepsilon^2   \times \frac{   2\gamma^2 }{3}   +  \varepsilon^2   \times \frac{  \gamma^2 }{6} +  \varepsilon^2    \times  \tau^2  4^{\tau} \times  \frac{ 2  \gamma^2 }{3}   \non\\     \leq &1     \non
\end{align} 
for $k=1,2$, where  $\gamma  \in \bigl(0, \frac{1}{ \tau \cdot 2^\tau}\bigr]$, $\varepsilon^2 \leq 1$ and $\delta_{k,\ell}$  is  designed specifically for different cases of $\alpha$ satisfying the inequality $\varepsilon^2  \delta^2_{k,\ell} \leq  4^{\tau}, \forall k, \ell $, which will be shown later on.

\subsubsection{Secure rate analysis} We define  the rates  $R_k$ and $R_k'$ as 
\begin{align}
R_k &\defeq   \Imu(v_k; y_k) -  \Imu ( v_k; y_{\ell} | v_{\ell} ) - \epsilon   \label{eq:Rk623J} \\  
R_k'  &\defeq  \Imu ( v_k; y_{\ell} | v_{\ell}) - \epsilon  \label{eq:Rk623bJ}  
\end{align}
for some $\epsilon >0$, and $ \ell, k \in \{1,2\}, \ell \neq k$. With our  codebook and signal design, the result of \cite[Theorem~2]{XU:15}   (or \cite[Theorem~2]{LMSY:08}) suggests that the rate pair $(R_1, R_2)$ defined above is achievable  and  the transmission of the messages is secure, i.e.,   $\Imu(w_1; y_{2}^{\bln})  \leq  \bln \epsilon$ and $\Imu(w_2; y_{1}^{\bln})  \leq  \bln \epsilon$. Remind that, based on our codebook design, $v_{1}$ and $v_{2}$ are independent, since $w_1, w_2, w'_1, w'_2$ are mutually independent (cf.~\eqref{eq:code2341J}). 

In what follows we will show how to remove the secrecy constraints in terms of GDoF performance by adding common randomness, focusing on the regime of  $ \alpha > 1/2$. 
 Specifically, we will consider the following five cases: $\frac{2}{3} \leq  \alpha < 1$,   $1 < \alpha \leq 2$,  $\alpha=1$, $\frac{1}{2} < \alpha \leq \frac{2}{3}$, and  $2 \leq \alpha $. 
 In the achievability scheme, a Markov chain-based  interference neutralization method is proposed to accomplish the role of common randomness.

\begin{table}
\caption{Parameter design for the  IC-SC channel.}
\begin{center}
{\renewcommand{\arraystretch}{1.7}
\begin{tabular}{|c|c|c|c|c|c|}
  \hline
                     &   $\frac{1}{2} < \alpha \leq \frac{2}{3}$  &  $\frac{2}{3} \leq  \alpha < 1$  &$\alpha= 1$& $1 < \alpha \leq 2$   & $2 \leq \alpha $  \\
    \hline
  $\beta_{u_{\ell}}, \ell   \in  \{ \!1,\! 2, \cdots\!, \!\tau\!\}$     		     		& $(1\!-\!\alpha)\ell  $      	&    $(1\!-\!\alpha)\ell $   & $0$ &$ (\alpha\!-\!1)(\ell\!-\!1) $   &   $ (\alpha\!-\!1)(\ell\!-\!1) $ \\
    \hline
   $\beta_{1,p}, \ \beta_{2,p}$ 			&   $\alpha$    		&    $\alpha$    &  $\infty$ &  $\infty$    &    $\infty$  \\
    \hline
   $\lambda_{1,c}, \ \lambda_{2,c}$ 		&   $2\alpha -1 - \epsilon$ 	&  $\alpha/2 - \epsilon$     & $1/2 - \epsilon$ &   $\alpha/2 - \epsilon$     &  $1 - \epsilon$   \\
    \hline
   $\lambda_{u}$ 					&    $2\alpha -1 - \epsilon$ &   $\alpha/2 - \epsilon$    & $1/2 - \epsilon$ &   $\alpha/2 - \epsilon$     &   $1 - \epsilon$  \\
  \hline
   $\lambda_{1,p}, \ \lambda_{2,p}$  &   $1 - \alpha - \epsilon$  	&  $1- \alpha - \epsilon$     &   $0$ &   $0 $  &    0  \\
    \hline
    \end{tabular}
}
\end{center}
\label{tab:ICpara}
\end{table}

\subsection{ $2/3 \leq  \alpha < 1$   \label{sec:CJschemeIC231}}

In this case with $2/3 \leq  \alpha < 1$,  based on the  parameters designed in Table~\ref{tab:ICpara},   the transmitted signals take the following forms
\begin{align}
 x_1  = &    v_{1,c}  +    \sqrt{P^{ -  \alpha}}  \cdot   v_{1,p} +  \sum_{\ell=1}^{\tau}  \delta_{1,\ell}   \sqrt{P^{ -  \beta_{u_{\ell}}}} \cdot   u  \label{eq:IC321x1}  \\
 x_2  = &     v_{2,c} +    \sqrt{P^{ -  \alpha}}  \cdot   v_{2,p} +    \sum_{\ell=1}^{\tau}  \delta_{2,\ell}   \sqrt{P^{ -  \beta_{u_{\ell}}}}   \cdot  u  \label{eq:IC321x2}
 \end{align}
where the parameters $\{\delta_{j,\ell}\}_{j, \ell}$ are designed as 
\begin{align} \label{eq: IC321delta}
\delta_{j,\ell} = \left\{ \begin{array}{ll}
- \big(\frac{h_{12} h_{21}}{h_{11} h_{22}}\big)^{\frac{\ell}{2}} & \quad  \textrm{$\ell \in \{2k: 2k\leq \tau, k\in \Zc^+ \}$}\\
&\\
 \frac{h_{ji}}{h_{jj}} \cdot   \big(\frac{h_{12} h_{21}}{h_{11} h_{22}}\big)^{\frac{\ell-1}{2}}  & \quad \textrm{$\ell \in \{2k-1: 2k-1 \leq \tau, k\in \Zc^+\}$}
\end{array} \right.
\end{align}
for $i, j \in \{1,2\}, i \not=j.$ Note that  the common randomness $u$ is used to generate a certain number of  signals with specific  directions and powers, i.e., $\{  \delta_{1, \ell}  \sqrt{P^{ -  \beta_{u_{\ell}}}} u\}_{\ell=1}^\tau$ at transmitter~1 and $\{  \delta_{2, \ell}  \sqrt{P^{ -  \beta_{u_{\ell}}}} u\}_{\ell=1}^\tau$ at transmitter~2.
Then, the received signals are expressed as
\begin{align}
y_{1}&  =     \sqrt{P}  h_{11} v_{1,c}  +     \sqrt{P^{ 1 - \alpha}}  h_{11} v_{1,p}    + \underbrace{   \sqrt{P^{ \alpha }}  h_{12}     v_{2,c}  +    \sqrt{P^{1-  \beta_{u_1}}} \delta_{1, 1}     h_{11}  u}_{\rm{aligned}}\non\\ & \quad  +    \sum_{\ell=1}^{\tau-1}   \underbrace{ (\sqrt{P^{1-  \beta_{u_{\ell+1}}}}   \delta_{1,\ell+1}   h_{11}  +  \sqrt{P^{ \alpha-  \beta_{u_{\ell}}}} \delta_{2,\ell}     h_{12}   ) u}_{\rm{interference~neutralization}} +  \underbrace{  \sqrt{P^{ (\tau+1)\alpha-\tau}}  \delta_{2, \tau}   h_{12}  u +    h_{12} v_{2,p}}_{\text{treated as noise}} +  z_{1}  \non\\
 & =    \sqrt{P}  h_{11} v_{1,c}  +     \sqrt{P^{ 1 - \alpha}}  h_{11} v_{1,p}   +     \sqrt{P^{ \alpha }} h_{12}  (   v_{2,c}  +  u)  +   \sqrt{P^{ (\tau+1)\alpha-\tau}}  \delta_{2, \tau}   h_{12}  u +    h_{12} v_{2,p} +  z_{1}    \label{eq:IC321y1}  \\
y_{2} & =     \sqrt{P} h_{22} v_{2,c}  +     \sqrt{P^{ 1 - \alpha}}  h_{22} v_{2,p}  + \underbrace{   \sqrt{P^{ \alpha }}  h_{21}     v_{1,c}  +    \sqrt{P^{1-  \beta_{u_1}}} \delta_{2,1}      h_{22}  u}_{\rm{aligned}}\non\\ & \quad +    \sum_{\ell=1}^{\tau-1} \underbrace{ (\sqrt{P^{1-  \beta_{u_{\ell+1}}}}    \delta_{2,\ell+1}    h_{22}  +  \sqrt{P^{ \alpha-  \beta_{u_{\ell}}}} \delta_{1,\ell}     h_{21}   ) u}_{\rm{interference~neutralization}} + \underbrace{   \sqrt{P^{ (\tau+1)\alpha-\tau}}   \delta_{1, \tau}    h_{21} u +   h_{21} v_{1,p}}_{\text{treated as noise}}  +  z_{2}  \non\\
& =     \sqrt{P} h_{22} v_{2,c}  +     \sqrt{P^{ 1 - \alpha}}  h_{22} v_{2,p}   +     \sqrt{P^{ \alpha }}  h_{21}  (   v_{1,c}  +  u) +    \sqrt{P^{ (\tau+1)\alpha-\tau}}   \delta_{1, \tau}    h_{21} u +   h_{21} v_{1,p}  +  z_{2}.   \label{eq:IC321y2} 
\end{align}

At the receivers, a Markov chain-based  interference neutralization method is used to remove the interference.  In the above expressions of $y_1$ and $y_2$, we can see that  the  signal $\delta_{1, 1}  \sqrt{P^{ -  \beta_{u_{1}}}}  u$ from transmitter ~1 is used to jam the information signal $v_{2,c}$ at receiver~1 but it will create an interference at receiver~2; this interference will be neutralized by the signal $\delta_{2, 2}  \sqrt{P^{ -  \beta_{u_{2}}}}      u$  from transmitter~2; the added signal $\delta_{2, 2}  \sqrt{P^{ -  \beta_{u_{2}}}}      u$ also creates another interference at receiver~1 but will be neutralized by the next generated signal $\delta_{1, 3}  \sqrt{P^{ -  \beta_{u_{3}}}}  u$; this process repeats until the residual interference can be treated as noise, that is,  both interference  $\sqrt{P^{ (\tau+1)\alpha-\tau}}  \delta_{2, \tau}   h_{12}  u$  and $ \sqrt{P^{ (\tau+1)\alpha-\tau}}   \delta_{1, \tau}    h_{21} u$  can be treated as noise terms  at receiver~1 and receiver~2, respectively.

Based on our signal design, we will prove that the secure rates satisfy $R_k  =     \Imu(v_k; y_k) -  \Imu ( v_k; y_{\ell} | v_{\ell} )  \geq \frac{ 1- \alpha/2 - 2\epsilon}{2} \log P + o(\log P) $,  for  $k, \ell \in \{1,2\}$, $k \neq \ell$,  and  the secure sum GDoF $d_{\text{sum}} = 2(1-\alpha/2)$ is achievable, for almost all the  realizations of the channel coefficients  $\{h_{k\ell}\} \in (1, 2]^{2\times 2}$.  
For the  secure rates described in  \eqref{eq:Rk623J},  letting $\epsilon \to 0$ gives 
\begin{align}
R_1  & =    \Imu(v_1; y_1) - \Imu(v_1; y_2 | v_2 )      \label{eq:lboundit1} \\
R_2  & =    \Imu(v_2; y_2) - \Imu(v_2; y_1 | v_1 ) .    \label{eq:lboundit2}
\end{align}
Due to the symmetry we will focus on bounding the secure rate   $R_1 $  (see \eqref{eq:lboundit1}).  
We will use $ \hat{v}_{1,c}$ and  $\hat{v}_{1,p}$ to denote the  estimates for  $v_{1,c}$ and $v_{1,p}$ respectively from $y_1$,  and use $ \text{Pr} [  \{ v_{1,c} \neq \hat{v}_{1,c} \} \cup  \{ v_{1,p} \neq \hat{v}_{1,p} \}  ] $ to represent  the  error probability of this estimation.
Then the term $\Imu(v_1; y_1)$ can be lower bounded by
 \begin{align}
  &\Imu(v_1; y_1)     \non\\
  \geq &  \Imu(v_1; \hat{v}_{1,c}, \hat{v}_{1,p})  \label{eq:rate471073}     \\
  =  & \Hen(v_1) -   \Hen(v_1  |  \hat{v}_{1,c}, \hat{v}_{1,p})    \non    \\
    \geq &   \Hen(v_1) -       \bigl( 1+    \text{Pr} [  \{ v_{1,c} \neq \hat{v}_{1,c} \} \cup  \{ v_{1,p} \neq \hat{v}_{1,p} \}  ] \cdot \Hen(v_{1}) \bigr)  \label{eq:rate4179439}     \\
         =    &  \bigl( 1 -   \text{Pr} [  \{ v_{1,c} \neq \hat{v}_{1,c} \} \cup  \{ v_{1,p} \neq \hat{v}_{1,p} \}  ] \bigr)   \cdot \Hen(v_{1})  - 1  \label{eq:rate2092434}   
 \end{align}
where \eqref{eq:rate471073} results from the Markov chain $v_1 \to y_1 \to  \{\hat{v}_{1,c}, \hat{v}_{1,p} \}$;
\eqref{eq:rate4179439} uses  Fano's inequality. The rates of $v_{1,c}$, $v_{1,p}$ and $v_{1} =  v_{1,c} +   \sqrt{P^{ - \alpha}}    \cdot  v_{1,p} $ are  computed as
  \begin{align}
  \Hen(v_{1,c}) &=  \log (2 \cdot P^{ \frac{ \alpha/2  - \epsilon}{2}} +1)   \label{eq:rateAWGNIC111341}     \\
  \Hen (v_{1,p}) &=  \log (2 \cdot P^{ \frac{ 1 - \alpha - \epsilon}{2}} +1)   \label{eq:rateAWGNIC222341}   \\
  \Hen(v_{1})
& =    \frac{1- \alpha/2  - 2\epsilon}{2} \log P + o(\log P)  \label{eq:rateAWGNIC333341}                                 
 \end{align}
 where  $v_{1,p}  \in    \Omega (\xi   = \frac{ \gamma}{2Q},   \   Q = P^{ \frac{ 1 - \alpha - \epsilon}{2}} ) $ and $v_{1,c} \in    \Omega (\xi   =   \frac{ \gamma }{Q},   \   Q =  P^{ \frac{ \alpha/2  - \epsilon}{2}} ) $. 
Based on our signal design,  with  $v_{k}$  we can reconstruct $\{v_{k,c}, v_{k,p}\}$, and vice versa, for $k=1, 2$. 
To  derive the lower bound of $ \Imu(v_1; y_1)$, we provide a result below. 
\begin{lemma}  \label{lm:ICrateerror537}
 Consider  the signal design in \eqref{eq:xvkkk1}-\eqref{eq:constellationGsym2} and \eqref{eq:IC321x1}-\eqref{eq: IC321delta} for the case with  $2/3 \leq \alpha < 1$. For almost all the channel realizations,  the error probability  of decoding $\{v_{k,c}, v_{k,p} \}$ from $y_k$ is vanishing when $P$ goes large, that is
 \begin{align}
 \text{Pr} [  \{ v_{k,c} \neq \hat{v}_{k,c} \} \cup  \{ v_{k,p} \neq \hat{v}_{k,p} \}  ]  \to 0         \quad \text {as}\quad  P\to \infty, \quad k=1,2 .    \label{eq:errorcase1111}
 \end{align}
 \end{lemma}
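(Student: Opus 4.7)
The plan is to cast this as a standard real interference alignment decoding problem and invoke a Khintchine--Groshev minimum-distance bound.

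First, I would regroup $y_1$ (cf.~\eqref{eq:IC321y1}) into three parts: the two desired symbols $v_{1,c}, v_{1,p}$ at scales $\sqrt{P}$ and $\sqrt{P^{1-\alpha}}$; a single aligned interferer $w \triangleq v_{2,c} + u$ at scale $\sqrt{P^{\alpha}}$ with coefficient $h_{12}$; and a residual $\zeta \triangleq \sqrt{P^{(\tau+1)\alpha-\tau}}\,\delta_{2,\tau}\, h_{12}\, u + h_{12}\, v_{2,p} + z_1$. Since $\tau = \lceil \alpha/(1-\alpha)\rceil$ gives $(\tau+1)\alpha-\tau \leq 0$, and since $\delta_{2,\tau}$ is a bounded function of $\{h_{k\ell}\}$ while $u$ and $v_{2,p}$ have variances bounded as in \eqref{eq:power35782}--\eqref{eq:power3578}, the effective noise $\zeta$ has variance uniformly bounded in $P$.

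Second, since $v_{2,c}$ and $u$ are independent uniform PAM symbols drawn from the same lattice (cf.~\eqref{eq:constellationGsym1} with $\lambda_{2,c}=\lambda_u$), their sum $w$ takes values in a finite subset of that same lattice of cardinality $O(P^{\alpha/2})$. Writing $v_{1,c}=\xi_{1,c}\,a_{1,c}$, $v_{1,p}=\xi_{1,p}\,a_{1,p}$, $w=\xi_w\, a_w$ with integer coordinates whose magnitudes are all $O(P^{1/2})$, the noise-free part of $y_1$ becomes
\[
T(a_{1,c}, a_{1,p}, a_w) = \sqrt{P}\,h_{11}\,\xi_{1,c}\,a_{1,c} + \sqrt{P^{1-\alpha}}\,h_{11}\,\xi_{1,p}\,a_{1,p} + \sqrt{P^{\alpha}}\,h_{12}\,\xi_w\,a_w,
\]
so recovering $(v_{1,c},v_{1,p})$ reduces to distinguishing these lattice-indexed constellation points against the $O(1)$-variance noise $\zeta$.

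Third, I would apply the Khintchine--Groshev theorem (in the form used in real interference alignment, e.g., by Motahari et al.) to the three coefficients above: for Lebesgue-almost all $(h_{11},h_{12}) \in (1,2]^2$, and hence for almost all $\{h_{k\ell}\}\in (1,2]^{2\times 2}$, every nonzero integer triple $\Delta$ with $\|\Delta\|_\infty \leq O(P^{1/2})$ satisfies $|T(\Delta)| \geq P^{\epsilon_0}$ for some $\epsilon_0>0$ carved out of the slack $\epsilon$ baked into the exponents $\lambda_{k,c}=\alpha/2-\epsilon$ and $\lambda_{k,p}=1-\alpha-\epsilon$ in Table~\ref{tab:ICpara}. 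Consequently, a hard-decision nearest-constellation-point decoder fails only when $|\zeta|>P^{\epsilon_0}/2$, and the Chebyshev tail bound on $\zeta$ yields a vanishing error probability as $P\to\infty$; the symmetric argument at receiver~2 finishes the proof. The main technical obstacle is threading the Khintchine--Groshev Diophantine penalty through the PAM spacings and channel scalings so that the induced minimum distance actually grows with $P$ rather than shrinking, which is precisely why the exponents in Table~\ref{tab:ICpara} are deliberately shaved by $\epsilon$; one must additionally verify that the Khintchine--Groshev null set is not enlarged by the measure-zero set on which $\delta_{2,\tau}$ degenerates, which holds because $\delta_{j,\ell}$ is a nonzero rational function of $\{h_{k\ell}\}$.
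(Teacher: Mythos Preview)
Your overall plan---treat $w=v_{2,c}+u$ as a single aligned PAM interferer, bound the residual by an $O(1)$-variance term, and decode via a minimum-distance/Diophantine argument---is exactly the paper's strategy. However, the step ``apply Khintchine--Groshev to the three coefficients'' has a gap: two of your three directions, namely $\sqrt{P}\,h_{11}\,\xi_{1,c}$ and $\sqrt{P^{1-\alpha}}\,h_{11}\,\xi_{1,p}$, are both scalar multiples of the \emph{same} channel monomial $h_{11}$. The Khintchine--Groshev bound in the real-IA form you cite requires the coefficient monomials to be distinct functions of the channel; with only two-dimensional randomness $(h_{11},h_{12})$ you cannot extract a three-term Diophantine separation. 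Concretely, any difference triple with $\Delta a_w=0$ collapses to $h_{11}\bigl(c_1\Delta a_{1,c}+c_2\Delta a_{1,p}\bigr)$, and the bracket is a deterministic real number on which the ``almost all $(h_{11},h_{12})$'' quantifier is powerless.

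The paper resolves this by a two-stage decode that never faces the degenerate direction: it first treats $v_{1,p}$ (at power level $P^{1-\alpha}$) as bounded interference and jointly decodes $(v_{1,c},w)$ via a genuine \emph{two}-variable minimum-distance bound in $(h_{11},h_{12})$ (Lemma~\ref{lm:distance5723946}, proved from Lemma~\ref{lm:NMb2}), which yields separation of order $P^{(1-\alpha+\epsilon)/2}$---just enough to beat the $P^{(1-\alpha)/2}$-scale contribution of $v_{1,p}$ thanks to the $\epsilon$ slack in Table~\ref{tab:ICpara}. After subtracting the decoded pair, $v_{1,p}$ is recovered from the residual by a single-user argument (Lemma~\ref{lm:AWGNic}). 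Your joint three-symbol decode can be repaired by observing that, within the $h_{11}$ direction, the power-level hierarchy forces $c_1|\Delta a_{1,c}|$ to strictly dominate $c_2|\Delta a_{1,p}|_{\max}$, so the bracket above is deterministically bounded away from zero; but that is an additional argument you did not supply, and once you supply it you have essentially reproduced the paper's successive-decoding structure in disguise.
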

 \begin{proof}
See Appendix~\ref{sec:ICrateerror537}.
  \end{proof}
By incorporating the results of  \eqref{eq:rateAWGNIC333341} and Lemma~\ref{lm:ICrateerror537} into \eqref{eq:rate2092434},  the term  $ \Imu(v_1; y_1)$ in  \eqref{eq:lboundit1} can be lower bounded by  
  \begin{align}
  \Imu(v_1; y_1)   \geq     \frac{1- \alpha/2  - 2\epsilon}{2} \log P + o(\log P)     \label{eq:rateAWGNIC17762341}  
 \end{align}
 for  almost all the channel coefficients  $\{h_{k\ell}\} \in (1, 2]^{2\times 2}$. 
For  the term  $\Imu(v_1; y_2 | v_2 )$  in  \eqref{eq:lboundit1}, we can treat it  as a rate penalty. This penalty can be  bounded by 
  \begin{align}
  &\Imu(v_1; y_2 | v_2 )     \non\\
\leq &  \Imu(v_1; y_2,  v_{1,c} + u  | v_2 )     \label{eq:rate042845523}   \\
=   & \Imu(v_1;    v_{1,c} + u  )   +  \Imu(v_1;      \sqrt{P^{ (\tau+1)\alpha-\tau}}   \delta_{1, \tau}    h_{21} u +   h_{21} v_{1,p}  +  z_{2}      | v_2 , v_{1,c} + u )     \label{eq:rrateAWGNIC18735}   \\
\leq   & \underbrace{ \Hen( v_{1,c} + u  )}_{\leq \log (4  \cdot P^{ \frac{ \alpha/2 - \epsilon}{2}} +1) }  - \underbrace{\Hen(u  )}_{=  \log (2  \cdot P^{ \frac{ \alpha/2 - \epsilon}{2}} +1)}     +    \underbrace{ \hen(   \sqrt{P^{ (\tau+1)\alpha-\tau}}   \delta_{1, \tau}    h_{21} u +   h_{21} v_{1,p}  +  z_{2}  )  }_{ \leq \frac{1}{2} \log ( 2 \pi e (      4^{\tau +1}   + 5 )) } \non\\& -   \underbrace{\hen(    \sqrt{P^{ (\tau+1)\alpha-\tau}}   \delta_{1, \tau}    h_{21} u +    h_{21} v_{1,p} +  z_{2}  |  v_2 , v_{1,c} + u, v_1,  u) }_{=\hen( z_{2})   }   \label{eq:rrateAWGNIC77364}    \\
\leq &  \underbrace{ \log (4   P^{ \frac{ \alpha/2 - \epsilon}{2}} +1)    -   \log (2   P^{ \frac{ \alpha/2 - \epsilon}{2}} +1) }_{\leq 1}   +  \frac{1}{2} \log ( 2 \pi e (      4^{\tau +1}   + 5 ))    -  \frac{1}{2}\log (2 \pi e)    \label{eq:rrateAWGNIC1955}   \\
\leq &  1   +   \frac{1}{2} \log (      4^{\tau +1}   + 5 )   \non  \\
= & \log (2\sqrt{4^{\tau +1}   + 5})     \label{eq:rate49030433}  
 \end{align}
where  
 \eqref{eq:rrateAWGNIC18735} follows from the fact that $v_1, v_2, u $ are mutually independent;
\eqref{eq:rrateAWGNIC77364}   stems from the fact that $\{ v_{k,p}, v_{k,c}\}$ can be reconstructed from $v_{k}$ for $k=1,2$, and the identity that adding a condition will not increase the differential entropy;
\eqref{eq:rrateAWGNIC1955} results from  the derivations that  $\Hen( v_{1,c} + u  ) \leq  \log (4   P^{ \frac{ \alpha/2 - \epsilon}{2}} +1)$, and that  $\hen(   \sqrt{P^{ (\tau+1)\alpha-\tau}}   \delta_{1, \tau}    h_{21} u +   h_{21} v_{1,p}  +  z_{2}  )  \leq  \frac{1}{2} \log ( 2 \pi e (     P^{ (\tau+1)\alpha-\tau}   \delta^2_{1, \tau}   \cdot | h_{21}|^2\cdot \E |u|^2  +   |h_{21}|^2  \cdot \E |v_{1,p}|^2    +  1 )) \leq  \frac{1}{2} \log ( 2 \pi e (      4^{\tau} \times 4  + 4   +  1 )) $.
 In this case with $2/3 \leq  \alpha < 1$,    $P^{ (\tau+1)\alpha-\tau} =P^{ \lceil \frac{\alpha}{1-\alpha} \rceil (\alpha-1) + \alpha}  \leq P^{ \frac{\alpha}{1-\alpha} (\alpha-1) + \alpha} =1$,  and  $\delta^2_{k,\ell} \leq  4^{\tau}, \forall k, \ell $. 
With \eqref{eq:rateAWGNIC17762341} and \eqref{eq:rate49030433}, we  have
\begin{align}
R_1   =    \Imu(v_1; y_1) - \Imu(v_1; y_2 | v_2 )       \geq    \frac{1- \alpha/2 - 2\epsilon}{2} \log P + o(\log P) \non
\end{align}
and  also $R_2 \geq   \frac{1- \alpha/2 - 2\epsilon}{2} \log P + o(\log P)$ resulting from symmetry, for  almost all the channel realizations. It suggests that  the proposed scheme achieves  $d_{\text{sum}}  = 2(1-\alpha/2)$  for almost all the channel realizations by using $d_c= \alpha /2$ GDoF of common randomness.  Note that in our scheme the common randomness is mapped into three random variables, i.e., $w'_1$, $w'_2$ and $u$. In this case,  the rate of   $w'_1$ is $R'_1 =  \Imu ( v_1; y_{2} | v_{2}) - \epsilon  \leq  o(\log P)  - \epsilon$   (see \eqref{eq:Rk623bJ} and \eqref{eq:rate49030433}); the rate of $w'_2$ is $R'_2 =  \Imu ( v_2; y_{1} | v_{1}) - \epsilon \leq   o(\log P)  - \epsilon$; and  the rate of  $u$ is $\Hen (u) =   \log (2  \cdot P^{ \frac{ \lambda_{u} }{2}} +1)  =  \frac{\lambda_{u}}{2} \log P + o(\log P)$, which gives $d_c= \lambda_{u} = \alpha /2$ when $\epsilon \to 0$.

\subsection{$1 < \alpha \leq 2$   \label{sec:CJschemeIC12}}

In this case with $1<  \alpha \leq  2$,  based on the  parameters designed in Table~\ref{tab:ICpara}, the transmitted signals take the forms as 
\begin{align}
 x_1  = &    v_{1,c}  +      \sum_{\ell=1}^{\tau}  \delta_{1,\ell}  \sqrt{P^{ -  \beta_{u_{\ell}}}}   \cdot   u   \label{eq:IC12x1}  \\
 x_2  = &     v_{2,c} +      \sum_{\ell=1}^{\tau}  \delta_{2,\ell}   \sqrt{P^{ -  \beta_{u_{\ell}}}}    \cdot   u   \label{eq:IC12x2}
 \end{align}
where in this case the parameters $\{\delta_{j,\ell}\}_{j, \ell}$ are designed as 
\begin{align} \label{eq: IC12delta}
\delta_{j,\ell} = \left\{ \begin{array}{ll}
- \frac{h_{ii}}{h_{ij}} \cdot   \big(\frac{h_{11} h_{22}}{h_{12} h_{21}}\big)^{\frac{\ell}{2}-1}  & \quad \textrm{$\ell \in \{2k: 2k\leq \tau, k\in \Zc^+\}$}\\
&\\
\big(\frac{h_{11} h_{22}}{h_{12} h_{21}}\big)^{\frac{\ell-1}{2}} & \quad  \textrm{$\ell \in \{ 2k-1: 2k-1 \leq \tau, k\in \Zc^+\}$}
\end{array} \right.
\end{align}
for $i, j \in \{1,2\}, i \not=j$. Similarly to the previous case,  the common randomness $u$ is used to generate a certain number of  signals with specific  directions and powers. 
Then, the received signals are given as
\begin{align}
y_{1} & =     \sqrt{P}  h_{11} v_{1,c}  + \underbrace{   \sqrt{P^{ \alpha }}  h_{12}     v_{2,c}  +    \sqrt{P^{\alpha-  \beta_{u_1}}} \delta_{2,1}      h_{12}  u}_{\rm{aligned}}\non\\ & \quad +   \sum_{\ell=1}^{\tau-1}  \underbrace{ (\sqrt{P^{1-  \beta_{u_{\ell}}}}   \delta_{1,\ell}   h_{11}  +  \sqrt{P^{ \alpha-  \beta_{u_{\ell+1}}}}  \delta_{2,\ell+1}      h_{12}   ) u}_{\rm{interference~neutralization}}+    \underbrace{ \sqrt{P^{\tau- (\tau -1)\alpha}}   \delta_{1, \tau}  h_{11}  u}_{\text{treated as noise}}+  z_{1}  \non\\
&  =      \sqrt{P}  h_{11} v_{1,c} +     \sqrt{P^{ \alpha }} h_{12}  (   v_{2,c}  +  u)  +    \sqrt{P^{\tau- (\tau -1)\alpha}}   \delta_{1, \tau}  h_{11}  u+  z_{1}  \label{eq:IC12y1}  \\
y_{2} & =     \sqrt{P}  h_{22} v_{2,c}    +   + \underbrace{   \sqrt{P^{ \alpha }}  h_{21}     v_{1,c}  +    \sqrt{P^{\alpha-  \beta_{u_1}}} \delta_{1, 1}     h_{21}  u}_{\rm{aligned}}\non\\ & \quad +  \sum_{\ell=1}^{\tau-1}  \underbrace{  (\sqrt{P^{1-  \beta_{u_{\ell}}}}   \delta_{2,\ell}   h_{22}  +  \sqrt{P^{ \alpha-  \beta_{u_{\ell+1}}}} \delta_{1,\ell+1}     h_{21}   ) u}_{\rm{interference~neutralization}}+    \underbrace{   \sqrt{P^{\tau- (\tau -1)\alpha}}   \delta_{2, \tau}     h_{22} u}_{\text{treated as noise}}+  z_{2}  \non\\
& =     \sqrt{P}  h_{22} v_{2,c}    +      \sqrt{P^{ \alpha }}  h_{21}  (   v_{1,c}  +  u)   +     \sqrt{P^{\tau- (\tau -1)\alpha}}   \delta_{2, \tau}     h_{22}u+  z_{2} . \label{eq:IC12y2} 
\end{align}
As can be seen from  the  above expressions of $y_1$ and $y_2$, the interference can be removed by using the  Markov chain-based interference neutralization method. 
In the end,  the interference   $\sqrt{P^{\tau- (\tau -1)\alpha}}   \delta_{1, \tau}  h_{11}  u$  and  the interference $ \sqrt{P^{\tau- (\tau -1)\alpha}}   \delta_{2, \tau}     h_{22}u $  can be treated as the noise terms at receiver~1 and receiver~2, respectively.
Based on our signal design, we will prove that the secure rates satisfy  $R_k  =     \Imu(v_k; y_k) -  \Imu ( v_k; y_{\ell} | v_{\ell} )   \geq \frac{ \alpha/2  - \epsilon}{2} \log P + o(\log P) $,  $k \neq \ell$, $k, \ell \in \{1,2\}$,  and  the secure sum GDoF $d_{\text{sum}}  = \alpha$  is achievable,  for almost all the channel coefficients  $\{h_{k\ell}\} \in (1, 2]^{2\times 2}$.  

Due to the symmetry we will focus on bounding the secure rate  $R_1 $ (see \eqref{eq:lboundit1}).  
Let $ \hat{v}_{1,c}$ be the  estimate for  $v_{1,c}$ from $y_1$,  and let $ \text{Pr} [  \{ v_{1,c} \neq \hat{v}_{1,c} \}] $ denote the corresponding error probability for this estimation.  By following the proof  steps of  Lemma~\ref{lm:ICrateerror537},  in this case with $1 < \alpha \leq 2$  one can prove that  
 \begin{align}
 \text{Pr} [  v_{k,c} \neq \hat{v}_{k,c}  ]  \to 0         \quad \text {as}\quad  P\to \infty  \label{eq:ICrateerror7534}
\end{align}
for almost all the  realizations of the channel coefficients. 
With  $v_{1,c} \in \Omega ( \xi = \frac{ \gamma}{Q} ,   \   Q =  P^{ \frac{\alpha/2 - \epsilon }{2}} )$ and  $v_1 =   v_{1,c}$,  the rate of $v_{1} $ is computed as   
  \begin{align}
 \Hen(v_{1}) =  \Hen(v_{1,c}) &=  \log (2 \cdot P^{ \frac{\alpha/2 - \epsilon }{2}} +1)  \label{eq:rate1329044}     
 \end{align}
and then, $\Imu(v_1; y_1)$ can be  bounded by
  \begin{align}
  \Imu(v_1; y_1)    &\geq  \bigl( 1 -   \text{Pr} [   v_{1,c} \neq \hat{v}_{1,c}  ] \bigr)  \Hen(v_{1})  - 1  \label{eq:ratedw39048}       \\
        & =  \frac{ \alpha/2 - \epsilon}{2} \log P + o(\log P)    \label{eq:rate1084376}  
 \end{align}
 for almost all the  realizations of the channel coefficients, where  
 \eqref{eq:ratedw39048} is derived by following the steps in \eqref{eq:rate471073}-\eqref{eq:rate2092434};
  and \eqref{eq:rate1084376}  uses the results of  \eqref{eq:ICrateerror7534} and  \eqref{eq:rate1329044}.
For the term $\Imu(v_1; y_2 | v_2 )$ in \eqref{eq:lboundit1},  by following the steps in \eqref{eq:rate042845523}-\eqref{eq:rate49030433} it can be  bounded as
\begin{align}
  \Imu(v_1; y_2 | v_2 )    
\leq  o(\log P) .  \label{eq:ratehd32837}  
 \end{align}
From  \eqref{eq:rate1084376} and \eqref{eq:ratehd32837}, we have 
\begin{align}
R_1   =    \Imu(v_1; y_1) - \Imu(v_1; y_2 | v_2 )   \geq  \frac{ \alpha/2  - \epsilon}{2} \log P + o(\log P)  \label{eq:lbounditfinal1132}
\end{align}
and   also  $R_2 \geq  \frac{ \alpha/2  - \epsilon}{2} \log P + o(\log P) $ resulting from symmetry,  for almost all the channel realizations.  By letting $\epsilon \to 0$,    the proposed scheme achieves $d_{\text{sum}}  =  \alpha$   for almost all  channel realizations by using $d_c= \alpha /2$ GDoF of common randomness.

 \subsection{$ \alpha =1$   \label{sec:CJschemeICeq}}

In this case with $ \alpha =1$,  based on the  parameters designed in Table~\ref{tab:ICpara},  and  by setting 
\begin{align}
\delta_{1, 1}= \frac{h_{22}h_{12}- h_{12}h_{21}}{h_{11} h_{22}- h_{12} h_{21}},  \quad \delta_{2, 1}= \frac{h_{11}h_{21}- h_{21}h_{12} }{ h_{22} h_{11}- h_{21}  h_{12}},
\end{align}
  the transmitted signals take the following forms
\begin{align}
 x_1  = &  \varepsilon  v_{1, c}  +  \varepsilon  \frac{h_{22}h_{12}- h_{12}h_{21}}{h_{11} h_{22}- h_{12} h_{21}} \cdot u    \label{eq:ICeqx1}  \\
 x_2  = & \varepsilon v_{2, c}   + \varepsilon    \frac{h_{11}h_{21}- h_{21}h_{12} }{ h_{22} h_{11}- h_{21}  h_{12}} \cdot u.     \label{eq:ICeqx2}
 \end{align}
Note that in this case, $\tau=1$ and $\varepsilon = \frac{h_{11} h_{22}- h_{12} h_{21}}{8}   $. 
Then, the received signals are simplified as
\begin{align}
y_{1} &=  \varepsilon   \sqrt{P}  h_{11} v_{1, c} + \varepsilon   \sqrt{P}  h_{12}  \underbrace{  (   v_{2, c}  +  u) }_{\text{aligned}} +  z_{1}  \label{eq:ICeqy1}  \\
y_{2} &=  \varepsilon  \sqrt{P}  h_{22} v_{2, c} + \varepsilon   \sqrt{P}  h_{21} \underbrace{ (   v_{1, c}  +  u)}_{\text{aligned}}   +  z_{2} .   \label{eq:ICeqy2} 
\end{align}
From the previous steps in \eqref{eq:rate471073}-\eqref{eq:rateAWGNIC17762341}, for almost all the realizations of the channel coefficients, the term $ \Imu(v_1; y_1)$   in \eqref{eq:lboundit1} can be  lower bounded  by 
  \begin{align}
  \Imu(v_1; y_1)   \geq    \frac{ 1/2 - \epsilon}{2} \log P + o(\log P) .   \label{eq:rateAWGNIC1748397} 
 \end{align}
From the derivations  in \eqref{eq:rate042845523}-\eqref{eq:rate49030433}, the term  $\Imu(v_1; y_2 | v_2 )$  in \eqref{eq:lboundit1} can be upper bounded by 
  \begin{align}
\Imu(v_1; y_2 | v_2 )  \leq  o(\log P).       \label{eq:rateAWGNIC973287}  
 \end{align}
   With \eqref{eq:rateAWGNIC1748397} and \eqref{eq:rateAWGNIC973287}, we have the following bounds on the secure rates 
\begin{align}
R_1   =    \Imu(v_1; y_1) - \Imu(v_1; y_2 | v_2 )   \geq  \frac{ 1/2  - \epsilon}{2} \log P + o(\log P)  \label{eq:lbounditfinal1132a}
\end{align}
and  $R_2 \geq  \frac{ 1/2  - \epsilon}{2} \log P + o(\log P) $ (due to the symmetry),  for almost all  channel realizations.  By letting $\epsilon \to 0$, the proposed scheme achieves  $d_{\text{sum}}    =  1$   for almost all  channel realizations by using $d_c= 1 /2$ GDoF of common randomness.

\subsection{$1/2 <  \alpha \leq  2/3$   \label{sec:CJschemeIC1223}}

 When $1/2 <  \alpha \leq  2/3$, the  signals of the transmitters have the same forms as in \eqref{eq:IC321x1} and \eqref{eq:IC321x2}, and the parameters $\{\delta_{j,\ell}\}_{j, \ell}$ are designed as in  \eqref{eq: IC321delta}. Since $\tau=2$ for this case,  the transmitted signals can be simplified as
 \begin{align}
   x_1  = &     v_{1,c} +    \sqrt{P^{ -  \alpha}}  \cdot   v_{1,p} +    \big ( \sqrt{P^{ \alpha - 1 }}  \cdot  \frac{h_{12}}{h_{11}} - \sqrt{P^{ 2\alpha - 2 }}   \cdot  \frac{h_{12}h_{21}}{h_{11}h_{22}} \big) u \label{eq:IC1223x1}  \\
 x_2  = &      v_{2,c} +     \sqrt{P^{ -  \alpha}}  \cdot   v_{2,p} +    \big ( \sqrt{P^{ \alpha - 1 }}  \cdot  \frac{h_{21}}{h_{22}} - \sqrt{P^{ 2\alpha - 2 }}   \cdot  \frac{h_{21}h_{12}}{h_{22}h_{11}} \big) u. \label{eq:IC1223x2}  
 \end{align}
The received signals then take the following forms
\begin{align}
y_{1} &=     \sqrt{P} h_{11} v_{1,c}  +     \sqrt{P^{ 1 - \alpha}} h_{11} v_{1,p}   +     \sqrt{P^{ \alpha }} h_{12}    \underbrace{ (   v_{2,c}  +  u)}_{\text{aligned}}  +   \underbrace{   h_{12} v_{2,p}  -     \sqrt{P^{ 3\alpha-2}}  \cdot   \frac{h^2_{12}h_{21}}{h_{11}h_{22}} u}_{\text{treated as noise}} +  z_{1}    \label{eq:IC1223y1}  \\
y_{2} &=       \sqrt{P} h_{22} v_{2,c}  +      \sqrt{P^{ 1 - \alpha}} h_{22} v_{2,p}   +     \sqrt{P^{ \alpha }} h_{21}  \underbrace{  (   v_{1,c}  +  u)}_{\text{aligned}} +  \underbrace{    h_{21} v_{1,p}  -     \sqrt{P^{ 3\alpha-2}}  \cdot   \frac{h^2_{21}h_{12}}{h_{22}h_{11}} u }_{\text{treated as noise}} +  z_{2} .   \label{eq:IC1223y2} 
\end{align}
 In the above expressions of $y_1$ and $y_2$,  the interference is removed by using the  Markov chain-based interference neutralization method.

We will focus on bounding the secure rate  $R_1 $.  
By following the derivations in \eqref{eq:rate471073}-\eqref{eq:rate2092434},  the term $\Imu(v_1; y_1)$ can be lower bounded by
 \begin{align}
  \Imu(v_1; y_1)    \geq  \bigl( 1 -   \text{Pr} [  \{ v_{1,c} \neq \hat{v}_{1,c} \} \cup  \{ v_{1,p} \neq \hat{v}_{1,p} \}  ] \bigr)   \cdot \Hen(v_{1})  - 1  \label{eq:rate9401529}     
 \end{align}
where the rate of  $v_{1} $ in \eqref{eq:rate9401529}  is
\begin{align}
\Hen(v_{1}) = \Hen(v_{1,c}) +\Hen(v_{1,p}) =  \frac{ \alpha  - 2\epsilon}{2} \log P + o(\log P) \label{eq: rateofv1}
\end{align} 
and the  error probability in \eqref{eq:rate9401529} is vanishing, stated below. 
 \begin{lemma}  \label{lm:rateerror48912}
With \eqref{eq:xvkkk1}-\eqref{eq:constellationGsym2}  and \eqref{eq:IC1223x1}-\eqref{eq:IC1223x2} and for $1/2 < \alpha \leq 2/3$, the  error probability of decoding  $\{v_{k,c}, v_{k,p} \}$ from $y_k$   is vanishing when $P$ goes large, that is,
 \begin{align}
 \text{Pr} [  \{ v_{k,c} \neq \hat{v}_{k,c} \} \cup  \{ v_{k,p} \neq \hat{v}_{k,p} \}  ]  \to 0         \quad \text {as}\quad  P\to \infty, \quad k=1,2 .   \label{eq:rateerror48912}
 \end{align}
 \end{lemma}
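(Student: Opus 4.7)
The plan is to mirror the proof of Lemma~\ref{lm:ICrateerror537}, adapted to the shorter Markov--chain cascade $\tau=2$ and the residual terms that survive in this regime. Starting from~\eqref{eq:IC1223y1} I would isolate the useful lattice and the effective noise,
\[
y_1 \;=\; \underbrace{\sqrt{P}\,h_{11}v_{1,c} \,+\, \sqrt{P^{1-\alpha}}\,h_{11}v_{1,p} \,+\, \sqrt{P^{\alpha}}\,h_{12}(v_{2,c}+u)}_{x_L} \;+\; \underbrace{h_{12}v_{2,p} \,-\, \sqrt{P^{3\alpha-2}}\,\tfrac{h_{12}^{2} h_{21}}{h_{11}h_{22}}\,u \,+\, z_1}_{z_{\mathrm{eff}}},
\]
and bound $\E|z_{\mathrm{eff}}|^{2}$ by a universal constant: since $\alpha\le 2/3$ forces $3\alpha-2\le 0$, and since $|v_{2,p}|,|u|\le\gamma$ by~\eqref{eq:constellationGsym1}--\eqref{eq:constellationGsym2}, all residual amplitudes are $O(1)$. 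Decoding $(v_{1,c},v_{1,p})$ thus reduces to nearest--neighbour detection of the triple $(v_{1,c},v_{1,p},v_{2,c}+u)$ on the lattice $x_L$, followed by projection onto the first two coordinates.

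To bound the minimum distance of $x_L$ I would parametrize each PAM atom by integer coordinates with bounded magnitude, writing a confusion event as a vanishing of $\Delta(c_1,c_2,c_3)$ over a nonzero triple with $|c_1|\le 2Q_{1,c}$, $|c_2|\le 2Q_{1,p}$, $|c_3|\le 4Q_{2,c}$, and split cases. If the ``$h_{11}$--dominated'' coefficients satisfy $c_1=c_3=0$ but $c_2\neq 0$, the gap is at least the $v_{1,p}$ spacing $\sqrt{P^{1-\alpha}}h_{11}\gamma/(2Q_{1,p}) \gtrsim P^{\epsilon/2}$. Otherwise, factoring the dominant $P$--scale out reduces the gap to the Diophantine quantity $|c_1 h_{11}+c_3 h_{12}|$ multiplied by $P^{(1-\alpha+\epsilon)/2}$; Khintchine's theorem applied to the irrational $h_{11}/h_{12}$ (which holds for almost all $(h_{11},h_{12})\in(1,2]^{2}$) gives $|c_1 h_{11}+c_3 h_{12}|\ge \kappa/\max(|c_1|,|c_3|)^{1+\delta}$ for any $\delta>0$ and some $\kappa>0$. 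Combining with the constraint $|c_1|,|c_3|\le 4 Q_{1,c}\sim P^{(2\alpha-1)/2}$ yields a minimum spacing $\gtrsim P^{[2-3\alpha+\epsilon-(2\alpha-1)\delta]/2}$, which is bounded below by a positive constant (and in fact grows with $P$) whenever $\alpha\le 2/3$ and $\delta$ is chosen small.

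A standard Gaussian--tail estimate then controls the nearest--neighbour error probability by $\Pr[|z_{\mathrm{eff}}|\ge \Delta_{\min}/2]$; since $z_{\mathrm{eff}}$ has bounded mean--shift plus a unit--variance Gaussian component and $\Delta_{\min}$ grows at least as a positive power of $P$, a union bound over the $O(P^{(3\alpha-1)/2})$ candidate lattice points still vanishes. Recovering $(v_{1,c},v_{1,p})$ by projection then establishes~\eqref{eq:rateerror48912} for $k=1$, and symmetry handles $k=2$. The main obstacle is the Diophantine step: because $h_{11}$ multiplies two distinct lattice directions, one cannot feed a generic triple $(h_{11},h_{11},h_{12})$ into Khintchine--Groshev directly and must instead split on whether the ``$h_{11}$--block'' coefficients vanish, then apply Khintchine's theorem to the single ratio $h_{11}/h_{12}$ on the surviving block. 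This is precisely the structural difficulty addressed in the proof of Lemma~\ref{lm:ICrateerror537}, so the argument inherits that Diophantine machinery almost verbatim, with the bookkeeping becoming easier because the cascade length $\tau=2$ is short and $3\alpha-2\le 0$ pushes the last--stage residual below the noise floor without further work.
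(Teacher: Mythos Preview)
The paper does not mirror Lemma~\ref{lm:ICrateerror537} here; it uses a much simpler argument. Because the constellation sizes for this regime in Table~\ref{tab:ICpara} are $\lambda_{1,c}=\lambda_{2,c}=\lambda_u=2\alpha-1-\epsilon$ and $\lambda_{1,p}=1-\alpha-\epsilon$, the three received power levels $P$, $P^{\alpha}$, $P^{1-\alpha}$ are separated by more than the corresponding rates, and three rounds of successive decoding via Lemma~\ref{lm:AWGNic} suffice: decode $v_{1,c}$ from $y_1$ treating everything at or below level $P^{\alpha}$ as bounded interference, subtract it and decode $v_{2,c}+u$ treating everything at or below $P^{1-\alpha}$ as bounded interference, then decode $v_{1,p}$. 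No Diophantine input is used, and the conclusion holds for \emph{all} channel realizations, matching the unqualified statement of the lemma. The joint-lattice machinery of Lemma~\ref{lm:ICrateerror537} was required for $2/3\le\alpha<1$ only because there $\lambda_{1,c}=\alpha/2-\epsilon$ and the $v_{1,c}$ and $v_{2,c}+u$ layers genuinely overlap.

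Your Diophantine reduction is also wrong as written. The lattice spacings of the $v_{1,c}$ and $v_{2,c}+u$ terms are $\gamma P^{1-\alpha+\epsilon/2}$ and $\gamma P^{(1-\alpha+\epsilon)/2}$, which differ by a factor $P^{(1-\alpha)/2}$; factoring out a common scale leaves $\bigl|P^{(1-\alpha)/2}c_1 h_{11}+c_3 h_{12}\bigr|$, not $|c_1 h_{11}+c_3 h_{12}|$, and Khintchine's theorem on the ratio $h_{11}/h_{12}$ says nothing about the former. In fact the scale gap is the whole point: whenever $c_1\neq 0$ the first term already dominates, since $|c_3|\le 4P^{(2\alpha-1-\epsilon)/2}$ and $(1-\alpha)/2>(2\alpha-1)/2$ for $\alpha<2/3$ (with equality at $\alpha=2/3$ handled by the $\epsilon$ slack). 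So the minimum distance is bounded below \emph{deterministically}, and once you observe this your argument collapses to the paper's successive-decoding proof. Had the layers genuinely required a Diophantine step, the correct tool would be the Groshev-type Lemma~\ref{lm:NMb2} (which accommodates the $P$-dependent scalings $A_0,A_1$), not Khintchine on a bare ratio, and the conclusion would then be weakened to ``almost all'' channel realizations.
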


\begin{proof}
See Appendix~\ref{sec:rateerror48912}.
\end{proof}
From \eqref{eq:rate9401529}, \eqref{eq: rateofv1} and Lemma~\ref{lm:rateerror48912}, the term $ \Imu(v_1; y_1)$  can be  lower bounded  by 
  \begin{align}
  \Imu(v_1; y_1)   \geq   \frac{ \alpha  - 2\epsilon}{2} \log P + o(\log P). \label{eq:rateAWGNIC17762}  
 \end{align}
 From the derivations  in \eqref{eq:rate042845523}-\eqref{eq:rate49030433},  the term $\Imu(v_1; y_2 | v_2 )$  in \eqref{eq:lboundit1} is  bounded by 
  \begin{align}
\Imu(v_1; y_2 | v_2 ) \leq  o(\log P).    \label{eq:rateAWGNIC9562}  
 \end{align}
With  \eqref{eq:rateAWGNIC17762} and \eqref{eq:rateAWGNIC9562}, we conclude that 
\begin{align}
R_1   =    \Imu(v_1; y_1) - \Imu(v_1; y_2 | v_2 )   \geq \frac{\alpha  - 2\epsilon}{2} \log P + o(\log P)  \label{eq:lbounditfinal1132b}
\end{align}
and  also $R_2 \geq   \frac{\alpha  - 2\epsilon}{2} \log P + o(\log P)$, which imply that the proposed scheme achieves  $d_{\text{sum}}  = 2\alpha$,  by using $d_c= 2\alpha  -1$ GDoF of common randomness.

 \subsection{$\alpha \geq  2$   \label{sec:CJschemeIC2}}

When $\alpha \geq  2$, the signals of the transmitters have the same forms as in \eqref{eq:IC12x1} and \eqref{eq:IC12x2},   and the parameters $\{\delta_{j,\ell}\}_{j, \ell}$ are designed as in  \eqref{eq: IC12delta}. Since $\tau=2$ for this case, the transmitted signals can be simplified as
  \begin{align}
   x_1  = &     v_{1,c} +     \big ( 1- \sqrt{P^{ 1- \alpha }}  \cdot  \frac{h_{22}}{h_{21}} \big) u \label{eq:IC2x1}  \\
   x_2  = &     v_{2,c} +     \big ( 1- \sqrt{P^{ 1- \alpha }}  \cdot  \frac{h_{11}}{h_{12}} \big) u \label{eq:IC2x1b} 
 \end{align}
and the received signals can be simplified as 
\begin{align}
y_{1} &=     \sqrt{P} h_{11} v_{1,c}  +     \sqrt{P^{ \alpha }} h_{12}  \underbrace{(   v_{2,c}  +  u)}_{\text{aligned}}  -  \underbrace{   \sqrt{P^{ 2- \alpha }}  \cdot  \frac{h_{22}h_{11}}{h_{21}}  u}_{\text{treated as noise}} +  z_{1}    \label{eq:IC2y1}  \\
y_{2} &=     \sqrt{P} h_{22} v_{2,c}   +    \sqrt{P^{ \alpha }} h_{21}  \underbrace{ (   v_{1,c}  +  u)}_{\text{aligned}}  -  \underbrace{   \sqrt{P^{ 2- \alpha }}  \cdot  \frac{h_{11}h_{22}}{h_{12}}  u }_{\text{treated as noise}} +  z_{2} .   \label{eq:IC2y2} 
\end{align}
By following the proof  steps of  Lemma~\ref{lm:rateerror48912},  the  error probability for the estimation of $v_{k,c}$ from $y_k$, $k=1,2$, is proved to be vanishing, that is, $\text{Pr} [  v_{k,c} \neq \hat{v}_{k,c}  ]  \to 0$ as  $P\to \infty$.
Given that $v_1 =  v_{1,c}$ and  $v_{1,c} \in \Omega ( \xi =  \frac{ \gamma }{Q} ,   \   Q =  P^{ \frac{1 - \epsilon }{2}} )$,  the rate of $v_{1} $ is $\Hen(v_{1}) =  \Hen(v_{1,c}) =  \log (2 \cdot P^{ \frac{1  - \epsilon }{2}} +1)$. 
At this point, the term $ \Imu(v_1; y_1)$  in \eqref{eq:lboundit1}  is bounded as 
  \begin{align}
  \Imu(v_1; y_1)    \geq  &     \bigl( 1 -   \text{Pr} [   v_{1,c} \neq \hat{v}_{1,c}  ] \bigr)   \cdot \Hen(v_{1})  - 1 \non\\
= &\frac{ 1  - \epsilon}{2} \log P + o(\log P).  \label{eq:rateAWGNIC17762322}  
 \end{align}
Furthermore, from  \eqref{eq:rate042845523}-\eqref{eq:rate49030433} we can derive the upper bound of $\Imu(v_1; y_2 | v_2 )$ in \eqref{eq:lboundit1} as
  \begin{align}
\Imu(v_1; y_2 | v_2 ) \leq o(\log P).      \label{eq:rateAWGNIC9562322}   
 \end{align}
With \eqref{eq:rateAWGNIC17762322} and \eqref{eq:rateAWGNIC9562322}, the following bounds hold true:
\begin{align}
R_1   =    \Imu(v_1; y_1) - \Imu(v_1; y_2 | v_2 )   \geq  \frac{ 1  - \epsilon}{2} \log P + o(\log P) \non 
\end{align}
and   $R_2 \geq    \frac{ 1  - \epsilon}{2} \log P + o(\log P)$, which suggest that the proposed scheme achieves $d_{\text{sum}}= 2$   by using $d_c= 1$ GDoF of common randomness.

\section{Achievability  for wiretap channel  with a helper  \label{sec:CJGauwiretap} }

In this section, we provide the achievability  scheme for a Gaussian WTH channel (see~Section~\ref{sec:syswchr}).  
The proposed scheme uses PAM modulation,   Markov chain-based  interference neutralization and  alignment technique  in the signal design. 
For the case with  $0 \leq  \alpha \leq 1/2$, the optimal secure sum GDoF  $\dso (\alpha) = 1$ is achievable without adding common randomness (cf.~\cite{CG:18arxiv}).  Therefore,  here we  will just focus on the case with  $\alpha > 1/2$ and prove that $d (\alpha) =  1$ is achievable.  The scheme details are given as follows.
\subsubsection{Codebook generation}
The codebook generation is similar to the previous case for the interference channel with confidential messages, with one difference being that only transmitter~$1$ is required to generate the codebook in this channel. Note that in this channel transmitter~2 will act as a helper without sending message.   
For transmitter~$1$,  it generates a codebook given by
  \begin{align}
     \mathcal{B} \defeq \Bigl\{  v^{\bln} (w_1,  w'_1):  \  w_{1} \in \{1,2, 3, \cdots, 2^{\bln R_{1}}\},   
      w'_1 \in \{1,2, 3, \cdots, 2^{\bln R'_1}\}   \Bigr\}     \label{eq:code23418937Jwiretap}
     \end{align}
   which is  similar to that in \eqref{eq:code2341J}. 
To transmit the confidential message $w_{1}$, transmitter~$1$  chooses a codeword $v^{\bln}$   \emph{randomly} from  a sub-codebook  $\mathcal{B}( w_{1}) $   defined  by 
\[   \mathcal{B} (w_{1})  \defeq \bigl\{ v^{\bln} (w_{1},  w'_1): \  w'_1 \in \{1,2,\cdots, 2^{\bln R'_1}\}   \bigr\}  \]
 according to a uniform distribution (similar to \eqref{eq: codebook89284}).
Then the selected codeword $v^{\bln}$ is mapped to the channel input under the following signal design
 \begin{align}
  x_{1}(t) =     \varepsilon  v (t)    + \varepsilon \sum_{\ell=0}^{\tau}  \delta_{1,\ell}   \sqrt{P^{ -  \beta_{u_{\ell}}}} \cdot   u (t)      \label{eq:xvkkkwiretap}
   \end{align}
where  $\{\delta_{k, \ell}\}_{k, \ell} $ are the parameters which will be  specified  later  by using the Markov chain-based  interference neutralization and  alignment technique.  $\varepsilon$ is a parameter designed as   
 \begin{align}\label{eq:varadef1122}
 \varepsilon \defeq \begin{cases}  
1      & \quad  \rm{ if~\alpha \neq 1}\\
\frac{h_{11} h_{22}- h_{12} h_{21}}{8}     & \quad  \rm{ if~\alpha = 1}.
\end{cases}
   \end{align}
$\tau$ is another parameter designed as 
   \begin{align} \label{eq:tau9787}
 \tau \defeq \begin{cases}  
 1   & \quad  \rm{ if~\alpha = 1} \\
 \big \lceil \frac{1}{2(1-\alpha)} \big \rceil    & \quad  \rm{ if~\alpha < 1}\\
 \big \lceil \frac{1}{2(\alpha-1)} \big \rceil   & \quad  \rm{ if~\alpha > 1}. \\
\end{cases}
   \end{align}
 $u$ is a random variable \emph{independently}  and  \emph{uniformly}  drawn from a  PAM constellation set which will be specified later on.  In this channel, the common randomness $w_c$ is mapped into two random variables, i.e.,  $w'_1$ and $u$, such that $\Hen(w_c) = \Hen(w'_1) +  \Hen(u)$ and  $w'_1$ and $u$ are mutually independent.  Based on our definition, $w'_1$ and $u$ are  available at the transmitters but not at the receivers.

\subsubsection{Signal design}

In the scheme,  each element of the codeword $v^{\bln}$ in \eqref{eq:code23418937Jwiretap} is designed to take the following form
 \begin{align}
   v(t)  =   v_ c(t) +   \sqrt{P^{ - \beta_{p}}}    \cdot  v_{p} (t)  \label{eq:xvkwiretap}  
 \end{align}
 which gives  the  channel input in~\eqref{eq:xvkkkwiretap}  as 
 \begin{align}
  x_{1}  =   \varepsilon    v_{c} +   \varepsilon   \sqrt{P^{ - \beta_{p}}}    \cdot  v_{p}  + \varepsilon \sum_{\ell=0}^{\tau}  \delta_{1,\ell}   \sqrt{P^{ -  \beta_{u_{\ell}}}} \cdot   u     \label{eq:xvkkk1wiretap}  
 \end{align}
(removing the time index for simplicity).  At transmitter~2, it sends  the jamming  signal designed as 
 \begin{align}
  x_{2}  =   \varepsilon   \sum_{\ell=1}^{\tau}  \delta_{2,\ell}    \sqrt{P^{ -  \beta'_{u_{\ell}}}}  \cdot u  \label{eq:xvkkk1wiretap222}  
 \end{align}
where the random variables $u$, $v_{c}$ and  $v_{p}$   are \emph{independently} and \emph{uniformly}  drawn from the corresponding PAM constellation sets
 \begin{align}
   v_{c},  u    &  \in    \Omega ( \xi =  \frac{  \gamma}{Q} ,   \   Q =  P^{ \frac{ \lambda_{c} }{2}} )  \label{eq:constellationGsym1wiretap}   \\ 
   v_{p}      &  \in    \Omega ( \xi =  \frac{\gamma}{2Q} ,   \   Q = P^{ \frac{  \lambda_{p} }{2}} )    \label{eq:constellationGsym2wiretap}    
 \end{align}
  and $\gamma$ is a parameter satisfying the constraint
   \begin{align}
   \gamma  \in \bigl(0,  \ \frac{1}{(\tau +2)\cdot 4^\tau}\big].    \label{eq:gammaWH999}    
 \end{align}
 Table~\ref{tab:wiretappara} provides the designed parameters $\{\beta_{p}, \beta_{u_{\ell}}, \beta'_{u_{\ell}}, \lambda_{c},  \lambda_{p}, \lambda_{u}\}$  under  different regimes. 
Based on our signal design (see \eqref{eq:constellationGsym1wiretap}-\eqref{eq:constellationGsym2wiretap}), and  by following the steps in \eqref{eq:power35781}-\eqref{eq:power3578} we have
\begin{align}
 \E |v_{c}|^2  =  \E |u|^2 \leq   \frac{ 2  \gamma^2 }{3} \quad {\rm{and}}   \quad  \E |v_{p}|^2    \leq   \frac{    \gamma^2 }{6}.  \label{eq:constellationcalculation}
\end{align} 
By combining \eqref{eq:xvkkk1wiretap},  \eqref{eq:xvkkk1wiretap222} and \eqref{eq:constellationcalculation},  one can check that  the power constraints  $\E |x_1|^2 \leq 1$ and   $\E |x_2|^2 \leq 1$  are satisfied.

\subsubsection{Secure rate analysis} We define  the rates  $R_1$ and $R'_1$ as 
\begin{align}
R_1 &\defeq   \Imu(v; y_1) -  \Imu ( v; y_2 ) - \epsilon   \label{eq:Rk623Jwiretap} \\  
R'_1  &\defeq  \Imu ( v; y_2) - \epsilon .  \label{eq:Rk623bJwiretap}  
\end{align}
With our  codebook  and signal design,  the result of  \cite[Theorem~2]{XU:15}   (or \cite[Theorem~2]{LMSY:08}) suggests that the rate $R_1$ defined in  \eqref{eq:Rk623Jwiretap} is achievable  and the transmission of the message $w_1$  is secure, that is,   $\Imu(w_1; y_{2}^{\bln})  \leq  \bln \epsilon$.
For this WTH channel, it  can be treated as  a particular case of the IC-SC channel  if we remove the second transmitter's message (or set it empty). Thus,  the result of \cite[Theorem~2]{XU:15}   (or \cite[Theorem~2]{LMSY:08}) derived for  the IC-SC channel reveals that the secure rate  $R_1$ defined in  \eqref{eq:Rk623Jwiretap} is achievable in this WTH channel.

In what follows we will provide the rate analysis, focusing on the regime of  $ \alpha > 1/2$.  Specifically, we will consider the following three cases:     $\frac{1}{2} <  \alpha < 1$,  $ \alpha >1 $ and $\alpha=1$. 
In the achievability scheme, a Markov chain-based  interference neutralization method is proposed.

\begin{table}
\caption{Parameter design for the WTH channel.}
\begin{center}
{\renewcommand{\arraystretch}{2}
\begin{tabular}{|c|c|c|c|}
  \hline
                     &   $1/2<  \alpha <1$  &  $\alpha =1$  &   $\alpha >1$      \\
    \hline
     $\beta_{u_{0}}$    				  &    $\infty$    &   $\infty  $ &   $0$      \\
    \hline
  $\beta_{u_{\ell}}, \  \ell \in\{1, 2, \cdots, \tau-1\}$    				  &    $2\ell(1-\alpha)$    &   $0  $ &   $2\ell(\alpha-1)$      \\
    \hline
  $\beta_{u_{\tau}}$    				  &    $\infty$    &   $0  $ &   $2\tau(\alpha-1)$      \\
    \hline
 $\beta'_{u_{\ell}}, \  \ell \in\{1, 2, \cdots, \tau\} $    				  &    $(2\ell-1)(1-\alpha) $    &   $0 $ &   $ (2\ell-1)(\alpha-1) $      \\
    \hline
       $\beta_{p} $    		   			   &       $\alpha $    &   $\infty $ &   $ \infty$       \\
    \hline 
   $\lambda_{c}$ 			            &   $\alpha - \epsilon$     &  $  1   - \epsilon$  &  $1  - \epsilon$          \\
  \hline
   $\lambda_{p}$   	    &     $1- \alpha - \epsilon $  &   $0$ &  $0$        \\
    \hline
     $\lambda_{u}$ 			            &   $\alpha - \epsilon$     &  $  1   - \epsilon$  &  $1  - \epsilon$          \\
  \hline
    \end{tabular}
}
\end{center}
\label{tab:wiretappara}
\end{table}

\subsection{$ 1/2 < \alpha < 1$   \label{sec:CJschemewiretap1}}

When $ 1/2 <  \alpha <  1$, the parameters $\{\delta_{k,\ell} \}_{k, \ell}$ are designed  as
\begin{align}
\delta_{1,\ell} &= -  \big(\frac{h_{12} h_{21}}{h_{11} h_{22}}\big)^{\ell} \   \quad \quad  \quad  \quad \ell \in \{ 1, 2, \cdots, \tau-1\} \non\\
 \delta_{2,\ell} & =  \frac{h_{21}}{h_{22}} \cdot   \big(\frac{h_{12} h_{21}}{h_{11} h_{22}}\big)^{\ell-1}   \quad  \quad \ell \in \{1, 2, \cdots, \tau\}.  \non
\end{align}
 In this case,  the transmitted signals take the forms   as
\begin{align}
 x_1  = &    v_{c}  +    \sqrt{P^{ -  \alpha}}  \cdot   v_{p} +    \sum_{\ell=1}^{\tau-1}  \delta_{1,\ell}    \sqrt{P^{ -  \beta_{u_{\ell}}}} \cdot  u  \label{eq:wiretap1x1}  \\
 x_2  = &      \sum_{\ell=1}^{\tau}  \delta_{2,\ell}    \sqrt{P^{ -  \beta'_{u_{\ell}}}}  \cdot u  . \label{eq:wiretap1x2}
 \end{align}
 Note that  the common randomness $u$ is used to generate a certain number of  signals with specific  directions and powers, i.e.,  $\{  \delta_{1, \ell}  \sqrt{P^{ -  \beta_{u_{\ell}}}} u\}_{\ell=0}^{\tau}$ at transmitter~1 and $\{  \delta_{2, \ell}  \sqrt{P^{ -  \beta'_{u_{\ell}}}} u\}_{\ell=1}^\tau$ at transmitter~2.
Then, the received signals are expressed as
\begin{align}
y_{1} &=     \sqrt{P}  h_{11} v_{c}  +      \sqrt{P^{ 1 - \alpha}}  h_{11} v_{p}   +    \sum_{\ell=1}^{\tau-1} \underbrace{ (\sqrt{P^{1-  \beta_{u_{\ell}}}}   \delta_{1,\ell}   h_{11}  \!+\!  \sqrt{P^{ \alpha-  \beta'_{u_{\ell}}}} \delta_{2,\ell}     h_{12}   ) u}_{\rm{interference~neutralization}} +      \underbrace{\sqrt{P^{ 2 \tau \alpha +1 -2\tau}}   \delta_{2, \tau}   h_{12}  u}_{\text{treated as noise}} \!+\! z_{1}  \non\\ 
&=     \sqrt{P}  h_{11} v_{c}  +      \sqrt{P^{ 1 - \alpha}}  h_{11} v_{p}  +      \sqrt{P^{ 2 \tau \alpha +1 -2\tau}}   \delta_{2, \tau}   h_{12}  u + z_{1}  \label{eq:wiretap1y1}  \\
y_{2} &=   \underbrace{   \sqrt{P^{ \alpha }}  h_{21}     v_{c}  +    \sqrt{P^{1-  \beta'_{u_1}}} \delta_{2,1}      h_{22}  u}_{\rm{aligned}}  +   h_{21} v_{p}  +    \sum_{\ell=1}^{\tau-1} \underbrace{ (\sqrt{P^{ \alpha-  \beta_{u_{\ell}}}}   \delta_{1,\ell}   h_{21}  +  \sqrt{P^{1-  \beta'_{u_{\ell+1}}}}  \delta_{2,\ell+1}      h_{22}   ) u}_{\rm{interference~neutralization}} +  z_{2}  \non\\
&=    \sqrt{P^{ \alpha }}  h_{21}  (   v_{c}  +  u) +   h_{21} v_{p}  + z_{2}   . \label{eq:wiretap1y2} 
\end{align}

At the receivers, a Markov chain-based  interference neutralization method is used to remove the interference.  In the above expressions  of $y_1$ and $y_2$, we can see that  
the signal $\delta_{2,1}    \sqrt{P^{ -  \beta'_{u_{1}}}}   u  $ from transmitter~2  is used to 
jam the information signal $v_{c}$ at receiver~2 but it will create an interference at receiver~1; this interference will be neutralized by the signal  $\delta_{1, 1}  \sqrt{P^{ -  \beta_{u_{1}}}}  u$ from transmitter~1; the added signal $\delta_{1, 1}  \sqrt{P^{ -  \beta_{u_{1}}}}  u$ also creates another signal at receiver~2 but will be neutralized by the next generated signal $\delta_{2,2}    \sqrt{P^{ -  \beta'_{u_{2}}}}  u  $; this process repeats until the residual interference can be treated as noise.

For the proposed scheme, the rate $R_1$ defined in  \eqref{eq:Rk623Jwiretap} can be achieved, which can be expressed as   
\begin{align}
R_1  & =     \Imu(v; y_1) -  \Imu ( v; y_{2} )      \label{eq:lbrate1783} 
\end{align}
for $\epsilon \to 0$.  In the following we will bound  the secure rate $R_1$. 
Let $\hat{v}_{c}$  and $\hat{v}_{p}$ be the estimates of $v_{c}$ and  $v_{p}$  respectively from  $y_1$, and let $\text{Pr} [  \{ v_{c} \neq \hat{v}_{c} \} \cup  \{ v_{p} \neq \hat{v}_{p} \}  ] $ denote the corresponding error  probability of this estimation.
With this,  the term $\Imu(v; y_1)$ in \eqref{eq:lbrate1783} has the following bound
  \begin{align}
  \Imu(v; y_1)   &\geq   \Imu(v; \hat{v}_{c}, \hat{v}_{p})  \label{eq:rate5544}     \\
  &=   \Hen(v) -   \Hen(v  |  \hat{v}_{c}, \hat{v}_{p})    \non    \\
    &\geq    \Hen(v) -       \bigl( 1+    \text{Pr} [  \{ v_{c} \neq \hat{v}_{c} \} \cup  \{ v_{p} \neq \hat{v}_{p} \}  ] \cdot \Hen(v) \bigr)  \label{eq:rate28536}     \\
        & =     \bigl( 1 -   \text{Pr} [  \{ v_{c} \neq \hat{v}_{c} \} \cup  \{ v_{p} \neq \hat{v}_{p} \}  ] \bigr)   \cdot \Hen(v)  - 1  \label{eq:rate2256}     
 \end{align}
where  \eqref{eq:rate5544} stems from  the Markov property of $v \to y_1 \to  \{\hat{v}_{c}, \hat{v}_{p} \}$; and  \eqref{eq:rate28536} stems  from Fano's inequality.
Given  that    $v_{c} \in    \Omega (\xi   =  \frac{ \gamma}{Q},   \   Q =  P^{ \frac{ \alpha  - \epsilon}{2}} ) $ and  $v_{p}  \in    \Omega (\xi   =  \frac{ \gamma}{2Q},   \   Q = P^{ \frac{ 1 - \alpha - \epsilon}{2}} ) $,   we have 
   \begin{align}
 \Hen(v) & =  \Hen(v_{c}) +   \Hen(v_{p})  \non\\ 
        &=  \log (2 \cdot P^{ \frac{ \alpha  - \epsilon}{2}} +1)    +  \log (2 \cdot P^{ \frac{ 1 - \alpha - \epsilon}{2}} +1)   \non\\
        &= \frac{ 1  - 2\epsilon}{2} \log P + o(\log P)  .      \label{eq:rate39737}                              
 \end{align}  
Note  that, with our signal design,  we can construct  $\{ v_{c}, v_{p}\}$ based on $v$,  and vice versa. 
Below we have a result on the error probability appeared in \eqref{eq:rate2256}.
 \begin{lemma}  \label{lm:errorcase1}
 With design in \eqref{eq:xvkkk1wiretap}-\eqref{eq:constellationGsym2wiretap} and Table~\ref{tab:wiretappara}, for $1/2 < \alpha < 1$,  the error probability of decoding $\{v_{c}, v_{p} \}$ based on $y_1$ is vanishing when $P$ is large, i.e.,
 \begin{align}
\text{Pr} [  \{ v_{c} \neq \hat{v}_{c} \} \cup  \{ v_{p} \neq \hat{v}_{p} \}  ]    \to 0         \quad \text {as}\quad  P\to \infty .   \label{eq:epcase1}
 \end{align}
 \end{lemma}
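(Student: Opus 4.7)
The plan is to show that, after the Markov-chain interference neutralization at receiver~1, decoding $\{v_c, v_p\}$ reduces to nearest-neighbor decoding on a one-dimensional PAM-sum constellation with bounded effective noise and minimum distance that grows polynomially in $P$. First I would take the simplified expression for $y_1$ given in \eqref{eq:wiretap1y1}, divide through by $h_{11}$, and rewrite the decoder's observation as
\begin{equation*}
\tilde y_1 \;=\; \sqrt{P}\, v_c + \sqrt{P^{1-\alpha}}\, v_p + N, \qquad N \;\defeq\; \frac{ \sqrt{P^{2\tau\alpha + 1 - 2\tau}}\, \delta_{2,\tau}\, h_{12}}{h_{11}}\, u + \frac{z_1}{h_{11}}.
\end{equation*}
Because $\tau = \lceil 1/(2(1-\alpha)) \rceil \geq 1/(2(1-\alpha))$, the exponent satisfies $2\tau\alpha + 1 - 2\tau = 1 - 2\tau(1-\alpha) \leq 0$, so the residual-interference term has bounded power; together with $h_{11} > 1$, $\E|u|^2 \leq 2\gamma^2/3$ and $\E|z_1|^2 = 1$, this yields $\Var(N) \leq C$ for some constant $C$ independent of $P$.

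Next I would lower-bound the minimum distance $d_{\min}$ of the noiseless constellation $\mathcal{S} \defeq \{\sqrt{P}\, v_c + \sqrt{P^{1-\alpha}}\, v_p\}$ by splitting into two cases. From \eqref{eq:constellationGsym1wiretap}--\eqref{eq:constellationGsym2wiretap} with $\lambda_c = \alpha - \epsilon$ and $\lambda_p = 1-\alpha-\epsilon$: (i) if two constellation points share the same $v_c$, their separation is at least $\sqrt{P^{1-\alpha}} \cdot \gamma/(2 P^{(1-\alpha-\epsilon)/2}) = \gamma P^{\epsilon/2}/2$; (ii) if their $v_c$'s differ, the $\sqrt{P}\, v_c$ gap is at least $\gamma P^{(1-\alpha+\epsilon)/2}$, whereas the total $\sqrt{P^{1-\alpha}}\, v_p$ swing is at most $\gamma P^{(1-\alpha)/2}$, so the separation is at least $\gamma P^{(1-\alpha)/2}(P^{\epsilon/2} - 1)$. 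For $P$ large enough, both bounds give $d_{\min} \geq \gamma P^{\epsilon/2}/2$.

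Finally, I would invoke nearest-neighbor decoding together with Chebyshev's (or the Gaussian tail) inequality. A decoding error requires $|N| \geq d_{\min}/2$, so
\begin{equation*}
\Pr\bigl[\{v_c \neq \hat v_c\} \cup \{v_p \neq \hat v_p\}\bigr] \;\leq\; \Pr\bigl[|N| \geq d_{\min}/2\bigr] \;\leq\; \frac{4\,\Var(N)}{d_{\min}^2} \;\leq\; \frac{16\,C}{\gamma^2 P^{\epsilon}} \;\longrightarrow\; 0
\end{equation*}
as $P\to\infty$, proving \eqref{eq:epcase1}. The only non-routine step is verifying that the simplification of $y_1$ used in \eqref{eq:wiretap1y1} is exact, which amounts to checking that the designed gains $\delta_{1,\ell}$ and $\delta_{2,\ell}$, together with the exponents $\beta_{u_\ell}$ and $\beta'_{u_\ell}$ from Table~\ref{tab:wiretappara}, make every cross term $\sqrt{P^{1-\beta_{u_\ell}}}\delta_{1,\ell}h_{11} + \sqrt{P^{\alpha-\beta'_{u_\ell}}}\delta_{2,\ell}h_{12}$ vanish; this is a direct algebraic check from the recursive definition of $\delta_{j,\ell}$ and is the main structural ingredient, whereas the probabilistic part above is standard.
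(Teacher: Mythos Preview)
Your argument is correct, but it takes a different route from the paper. The paper decodes $v_c$ and $v_p$ \emph{successively}: it first groups $\sqrt{P^{1-\alpha}}h_{11}v_p$ together with the residual $u$-term into a bounded ``noise'' $g'$, invokes Lemma~\ref{lm:AWGNic} to get $\Pr[v_c\neq\hat v_c]\to 0$, then subtracts $\sqrt{P}h_{11}v_c$ and invokes Lemma~\ref{lm:AWGNic} a second time on the remaining observation to decode $v_p$. Your approach instead performs \emph{joint} nearest-neighbor decoding on the two-layer PAM sum $\sqrt{P}\,v_c+\sqrt{P^{1-\alpha}}\,v_p$, bounds its minimum distance directly via the two-case split, and finishes with Chebyshev. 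Both are valid; the paper's version is more modular (it reuses the same helper lemma that drives several other error-probability proofs in the manuscript), while yours is more self-contained and makes the layered-lattice structure of the constellation explicit without appealing to an auxiliary lemma. One small refinement you could make: since the $u$-contribution to $N$ is almost-surely bounded (not merely of bounded variance), you could replace Chebyshev by the Gaussian $Q$-function applied to $z_1$ alone, matching the exponential decay that Lemma~\ref{lm:AWGNic} delivers; but for the vanishing-probability statement in \eqref{eq:epcase1} your Chebyshev bound already suffices.
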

\begin{proof}
See~Appendix~\ref{sec:errorcase1}. 
\end{proof}

By incorporating the results of  Lemma~\ref{lm:errorcase1} and  \eqref{eq:rate39737} into  \eqref{eq:rate2256},  it gives 
  \begin{align}
\Imu(v; y_1)  &\geq   \frac{ 1  - 2\epsilon}{2} \log P + o(\log P) .   \label{eq:rb286227}  
 \end{align}
For the last term appeared in  \eqref{eq:lbrate1783}, we can treat it  as a rate penalty. 
This penalty can be  bounded by 
  \begin{align}
  \Imu ( v; y_{2} )    
\leq &  \Imu(v; y_2,  v_{c} + u )    \non \\
=   & \Imu(v;    v_{c} + u  )   +  \Imu(v;    h_{21}  v_{p}        +  z_{2}   |  v_{c} + u )    \label{eq:rb337257}   \\
=   & \Hen( v_{c} + u )  - \Hen(u  )      +  \hen(   h_{21}  v_{p}        +  z_{2}  )   -  \hen(   z_{2} )   \non    \\
\leq &   \underbrace{\log (4  \cdot P^{ \frac{ \alpha - \epsilon}{2}} +1)    -   \log (2  \cdot P^{ \frac{ \alpha  - \epsilon}{2}} +1) }_{\leq 1 }    +   \frac{1}{2} \log ( 2 \pi e (  \underbrace{| h_{21}|^2        +  1 }_{\leq 5}))       -  \frac{1}{2}\log (2 \pi e)    \label{eq:rb2526}   \\
\leq  & \log (2 \sqrt{5})   \label{eq:rb26277}  
 \end{align}
where  
\eqref{eq:rb2526} uses an identity that $v_{c} + u \in  \Omega (\xi   = \gamma \cdot P^{ - \frac{ \alpha  - \epsilon}{2}},   \   Q = 2\cdot  P^{ \frac{ \alpha  - \epsilon}{2}} ) $.
Finally,  by  incorporating the results of  \eqref{eq:rb286227}  and \eqref{eq:rb26277} into \eqref{eq:lbrate1783}, we have
\begin{align}
R_1       \geq   \frac{ 1  - 2\epsilon}{2} \log P + o(\log P)   \label{eq:rbcase1f}   
\end{align}
which suggests that when  $1/2 < \alpha < 1$ the proposed scheme achieves  $d (\alpha)  =  1$  by using $d_c = \alpha$ GDoF of common randomness (mainly due to $u$, and $d_c = \lambda_{u} = \alpha - \epsilon$ with $\epsilon \to 0$). 

\subsection{$ \alpha >1$   \label{sec:CJschemewiretap2}}

In this case with $  \alpha >1 $,   the parameters $\{\delta_{k,\ell} \}_{k, \ell}$ are designed  as
\begin{align}
\delta_{1,\ell} &=  \big(\frac{h_{11} h_{22}}{h_{12} h_{21}}\big)^{\ell} \quad  \quad  \quad \quad  \quad   \quad   \quad \quad  \ell \in \{ 0, 1, 2, \cdots, \tau\}    \label{eq:deltaWH888}\\
 \delta_{2,\ell} & = - \frac{h_{11}}{h_{12}} \cdot   \big(\frac{h_{11} h_{22}}{h_{12} h_{21}}\big)^{\ell-1}  \quad  \quad  \quad  \quad \ell \in \{1, 2, \cdots, \tau\} .   \label{eq:deltaWH777}
\end{align}
The transmitted signals in this case have the following expressions 
\begin{align}
 x_1  = &    v_{c}  +  \sum_{\ell=0}^{\tau}  \delta_{1,\ell}  \sqrt{P^{ -  \beta_{u_{\ell}}}}   \cdot   u   \label{eq:wiretap2x1}  \\
 x_2  = &    \sum_{\ell=1}^{\tau}  \delta_{2,\ell}   \sqrt{P^{ -  \beta'_{u_{\ell}}}}    \cdot   u  .  \label{eq:wiretap2x2}
 \end{align}
 Similarly to the previous case,  the common randomness $u$ is used to generate a certain number of  signals with specific  directions and powers. 
Then, the received signals are expressed as
\begin{align}
y_{1} &=      \sqrt{P}  h_{11} v_{c}  +      \sum_{\ell=0}^{\tau-1} \underbrace{ (\sqrt{P^{1-  \beta_{u_{\ell}}}}   \delta_{1,\ell}   h_{11}  +  \sqrt{P^{ \alpha-  \beta'_{u_{\ell+1}}}}  \delta_{2,\ell+1}      h_{12}   ) u}_{\rm{interference~neutralization}}+  \underbrace{  \sqrt{P^{-2\tau \alpha+2\tau +1}}   \delta_{1, \tau}  h_{11}  u}_{\text{treated as noise}} +  z_{1}  \non\\
&=    \sqrt{P}  h_{11} v_{c}   +    \sqrt{P^{-2\tau \alpha+2\tau +1}}   \delta_{1, \tau}  h_{11}  u +  z_{1}   \label{eq:wiretap2y1}  \\
y_{2} &=  \underbrace{   \sqrt{P^{ \alpha }}  h_{21}     v_{c}  +    \sqrt{P^{ \alpha -  \beta_{u_0}}} \delta_{1, 0}      h_{21}  u}_{\rm{aligned}} +    \sum_{\ell=1}^{\tau} \underbrace{ (\sqrt{P^{ \alpha-  \beta_{u_{\ell}}}}   \delta_{1,\ell}   h_{21}  +  \sqrt{P^{1-  \beta'_{u_{\ell}}}} \delta_{2,\ell}     h_{22}   ) u}_{\rm{interference~neutralization}}  +  z_{2} \non\\
&=    \sqrt{P^{ \alpha}}  h_{21}  (   v_{c}  +  u)  +  z_{2}  .   \label{eq:wiretap2y2} 
\end{align}

At the receivers, a Markov chain-based  interference neutralization method is used to remove the interference.  
From the expression of \eqref{eq:wiretap2y1}, we can estimate  $v_{c}$  from $y_1$ by treating the other signal as noise.  Note that in this case $v_{c}   \in    \Omega ( \xi =  \frac{  \gamma}{Q} ,   \   Q =  P^{ \frac{ 1- \epsilon }{2}} ) $,  and the term  $ \sqrt{P^{-2\tau \alpha+2\tau +1}}   \delta_{1, \tau}  h_{11}  u$ in \eqref{eq:wiretap2y1} can be treated as noise because  $| \sqrt{P^{-2\tau \alpha+2\tau +1}}   \delta_{1, \tau}  h_{11}  u |  \leq    4^\tau  \times 2  \times \gamma      \leq  \frac{4\alpha -4}{4\alpha -3}$,  recalling that  $| u| \leq \gamma$  (cf.~\eqref{eq:constellationGsym1wiretap}),  $\gamma  \in \bigl(0, \frac{1}{(\tau+2)4^\tau}]$  (cf.~\eqref{eq:gammaWH999})  and  $|\delta_{1, \tau}| \leq 4^\tau$  (cf.~\eqref{eq:deltaWH888}).  Therefore, by treating the other signal as noise, one can easily prove that the error probability of decoding $v_{c}$ from $y_1$  is vanishing when $P$ is large,  that is,  
\begin{align}
 \text{Pr} [  v_{c} \neq \hat{v}_{c}  ]   \to 0         \quad \text {as}\quad  P\to \infty  \label{eq:error po3898}
\end{align}
where $\hat{v}_{c}$ is the estimate of $v_{c}$. 

Note that the rate of $v$ is computed as   $\Hen(v)  =  \Hen(v_{c})  = \frac{    1 - \epsilon}{2} \log P + o(\log P) $.
 From \eqref{eq:rate5544}-\eqref{eq:rate2256},  we can bound  $\Imu(v; y_1)$ appeared in \eqref{eq:lbrate1783}  as 
 \begin{align}
\Imu(v; y_1) &  \geq  \bigl( 1 -   \text{Pr} [ \{ v_{c} \neq \hat{v}_{c} \}   ] \bigr)   \cdot \Hen(v)  - 1 \non\\
  &=    \bigl( 1 -   \text{Pr} [ \{ v_{c} \neq \hat{v}_{c} \} ] \bigr)   \cdot  \frac{    1- \epsilon}{2} \log P   + o(\log P) .
   \label{eq:rate34647} 
\end{align} 
With  \eqref{eq:error po3898}  and    \eqref{eq:rate34647},  it gives
\begin{align}
 \Imu(v; y_1) \geq  \frac{    1- \epsilon}{2} \log P   + o(\log P).   \label{eq:rate3464779357}
\end{align}
From \eqref{eq:rb337257}-\eqref{eq:rb26277}, we can also bound  $ \Imu ( v; y_{2} )$  in \eqref{eq:lbrate1783} as
  \begin{align}
  \Imu ( v; y_{2} )  \leq   o(\log P).    \label{eq:rb26277897}  
 \end{align}
Given \eqref{eq:rate3464779357} and \eqref{eq:rb26277897}, we have  
 \[R_1 \geq \frac{    1- \epsilon}{2} \log P   + o(\log P). \] 
 This suggests that, when  $\alpha >1$,  the proposed scheme achieves  $d(\alpha) =  1$ by using $d_c = 1$ GDoF of common randomness.

\subsection{$ \alpha =1$   \label{sec:CJschemewiretapeq}}

When $ \alpha =1$,  by setting
\[\delta_{1, 1}=  - \frac{h_{12}h_{21}}{h_{11} h_{22}- h_{12} h_{21}}, \quad \delta_{2, 1}=  \frac{h_{11}h_{21}}{h_{11} h_{22}- h_{12} h_{21}}, \]
  the transmitted signals take the following forms
\begin{align}
 x_1  = &  \varepsilon v_{c}  -\varepsilon \frac{h_{12}h_{21}}{h_{11} h_{22}- h_{12} h_{21}} \cdot u    \label{eq:wiretapeqx1}  \\
 x_2  = & \varepsilon \frac{h_{11}h_{21}}{h_{11} h_{22}- h_{12} h_{21}} \cdot u  .   \label{eq:wiretapeqx2}
 \end{align}
Note that in this case, $\tau=1$ and $\varepsilon = \frac{h_{11} h_{22}- h_{12} h_{21}}{8} $.
Then, the received signals are expressed as
\begin{align}
y_{1} &=   \varepsilon \sqrt{P}  h_{11} v_{c} +  z_{1}  \label{eq:wiretapeqy1}  \\
y_{2} &=  \varepsilon  \sqrt{P}  h_{21}  (   v_{c}  +  u)   +  z_{2} .  \label{eq:wiretapeqy2} 
\end{align}
Let us consider the bound of the secure rate $R_1$ expressed in \eqref{eq:lbrate1783}. In this case, the rate of $v$ is computed as   $\Hen(v)  =  \Hen(v_{c})  = \frac{    1 - \epsilon}{2} \log P + o(\log P) $.
 By following the steps  in \eqref{eq:rate5544}-\eqref{eq:rb286227},  $\Imu(v; y_1)$   in \eqref{eq:lbrate1783} is   bounded by 
\begin{align}
 \Imu(v; y_1) \geq  \frac{1- \epsilon}{2} \log P   + o(\log P).   \label{eq:rate346477935897}
\end{align}
From  \eqref{eq:rb337257}-\eqref{eq:rb26277}, we can also bound  $ \Imu ( v; y_{2} )$  in \eqref{eq:lbrate1783} as
  \begin{align}
  \Imu ( v; y_{2} )      \leq  1.   \label{eq:rb262778497897}  
 \end{align}
At this point,  given \eqref{eq:rate346477935897} and \eqref{eq:rb262778497897}, we have $R_1  \geq  \frac{    1- \epsilon}{2} \log P   + o(\log P)$. This bound suggests  that, when $\alpha =1$, the proposed scheme achieves   $d(\alpha)=  1$   by using $d_c = 1$ GDoF of common randomness.

\section{Achievability  for multiple access wiretap channel  \label{sec:CJGauMAC} }

In this section, we will provide the achievability  proof of  Theorem~\ref{thm:GDoFmawc}  for MAC-WT  channel defined in Section~\ref{sec:sysmawc}. The following lemma will be used in the proof.

 \begin{lemma}  \label{lm:mac1switchalpha}
Given  the symmetric  Gaussian  MAC-WT channel  defined in Section~\ref{sec:sysmawc},  for any tuple $(d^{'}_1, d^{'}_2, d^{'}_c)$ such that  $(d^{'}_1, d^{'}_2, d^{'}_c) \in \bar{\mathcal{D}} ( \alpha)$,  then 
 \begin{align}
(\frac{1}{\alpha}d^{'}_2, \frac{1}{\alpha}d^{'}_1, \frac{1}{\alpha}d^{'}_c) \in \bar{\mathcal{D}} (\frac{1}{\alpha}). 
 \label{eq:mac1switchalpha}
 \end{align}
 \end{lemma}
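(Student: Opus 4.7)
The strategy is a duality/swap argument: a coding scheme achieving $(d'_1,d'_2,d'_c)$ for the MAC-WT channel with parameter $\alpha$ at power $P$ will be transformed into one achieving $(d'_2/\alpha,\,d'_1/\alpha,\,d'_c/\alpha)$ for the MAC-WT channel with parameter $1/\alpha$ at the rescaled power $\tilde P \defeq P^{\alpha}$, by exchanging the roles of the two transmitters and relabeling the channel coefficients. The identity that makes this work is that in the generic input--output relation \eqref{eq:ICchannelGen1}--\eqref{eq:ICchannelGen2}, the substitution $\tilde P = P^{\alpha}$, $\alpha' = 1/\alpha$ gives $\sqrt{\tilde P^{\alpha'}}=\sqrt{P}$ and $\sqrt{\tilde P}=\sqrt{P^{\alpha}}$, so the direct-link and cross-link strength scales simply trade places.

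\textbf{Construction and verification.} Fix any sequence of length-$n$ codes with encoders $f_1,f_2$ and a decoder $g$ at receiver~1 achieving $(R_1(P,\alpha),R_2(P,\alpha),R_c(P,\alpha))\in\bar{C}(P,\alpha)$ with vanishing error probability and vanishing leakage $\Imu(w_1,w_2;y_2^{n})\le n\epsilon$. For the MAC-WT channel at parameters $(\tilde P,1/\alpha)$, I select coefficients $\tilde h_{11}=h_{12}$, $\tilde h_{12}=h_{11}$, $\tilde h_{21}=h_{22}$, $\tilde h_{22}=h_{21}$, each of which still lies in $(1,2]$. I let the new transmitter~1 carry a message $\tilde w_1$ of rate $R_2(P,\alpha)$ via $\tilde x_1^{n}=f_2(\tilde w_1,w_c)$, and the new transmitter~2 carry $\tilde w_2$ of rate $R_1(P,\alpha)$ via $\tilde x_2^{n}=f_1(\tilde w_2,w_c)$, using the same common randomness $w_c$. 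Substituting into \eqref{eq:ICchannelGen1}--\eqref{eq:ICchannelGen2} with $(\tilde P,1/\alpha)$ and $\{\tilde h_{k\ell}\}$, one verifies that $(\tilde y_1^{n},\tilde y_2^{n})$ has the same joint distribution with $(\tilde w_1,\tilde w_2,w_c)$ as $(y_1^{n},y_2^{n})$ had with $(w_2,w_1,w_c)$ in the original scheme. Hence the decoder $g$ reliably recovers $(\tilde w_1,\tilde w_2)$ at the new receiver~1, the leakage bound $\Imu(\tilde w_1,\tilde w_2;\tilde y_2^{n})=\Imu(w_2,w_1;y_2^{n})\le n\epsilon$ is inherited verbatim, and the power constraint $\E|\tilde x_\ell|^2\le 1$ holds since each $\tilde x_\ell^{n}$ is literally an old transmit sequence.

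\textbf{GDoF translation and main obstacle.} Because $\log\tilde P=\alpha\log P$, the rates $R_2(P,\alpha)$, $R_1(P,\alpha)$, $R_c(P,\alpha)$ achieved by the new scheme translate into GDoF values $\tilde d_1=d'_2/\alpha$, $\tilde d_2=d'_1/\alpha$, $\tilde d_c=d'_c/\alpha$ in $\bar{\mathcal{D}}(1/\alpha)$, which is exactly the conclusion \eqref{eq:mac1switchalpha}. The whole argument is bookkeeping: it combines the transmitter swap---which routes the old $R_2$ into the first coordinate of the new tuple and vice versa---with the elementary identity $R/\bigl(\tfrac{1}{2}\log\tilde P\bigr)=\tfrac{1}{\alpha}\cdot R/\bigl(\tfrac{1}{2}\log P\bigr)$, together with the observation that $\alpha>0$ ensures $\tilde P\to\infty$ as $P\to\infty$. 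I do not foresee any substantive obstacle; the only care needed is to avoid confusing the two GDoF normalizations and to track that the role swap permutes the first two components of the tuple.
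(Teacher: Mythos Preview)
Your proposal is correct and follows essentially the same approach as the paper: swap the transmitter roles, relabel the channel coefficients accordingly, set $\tilde P=P^{\alpha}$ and $\alpha'=1/\alpha$ so that direct and cross-link strengths trade places, and then divide rates by $\tfrac{1}{2}\log\tilde P$ to obtain the $1/\alpha$ scaling on the GDoF tuple. The paper's argument is exactly this bookkeeping, with the same coefficient relabeling $h'_{11}=h_{12}$, $h'_{12}=h_{11}$, $h'_{21}=h_{22}$, $h'_{22}=h_{21}$ and the same GDoF normalization step.
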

\begin{proof}
See Appendix~\ref{sec:mac1switchalphaproof}.
\end{proof}

In what follows, we will first focus on  the case of $0 \leq \alpha \leq 1$ and prove  that the optimal secure GDoF region ${\mathcal{D}}^{*} ( \alpha) = \{(d_1, d_2) | d_1 + d_2 \leq \max\{1, \alpha\}, 0 \leq d_1 \leq   1, 0 \leq d_2 \leq \alpha\}$ is achievable for the $0 \leq \alpha \leq 1$ case\footnote{When $\alpha=0$, ${\mathcal{D}} ( 0) = \{(d_1, d_2) | 0 \leq d_1 \leq   1,  d_2 =0 \}$  is achievable by  using a  single-user transmission scheme. Therefore, without loss of generality, we will focus on the case with $\alpha >0$.}. In Section~\ref{sec:macwiretapschemeaaa000},  we will prove that ${\mathcal{D}}^{*} ( \alpha)$ is achievable for $\alpha>1$ by using the result of  Lemma~\ref{lm:mac1switchalpha}.
In the proposed scheme,  pulse  amplitude modulation,   Markov chain-based  interference neutralization and  alignment technique will be used in the signal design. The details of the proposed scheme are given as follows.

\subsubsection{Codebook}

The codebook generation is the same as that of the interference channel in Section~\ref{sec:CJGauIC} (see \eqref{eq:code2341J} and \eqref{eq: codebook89284}). 
In this setting, the channel input takes the following form 
 \begin{align}
  x_{k}(t) =     \varepsilon  v_{k} (t)    + \varepsilon \sum_{{\ell}=1}^{\tau/2}  \delta_{k, {\ell}}  \sqrt{P^{ -  \beta_{1, k, {\ell}} }}   \cdot  u_1(t)    +  \varepsilon \sum_{{\ell}=1}^{\tau/2}  \delta_{k, {\ell}}    \sqrt{P^{-  \beta_{2, k, {\ell}} }}   \cdot   u_2 (t)       \label{eq:macwt2x189283} 
\end{align}
for $ k=1,2$, where $v_{k} (t)$ denotes the $t$th element of  codeword; $\{ \delta_{k,  {\ell}}\}_{k, \ell}$ and $\tau$ are  parameters that will be designed specifically later on for different cases of $\alpha$; $\varepsilon$  is a parameter designed as  \begin{align}\label{eq:varadef2222}
 \varepsilon \defeq \begin{cases}  
1      & \quad  \rm{ if~\alpha \neq 1}\\
\frac{h_{11} h_{22}- h_{12} h_{21}}{8}     & \quad  \rm{ if~\alpha = 1}.
\end{cases}
   \end{align}

\subsubsection{PAM constellation and signal alignment}

In this setting, all the elements of the codewords are designed to take the following forms (without time index) for transmitter~1 and transmitter~2, respectively,
 \begin{align}
   v_{1}  &=   v_{1, c} +   \sqrt{P^{ -  \beta_{1, p} }}  v_{1, p}   \label{eq:xv11}  \\
   v_{2}  &=   \sqrt{P^{ -  \beta_{2,c}}} \frac{h_{21}}{h_{22}} v_{2,c}   +   \sqrt{P^{ -  \beta_{2, m}}}    \frac{h_{21}}{h_{22}}v_{2, m}.   \label{eq:xv22}  
 \end{align}
 Then, we can rewrite   the  channel input in~\eqref{eq:macwt2x189283}  as 
\begin{align}
 x_1  = &   \varepsilon v_{1, c} +  \varepsilon \sqrt{P^{ -  \beta_{1, p} }}  v_{1, p}   + \varepsilon \sum_{{\ell}=1}^{\tau/2}  \delta_{1, {\ell}}  \sqrt{P^{ -  \beta_{1, 1, {\ell}} }}   \cdot  u_1    +  \varepsilon \sum_{{\ell}=1}^{\tau/2}  \delta_{1, {\ell}}    \sqrt{P^{-  \beta_{2, 1, {\ell}} }}   \cdot   u_2  \label{eq:macwt2x1} \\
 x_2  =&   \varepsilon  \sqrt{P^{ -  \beta_{2,c}}} \frac{h_{21}}{h_{22}} v_{2,c}   +  \varepsilon \sqrt{P^{ -  \beta_{2, m}}}    \frac{h_{21}}{h_{22}}v_{2, m}  + \varepsilon \sum_{{\ell}=1}^{\tau/2}  \delta_{2, {\ell}}  \sqrt{P^{ -  \beta_{1, 2, {\ell}} }}   \cdot  u_1   +  \varepsilon \sum_{{\ell}=1}^{\tau/2}  \delta_{2, {\ell}}    \sqrt{P^{-  \beta_{2, 2, {\ell}} }}   \cdot   u_2 \label{eq:macwt2x2}   
\end{align}
 where the random variables $\{v_{k,c}, v_{1, p}, v_{2, m}, u_k\}_{k=1, 2}$ are   \emph{independent} and \emph{uniformly}  drawn from the PAM constellation sets
 \begin{align}
   v_{1,c}  &  \in    \Omega ( \xi =  \frac{  \eta_{1,c} \gamma}{Q} ,    \   Q = P^{ \frac{  \lambda_{1,c} }{2}})  \label{eq:constellationGsymmac393}   \\    
v_{1, p}   &  \in    \Omega ( \xi =  \frac{   \gamma}{2Q} ,   \   Q =  P^{ \frac{ \lambda_{1, p} }{2}} )  \label{eq:constellationGsymmac39204}   \\ 
   v_{2, c}      &  \in    \Omega ( \xi =  \frac{ \eta_{2,c}  \gamma}{Q} ,   \   Q = P^{ \frac{  \lambda_{2,c} }{2}} )    \label{eq:constellationGsymmac28301}    \\
 u_1    &  \in    \Omega ( \xi =  \frac{\gamma}{Q} ,    \   Q = P^{ \frac{ \max\{ \lambda_{1,c}, \lambda_{2,c}\} }{2}})  \label{eq:constellationGsymmac83374}  \\
  v_{2, m},  u_2    &  \in    \Omega ( \xi =  \frac{   \gamma}{2Q} ,   \   Q =  P^{ \frac{ \lambda_{2, m} }{2}} )  \label{eq:constellationGsymac320}   \
 \end{align}
 respectively, where  $\gamma  \in \bigl(0, \frac{1}{\tau \cdot 2^{\tau}}]$. 
The  parameters $\{ \lambda_{k,c},  \lambda_{1,   p},  \lambda_{2, m}, \beta_{2, c}, \beta_{1,  p},  \beta_{2, m},  \beta_{1, k, {\ell}}, \beta_{2, k, {\ell}}\}_{k, \ell}$ are designed as  in  Table~\ref{tab:wiremacpara}.  Note that $B$ is a  parameter within a specific range, which will be specified later on for different cases of $\alpha$. 
The parameters $\eta_{1,c}$ and $\eta_{2,c}$ are designed to ensure that $v_{1, c}$ and $v_{2, c}$ have a certain integer relationship on  the minimum distances of their PAM constellation sets\footnote{For a PAM constellation set defined as $\Omega (\xi,  Q)  \defeq   \{ \xi  a :   \    a \in   [-Q,   Q] \cap \Zc \}$, the minimum distance of the constellation is $\xi$.}.  Specifically, $\eta_{1,c}$ and $\eta_{2,c}$ are  designed as
\begin{align}
\eta_{i_m,c}&=1,  \quad \quad \quad \eta_{i_n,c}= \Big \lceil \sqrt{P^{ \lambda_{i_m,c} -  \lambda_{i_n,c}}} \Big \rceil \Big/  \sqrt{P^{ \lambda_{i_m,c}-  \lambda_{i_n,c}}}      \label{eq:para001}  \\
\text{where} \quad   i_m& =  \argmax_{i\in\{1, 2\}}   \lambda_{i,c},  \quad \quad i_n \neq i_m ,    \quad  i_m,  i_n \in\{1,2\}.    \label{eq:para010} 
\end{align}
With this design, 
the ratio of the minimum distance of the constellation for $v_{i_n, c}$ and the minimum distance of the  constellation for $v_{i_m, c}$,  i.e.,  $\frac{\eta_{i_n,c} \gamma}{P^{ \frac{  \lambda_{i_n,c} }{2}}} \big/  \frac{\eta_{i_m,c} \gamma}{P^{ \frac{  \lambda_{i_m,c} }{2}}}$, is an integer, where  $1 \leq \eta_{i_n,c}   < 2$. 
By following the steps in \eqref{eq:power35781}-\eqref{eq:power3578}, it is easy to check  that  the average power constraints $\E |x_1|^2 \leq 1$ and $\E |x_2|^2 \leq 1$ are satisfied.
In our scheme, the parameter $\tau$ is designed as 
 \begin{align}\label{eq:taudef3245}
\tau \defeq \begin{cases}  
2\Big \lceil  {\max \big \{ \big \lceil \frac{\alpha}{1- \alpha} \big \rceil,  \big \lceil \frac{1- \alpha +B}{1- \alpha} \big \rceil \big \}/2}   \Big \rceil        & \quad  \rm{ if~\alpha \neq 1}\\
2     & \quad  \rm{ if~\alpha = 1}.
\end{cases}
   \end{align}
The parameters  $\{ \delta_{k,  {\ell}}\}_{k, \ell}$ are designed as 
\begin{align}\label{eq:delta8344}
\delta_{1,  {\ell}}  \defeq \begin{cases}  
-  \big(\frac{h_{12} h_{21}}{h_{11} h_{22}}\big)^{\ell}       & \quad  \rm{ if~\alpha \neq 1}\\
  - \frac{h_{12}h_{21}}{h_{11} h_{22}- h_{12} h_{21}}      & \quad  \rm{ if~\alpha = 1}
\end{cases}
   \end{align}
and 
\begin{align}\label{eq:delta8344b}
\delta_{2,  {\ell}}  \defeq \begin{cases}  
 \frac{h_{21}}{h_{22}} \cdot   \big(\frac{h_{12} h_{21}}{h_{11} h_{22}}\big)^{{\ell}-1}   & \quad  \rm{ if~\alpha \neq 1}\\
  \frac{h_{11}h_{21}}{h_{11} h_{22}- h_{12} h_{21}}    & \quad  \rm{ if~\alpha = 1}
\end{cases}
   \end{align}
for $\ell \in \{1, 2, \cdots \tau/2\}$.

\begin{table}
\caption{Designed  parameters for MAC-WT channel when $\alpha \leq 1$. The last two rows correspond to the design of the parameters $\beta_{1, k, \ell}$ and $\beta_{2, k, \ell}$, for  $k  \in  \{1, 2\}$ and  $\ell  \in \{1,  2,  \cdots,   \frac{\tau}{2}\}$.}
\begin{center}
{\renewcommand{\arraystretch}{2}
\begin{tabular}{|c|c|c|c|c|c|}
  \hline
&\multicolumn{2}{|c|}{$0 \leq \alpha \leq \frac{2}{3}$} & \multicolumn{2}{|c|}{$\frac{2}{3}  \leq \alpha < 1$}  &   $\alpha=1$ \\
  \hline
                     &   $ 0\leq B \leq (2\alpha\!-\!1)^{+}$  &  $ (2\alpha\!-\!1)^{+} <B \leq \alpha$  &   $0 \leq B \leq 2\alpha-1$  &  $  2\alpha \!-\!1 < B \leq  \alpha$  &  $(d^{'}_1, d^{'}_2) \!\in\! {\mathcal {D}}^* (1)$ \\
 \hline
  $\beta_{2, c}$    	&  	$1- \alpha$   &  	$1- \alpha$   &  	$1- \alpha$   &  	$1- \alpha$  & 0 \\
    \hline
  $\beta_{2, m}$    		\!		  &   $\infty$     &   $\alpha -B   $ &   $\infty$     &   $\alpha -B   $   & $\infty$\\
    \hline
  $\beta_{1, p}$    				  &   $\alpha$    &    $\alpha$  &   $\alpha$  &  $\infty$   & $\infty$    \\
\hline
   $\lambda_{2,c}$   	    &     $B- \epsilon$ & $(2\alpha -1)^{+}- \epsilon $  &     $B- \epsilon$ & $2\alpha -1- \epsilon $  &  $d^{'}_2- \epsilon $  \\
    \hline
 $\lambda_{2,m}$   	    &     $0 $  &   $B-(2\alpha -1)^{+}-\epsilon$ &     $0 $  &   $B\!-\!2\alpha \!+\!1\!-\! \epsilon$   &  0 \\
    \hline
   $\lambda_{1,p}$   	      &   $1\!-\!\alpha \!-\!B\!-\! \epsilon $    &   $(\max\{B, 1\!-\!\alpha\} \! \!-\! \!B  \!- \! \epsilon)^{+}  $  &    $\min\{1\!-\!\alpha, 2\alpha\!-\!1\!\!-\!\!B\} \!- \!\epsilon$ & 0  &  0  \\
    \hline
$\lambda_{1, c}$ 			            & $\alpha - \epsilon$      &$(1-B- \lambda_{1,p})^{+} - 2\epsilon$   &  $(1-B- \lambda_{1,p})^{+} - 2\epsilon$  & $1-B-\epsilon$  &  $d^{'}_1- \epsilon $     \\
    \hline
      $\beta_{2, k, {\ell}}$ &  $\infty$  &  $\beta_{1,  k, {\ell}} \!-\! (1+B-2\alpha)$ &  $\infty$  &  $\beta_{1,k, {\ell}} \!-\! (\!1\!+\!B\!-\!2\alpha\!) $ &  $\infty$ \\ 
\hline
$\beta_{1, k, {\ell}}$ &\multicolumn{4}{|c|}{$(2{\ell} -k+1)(1 - \alpha) $} & 0\\
      \hline
    \end{tabular}
}
\end{center}
\label{tab:wiremacpara}
\end{table} 

\subsubsection{Secure rate analysis} 
Given the codebook design and signal mapping, the result of  \cite[Theorem~1]{BMKarxiv:10} implies that we can achieve the following secure rate region
\begin{align}  \label{eq:macsecregion}
\big\{ (R_1, R_2): \sum_{k=1}^{2} R_k     \leq  \big( \Imu(v_1, v_2; y_1) -\Imu(v_1, v_2; y_2)  \big)^{+}, \  R_1   \leq   \Imu(v_1; y_1|v_2), \  R_2   \leq   \Imu(v_2; y_1|v_1) \big\}.
\end{align} 
In the following subsections we will provide the analysis of  the rate region  under three different  cases, i.e.,  $0  \leq \alpha \leq \frac{2}{3}$,  $\frac{2}{3} \leq \alpha < 1$ and $\alpha=1$.   In the proposed scheme, a Markov chain-based  interference neutralization method is used.

\subsection{$ 0 \leq \alpha \leq \frac{2}{3}$ \label{sec:macwiretapschemeaaa1}}

For this case of $ 0 \leq  \alpha \leq \frac{2}{3}$, we will divide the analysis into two cases and show that the secure GDoF region ${\mathcal{D}}^{*} ( \alpha) $ is achievable.

\subsubsection{$ 0 \leq  B \leq (2\alpha-1)^{+}$}
In the case with  $ 0 \leq \alpha \leq \frac{2}{3}$   and $ 0 \leq  B \leq (2\alpha-1)^{+}$, based  on the parameter design in  \eqref{eq:macwt2x1}-\eqref{eq:delta8344b}  and Table~\ref{tab:wiremacpara},
 the transmitted signals take the following forms 
\begin{align}
 x_1  = &    v_{1, c} +   \sqrt{P^{ -  \alpha }}  v_{1, p}  + \sum_{{\ell}=1}^{\tau/2}  \delta_{1, {\ell}}  \sqrt{P^{ -  \beta_{1, 1, {\ell}} }}   \cdot   u_1   \label{eq:macwt2x1casea11}   \\
 x_2  =&     \sqrt{P^{ \alpha- 1}} \frac{h_{21}}{h_{22}} v_{2,c}  +
  \sum_{{\ell}=1}^{\tau/2}  \delta_{2, {\ell}}  \sqrt{P^{ -  \beta_{1, 2, {\ell}} }}   \cdot  u_1.  \label{eq:macwt2x2casea12}
\end{align}
Then the received signals take the  forms as
\begin{align}
y_{1} 
&  \! =  \!      \sqrt{P}  h_{11} v_{1, c}  \!  + \!      \sqrt{P^{2\alpha  \!-  \! 1 }}  \frac{ h_{12}h_{21}}{h_{22}}  v_{2, c}  \!   +  \!   \sqrt{P^{ 1  \!-  \!  \alpha }} h_{11}  v_{1, p} 
 \! + \!    \sum_{{\ell}=1}^{\tau/2}   \underbrace{(\sqrt{P^{1  \!- \!  \beta_{1, 1,  {\ell}}}}   \delta_{1, {\ell}}   h_{11} \!  +  \!  \sqrt{P^{ \alpha  \!-  \! \beta_{1, 2,  {\ell} } }} \delta_{2, {\ell}}     h_{12})}_{\rm{interference~neutralization}}   u_1  \! +  \! z_{1}   \non \\
&=   \sqrt{P}  h_{11} v_{1, c}  +     \sqrt{P^{2\alpha-1 }   } \frac{ h_{12}h_{21}}{h_{22}}  v_{2, c}    +   \sqrt{P^{ 1-   \alpha }} h_{11}  v_{1, p}  +  z_{1} ,  \label{eq:macwiretap2y1casea11}  \\
y_{2} 
&=  \underbrace{   \sqrt{P^{ \alpha }}  h_{21}     v_{1, c}  +     \sqrt{P^{\alpha }}   h_{21}     v_{2, c} +    \sqrt{P^{1-\beta_{1, 2,  1} }   }   h_{22}   \delta_{2, 1}   u_1}_{\rm{aligned}}  \non\\
&\quad +   \sum_{{\ell}=1}^{\frac{\tau}{2}-1} \underbrace{ (\sqrt{P^{ \alpha-  \beta_{1, 1,  {\ell}}}}   \delta_{1,  {\ell}}   h_{21}  +  \sqrt{P^{1-  \beta_{1, 2, {\ell} +1}}} \delta_{2, {\ell}+1}     h_{22}   )}_{\rm{interference~neutralization}}  u_1 + \sqrt{P^{ \alpha-  \beta_{1, 1, \tau/2}}}   \delta_{1, \tau/2}   h_{21} u_1+    h_{12}  v_{1, p} + z_{2}    \non\\
&=    \sqrt{P^{ \alpha}}  h_{21}  (   v_{1, c}  +  v_{2, c} + u_1)+   \sqrt{P^{ \alpha-  \beta_{1, 1, \tau/2}}}   \delta_{1, \tau/2}   h_{21} u_1+    h_{12}  v_{1, p} + z_{2}   . \label{eq:macwiretap2y1casea12}
\end{align}
 In the above expressions of $y_1$ and $y_2$,  the interference is removed by using the  Markov chain-based interference neutralization method.
For the secure rate region in \eqref{eq:macsecregion},  we will prove that $\Imu(v_1, v_2; y_1)  -  \Imu(v_1, v_2; y_2) \geq  \frac{1-3\epsilon}{2}  \log P + o( \log P )$, $ \Imu(v_1; y_1|v_2)  \geq \frac{1-B-2\epsilon }{2}  \log P + o( \log P )$  and $ \Imu(v_2; y_1|v_1)    \geq    \frac{B- \epsilon}{2} \log P + o(\log P)$, which will imply that the   GDoF region  $\{(d_1, d_2) | d_1 + d_2 \leq 1, 0 \leq d_1 \leq   1-B, 0 \leq d_2 \leq B\}$ is achievable, for almost all the channel coefficients  $\{h_{k\ell}\} \in (1, 2]^{2\times 2}$.

First we focus on the lower bound of $\Imu(v_1, v_2; y_1)  -  \Imu(v_1, v_2; y_2)$. Let $\hat v_{1, c}, \hat v_{2, c}$ and $ \hat v_{1,p}$ be the estimates of  $v_{1, c}, v_{2, c}$ and $ v_{1,p}$ from $y_1$, respectively,
Let $ \text{Pr} [  \{ v_{1,c} \neq \hat{v}_{1,c} \} \cup  \{ v_{1,p} \neq \hat{v}_{1,p} \} \cup  \{ v_{2,c} \neq \hat{v}_{2,c} \}  ] $ denote the corresponding  error probability of this estimation.
Then the term $\Imu(v_1, v_2; y_1)$ can be lower bounded by
 \begin{align}
\Imu(v_1, v_2; y_1)   &\geq \Imu(v_1, v_2;  \hat v_{1,c}, \hat v_{1,p}, \hat v_{2,c}) \label{eq:macrate843952}\\
&= \Hen(v_{1}, v_2) -  \Hen(v_{1}, v_2|\hat v_{1,c}, \hat v_{1,p}, \hat v_{2,c})  \non \\
& \geq \Hen(v_{1}, v_2)  - (1+\text{Pr} [  \{ v_{1,c} \neq \hat{v}_{1,c} \} \cup  \{ v_{1,p} \neq \hat{v}_{1,p} \}\cup  \{ v_{2,c} \neq \hat{v}_{2,c} \} ]) \cdot \Hen(v_{1}, v_2) \label{eq:macrate883904}\\
&\geq  \bigl( 1 -   \text{Pr} [  \{ v_{1,c} \neq \hat{v}_{1,c} \} \cup  \{ v_{1,p} \neq \hat{v}_{1,p} \}\cup  \{ v_{2,c} \neq \hat{v}_{2,c} \} ]  \bigr)   \cdot \Hen(v_{1}, v_2)  - 1  \label{eq:macrate8931548}     
 \end{align}
where \eqref{eq:macrate843952} results from the Markov chain $\{v_1, v_2\} \to y_1 \to  \{\hat{v}_{1,c}, \hat{v}_{1,p}, \hat{v}_{2,c} \}$; \eqref{eq:macrate883904} uses Fano's inequality. For the term  $\Hen(v_{1}, v_2)$ appeared in \eqref{eq:macrate8931548} we have 
\begin{align}
\Hen(v_{1}, v_2) = \Hen(v_{1,c}) +\Hen(v_{1,p}) +\Hen(v_{2,c}) =  \frac{1  -3\epsilon}{2} \log P + o(\log P).  \label{eq: rateomactet41}
\end{align}
Below we provide a result  on the error probability appeared in \eqref{eq:macrate8931548}.
 \begin{lemma}  \label{lm:mac1rateerror49293}
When  $ 0 \leq \alpha \leq \frac{2}{3}$   and $ 0\leq B  \leq (2\alpha-1)^{+}$, given the signal design in Table~\ref{tab:wiremacpara}, \eqref{eq:constellationGsymmac393}-\eqref{eq:constellationGsymac320}  and \eqref{eq:macwt2x1casea11}-\eqref{eq:macwt2x2casea12},  for almost all the channel realizations the error probability of decoding  $\{v_{1,c}, v_{1,p}, v_{2,c} \}$ from $y_1$  is vanishing when $P$ goes large, i.e.,
 \begin{align}
 \text{Pr} [  \{ v_{1,c} \neq \hat{v}_{1,c} \} \cup  \{ v_{1,p} \neq \hat{v}_{1,p} \}\cup  \{ v_{2,c} \neq \hat{v}_{2,c} \}]  \to 0         \quad \text {as}\quad  P\to \infty .   \label{eq:macrateerror85852343}
 \end{align}
 \end{lemma}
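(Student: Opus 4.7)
The plan is to adapt the real interference alignment / Khintchine--Groshev argument (as in Motahari--Gharan--Maddah-Ali--Khandani, and parallel to Appendix~\ref{sec:rateerror48912} in this paper) to the three-dimensional received PAM constellation at receiver~1, framed as a minimum-distance decoding problem followed by a Gaussian noise tail bound.

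First, I would express the noiseless part of \eqref{eq:macwiretap2y1casea11} as the integer combination
\begin{align*}
y_1 - z_1 = g_{1}\, a_{1,c} + g_{2}\, a_{2,c} + g_{3}\, a_{1,p},
\end{align*}
where $a_{1,c}, a_{2,c}, a_{1,p}\in\Zc$ are the PAM indices from \eqref{eq:constellationGsymmac393}--\eqref{eq:constellationGsymmac28301}, with $|a_{1,c}|\le P^{\lambda_{1,c}/2}$, $|a_{2,c}|\le P^{\lambda_{2,c}/2}$, $|a_{1,p}|\le P^{\lambda_{1,p}/2}$, and with effective gains
\begin{align*}
g_1 = h_{11}\,\eta_{1,c}\gamma\, P^{(1-\lambda_{1,c})/2},\qquad g_2 = \tfrac{h_{12}h_{21}}{h_{22}}\,\eta_{2,c}\gamma\, P^{(2\alpha-1-\lambda_{2,c})/2},\qquad g_3 = \tfrac{h_{11}\gamma}{2}\, P^{(1-\alpha-\lambda_{1,p})/2}.
\end{align*}
Since $\lambda_{1,c}+\lambda_{2,c}+\lambda_{1,p}=1-3\epsilon$, the whole constellation has $\Theta(P^{(1-3\epsilon)/2})$ points within a range of $\Theta(\sqrt P)$, so the ``average'' spacing is $\Theta(P^{3\epsilon/2})$.

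Second, I would lower bound the minimum spacing
\begin{align*}
d_{\min} \defeq \min_{(\Delta_1,\Delta_2,\Delta_3)\in\Zc^3\setminus\{0\},\ |\Delta_i|\le 2 P^{\lambda_i/2}} \bigl|g_1\Delta_1 + g_2\Delta_2 + g_3\Delta_3\bigr|.
\end{align*}
After factoring $h_{11}$ out of $g_1$ and $g_3$ and writing $r\defeq (h_{12}h_{21})/(h_{11}h_{22})$, this reduces to $h_{11}\bigl|A(\Delta_1,\Delta_3) + r\, B(\Delta_2)\bigr|$, where $A$ and $B$ are fixed real-valued functions of the integer displacements and of $P$. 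The ratio $r$ is a nonconstant analytic function of $\{h_{k\ell}\}\in (1,2]^{2\times 2}$, so the one-parameter Khintchine--Groshev theorem yields, for any $\delta>0$ and Lebesgue-almost all $\{h_{k\ell}\}$, a constant $\kappa=\kappa(\delta,\{h_{k\ell}\})>0$ with $|A+rB|\ge \kappa\,H^{-(1+\delta)}$ on nonzero $(A,B)$, where $H$ is a magnitude bound on $(A,B)$. Plugging in $H=O(P^{(1-3\epsilon)/2})$ and the scalings of $g_1,g_2,g_3$ yields $d_{\min}\ge \kappa'\, P^{c\epsilon}$ for some $c>0$, provided $\delta$ is chosen small enough relative to $\epsilon$.

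Third, a nearest-neighbor decoder combined with a standard Gaussian tail bound gives
\begin{align*}
\text{Pr}\bigl[\{v_{1,c}\neq\hat v_{1,c}\}\cup\{v_{1,p}\neq\hat v_{1,p}\}\cup\{v_{2,c}\neq\hat v_{2,c}\}\bigr] \;\le\; 2\,Q\!\bigl(d_{\min}/2\bigr),
\end{align*}
which vanishes as $P\to\infty$ since $z_1\sim\mathcal{N}(0,1)$ and $d_{\min}\to\infty$, establishing \eqref{eq:macrateerror85852343}. The main obstacle is Step~2: one must handle the fact that $g_1$ and $g_3$ share the same channel factor $h_{11}$, so the three received coefficients are not jointly ``generic'' in the channels. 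The resolution is that $g_3/g_1$ is channel-independent, so only one-parameter Diophantine control against $r$ is needed, which holds on a full-measure subset of channel realizations; the $\epsilon$ shavings in $\lambda_{1,c},\lambda_{2,c},\lambda_{1,p}$ from Table~\ref{tab:wiremacpara} are precisely what keep the exponent $c\epsilon$ positive.
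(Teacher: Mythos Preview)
Your Step~2 has a gap. The one-parameter Khintchine--Groshev theorem asserts that for Lebesgue-almost all $r$, one has $|p + rq| \geq \kappa\,|q|^{-(1+\delta)}$ for all nonzero $(p,q) \in \Zc^2$; crucially, the pair must consist of integers. In your reduction, $A(\Delta_1, \Delta_3) = \eta_{1,c}\gamma\, P^{(1-\lambda_{1,c})/2}\Delta_1 + \tfrac{\gamma}{2}\,P^{(1-\alpha-\lambda_{1,p})/2}\Delta_3$ is a real linear combination whose two coefficients have a $P$-dependent, generically irrational ratio $2\eta_{1,c}P^{(1-\alpha-B)/2}$, so $A$ does not range over integers or over any fixed arithmetic progression. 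The inequality ``$|A+rB| \geq \kappa H^{-(1+\delta)}$ on nonzero $(A,B)$'' therefore does not follow from Khintchine--Groshev, and is false for unrestricted real $(A,B)$ since $A = -rB$ is always solvable. Your observation that $g_3/g_1$ is channel-independent tells you only that the $(\Delta_1,\Delta_3)$-constellation is a fixed set scaled by $h_{11}$; it does not embed that set into $\Zc$, which is what the Diophantine input requires.

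The paper avoids this obstruction by a different, two-step route. Since $\lambda_{1,c} = \alpha - \epsilon$ while every remaining term in $y_1$ sits at or below power level $P^{1-\alpha}$, the symbol $v_{1,c}$ is power-separated from $\{v_{1,p}, v_{2,c}\}$ and can be decoded first by treating the latter as bounded noise (Lemma~\ref{lm:AWGNic}); this peels off one dimension with no Diophantine argument at all. The residual $y_1 - \sqrt{P}\,h_{11}v_{1,c}$ then carries only $v_{2,c}$ and $v_{1,p}$, whose effective gains $\tfrac{h_{12}h_{21}}{h_{22}}$ and $h_{11}$ are genuinely independent channel directions, so a two-parameter Lebesgue outage bound (Lemma~\ref{lm:NMb2}, applied in Lemma~\ref{lm:distance573823}) controls the 2D minimum distance for almost all channels. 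If you insist on a single 3D joint decoding, the correct tool is not Khintchine--Groshev but a three-term Lebesgue bound of the type \cite[Lemma~14]{NM:13} used in Appendix~\ref{sec:errorcasemacr8325}; the two-step approach is simpler here precisely because the $\epsilon$-shaving in $\lambda_{1,c}$ already isolates $v_{1,c}$ by power.
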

\begin{proof}
See Appendix~\ref{sec:errorcasemac323}.
\end{proof}
With the results of \eqref{eq:macrate8931548}, \eqref{eq: rateomactet41} and Lemma~\ref{lm:mac1rateerror49293}, we can bound the term $ \Imu(v_1, v_2; y_1)   $  as 
\begin{align}
 \Imu(v_1, v_2; y_1)    \geq    \frac{1  - 3\epsilon}{2} \log P + o(\log P)   \label{eq:macrateana39895}
\end{align}
for almost all the channel coefficients  $\{h_{k\ell}\} \in (1, 2]^{2\times 2}$. 
For the term $\Imu(v_1, v_2; y_2) $, we can bound it as
\begin{align}
&\Imu(v_1, v_2; y_2) \non\\
& \leq  \Imu(v_1, v_2; y_2, v_{1, c}  +  v_{2, c} + u_1)  \label{eq:macratebegin89023} \\
&=  \Imu(v_1, v_2; v_{1, c}  +  v_{2, c} + u_1) +  \Imu(v_1, v_2; y_2 | v_{1, c}  +  v_{2, c} + u_1) \non\\
& \leq    \Hen(v_{1, c}  +  v_{2, c} + u_1) - \Hen(u_1) +  \hen(  \sqrt{P^{ \alpha-  \beta_{1, 1, \tau/2}}}   \delta_{1, \tau/2}   h_{21} u_1+    h_{12}  v_{1, p} + z_{2}  ) -  \hen(z_{2}) \label{eq:macrateana3i4o}\\
& \leq \underbrace{\log(6Q'+1) - \log(2Q'+1)}_{\leq \log 3} + \frac{1}{2} \log \big(2 \pi e (\frac{8}{3\tau^2}+\frac{2}{3\tau^2\cdot 4^{\tau}}+1) \big) - \frac{1}{2} \log(2 \pi e)  \label{eq:macrateana8981212}\\
& \leq \log \Bigg (3 \sqrt{\frac{8}{3\tau^2}+\frac{2}{3\tau^2\cdot 4^{\tau}}+1}  \Bigg) \label{eq:macrateana8549283}
\end{align}
 where \eqref{eq:macrateana3i4o} follows from the fact that $\{v_{1, c}, v_{2, c}, u_1 \}$ are mutually independent; \eqref{eq:macrateana8981212} stems from the derivation that $  \Hen(v_{1, c}  +  v_{2, c} + u_1) \leq \log(6Q'+1)$ and $\Hen(u_1) =\log(2Q'+1)$, where $Q' \defeq P^{ \frac{ \max\{ \lambda_{1,c}, \lambda_{2,c}\} }{2}}$. Due to our design in \eqref{eq:para001}-\eqref{eq:para010},  the ratio between the minimum distance  of the constellation for  $v_{2, c}$ and the minimum distance  of  the constellation for  $v_{1, c}$  is an integer.
This integer relationship allows us to minimize the value of $\Hen(v_{1, c}  +  v_{2, c} + u_1) $, which can be treated as a GDoF penalty.

Given the results of \eqref{eq:macrateana39895} and \eqref{eq:macrateana8549283}, it reveals that
\begin{align}
 \Imu(v_1, v_2; y_1) -\Imu(v_1, v_2; y_2)   \geq  \frac{1-3\epsilon}{2}  \log P + o( \log P )  \label{eq:macrate1939834}
\end{align}
for almost all the channel realizations. Now we consider the bound of $\Imu(v_1; y_1|v_2) $. Let 
\begin{align}
y^{'}_1 =   \sqrt{P}  h_{11} v_{1, c}  +   \sqrt{P^{ 1-   \alpha }} h_{11}  v_{1, p}  +  z_{1} \label{eq:ratey1new824}
\end{align}
 and  let $\{\hat{v}^{'}_{1, c}, \hat{v}^{'}_{1, p}\}$ be the estimates of $\{v_{1, c}, v_{1, p}\}$ from $y^{'}_1$. Then  we have
 \begin{align}
  \Imu(v_1; y_1|v_2)  & = \Imu(v_1, y^{'}_1) \label{eq:ratebegin8390}\\  
        & \geq      \bigl( 1 -   \text{Pr} [  \{ v_{1, c} \neq \hat{v}^{'}_{1, c} \} \cup  \{ v_{1, p} \neq \hat{v}^{'}_{1, p} \}  ] \bigr)   \cdot \Hen(v_1)  - 1  \label{eq:rat4903420}     
\end{align}
where  \eqref{eq:ratebegin8390} follows from the independence between $v_2$ and $v_1$;
\eqref{eq:rat4903420} follows from the steps in \eqref{eq:macrate843952}-\eqref{eq:macrate8931548}.
 For the term  $\Hen(v_1)$ appeared in \eqref{eq:rat4903420}, we have 
\begin{align}
 \Hen(v_1) = \frac{1  - B-2 \epsilon}{2} \log P + o(\log P). \label{eq: rateomacr114i4}
\end{align}
By following the proof steps of Lemma~\ref{lm:errorcase1}, one can easily prove that error probability of estimating $ v_{1, c} $ and $ v_{1, p} $ based on $y'_1$ is vanishing when $P$ goes large, that is,
\begin{align}
 \text{Pr} [  \{ v_{1,c} \neq \hat{v}^{'}_{1,c} \} \cup  \{ v_{1,p} \neq \hat{v}^{'}_{1,p} \}]  \to 0         \quad \text {as}\quad  P\to \infty .   \label{eq:macratefy1}
\end{align}
With \eqref{eq:rat4903420}, \eqref{eq: rateomacr114i4} and \eqref{eq:macratefy1}, it suggests that
\begin{align}
\Imu(v_1; y_1|v_2)   \geq    \frac{1  - B- 2\epsilon}{2} \log P + o(\log P).   \label{eq:macrateanar17839}
\end{align}
Similarly,  $ \Imu(v_2; y_1|v_1) $ can be bounded by 
\begin{align}
 \Imu(v_2; y_1|v_1)    \geq    \frac{B- \epsilon}{2} \log P + o(\log P).   \label{eq:macrateanar2428396}
\end{align}
By combining the results of  \eqref{eq:macsecregion}, \eqref{eq:macrate1939834}, \eqref{eq:macrateanar17839} and  \eqref{eq:macrateanar2428396}, it implies that  the GDoF region  $\{(d_1, d_2) | d_1 + d_2 \leq 1, 0 \leq d_1 \leq   1-B, 0 \leq d_2 \leq B\}$  is achievable for almost all the channel coefficients, for this case  with $ 0 \leq \alpha \leq \frac{2}{3}$   and $ 0 \leq B \leq (2\alpha-1)^{+}$.

\subsubsection{$(2\alpha-1)^{+} <B \leq \alpha$}

In the case with $ 0 \leq \alpha \leq \frac{2}{3}$   and $(2\alpha-1)^{+} <B \leq \alpha$, the transmitted signals now take the following forms
\begin{align}
 x_1  = &    v_{1, c} +   \sqrt{P^{ -  \alpha }}  v_{1, p}  + \sum_{{\ell}=1}^{\tau/2}  \delta_{1, {\ell}}  \sqrt{P^{ -  \beta_{1, 1, {\ell}} }}   \cdot  \big( u_1  + \sqrt{P^{1+B - 2\alpha }}   \cdot   u_2      \big)  \label{eq:macwt2x1casea21} \\
 x_2  =&     \sqrt{P^{ \alpha-  1}} \frac{h_{21}}{h_{22}} v_{2,c}   +   \sqrt{P^{ B-\alpha}}    \frac{h_{21}}{h_{22}}v_{2, m} +
  \sum_{{\ell}=1}^{\tau/2}  \delta_{2, {\ell}}  \sqrt{P^{ -  \beta_{1, 2, {\ell}} }}   \cdot  \big( u_1  + \sqrt{P^{1+B - 2\alpha }}   \cdot   u_2      \big)   \label{eq:macwt2x2casea22}   
\end{align}
and the  received signals take the following forms
\begin{align}
y_{1} 
&=   \sqrt{P}  h_{11} v_{1, c}  +     \sqrt{P^{2\alpha-1 }   } \frac{ h_{12}h_{21}}{h_{22}}  v_{2, c}    +     \sqrt{P^{B}} \frac{ h_{12}h_{21}}{h_{22}}  v_{2, m}  +   \sqrt{P^{ 1-   \alpha }} h_{11}  v_{1, p}  +  z_{1}  \label{eq:macwiretap2y1casea21} \\
y_{2} 
&=    \sqrt{P^{ \alpha}}  h_{21}  (   v_{1, c}  +  v_{2, c} + u_1)  +     \sqrt{P^{ B+1-\alpha }}   h_{21}  (  v_{2, m} + u_2)  \non\\
&\quad + \sqrt{P^{ \alpha-  \beta_{1, 1, \tau/2}}}   \delta_{1, \tau/2}   h_{21} u_1+   \sqrt{P^{ B+1-\alpha-  \beta_{1, 1,  \tau/2}}}   \delta_{1,\tau/2}   h_{21} u_2  +   h_{12}  v_{1, p} + z_{2} .   \label{eq:macwiretap2y2casea22}
\end{align}

By following the derivations in \eqref{eq:macrate843952}-\eqref{eq:macrate8931548},  the term $\Imu(v_1, v_2; y_1)$ can be lower bounded by
 \begin{align}
  \Imu(v_1, v_2; y_1)   & \geq  \bigl( 1 \! - \!   \text{Pr} [  \{ v_{1,c} \neq \hat{v}_{1,c} \} \cup  \{ v_{1,p} \neq \hat{v}_{1,p} \}\cup  \{ v_{2,c} \neq \hat{v}_{2,c} \} \cup  \{ v_{2,m} \neq \hat{v}_{2,m} \}  ]  \bigr)   \!\cdot\! \Hen(v_{1}, v_2)  \!-\!   1  \label{eq:macrate89318}    \\
& = \frac{1  -4\epsilon}{2} \log P + o(\log P).   \label{eq:macrateana395}  
 \end{align}
where the step in \eqref{eq:macrate89318} derives from \eqref{eq:macrate843952}-\eqref{eq:macrate8931548};
and the last step follows from  Lemma~\ref{lm:mac1rateerror8391028} (see below) and the fact that  $\Hen(v_{1}, v_2) = \Hen(v_{1,c}) +\Hen(v_{1,p}) +\Hen(v_{2,c})+\Hen(v_{2,m})  =  \frac{1  - 4\epsilon}{2} \log P + o(\log P)$.
 \begin{lemma}  \label{lm:mac1rateerror8391028}
When $ 0 \leq \alpha \leq \frac{2}{3}$   and $  (2\alpha-1)^{+} <B \leq \alpha$, given the signal design in Table~\ref{tab:wiremacpara}, \eqref{eq:constellationGsymmac393}-\eqref{eq:constellationGsymac320}  and \eqref{eq:macwt2x1casea21}-\eqref{eq:macwt2x2casea22},  the error probability of estimating  $\{v_{1,c}, v_{1,p}, v_{2,c},  v_{2,m} \}$ from $y_1$  is
 \begin{align}
 \text{Pr} [  \{ v_{1,c} \neq \hat{v}_{1,c} \} \cup  \{ v_{1,p} \neq \hat{v}_{1,p} \}\cup  \{ v_{2,c} \neq \hat{v}_{2,c} \} \cup  \{ v_{2,m} \neq \hat{v}_{2,m} \}  ]  \to 0         \quad \text {as}\quad  P\to \infty .   \label{eq:macrateerror85203}
 \end{align}
 \end{lemma}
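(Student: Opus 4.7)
The plan is to mirror the argument used for Lemma~\ref{lm:mac1rateerror49293} in Appendix~\ref{sec:errorcasemac323}, extended to accommodate the fourth active substream $v_{2,m}$ that is now turned on (since $B>(2\alpha-1)^{+}$). First I would drop the irrelevant interference-neutralized terms and write the noiseless part of $y_{1}$ from \eqref{eq:macwiretap2y1casea21} in the canonical form
\[
y_{1}-z_{1}\;=\;\sum_{i=1}^{4} g_{i}(P)\, \tilde h_{i}\, \xi_{i}\, a_{i},
\]
where $(g_{1},g_{2},g_{3},g_{4}) = \bigl(\sqrt{P},\sqrt{P^{2\alpha-1}},\sqrt{P^{B}},\sqrt{P^{1-\alpha}}\bigr)$, the $\tilde h_{i}$'s are monomials in the channel coefficients $\{h_{k\ell}\}$, each $\xi_{i}$ is the minimum distance of the corresponding PAM set in \eqref{eq:constellationGsymmac393}--\eqref{eq:constellationGsymac320}, and $a_{i}\in[-Q_{i},Q_{i}]\cap\mathbb{Z}$ indexes the point $v_{1,c},v_{2,c},v_{2,m},v_{1,p}$ respectively. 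Using the power assignments from Table~\ref{tab:wiremacpara} one checks that the total received signal power is bounded by a constant independent of $P$, so a hard-decision (minimum distance) decoder is well defined on the four-dimensional received lattice.

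Next I would lower bound the minimum distance $d_{\min}$ of this received lattice by the same Khintchine--Groshev / Diophantine approximation argument invoked in Appendix~\ref{sec:errorcasemac323}: for almost every $\{h_{k\ell}\}\in(1,2]^{2\times 2}$ and every $\kappa>0$, there exists a constant $c=c(\{h_{k\ell}\},\kappa)>0$ such that
\[
d_{\min}\;\geq\; \frac{c}{(\max_{i} Q_{i})^{3+\kappa}}.
\]
Substituting the values of $\{\lambda_{1,c},\lambda_{1,p},\lambda_{2,c},\lambda_{2,m}\}$ from the ``$(2\alpha-1)^{+}<B\leq\alpha$'' column of Table~\ref{tab:wiremacpara}, one sees that the sum $\lambda_{1,c}+\lambda_{1,p}+\lambda_{2,c}+\lambda_{2,m}\leq 1-4\epsilon$ (by construction, the $\epsilon$ slack in each $\lambda$ was inserted precisely for this purpose), which yields $d_{\min}\geq c' P^{\delta(\epsilon)/2}$ for some $\delta(\epsilon)>0$. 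Hence the minimum distance grows polynomially in $P$ while the noise $z_{1}$ has unit variance.

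Then a standard hard-decision error analysis finishes the proof: the pairwise error probability is at most $2Q(d_{\min}/2)$, which decays doubly exponentially in $P^{\delta(\epsilon)}$; a union bound over the at most $O(P^{(1-4\epsilon)/2})$ codewords still gives an exponentially small total error probability, so \eqref{eq:macrateerror85203} follows for almost every channel realization.

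The main technical obstacle is the case analysis hidden in Table~\ref{tab:wiremacpara}: the entry $\lambda_{1,p}=(\max\{B,1-\alpha\}-B-\epsilon)^{+}$ branches on $B\lessgtr 1-\alpha$, the entries involving $(2\alpha-1)^{+}$ branch on $\alpha\lessgtr 1/2$, and the integer-alignment scaling $\eta_{2,c}$ from \eqref{eq:para001}--\eqref{eq:para010} must be carried along to keep the $v_{1,c}$ and $v_{2,c}$ contributions on a common lattice. For each of the resulting sub-regions of $(\alpha,B)$ one has to recheck that $g_{i}\xi_{i}Q_{i}$ remains bounded (power constraint) and that the exponent sum leading to $d_{\min}$ stays strictly positive. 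Once these exponent inequalities are verified sub-case by sub-case, the Khintchine--Groshev step and the hard-decision bound apply uniformly, giving \eqref{eq:macrateerror85203}.
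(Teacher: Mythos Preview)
Your proposal takes a genuinely different route from the paper, and it has a real gap.

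The paper's proof of Lemma~\ref{lm:mac1rateerror8391028} does \emph{not} mirror the Diophantine argument of Appendix~\ref{sec:errorcasemac323}; it is a plain four-stage successive decoding using only Lemma~\ref{lm:AWGNic}. From \eqref{eq:macwiretap2y1casea21} the four substreams $v_{1,c},v_{1,p},v_{2,m},v_{2,c}$ arrive at power levels $P^{1},P^{1-\alpha},P^{B},P^{(2\alpha-1)^{+}}$, and the $\lambda$'s in Table~\ref{tab:wiremacpara} are chosen so that each stream's rate is $\epsilon$ below the gap to the next lower level. Hence one decodes $v_{1,c}$ treating the rest as bounded interference, peels it off, then decodes $v_{1,p}$, then $v_{2,m}$, then $v_{2,c}$. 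No Khintchine--Groshev step is needed, and the conclusion holds for \emph{all} channel realizations, not just almost all.

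Your joint four-term Diophantine bound cannot work as written. The four ``directions'' $\tilde h_{i}$ you list are $h_{11},\ \tfrac{h_{12}h_{21}}{h_{22}},\ \tfrac{h_{12}h_{21}}{h_{22}},\ h_{11}$: only two distinct monomials. The map $\{h_{k\ell}\}\mapsto(\tilde h_{1},\tilde h_{2},\tilde h_{3},\tilde h_{4})$ lands in a two-dimensional subvariety of $\mathbb{R}^{4}$, which has Lebesgue measure zero, so the generic four-term Khintchine--Groshev inequality $d_{\min}\ge c/(\max_{i}Q_{i})^{3+\kappa}$ says nothing here. Concretely, $v_{1,c}$ and $v_{1,p}$ arrive along the \emph{same} direction $h_{11}$, and likewise $v_{2,c},v_{2,m}$ along $\tfrac{h_{12}h_{21}}{h_{22}}$; separating the two streams within each direction must come from the power-level structure (i.e., successive decoding), not from rational independence. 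Even Appendix~\ref{sec:errorcasemac323} does not use a pure Khintchine--Groshev step: it first peels off $v_{1,c}$ by Lemma~\ref{lm:AWGNic} and only then applies the two-term measure bound of Lemma~\ref{lm:NMb2} to the remaining pair $(v_{2,c},v_{1,p})$, which \emph{do} lie on distinct channel monomials. A correct joint-decoding version of your argument would have to combine per-direction successive decoding with a two-term (not four-term) Diophantine bound across the two directions; at that point you have essentially reinvented the paper's simpler proof.
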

\begin{proof}
See Appendix~\ref{sec:errorcasemacr2}.
\end{proof}
By following the steps in \eqref{eq:macratebegin89023}-\eqref{eq:macrateana8549283}, the term $\Imu(v_1, v_2; y_2) $ can be  bounded by
\begin{align}
\Imu(v_1, v_2; y_2)  \leq o(\log P) . \label{eq:macrateana8293}
\end{align}
The bounds in \eqref{eq:macrateana395} and \eqref{eq:macrateana8293} then reveal that  
\begin{align}
 \Imu(v_1, v_2; y_1) -\Imu(v_1, v_2; y_2)   \geq  \frac{1-4\epsilon}{2}  \log P + o( \log P ) . \label{eq:macrate839208}
\end{align}
By following the steps in  \eqref{eq:ratey1new824}-\eqref{eq:macrateanar2428396}, we also have 
\begin{align}
\Imu(v_1; y_1|v_2)  & \geq    \frac{1  - B-2 \epsilon}{2} \log P + o(\log P)   \label{eq:macrateanar1839} \\
 \Imu(v_2; y_1|v_1)   & \geq    \frac{B- 2\epsilon}{2} \log P + o(\log P).   \label{eq:macrateanar2466}
\end{align}
Given the results of  \eqref{eq:macsecregion}, \eqref{eq:macrate839208}, \eqref{eq:macrateanar1839} and  \eqref{eq:macrateanar2466} it  implies that  the GDoF region   $\{(d_1, d_2) | d_1 + d_2 \leq 1, 0 \leq d_1 \leq   1-B, 0 \leq d_2 \leq B\}$  is achievable for this case  with $ 0 \leq  \alpha \leq \frac{2}{3}$   and $  (2\alpha-1)^{+} <B \leq \alpha$.

Finally, by combining the results of the above two cases and by moving $B$ from 0 to $\alpha$, it reveals that   for almost all the channel realizations the proposed scheme achieves  ${\mathcal{D}}^{*} ( \alpha) $ in this case of $ 0 \leq  \alpha \leq \frac{2}{3}$.

\subsection{$ \frac{2}{3}  \leq \alpha < 1$ \label{sec:macwiretapschemebbb1}}

When $ \frac{2}{3}  \leq \alpha < 1$, we will also divide the analysis into two cases.

\subsubsection{$ 0 \leq  B \leq 2\alpha-1$}

In the case with $ \frac{2}{3} \leq \alpha <1$    and $ B  \leq 2\alpha-1$,  the  signals of the transmitters have the same forms as in  \eqref{eq:macwt2x1casea11} and \eqref{eq:macwt2x2casea12}, and the signals of the receivers take the same forms as in  \eqref{eq:macwiretap2y1casea11} and \eqref{eq:macwiretap2y1casea12}.
In this case,  we have 
 \begin{align}
  \Imu(v_1, v_2; y_1)   & \geq  \bigl( 1 -   \text{Pr} [  \{ v_{1,c} \neq \hat{v}_{1,c} \} \cup  \{ v_{1,p} \neq \hat{v}_{1,p} \}\cup  \{ v_{2,c} \neq \hat{v}_{2,c} \}  ]  \bigr)   \cdot \Hen(v_{1}, v_2)  - 1  \label{eq:macrate2902343}   \\
    &  = \frac{1  - 3\epsilon}{2} \log P + o(\log P)  \label{eq:macrateana8987}
 \end{align}
for  almost all the channel coefficients, where \eqref{eq:macrate2902343} follows from the steps in \eqref{eq:macrate843952}-\eqref{eq:macrate8931548};
 the last step stems from Lemma~\ref{lm:mac1rateerror29054} (see below) and the derivation that $\Hen(v_{1}, v_2) = \Hen(v_{1,c}) +\Hen(v_{1,p}) +\Hen(v_{2,c})=  \frac{1  - 3\epsilon}{2} \log P + o(\log P)$.
 \begin{lemma}  \label{lm:mac1rateerror29054}
When $ \frac{2}{3} \leq \alpha < 1$    and $0 \leq B  \leq 2\alpha-1$, given the signal design in Table~\ref{tab:wiremacpara}, \eqref{eq:constellationGsymmac393}-\eqref{eq:constellationGsymac320}  and  \eqref{eq:macwt2x1casea11}-\eqref{eq:macwt2x2casea12},  for almost all the channel coefficients  $\{h_{k\ell}\} \in (1, 2]^{2\times 2}$, the error probability of decoding   $\{v_{1,c},  v_{1,p}, v_{2,c}\}$ from $y_1$  is vanishing when $P$ is large, i.e.,
 \begin{align}
 \text{Pr} [  \{ v_{1,c} \neq \hat{v}_{1,c} \}  \cup  \{ v_{1,p} \neq \hat{v}_{1,p} \}  \cup  \{ v_{2,c} \neq \hat{v}_{2,c} \}  ]  \to 0         \quad \text {as}\quad  P\to \infty .   \label{eq:macrateerror34789}
 \end{align}
 \end{lemma}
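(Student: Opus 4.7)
The plan is to reduce the three-stream decoding problem for $y_{1}$ in this case to the Khintchine-Groshev / rational-independence framework that was used for Lemma~\ref{lm:ICrateerror537}; the structure of $y_{1}$ here is actually simpler than in the IC-SC setting, since the common-randomness contribution $u_{1}$ has already been neutralized at receiver~$1$, so the noiseless part of $y_{1}$ involves only the three independent PAM variables $v_{1,c}$, $v_{2,c}$, $v_{1,p}$ with deterministic coefficients that are monomials in the channel gains.

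First, I would substitute
\begin{align}
v_{1,c} = \tfrac{\gamma}{P^{\lambda_{1,c}/2}}\, a_{1,c}, \quad
v_{2,c} = \tfrac{\eta_{2,c}\gamma}{P^{\lambda_{2,c}/2}}\, a_{2,c}, \quad
v_{1,p} = \tfrac{\gamma}{2\,P^{\lambda_{1,p}/2}}\, a_{1,p}, \non
\end{align}
with integers $|a_{k,\cdot}|\le P^{\lambda_{k,\cdot}/2}$, into
\begin{align}
y_{1} - z_{1}
= \sqrt{P}\, h_{11}\, v_{1,c}
+ \sqrt{P^{\,2\alpha-1}}\, \tfrac{h_{12}h_{21}}{h_{22}}\, v_{2,c}
+ \sqrt{P^{\,1-\alpha}}\, h_{11}\, v_{1,p}. \non
\end{align}
After factoring out the common spacing $\gamma/P^{\max_{k} \lambda_{k,\cdot}/2}$, the noiseless observation becomes an integer linear combination of three monomials in $\{h_{k\ell}\}$, each multiplied by a distinct power of $P$ determined by the table. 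Using $\lambda_{1,c}=1-B-\lambda_{1,p}-2\epsilon$, $\lambda_{2,c}=B-\epsilon$ and $\lambda_{1,p}=\min\{1-\alpha,\,2\alpha-1-B\}-\epsilon$, the total ``rate budget'' adds to $\lambda_{1,c}+\lambda_{2,c}+\lambda_{1,p}=1-3\epsilon$, which is strictly less than the dominant link exponent $1$ appearing in $y_{1}$; this $\epsilon$ margin is precisely what the Khintchine-Groshev argument will exploit.

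Second, I would invoke the Khintchine-Groshev theorem in the same form as Appendix~\ref{sec:ICrateerror537}: for Lebesgue-a.e.\ $\{h_{k\ell}\}\in(1,2]^{2\times 2}$, any non-zero integer combination of the distinct coefficients $\sqrt{P}\,h_{11}$, $\sqrt{P^{2\alpha-1}}\,h_{12}h_{21}/h_{22}$, $\sqrt{P^{1-\alpha}}\,h_{11}$ with integer weights of magnitude at most $P^{(1-\epsilon)/2}$ is lower bounded by $\kappa\, P^{-\epsilon/2}$ for some channel-dependent $\kappa>0$. Multiplying by the PAM spacing gives a minimum Euclidean separation of order $P^{\epsilon/2}$ between distinct noiseless $y_{1}$-values, which dominates the unit-variance noise $z_{1}$. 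A nearest-neighbor hard decoder followed by a union bound over the at most $O(P^{(1-\epsilon)/2})$ candidate triples then yields a decoding error probability of order $P^{(1-\epsilon)/2}\cdot Q(\Theta(P^{\epsilon/2}))$, which vanishes super-polynomially in $P$, giving \eqref{eq:macrateerror34789}.

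The main obstacle will be verifying that the three effective ``directions'' $\sqrt{P}\,h_{11}$, $\sqrt{P^{2\alpha-1}}\,h_{12}h_{21}/h_{22}$, $\sqrt{P^{1-\alpha}}\,h_{11}$ are genuinely distinct enough for Khintchine-Groshev to apply: although $h_{11}$ appears in two of them, the two occurrences carry different $P$-exponents ($\tfrac{1}{2}$ versus $\tfrac{1-\alpha}{2}$, and $\alpha<1$ strictly under the case hypothesis $\tfrac{2}{3}\le\alpha<1$), so they act as algebraically distinct Diophantine labels once the decoder attempts to separate the two power levels. Combined with the check that $\lambda_{1,c}+\lambda_{2,c}+\lambda_{1,p}=1-3\epsilon<1$, this is exactly the hypothesis needed to transplant the almost-everywhere argument of Appendix~\ref{sec:ICrateerror537} verbatim; the remaining calculations (distance-to-error-probability and union bound) are routine.
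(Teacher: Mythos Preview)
Your high-level plan is right, but the key step has a real gap. Two of your three ``directions'' --- the $v_{1,c}$-coefficient $\sqrt{P}\,h_{11}$ and the $v_{1,p}$-coefficient $\sqrt{P^{1-\alpha}}\,h_{11}$ --- are rationally \emph{dependent} (both proportional to the same $h_{11}$). Khintchine--Groshev, and the two-variable Lebesgue-measure bound actually used in Appendix~\ref{sec:ICrateerror537} (Lemma~\ref{lm:NMb2}), apply to \emph{rationally independent} generators; they say nothing about separating $A_1 h_{11}q_1 + A_3 h_{11}q_3$ by varying $h_{11}$. Your claim that different $P$-exponents make them ``algebraically distinct Diophantine labels'' is not a Khintchine--Groshev statement at all: once you set $\Delta_2=0$, the remaining expression $h_{11}(A_1\Delta_1+A_3\Delta_3)$ has no channel randomness left to exploit, and whether $A_1\Delta_1+A_3\Delta_3$ stays large must come purely from the power-level structure of the constellations, not from Diophantine approximation. (Your stated version, with a uniform weight range $P^{(1-\epsilon)/2}$, is in fact false: take $\Delta_1=1$, $\Delta_2=0$, $\Delta_3=-\lfloor P^{\alpha/2}\rfloor$.) So ``transplanting Appendix~\ref{sec:ICrateerror537} verbatim'' does not work; that appendix uses a \emph{two}-variable tool and then peels off the third symbol, whereas here you need either a genuine three-variable bound or a preliminary separation of the two $h_{11}$-terms.

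The paper resolves this by splitting on $B\lessgtr 3\alpha-2$. When $0\le B\le 3\alpha-2$, the three symbols are decoded jointly using a three-variable Lebesgue-measure lemma from \cite{NM:13} (not Lemma~\ref{lm:NMb2}); the specific parameter count shows the outage measure is $O(P^{-\epsilon/2})$ only in this sub-range. When $3\alpha-2<B\le 2\alpha-1$, the table gives $\lambda_{1,c}=2-2\alpha-\epsilon$, so the spacing of $v_{1,c}$ sits at power level $2\alpha-1+\epsilon$, strictly above the interference from $v_{2,c}$ and $v_{1,p}$; hence $v_{1,c}$ is peeled off first by successive decoding, and only then is the two-variable lemma applied to $\{v_{2,c},v_{1,p}\}$ (whose directions $h_{12}h_{21}/h_{22}$ and $h_{11}$ \emph{are} rationally independent a.e.). If you want a single unified argument, you would have to combine a deterministic power-level separation for the two $h_{11}$-terms (using the \emph{individual} ranges $P^{\lambda_{1,c}/2}$ and $P^{\lambda_{1,p}/2}$, not a common $P^{(1-\epsilon)/2}$) with a two-variable measure bound for the $h_{12}h_{21}/h_{22}$-term --- essentially re-deriving the case split inside the proof.
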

\begin{proof}
See Appendix~\ref{sec:errorcasemacr8325}.
\end{proof}
From the steps in \eqref{eq:macratebegin89023}-\eqref{eq:macrateana8549283}, one can easily show that 
\begin{align}
\Imu(v_1, v_2; y_2) \leq o(\log P).  \label{eq:macrateana233412}
\end{align}
With  \eqref{eq:macrateana8987} and \eqref{eq:macrateana233412} we have the following inequality  
\begin{align}
 \Imu(v_1, v_2; y_1) -\Imu(v_1, v_2; y_2)   \geq  \frac{1-3\epsilon}{2}  \log P + o( \log P )  \label{eq:macrate6662}
\end{align}
for almost all the channel coefficients.
From the steps in  \eqref{eq:ratey1new824}-\eqref{eq:macrateanar2428396}, the following two inequalities can be easily derived
\begin{align}
\Imu(v_1; y_1|v_2)   &\geq    \frac{1  - B- 2\epsilon}{2} \log P + o(\log P),   \label{eq:macrateanar23933}  \\
 \Imu(v_2; y_1|v_1)   & \geq    \frac{B- \epsilon}{2} \log P + o(\log P).   \label{eq:macrateanar20032}
\end{align}
From \eqref{eq:macsecregion}, \eqref{eq:macrate6662}, \eqref{eq:macrateanar23933} and  \eqref{eq:macrateanar20032} we can conclude that  the secure GDoF region $\{(d_1, d_2) | d_1 + d_2 \leq 1, 0 \leq d_1 \leq   1-B, 0 \leq d_2 \leq B\}$ is achievable for almost all the channel coefficients  for this case  with $ 0  \leq \alpha \leq \frac{2}{3}$   and $ 0\leq B \leq 2\alpha-1$.

\subsubsection{$2\alpha-1 < B \leq \alpha$}

In the case with $  \frac{2}{3} \leq \alpha < 1$   and $ 2\alpha-1 <B \leq \alpha$, 
based  on the parameter design in  \eqref{eq:macwt2x1}-\eqref{eq:delta8344b}  and Table~\ref{tab:wiremacpara}, the transmitted signals are expressed as 
\begin{align}
 x_1  = &    v_{1, c}+ \sum_{{\ell}=1}^{\tau/2}  \delta_{1, {\ell}}  \sqrt{P^{ -  \beta_{1, 1, {\ell}} }}   \cdot  \big( u_1  + \sqrt{P^{1+B - 2\alpha }}   \cdot   u_2      \big)  \label{eq:macwt2x1caseb21} \\
 x_2  =&     \sqrt{P^{ \alpha-  1}} \frac{h_{21}}{h_{22}} v_{2,c}   +   \sqrt{P^{ B-\alpha}}    \frac{h_{21}}{h_{22}}v_{2, m} +
   \sum_{{\ell}=1}^{\tau/2}  \delta_{2, {\ell}}  \sqrt{P^{ -  \beta_{1, 2, {\ell}} }}   \cdot  \big( u_1  + \sqrt{P^{1+B - 2\alpha }}   \cdot   u_2      \big)   \label{eq:macwt2x2caseb22}   
\end{align}
and the received signals have the following forms
\begin{align}
y_{1} &=   \sqrt{P}  h_{11} v_{1, c}  +     \sqrt{P^{2\alpha-1 }   } \frac{ h_{12}h_{21}}{h_{22}}  v_{2, c}    +     \sqrt{P^{B}} \frac{ h_{12}h_{21}}{h_{22}}  v_{2, m} +  z_{1}  \label{eq:macwiretap2y1caseb21} \\
y_{2} 
&=    \sqrt{P^{ \alpha}}  h_{21}  (   v_{1, c}  +  v_{2, c} + u_1)  +     \sqrt{P^{B+1-\alpha}}   h_{21}  (  v_{2, m} + u_2)  \non\\
&\quad + \sqrt{P^{ \alpha-  \beta_{1, 1, \tau/2}}}   \delta_{1, \tau/2}   h_{21} u_1+   \sqrt{P^{ B+1-\alpha-  \beta_{1, 1,  \tau/2}}}   \delta_{1,\tau/2}   h_{21} u_2 + z_{2} . \label{eq:macwiretap2y2caseb22}
\end{align}
In this case, we have
 \begin{align}
   \Imu(v_1, v_2; y_1)    &\geq  \bigl( 1 -   \text{Pr} [  \{ v_{1,c} \neq \hat{v}_{1,c} \} \cup  \{ v_{2,c} \neq \hat{v}_{2,c} \} \cup  \{ v_{2,m} \neq \hat{v}_{2,m} \}  ]  \bigr)   \cdot \Hen(v_{1}, v_2)  - 1   \label{eq:macrate439030}     \\
   & =    \frac{1  - 3\epsilon}{2} \log P + o(\log P)   \label{eq:macrateana93234}
 \end{align}
where \eqref{eq:macrate439030}  stems from the steps in \eqref{eq:macrate843952}-\eqref{eq:macrate8931548};
\eqref{eq:macrateana93234} uses the facts that $\Hen(v_{1}, v_2) = \Hen(v_{1,c}) +\Hen(v_{2,c})+\Hen(v_{2,m})  =  \frac{1  - 3\epsilon}{2} \log P + o(\log P)$  and that $ \text{Pr} [  \{ v_{1,c} \neq \hat{v}_{1,c} \} \cup  \{ v_{2,c} \neq \hat{v}_{2,c} \} \cup  \{ v_{2,m} \neq \hat{v}_{2,m} \}  ]  \to 0$ as  $P\to \infty$ by following  the proof steps of Lemma~\ref{lm:mac1rateerror8391028}, using a successive decoding method.
From the steps in \eqref{eq:macratebegin89023}-\eqref{eq:macrateana8549283}, we have
\begin{align}
\Imu(v_1, v_2; y_2) \leq o(\log P),  \label{eq:macrateana23341200}
\end{align}
which, together with the result of \eqref{eq:macrateana93234}, reveals that 
\begin{align}
\Imu(v_1, v_2; y_1) -\Imu(v_1, v_2; y_2)   \geq  \frac{1-3\epsilon}{2}  \log P + o( \log P ).  \label{eq:macrate903245}
\end{align}
From the steps in  \eqref{eq:ratey1new824}-\eqref{eq:macrateanar2428396}, the following two inequalities can be easily derived
\begin{align}
\Imu(v_1; y_1|v_2)   &\geq    \frac{1  - B- \epsilon}{2} \log P + o(\log P),  \label{eq:macrateanar11212}  \\
 \Imu(v_2; y_1|v_1)   & \geq    \frac{B- 2\epsilon}{2} \log P + o(\log P).   \label{eq:macrateanar2466532}
\end{align}
With the results in  \eqref{eq:macsecregion}, \eqref{eq:macrate903245}, \eqref{eq:macrateanar11212} and  \eqref{eq:macrateanar2466532}, we can conclude that  the GDoF region   $\{(d_1, d_2) | d_1 + d_2 \leq 1, 0 \leq d_1 \leq   1-B, 0 \leq d_2 \leq B\}$ is achievable for this case  with $\frac{2}{3} \leq \alpha < 1$   and $  2\alpha-1 <B \leq \alpha$.

Finally, by combining the results of the above two cases and by moving $B$ from 0 to $\alpha$, it reveals that for almost all the channel realizations the proposed scheme achieves  ${\mathcal{D}}^{*} ( \alpha) $ in this case of $ 0 \leq  \alpha \leq \frac{2}{3}$.

\subsection{$\alpha=1$} \label{sec:macwiretapschemeaaa000}

In  the case with  $\alpha =1$, for any GDoF pair $(d^{'}_1, d^{'}_2)$ such that  $(d^{'}_1, d^{'}_2) \in  \mathcal{D}^*( 1)$,   we will provide the following scheme and show that the GDoF pair $(d^{'}_1, d^{'}_2)$   is achievable with $d^{'}_c= \max\{d^{'}_1, d^{'}_2\}$ GDoF common randomness.
Based  on the parameter design in  \eqref{eq:macwt2x1}-\eqref{eq:delta8344b}  and Table~\ref{tab:wiremacpara}, 
then the transmitted signals are designed as
\begin{align}
 x_1  = & \varepsilon  v_{1, c}  - \varepsilon  \frac{h_{12}h_{21}}{h_{11} h_{22}- h_{12} h_{21}} \cdot u_1    \label{eq:MACeqx1}  \\
 x_2  = &\varepsilon  \frac{h_{21}}{h_{22}} \cdot  v_{2, c} + \varepsilon \frac{h_{11}h_{21}}{h_{11} h_{22}- h_{12} h_{21}} \cdot  u_1  .    \label{eq:MACeqx2}
 \end{align}
The PAM constellation sets of the random variables $\{v_{1,c}, v_{2,c},  u_1 \}$ are designed as
 \begin{align}
   v_{1,c}   &  \in    \Omega ( \xi =  \frac{ \eta_{1, c}  \gamma}{Q} ,   \    Q = P^{ \frac{   d^{'}_1-\epsilon }{2}})  \label{eq:constellationalpha1}   \\    
   v_{2, c}      &  \in    \Omega ( \xi =  \frac{\eta_{2, c} \gamma}{Q} ,   \   Q = P^{ \frac{  d^{'}_2-\epsilon }{2}} )    \label{eq:constellationalpha2}   \\
   u_1   &  \in    \Omega ( \xi =  \frac{   \gamma}{Q} ,   \    Q = P^{ \frac{ \max\{  d^{'}_1-\epsilon, \ d^{'}_2-\epsilon\} }{2}}) .  \label{eq:constellationalpha3}   
 \end{align}
In terms of the signals at the receivers, we have 
\begin{align}
y_{1} &=   \varepsilon  \sqrt{P}  h_{11} v_{1, c} +\varepsilon  \sqrt{P}  \frac{h_{12}h_{21}}{h_{22}} \cdot  v_{2, c}  +  z_{1}  \label{eq:MACeqy1}  \\
y_{2} &=  \varepsilon  \sqrt{P}  h_{21}  (   v_{1, c}  + v_{2, c} +  u_1)   +  z_{2}.   \label{eq:MACeqy2} 
\end{align}

We  first focus on the bound of $\Imu(v_1, v_2; y_1)  -  \Imu(v_1, v_2; y_2)$ in \eqref{eq:macsecregion}. 
By following the derivations in \eqref{eq:macrate843952}-\eqref{eq:macrate8931548},  the term $\Imu(v_1, v_2; y_1)$ can be lower bounded by
 \begin{align}
  \Imu(v_1, v_2; y_1)    \geq  \bigl( 1 -   \text{Pr} [  \{ v_{1,c} \neq \hat{v}_{1,c} \} \cup  \{ v_{2,c} \neq \hat{v}_{2,c} \}  ]  \bigr)   \cdot \Hen(v_1, v_2)  - 1  \label{eq:macrateals13993}     
 \end{align}
where $v_k=  v_{k,c}$ for $k\in \{1, 2\}$, and the entropy  $ \Hen(v_1, v_2)$ in \eqref{eq:macrateals13993} is derived as
\begin{align}
\Hen(v_1, v_2) = \Hen(v_{1,c}) +\Hen(v_{2,c})=  \frac{d^{'}_1+ d^{'}_2 - 2\epsilon}{2} \log P + o(\log P). \label{eq: rateomacal214}
\end{align}
The lemma below states a result  on the error probability appeared in \eqref{eq:macrateals13993}.
 \begin{lemma}  \label{lm:mac1rateerroraleq1}
When $\alpha = 1$, given the signal design in Table~\ref{tab:wiremacpara}  and  \eqref{eq:MACeqx1}-\eqref{eq:constellationalpha3},  for almost all the channel realizations, the error probability of decoding  $\{v_{1,c}, v_{2,c}\}$ from $y_1$  is vanishing, that is,
 \begin{align}
 \text{Pr} [  \{ v_{1,c} \neq \hat{v}_{1,c} \}  \cup  \{ v_{2,c} \neq \hat{v}_{2,c} \}  ]  \to 0         \quad \text {as}\quad  P\to \infty .   \label{eq:mac1rateerroraleq1}
 \end{align}
 \end{lemma}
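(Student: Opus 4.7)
The plan is to analyze the nearest-neighbour decoder for $(v_{1,c},v_{2,c})$ applied to
\[
y_1 = \varepsilon\sqrt{P}\,h_{11}\,v_{1,c} + \varepsilon\sqrt{P}\,g\,v_{2,c} + z_1, \qquad g \defeq \tfrac{h_{12}h_{21}}{h_{22}},
\]
which is what \eqref{eq:MACeqy1} reduces to after the common randomness $u_1$ is neutralized at receiver~1 and after noting that $v_1=v_{1,c}$ for $\alpha=1$ (since $\beta_{1,p}=\infty$ in Table~\ref{tab:wiremacpara}). Substituting the PAM parameterizations \eqref{eq:constellationalpha1}--\eqref{eq:constellationalpha2}, the noiseless signal takes the form
\[
s(a_1,a_2)=\varepsilon\gamma\sqrt{P}\left(\tfrac{h_{11}\eta_{1,c}}{Q_1}a_1+\tfrac{g\,\eta_{2,c}}{Q_2}a_2\right),\qquad a_k\in[-Q_k,Q_k]\cap\Z,
\]
with $Q_k=P^{(d'_k-\epsilon)/2}$; the central task is to lower-bound the minimum distance of $\{s(a_1,a_2)\}$, after which an exponentially small decoding error follows from the Gaussian tail bound for $z_1$ and a union bound over the constellation.

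The minimum distance reduces to
\[
d_{\min}=\frac{\varepsilon\gamma\sqrt{P}}{Q_1Q_2}\cdot\!\!\min_{\substack{(b_1,b_2)\in\Z^2\setminus\{0\}\\ |b_k|\leq 2Q_k}}\!\! \bigl|h_{11}\eta_{1,c}\,b_1Q_2+g\,\eta_{2,c}\,b_2Q_1\bigr|.
\]
To lower-bound this I would invoke a Khintchine--Groshev type theorem, as is standard in the real-interference-alignment literature (e.g.\ Motahari et al.): for any $\delta>0$ and Lebesgue-almost-every $(h_{11},g)\in(1,2]^2$ there exists $c(\delta,h_{11},g)>0$ such that $|h_{11}p_1+g\,p_2|\geq c/\max(|p_1|,|p_2|)^{1+\delta}$ for every nonzero $(p_1,p_2)\in\Z^2$. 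Because $\eta_{1,c},\eta_{2,c}\in[1,2)$ are deterministic functions of $(d'_k,P)$ and $|b_kQ_j|\leq 2Q_1Q_2$, they can be absorbed into the constant, yielding $d_{\min}\geq c'\,P^{1/2-(2+\delta)(d'_1+d'_2-2\epsilon)/2}$ for a.e.\ channel realization.

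Since $(d'_1,d'_2)\in\mathcal{D}^*(1)$ forces $d'_1+d'_2\leq 1$, one selects $\delta$ and then $\epsilon$ so that the exponent $1/2-(2+\delta)(d'_1+d'_2-2\epsilon)/2$ is strictly positive, making $d_{\min}\to\infty$ polynomially in $P$; the boundary points of $\mathcal{D}^*(1)$ are recovered only afterwards by letting $\epsilon\to 0$. The Gaussian tail estimate $\Pr[|z_1|\geq d_{\min}/2]\leq e^{-d_{\min}^2/8}$, coupled with a union bound over the $(2Q_1+1)(2Q_2+1)=O(P)$ candidate pairs, then drives $\Pr[\{v_{1,c}\neq\hat v_{1,c}\}\cup\{v_{2,c}\neq\hat v_{2,c}\}]\to 0$ as $P\to\infty$. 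The principal obstacle is the Diophantine step: one must verify that the Khintchine--Groshev exceptional set in $(h_{11},g)$ is truly measure-zero and then pull it back to a measure-zero subset of $(h_{k\ell})\in(1,2]^{2\times 2}$, uniformly over the $P$-dependent range of admissible integer pairs. This is delicate precisely because at $\alpha=1$ the two incoming channels at receiver~1 share the same power exponent $\sqrt{P}$, so the entire Diophantine content is carried by the single ratio $h_{12}h_{21}/(h_{11}h_{22})$, which is exactly the quantity that must be irrational in the Khintchine--Groshev sense.
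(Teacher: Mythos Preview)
Your Khintchine--Groshev route has a genuine exponent gap. The claim ``one selects $\delta$ and then $\epsilon$ so that the exponent $\tfrac{1}{2}-\tfrac{(2+\delta)(d'_1+d'_2-2\epsilon)}{2}$ is strictly positive'' is false on most of $\mathcal{D}^*(1)$: since $1/(2+\delta)<1/2$ for every $\delta>0$, positivity forces $d'_1+d'_2-2\epsilon<1/2$, i.e.\ $\epsilon>\tfrac{1}{2}(d'_1+d'_2)-\tfrac{1}{4}$. Whenever $d'_1+d'_2>1/2$ this lower bound on $\epsilon$ is strictly positive and cannot be sent to $0$, so your scheme only reaches the sub-region $d'_1+d'_2\le 1/2$, not the full boundary $d'_1+d'_2=1$ that the lemma must cover. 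The loss comes from treating $p_1=b_1Q_2$ and $p_2=b_2Q_1$ as \emph{generic} integers of size $O(Q_1Q_2)$; Khintchine--Groshev then pays the penalty $\max(|p_1|,|p_2|)^{1+\delta}\sim (Q_1Q_2)^{1+\delta}$, which combined with the outer factor $\sqrt{P}/(Q_1Q_2)$ yields the fatal $(Q_1Q_2)^{2+\delta}$ in the denominator. The structure you are discarding is precisely that $p_1$ runs only over multiples of $Q_2$ and $p_2$ only over multiples of $Q_1$.

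The paper's proof keeps this structure explicit by writing $y_1=\varepsilon\gamma(A_0g_0q_0+A_1g_1q_1)+z_1$ with $A_k=\sqrt{P^{1-d'_k+\epsilon}}$ and $|q_k|\le \sqrt{P^{d'_k-\epsilon}}$, and then invokes a \emph{finite-$P$} counting lemma (Lemma~\ref{lm:NMb2}) rather than an asymptotic Diophantine theorem. That lemma bounds the Lebesgue measure of the bad set $\{(g_0,g_1):\exists(q_0,q_1)\neq 0,\ |A_0g_0q_0+A_1g_1q_1|<\beta\}$ by quantities of the form $\beta\,Q_k/A_j$; since $Q_k/A_j=P^{(d'_1+d'_2-1-2\epsilon)/2}\le P^{-\epsilon}$ whenever $d'_1+d'_2\le 1$, one gets $d_{\min}\ge\kappa P^{\epsilon/2}$ outside a set of measure $O(\kappa P^{-\epsilon/2})$, valid for \emph{every} $\epsilon>0$. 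In short, the paper exploits the separate sizes of the scale factors $A_k$ and the integer ranges $Q_k$, whereas Khintchine--Groshev only sees their product and is too coarse here by exactly a factor $(Q_1Q_2)^{1+\delta}$.
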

\begin{proof}
See Appendix~\ref{sec:mac1rateerroraleq1}.
\end{proof}
By combining the results of \eqref{eq:macrateals13993}, \eqref{eq: rateomacal214} and Lemma~\ref{lm:mac1rateerroraleq1},  $ \Imu(v_1, v_2; y_1)$  can be lower bounded by 
\begin{align} 
 \Imu(v_1, v_2; y_1)    \geq    \frac{d^{'}_1+ d^{'}_2   - 2\epsilon}{2} \log P + o(\log P)   \label{eq:macrateaaleq1}
\end{align}
for almost all the channel coefficients  $\{h_{k\ell}\} \in (1, 2]^{2\times 2}$.
For the term $\Imu(v_1, v_2; y_2) $,  it can be upper bounded by 
\begin{align}
\Imu(v_1, v_2; y_2) &\leq  \Imu(v_1, v_2; y_2, v_{1, c}  +  v_{2, c} + u_1) \non  \\
&=  \Imu(v_1, v_2; v_{1, c}  +  v_{2, c} + u_1) +  \Imu(v_1, v_2; y_2 | v_{1, c}  +  v_{2, c} + u_1) \non\\
& \leq   \Hen(v_{1, c}  +  v_{2, c} + u_1) - \Hen(u_1) + \hen(y_2| v_{1, c}  +  v_{2, c} + u_1)  -  \hen(y_2|v_{1, c}  +  v_{2, c} + u_1, v_1, v_2)  \label{eq:macrateana8849384} \\
& \leq \log(6P^{ \frac{  \max\{  d^{'}_1, \ d^{'}_2\} -\epsilon}{2}}+1) - \log(2P^{ \frac{   \max\{  d^{'}_1, \ d^{'}_2\}-\epsilon }{2}}+1)  \label{eq:macrateana89238}\\
& \leq \log 3  \label{eq:macrateanaaleq18398}
\end{align}
where  \eqref{eq:macrateana89238} follows from \eqref{eq:macrateana8981212}. Due to our design in \eqref{eq:para001}-\eqref{eq:para010},  there is an integer relationship between  the minimum distance  of the constellation for  $v_{1, c}$ and the minimum distance  of  the constellation for  $v_{2, c}$.
This integer relationship allows to minimize the value of $\Hen(v_{1, c}  +  v_{2, c} + u_1) $ in \eqref{eq:macrateana8849384}, which can be treated as a GDoF penalty.
Then the results of \eqref{eq:macrateaaleq1} and \eqref{eq:macrateanaaleq18398} give
\begin{align}
\Imu(v_1, v_2; y_1) -\Imu(v_1, v_2; y_2)  \geq  \frac{d^{'}_1+ d^{'}_2 -2\epsilon}{2}  \log P + o( \log P )  \label{eq:macrateqleq1666}
\end{align}
for almost all the channel coefficients. Let us now consider the term  $\Imu(v_1; y_1|v_2) $ in \eqref{eq:macsecregion} and find its bound. By following the steps in \eqref{eq:ratebegin8390}-\eqref{eq:macrateanar17839}, the term  $\Imu(v_1; y_1|v_2) $ can be lower bounded by
\begin{align}
\Imu(v_1; y_1|v_2)   \geq    \frac{d^{'}_1- \epsilon}{2} \log P + o(\log P).   \label{eq:macrateaaleq1282}
\end{align}
Similarly, the term $ \Imu(v_2; y_1|v_1) $   in \eqref{eq:macsecregion} can be bounded by 
\begin{align}
 \Imu(v_2; y_1|v_1)    \geq    \frac{d^{'}_2- \epsilon}{2} \log P + o(\log P).   \label{eq:macrateanaleq12988}
\end{align}

Finally, by incorporating  the results of  \eqref{eq:macrateqleq1666}-\eqref{eq:macrateanaleq12988} into \eqref{eq:macsecregion}, it suggests that  the secure GDoF pair  $(d^{'}_1, d^{'}_2)$ is achievable by using $d^{'}_c= \max\{d^{'}_1, d^{'}_2\}$ GDoF of common randomness (mainly due to $u_1$),  for almost all the channel coefficients  $\{h_{k\ell}\} \in (1, 2]^{2\times 2}$.  
By moving $d^{'}_1$ from 0 to $1-d^{'}_2$ and moving  $d^{'}_2$ from 0 to 1, then we can conclude that  any GDoF pair  $(d^{'}_1, d^{'}_2) \in  \mathcal{D}^* ( 1)$ is achievable  by using $d^{'}_c= \max\{d^{'}_1, d^{'}_2\}$ GDoF of common randomness, for almost all the channel coefficients, in this case  with $ \alpha =1$.

\subsection{$\alpha>1$} \label{sec:macwiretapschemeal1893}

We have proved in Sections~\ref{sec:macwiretapschemeaaa1}-\ref{sec:macwiretapschemeaaa000} that the 
optimal secure GDoF region ${\mathcal{D}}^* ( \alpha)$ is achievable by the proposed scheme when  $ \alpha \leq 1$,  where  ${\mathcal{D}}^{*} ( \alpha) = \{(d_1, d_2) | d_1 + d_2 \leq \max\{1, \alpha\}, 0 \leq d_1 \leq   1, 0 \leq d_2 \leq \alpha\}$.

Let us consider a secure GDoF pair $(d^{'}_1, d^{'}_2)$ such that $(d^{'}_1, d^{'}_2) \in {\mathcal{D}}^* ( \alpha)$, with conditions $0\leq d^{'}_1\leq 1- d^{'}_2$  and  $0\leq d^{'}_2 \leq \alpha$,  for $  \alpha \leq 1$.  From Sections~\ref{sec:macwiretapschemeaaa1}-\ref{sec:macwiretapschemeaaa000},  it reveals that the proposed scheme is able to achieve this secure GDoF pair $(d^{'}_1, d^{'}_2)$  with a certain amount of GDoF common randomness.  For notationally convenience let us use $d^{'}_c $ to denote that amount of GDoF common randomness for achieving  the corresponding GDoF pair $(d^{'}_1, d^{'}_2)$ in the proposed scheme.  For this  secure GDoF tuple $(d^{'}_1, d^{'}_2, d^{'}_c)$,  it holds true that  
\begin{align}
(d^{'}_1, d^{'}_2, d^{'}_c) \in \bar{\mathcal{D}} ( \alpha)  \label{eq:ratechange0}
\end{align}
since it is achievable by the proposed scheme, for $ \alpha \leq 1$.
From the result of Lemma~\ref{lm:mac1switchalpha},  it also holds true that
 \begin{align}
\big(d^{'}_2/\alpha,  \   d^{'}_1/\alpha,  \ d^{'}_c/\alpha \big) \in \bar{\mathcal{D}} ( 1/\alpha). \label{eq:ratechange1}
\end{align}
In other words,  the secure GDoF tuple $(d^{'}_2/\alpha,  \   d^{'}_1/\alpha,  \ d^{'}_c/\alpha )$ is included in the region $\bar{\mathcal{D}} ( 1/\alpha)$ and  the secure GDoF pair $(d^{'}_2/\alpha,  \   d^{'}_1/\alpha )$ is included in the region ${\mathcal{D}}^*(\frac{1}{ \alpha})$,  for $ \alpha \leq 1$. 
Then, by moving $d^{'}_2$ from $0$ to $\alpha$ and moving $d^{'}_1$ from $0$ to $1-d^{'}_2$, it implies that  any point  in ${\mathcal{D}}^*(\frac{1}{ \alpha}) = \{(d_1, d_2) | d_1 + d_2 \leq \frac{1}{ \alpha}, 0 \leq d_1 \leq   1, 0 \leq d_2 \leq  \frac{1}{ \alpha}\}$ is achievable for $ \alpha \leq 1$. Let $\alpha^{'} = 1/\alpha$, we finally conclude that the optimal secure GDoF region  ${\mathcal{D}}^{*} (\alpha^{'}) = \{(d_1, d_2) | d_1 + d_2 \leq \alpha^{'}, 0 \leq d_1 \leq   1, 0 \leq d_2 \leq  \alpha^{'}\}$ is achievable for $\alpha^{'} > 1$.

\section{Converse  \label{sec:converse} }

In this section we  provide the converse proofs for Theorems~\ref{thm:ICrGDoFcr}-\ref{thm:GDoFmawccr}, regarding the minimal GDoF of the common randomness to achieve the maximal secure sum GDoF, secure  GDoF, and  the  maximal  secure  GDoF  region for interference channel,  wiretap channel with a helper, and multiple access wiretap channel, respectively. 
Let us define that \[s_{k{\ell}}(t) \defeq \sqrt{P^{\alpha_{k{\ell}}}} h_{k{\ell}} x_{\ell}(t) + z_{k}(t) \] for $k, {\ell} \in \{1,2\}, k \neq {\ell}$. Let $s_{k{\ell}}^{n} \defeq \{s_{k{\ell}}(t) \}_{t=1}^{n}$.

\subsection{Converse for two-user  interference channel} \label{sec:converseIC}

We begin with  the converse proof of Theorem~\ref{thm:ICrGDoFcr}, for the two-user  interference channel defined in Section~\ref{sec:sysICtwouser}. 
The following lemma reveals a bound on the minimal GDoF  of  common randomness $\dco (\alpha)$, for achieving the maximal secure sum GDoF $\dsum (\alpha)$.

\begin{lemma}  \label{lm:ICrGDoFcr}
Given the two-user symmetric  Gaussian  IC-SC channel  
(see~Section~\ref{sec:sysICtwouser}),  the  minimal GDoF of the common randomness $\dco (\alpha)$   for achieving the maximal secure sum GDoF $\dsum (\alpha)$   satisfies  the following inequality
 \begin{align}
\dco (\alpha) \geq   \dsum(\alpha)/2 -   (1- \alpha)^+      \quad  \quad  \alpha \in [0,  \infty). 
\end{align}
\end{lemma}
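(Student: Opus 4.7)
I propose to prove the per-user inequality $nR_k \le nR_c + \frac{n(1-\alpha)^+}{2}\log P + o(n\log P)$ for each $k\in\{1,2\}$; summing over $k$ and taking the GDoF limit will then immediately give $\dco(\alpha)\ge\dsum(\alpha)/2-(1-\alpha)^+$. The starting point combines Fano's inequality at receiver $k$ with the weak secrecy constraint at receiver $\ell\ne k$, yielding
\[
nR_k \;\le\; \Imu(w_k;y_k^n) - \Imu(w_k;y_\ell^n) + 2n\epsilon.
\]
The rest of the argument upper bounds the right-hand side by $nR_c+\frac{n(1-\alpha)^+}{2}\log P+o(n\log P)$, by peeling off $w_c$, then $w_\ell$, and finally reducing to a clean Gaussian wiretap-type entropy difference.

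For the $w_c$ reduction, independence $w_k\perp w_c$ gives $\Imu(w_k;y_k^n|w_c)\ge\Imu(w_k;y_k^n)$ and $\Imu(w_k;y_\ell^n|w_c)-\Imu(w_k;y_\ell^n)=\Imu(w_k;w_c|y_\ell^n)\le\Hen(w_c)\le nR_c$, hence
\[
\Imu(w_k;y_k^n)-\Imu(w_k;y_\ell^n)\;\le\;\Imu(w_k;y_k^n|w_c)-\Imu(w_k;y_\ell^n|w_c)+nR_c.
\]
For the $w_\ell$ reduction, Fano at receiver $\ell$ gives $\Hen(w_\ell|y_\ell^n,w_c)\le n\epsilon$, so inserting $w_\ell$ into both conditional mutual informations inflates the difference by at most $n\epsilon$. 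Once $(w_c,w_\ell)$ is conditioned on, the known signal $x_\ell^n=f_\ell(w_\ell,w_c)$ can be subtracted from both $y_k^n$ and $y_\ell^n$, so the conditional-MI difference collapses to
\[
\Delta_k \;\defeq\; \hen\bigl(\sqrt{P}\,h_{kk}x_k^n+z_k^n\,\big|\,w_c\bigr)-\hen\bigl(\sqrt{P^\alpha}\,h_{\ell k}x_k^n+z_\ell^n\,\big|\,w_c\bigr).
\]

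The core technical step is $\Delta_k\le\frac{n(1-\alpha)^+}{2}\log P+O(n)$. For $\alpha\le 1$ I would rescale via $\hen(cY)=\hen(Y)+n\log|c|$ to get
\[
\Delta_k \;=\; n\log\!\frac{\sqrt{P}\,h_{kk}}{\sqrt{P^\alpha}\,h_{\ell k}}+\hen\!\Bigl(x_k^n+\frac{z_k^n}{\sqrt{P}\,h_{kk}}\,\Big|\,w_c\Bigr)-\hen\!\Bigl(x_k^n+\frac{z_\ell^n}{\sqrt{P^\alpha}\,h_{\ell k}}\,\Big|\,w_c\Bigr),
\]
and observe that, for large $P$, the noise $z_\ell^n/(\sqrt{P^\alpha}h_{\ell k})$ has strictly larger variance than $z_k^n/(\sqrt{P}\,h_{kk})$, so the second noisy observation is distributionally equivalent to the first plus an independent Gaussian vector; since adding independent noise can only increase differential entropy, the bracketed difference is at most zero, leaving $\Delta_k\le\frac{n(1-\alpha)}{2}\log P+O(n)$. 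For $\alpha>1$ the two noises reverse roles and the same stochastic-degradation argument, together with the trivial bound $\hen(A+Z')\le\hen(A)+\hen(Z')$ for an independent $Z'$ of variance $O(1/P)$, gives $\Delta_k\le O(n)$, consistent with $(1-\alpha)^+=0$. Chaining everything produces $nR_k\le nR_c+\frac{n(1-\alpha)^+}{2}\log P+o(n\log P)$; summing over $k=1,2$, dividing by $\frac{1}{2}\log P$, and letting $P\to\infty$ gives $d_1+d_2\le 2d_c+2(1-\alpha)^+$, and optimizing over tuples $(d_1,d_2,d_c)$ attaining $\dsum(\alpha)$ yields the claimed bound.

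The main obstacle is running the stochastic-degradation step cleanly at the boundary $\alpha=1$: there the sign of the excess-noise variance $\frac{1}{Ph_{kk}^2}-\frac{1}{P^\alpha h_{\ell k}^2}$ is determined by whether $h_{kk}\lessgtr h_{\ell k}$, so one must choose the correct direction for the degradation in each sub-case. However, because the $h_{k\ell}$ are bounded constants in $(1,2]$, this ambiguity contributes only an $O(n)$ correction that is sub-logarithmic in $P$ and therefore vanishes in the GDoF limit, preserving the final bound $\dco(\alpha)\ge\dsum(\alpha)/2-(1-\alpha)^+$.
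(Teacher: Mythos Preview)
Your overall strategy is sound and leads to the correct per-user bound $nR_k \le nR_c + \frac{n(1-\alpha)^+}{2}\log P + o(n\log P)$, from which the lemma follows. The reductions in your first three paragraphs (Fano plus secrecy, peeling off $w_c$, peeling off $w_\ell$) are all correct, and the $\alpha \le 1$ case of the final bound is handled properly via stochastic degradation.

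There is, however, a genuine error in the $\alpha > 1$ case. The inequality you invoke, $\hen(A+Z') \le \hen(A) + \hen(Z')$ for independent $A,Z'$, is \emph{false} for differential entropy: since $\hen(Z')$ can be arbitrarily negative (indeed $\hen(\mathcal{N}(0,\sigma^2)) = \tfrac12\log(2\pi e\sigma^2) \to -\infty$ as $\sigma\to 0$), while $\hen(A+Z') \ge \hen(A)$ always, the claimed bound would force $0 \le \hen(Z')$, a contradiction when $Z'$ has variance $O(1/P)$. The fix is immediate once you recall that your $\Delta_k$ is not merely an entropy difference but equals the mutual-information difference $\Imu(w_k;\sqrt{P}h_{kk}x_k^n+z_k^n\mid w_c) - \Imu(w_k;\sqrt{P^\alpha}h_{\ell k}x_k^n+z_\ell^n\mid w_c)$ (the $\hen(z_k^n)$ and $\hen(z_\ell^n)$ terms cancel). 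For $\alpha>1$ and large $P$ the first channel is a stochastically degraded version of the second, so the data-processing inequality for mutual information gives $\Delta_k \le 0$ directly---no entropy ``subadditivity'' needed.

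The paper's proof takes a somewhat different route that avoids the case split altogether. Rather than bounding $\Delta_k$ via degradation, it gives the analysis the genie signal $s_{21}^n = \sqrt{P^\alpha}h_{21}x_1^n + z_2^n$ (reconstructible from $(w_c,w_2,y_2^n)$) and observes that, conditional on $(w_c,w_2,s_{21}^n)$, the residual in $y_1^n$ after cancelling known components is the pure-noise combination $-\sqrt{P^{1-\alpha}}\tfrac{h_{11}}{h_{21}}z_2 + z_1$. This yields the single closed-form bound $\Hen(w_1\mid w_c,w_2,y_2^n) \le \tfrac{n}{2}\log\bigl(1+P^{1-\alpha}|h_{11}|^2/|h_{21}|^2\bigr) + n\epsilon_n$, valid for all $\alpha$, which automatically produces $(1-\alpha)^+$ in the GDoF limit. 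Your degradation approach is conceptually equivalent but requires the case distinction; the paper's genie trick is cleaner and sidesteps the pitfall you hit.
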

In what follows, we will prove Lemma~\ref{lm:ICrGDoFcr}.
This proof  will use the secrecy constraints and Fano's inequality. 
Starting  with the secrecy constraint $\Imu(w_1; y_{2}^{\bln})  \leq  \bln \epsilon$,  and with the identity of $\Imu(w_1; y_{2}^{\bln})= \Imu(w_1; w_c, w_2, y_{2}^{\bln}) - \Imu(w_1; w_c, w_2 |  y_{2}^{\bln})$, we have 
\begin{align}
\Imu(w_1; w_c, w_2, y_{2}^{\bln})    \leq   \Imu(w_1; w_c, w_2 |  y_{2}^{\bln}) +   \bln \epsilon  .  \label{eq:ICGaup1745}
\end{align}
The first term  in the right-hand side of  \eqref{eq:ICGaup1745} is bounded as  
\begin{align}
\Imu(w_1; w_c, w_2 |  y_{2}^{\bln})   
& = \Hen(w_c, w_2 |  y_{2}^{\bln}) - \Hen(w_c, w_2 |  y_{2}^{\bln}, w_1)    \non\\
& \leq \Hen(w_c) + \Hen( w_2 |  y_{2}^{\bln})     \label{eq:ICGaup292355} \\
& \leq \Hen(w_c) + \bln {\epsilon_n}  \label{eq: ICGaup111}
\end{align}
where \eqref{eq:ICGaup292355} uses the fact that conditioning reduces entropy; 
and \eqref{eq: ICGaup111} follows from Fano's inequality.
On the other hand,  the term  in the left-hand side of  \eqref{eq:ICGaup1745}   can be rewritten as
\begin{align}
 \Imu(w_1; w_c, w_2, y_{2}^{\bln})
& =  \Imu(w_1; y_{2}^{\bln} | w_c, w_2)  \non \\
& =  \Hen(w_1 ) \!- \! \Hen(w_1 | w_c, w_2,  y_{2}^{\bln}) \label{eq: ICGaup1111B1}
\end{align}
using the independence between  $w_1$, $w_c$ and $w_2$. 
By incorporating  \eqref{eq: ICGaup111} and  \eqref{eq: ICGaup1111B1}  into \eqref{eq:ICGaup1745}, it gives 
\begin{align}
\Hen(w_1 )   \leq  \Hen(w_c)  + \Hen(w_1 | w_c, w_2,  y_{2}^{\bln})  +  \bln {\epsilon_n}  +   \bln \epsilon  .    \label{eq: ICGaup1111D}
\end{align}
For the second term in the right-hand side of \eqref{eq: ICGaup1111D}, we have 
\begin{align}
&   \Hen(w_1 | w_c, w_2,  y_{2}^{\bln}) \non\\
 = &\Hen(w_1 | w_c, w_2, y_{2}^{\bln},  s_{21}^{\bln}) \label{eq: ICGaup13424} \\
 \leq   & \Hen(w_1 | w_c, w_2, y_{2}^{\bln},  s_{21}^{\bln})      -  \Hen( w_1 |  y_{1}^{\bln})  +  \bln {\epsilon}_n  \label{eq: ICGaup112A}\\
 \leq & \Hen(w_1 | w_c, w_2,  s_{21}^{\bln})  -  \Hen( w_1 |  y_{1}^{\bln}, w_c, w_2,  s_{21}^{\bln})  +  \bln {\epsilon}_n  \non\\
 =  & \Imu (w_1 ;  y_{1}^{\bln} | w_c, w_2,  s_{21}^{\bln})   +  \bln {\epsilon}_n   \non \\
 = &  \Imu (w_1;  \{y_1(t) - \sqrt{P^{1-\alpha}} \cdot \frac{h_{11}}{h_{21}} s_{21}(t)    -   \sqrt{P^{\alpha}}h_{12}x_2(t)\}_{t=1}^{\bln}  | w_c,  w_2,  s_{21}^{\bln} )  +  \bln {\epsilon}_n \label{eq: ICGaup112B}\\
= & \Imu (w_1 ;  \{ -\sqrt{P^{1-\alpha}}\cdot \frac{h_{11}}{h_{21}} z_2 (t) + z_1(t)\}_{t=1}^{\bln} | w_c, w_2,  s_{21}^{\bln})   +  \bln {\epsilon}_n  \non \\
=&  \hen (\{ -\sqrt{P^{1-\alpha}}\cdot \frac{h_{11}}{h_{21}} z_2 (t) + z_1(t)\}^{\bln} | w_c, w_2,  s_{21}^{\bln})      -  \hen ( \{ -\sqrt{P^{1-\alpha}} \frac{h_{11}}{h_{21}} z_2 (t) + z_1(t)\}^{\bln} | w_1, w_c, w_2,  s_{21}^{\bln})  +  \bln {\epsilon}_n   \non\\
= & \hen (\{ -\sqrt{P^{1-\alpha}}\cdot \frac{h_{11}}{h_{21}} z_2 (t) + z_1(t)\}^{\bln} | w_c, w_2,  s_{21}^{\bln}) -  \hen (z_1^{\bln})  +  \bln {\epsilon}_n    \label{eq: ICGaup12825}\\
  \leq  & \frac{n}{2} \log  (1+P^{1-\alpha}\cdot \frac{|h_{11}|^2}{|h_{21}|^2})    +  \bln {\epsilon}_n \label{eq: ICGaup113}
\end{align}
where \eqref{eq: ICGaup13424} follows from the fact that $s_{21}^{\bln}$  can be reconstructed by $\{w_c, w_2,  y_{2}^{\bln}\}$; 
\eqref{eq: ICGaup112A} is from Fano's inequality;
\eqref{eq: ICGaup112B} uses the fact that  $x_2^{\bln}$ is a function of $(w_c,  w_2)$;
\eqref{eq: ICGaup12825} results from the fact that $z_2^{\bln}$ can be reconstructed from $\{w_1, w_c, w_2,  s_{21}^{\bln}\}$; 
\eqref{eq: ICGaup113} follows from the identity that conditioning reduces differential entropy and the identity that $\hen (z_1^{\bln}) = \frac{n}{2} \log(2\pi e)$. 
Finally, given that $\Hen(w_1 ) =  nR_1$ and  $\Hen(w_c ) =  nR_c$,  combining  the results of \eqref{eq: ICGaup1111D} and \eqref{eq: ICGaup113}  gives the following inequality 
\begin{align}
  nR_1    -  \frac{n}{2} \log  (1+P^{1-\alpha}\cdot \frac{|h_{11}|^2}{|h_{21}|^2})  - \bln {\epsilon'_n}   \leq  nR_c   \label{eq: ICGaup114A}
\end{align}
for $\epsilon'_n = 2\epsilon_n + \epsilon$. 
Due to the symmetry, by exchanging the roles of user~1 and user~2, we also have
\begin{align}
   nR_2  -   \frac{n}{2} \log  (1+P^{1-\alpha}\cdot \frac{|h_{22}|^2}{|h_{12}|^2})  - \bln \epsilon'_n  \leq    nR_c.  \label{eq: ICGaup114B}
\end{align}
Based on the definitions of  $\dco (\alpha)$ and $\dsum (\alpha)$ in Section~\ref{sec:sysICtwouser}, combining the results of \eqref{eq: ICGaup114A} and \eqref{eq: ICGaup114B} produces the following bound  
 \begin{align}
  \dsum(\alpha)/2 -   (1- \alpha)^+   \leq \dco (\alpha),   \quad  \quad \forall \alpha \in [0,  \infty)
\end{align}
which completes the proof of Lemma~\ref{lm:ICrGDoFcr}.

\subsection{Converse for the wiretap channel with a helper} \label{sec:conversewiretap}

Let us now focus on  the converse proof of Theorem~\ref{thm:GDoFwthrcr} for the wiretap channel with a helper.  
The following lemma reveals a bound on the minimal GDoF  of  common randomness $\dco (\alpha)$, for achieving the maximal secure  GDoF, i.e.,  $\dso (\alpha)=1$ for any $\alpha \in [0,  \infty)$ (see Theorem~\ref{thm:GDoFwthr}).

\begin{lemma}\label{lm:GDoFwthrcr}
Given  the \emph{symmetric}  Gaussian  WTH channel (see~Section~\ref{sec:syswchr}), 
the  minimal GDoF of the common randomness $\dco (\alpha)$   for achieving the maximal secure  GDoF satisfies  the following inequality
 \begin{align}
\dco (\alpha) \geq  1-   (1- \alpha)^+      \quad \quad  \alpha \in [0,  \infty).
\end{align}
\end{lemma}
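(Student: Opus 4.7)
The plan is to mirror the converse of Lemma~\ref{lm:ICrGDoFcr} specialised to the WTH setting, where transmitter~2 is a pure helper whose channel input $x_2^{\bln}$ is a deterministic function of $w_c$ alone (no $w_2$ is present). Starting from the secrecy constraint $\Imu(w_1;y_2^{\bln}) \le \bln\epsilon$ and the identity
\[\Imu(w_1;y_2^{\bln}) \;=\; \Imu(w_1;w_c,y_2^{\bln}) \;-\; \Imu(w_1;w_c\mid y_2^{\bln}),\]
together with the independence of $w_1$ and $w_c$, I would obtain
\[\Hen(w_1) - \Hen(w_1\mid w_c, y_2^{\bln}) \;\le\; \Hen(w_c) + \bln\epsilon,\]
so that the proof reduces to upper-bounding the residual equivocation $\Hen(w_1\mid w_c, y_2^{\bln})$ by roughly $(1-\alpha)^+ \cdot \tfrac{\bln}{2}\log P$.

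To control this equivocation, I would use the fact that $x_2^{\bln}$ is computable from $w_c$ in order to extract the ``clean'' observation $s_{21}(t) = \sqrt{P^{\alpha}}\,h_{21}\, x_1(t) + z_2(t)$ by subtracting $\sqrt{P}\,h_{22}\, x_2(t)$ from $y_2(t)$; hence $\Hen(w_1\mid w_c, y_2^{\bln}) = \Hen(w_1\mid w_c, s_{21}^{\bln})$. Combining Fano's inequality on the legitimate link, $\Hen(w_1\mid y_1^{\bln}) \le \bln\epsilon_n$, with a conditioning-reduces-entropy step yields
\[\Hen(w_1\mid w_c, s_{21}^{\bln}) \;\le\; \Imu(w_1; y_1^{\bln}\mid w_c, s_{21}^{\bln}) + \bln\epsilon_n,\]
and conditioning on $(w_c, s_{21}^{\bln})$ lets me cancel both the helper's contribution $\sqrt{P^{\alpha}}\,h_{12}\,x_2(t)$ and a scaled copy of $s_{21}(t)$ from $y_1(t)$, leaving the effective noise $z_1(t) - \sqrt{P^{1-\alpha}}\,(h_{11}/h_{21})\, z_2(t)$. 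A Gaussian-maximises-entropy step then bounds the conditional mutual information by $\tfrac{\bln}{2}\log\!\bigl(1 + P^{1-\alpha}|h_{11}|^2/|h_{21}|^2\bigr)$, which contributes $(1-\alpha)^+$ in the final GDoF inequality.

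Assembling the pieces gives the single-letter bound
\[nR_1 \;-\; \tfrac{n}{2}\log\!\bigl(1 + P^{1-\alpha}|h_{11}|^2/|h_{21}|^2\bigr) \;-\; n\epsilon'_n \;\le\; nR_c;\]
dividing by $\tfrac{1}{2}\log P$, letting $P\to\infty$, and specialising to any scheme achieving $d = \dso(\alpha) = 1$ (guaranteed by Theorem~\ref{thm:GDoFwthr}) delivers $\dco(\alpha) \ge 1 - (1-\alpha)^+$. No serious obstacle is expected here: every step is essentially a transcription of \eqref{eq:ICGaup1745}--\eqref{eq: ICGaup114A} with $w_2$ removed, and the $h_{21}\neq 0$ condition needed for the linear elimination holds under the standing assumption $h_{k\ell}\in(1,2]$. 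The only subtlety worth attention is to verify that $x_2^{\bln}$ depends solely on $w_c$ (so that no additional randomness leaks into $s_{21}^{\bln}$ that would inflate $\Hen(w_c)$), which is immediate from the definition of $f_2$ in Section~\ref{sec:syswchr}.
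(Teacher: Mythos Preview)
Your proposal is correct and follows essentially the same argument as the paper's proof, which explicitly says it ``follows closely from that of Lemma~\ref{lm:ICrGDoFcr}'' and performs exactly the steps you outline: the secrecy-plus-chain-rule identity, the bound $\Imu(w_1;w_c\mid y_2^{\bln})\le \Hen(w_c)$, the reconstruction of $s_{21}^{\bln}$ from $(w_c,y_2^{\bln})$, Fano on the legitimate link, the linear elimination leaving $z_1(t)-\sqrt{P^{1-\alpha}}(h_{11}/h_{21})z_2(t)$, and the final invocation of $\dso(\alpha)=1$. The only cosmetic difference is that the paper writes $\Hen(w_1\mid w_c,y_2^{\bln})=\Hen(w_1\mid w_c,y_2^{\bln},s_{21}^{\bln})$ and then drops $y_2^{\bln}$ by conditioning, whereas you note directly that $(w_c,y_2^{\bln})$ and $(w_c,s_{21}^{\bln})$ determine each other; both are equivalent.
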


The proof of Lemma~\ref{lm:GDoFwthrcr} follows closely from that of Lemma~\ref{lm:ICrGDoFcr}.
From the secrecy constraint $\Imu(w_1; y_{2}^{\bln})  \leq  \bln \epsilon$,  and  the identity of $\Imu(w_1; y_{2}^{\bln})= \Imu(w_1; w_c,  y_{2}^{\bln}) - \Imu(w_1; w_c |  y_{2}^{\bln})$, we have 
\begin{align}
\Imu(w_1; w_c, y_{2}^{\bln})    \leq   \Imu(w_1; w_c |  y_{2}^{\bln}) +   \bln \epsilon  .  \label{eq:wiretapGaup1745}
\end{align}
The first term  in the right-hand side of  \eqref{eq:wiretapGaup1745} is bounded as  
\begin{align}
\Imu(w_1; w_c |  y_{2}^{\bln})   
& = \Hen(w_c |  y_{2}^{\bln}) - \Hen(w_c |  y_{2}^{\bln}, w_1)    \leq \Hen(w_c)   \label{eq: wiretapGaup111}
\end{align}
On the other hand,  the term  in the left-hand side of  \eqref{eq:wiretapGaup1745}   can be rewritten as
\begin{align}
 \Imu(w_1; w_c, y_{2}^{\bln})
& =  \Imu(w_1; y_{2}^{\bln} | w_c)  \non \\
& =  \Hen(w_1 | w_c) \!- \! \Hen(w_1 | w_c,  y_{2}^{\bln}) \non\\
& =  \Hen(w_1) \!- \! \Hen(w_1 | w_c,  y_{2}^{\bln}) \label{eq: wiretapGaup1111B1}
\end{align}
using the independence between  $w_1$ and $w_c$. 
By incorporating  \eqref{eq: wiretapGaup111} and  \eqref{eq: wiretapGaup1111B1}  into \eqref{eq:wiretapGaup1745}, it gives 
\begin{align}
\Hen(w_1 )   \leq  \Hen(w_c)  + \Hen(w_1 | w_c, y_{2}^{\bln})    +   \bln \epsilon  .    \label{eq: wiretapGaup1111D}
\end{align}
For the second term in the right-hand side of \eqref{eq: wiretapGaup1111D}, by following the steps in \eqref{eq: ICGaup13424}-\eqref{eq: ICGaup113} we  have
\begin{align}
&   \Hen(w_1 | w_c,  y_{2}^{\bln}) \non\\
 = &\Hen(w_1 | w_c,  y_{2}^{\bln},  s_{21}^{\bln}) \label{eq: wiretapGaup13424} \\
 \leq   & \Hen(w_1 | w_c,  y_{2}^{\bln},  s_{21}^{\bln})      -  \Hen( w_1 |  y_{1}^{\bln})  +  \bln {\epsilon}_n  \label{eq: wiretapGaup112A}\\
 \leq & \Hen(w_1 | w_c, s_{21}^{\bln})  -  \Hen( w_1 |  y_{1}^{\bln}, w_c,  s_{21}^{\bln})  +  \bln {\epsilon}_n  \non\\
 =  & \Imu (w_1 ;  y_{1}^{\bln} | w_c,  s_{21}^{\bln})   +  \bln {\epsilon}_n   \non \\
 = &  \Imu (w_1;  \{y_1(t) - \sqrt{P^{1-\alpha}} \cdot \frac{h_{11}}{h_{21}} s_{21}(t)   -   \sqrt{P^{\alpha}}h_{12}x_2(t)\}_{t=1}^{\bln}  | w_c,  s_{21}^{\bln} )  +  \bln {\epsilon}_n \label{eq: wiretapGaup112B}\\
= & \Imu (w_1 ;  \{ -\sqrt{P^{1-\alpha}}\cdot \frac{h_{11}}{h_{21}} z_2 (t) + z_1(t)\}_{t=1}^{\bln} | w_c,  s_{21}^{\bln})   +  \bln {\epsilon}_n  \non \\
 \leq  & \frac{n}{2} \log  (1+P^{1-\alpha}\cdot \frac{|h_{11}|^2}{|h_{21}|^2})    +  \bln {\epsilon}_n \label{eq: wiretapGaup113}
\end{align}
where \eqref{eq: wiretapGaup13424} follows from the fact that $s_{21}^{\bln}$  can be reconstructed by $\{w_c,  y_{2}^{\bln}\}$; 
\eqref{eq: wiretapGaup112A} is from Fano's inequality;
\eqref{eq: wiretapGaup112B} uses the fact that  $x_2^{\bln}$ is a function of $w_c$;
\eqref{eq: wiretapGaup113} follows from the identity that conditioning reduces differential entropy. 
Finally, given that $\Hen(w_1 ) =  nR_1$ and  $\Hen(w_c ) =  nR_c$,  combining  the results of \eqref{eq: wiretapGaup1111D} and \eqref{eq: wiretapGaup113}  gives the following inequality 
\begin{align}
  nR_1    -  \frac{n}{2} \log  (1+P^{1-\alpha}\cdot \frac{|h_{11}|^2}{|h_{21}|^2})  - \bln {\epsilon'_n}   \leq  nR_c   \label{eq: wiretapGaup114A}
\end{align}
for $\epsilon'_n =\epsilon_n + \epsilon$. 
Based on the definitions of  $\dco (\alpha)$ and $ \dso (\alpha)$ in Section~\ref{sec:syswchr}, and given the result of Theorem~\ref{thm:GDoFwthr}, i.e.,   $\dso (\alpha) =1, \forall \alpha \in [0,  \infty) $, the result of  \eqref{eq: wiretapGaup114A} gives the following bound  
 \begin{align}
1 -   (1- \alpha)^+   \leq \dco (\alpha),   \quad  \quad \forall \alpha \in [0,  \infty)
\end{align}
which completes the proof of Lemma~\ref{lm:GDoFwthrcr}.

\subsection{Converse for two-user multiple access wiretap channel} \label{sec:conversewiretapmac}

Let us consider the converse proof of Theorem~\ref{thm:GDoFmawccr} for the two-user   multiple access wiretap channel. 
The following lemma gives a bound on  the minimal GDoF of  common randomness $\dco (\alpha, d_1, d_2)$,  for achieving   any given GDoF pair $(d_1, d_2)$ in the  maximal  secure  GDoF  region  $\mathcal{D}^* ( \alpha) $.

\begin{lemma} \label{lm:GDoFmawccr}
For  the \emph{symmetric}  Gaussian MAC-WT channel with common randomness (see~Section~\ref{sec:sysmawc}),   the minimal GDoF of the common randomness $\dco (\alpha, d_1, d_2)$, for achieving  any given GDoF pair $(d_1, d_2)$ in the  maximal  secure  GDoF  region  $\mathcal{D}^* ( \alpha) $,   satisfies the  following inequality
 \begin{align}
\dco (\alpha, d_1, d_2) \geq  \max\{ d_1 - (1-\alpha)^+,  d_2- (\alpha-1)^+ \}     \quad  \text{for} \quad   (d_1, d_2) \in  \mathcal{D}^* ( \alpha).     \non 
\end{align}
\end{lemma}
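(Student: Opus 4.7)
The plan is to derive two separate lower bounds on $\dco(\alpha, d_1, d_2)$, one of the form $d_c \geq d_1 - (1-\alpha)^+$ and one of the form $d_c \geq d_2 - (\alpha-1)^+$, each by adapting the template established in the proof of Lemma~\ref{lm:ICrGDoFcr}. The combined bound $\dco(\alpha, d_1, d_2) \geq \max\{d_1 - (1-\alpha)^+, d_2 - (\alpha-1)^+\}$ then follows immediately. The starting observation is that the MAC--WT joint secrecy constraint $\Imu(w_1,w_2;y_2^n) \leq n\epsilon$ implies, via the chain rule, the conditional constraints $\Imu(w_1; y_2^n | w_2) \leq n\epsilon$ and $\Imu(w_2; y_2^n | w_1) \leq n\epsilon$. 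These are the analogues of the individual secrecy constraints used in Lemma~\ref{lm:ICrGDoFcr}, and they are the right objects to feed into the converse machinery because they give access to a ``clean'' single-user version of the problem after conditioning on the other message.

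For the first bound, I would start from $\Imu(w_1; y_2^n | w_2) \leq n\epsilon$ and follow the identity/manipulation sequence in \eqref{eq:ICGaup1745}--\eqref{eq: ICGaup1111D}, using the independence of $w_1, w_2, w_c$, to arrive at $\Hen(w_1) \leq \Hen(w_c) + \Hen(w_1 | w_c, w_2, y_2^n) + n\epsilon$. Since $x_2^n$ is a function of $(w_c, w_2)$, the quantity $s_{21}^n$ is reconstructable from $(w_c, w_2, y_2^n)$, so I can insert $s_{21}^n$ into the conditioning and apply Fano's inequality at receiver~1 (which legitimately decodes $w_1$) exactly as in \eqref{eq: ICGaup13424}--\eqref{eq: ICGaup113}, producing the bound $\Imu(w_1; y_1^n | w_c, w_2, s_{21}^n) \leq \tfrac{n}{2}\log(1 + P^{1-\alpha}|h_{11}|^2/|h_{21}|^2) + n\epsilon_n$ via the noise cancellation $y_1 - \sqrt{P^{1-\alpha}}\tfrac{h_{11}}{h_{21}}s_{21} - \sqrt{P^\alpha}h_{12}x_2 = z_1 - \sqrt{P^{1-\alpha}}\tfrac{h_{11}}{h_{21}}z_2$. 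Normalizing by $\tfrac{n}{2}\log P$ yields $d_1 \leq d_c + (1-\alpha)^+$.

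For the second bound, I would run the symmetric argument starting from $\Imu(w_2; y_2^n | w_1) \leq n\epsilon$. The twist is that the asymmetry of the MAC--WT channel (direct link $\alpha$ from user~2 to its legitimate receiver~1, and direct link $1$ from user~2 to the eavesdropper~2) forces a different linear combination. Conditioning on $(w_1, w_c)$ now reveals $x_1^n$, so I define $s_{22}^n \triangleq y_2^n - \sqrt{P^\alpha}h_{21}x_1^n = \sqrt{P}h_{22}x_2^n + z_2^n$ and form the cancellation $y_1 - \sqrt{P}h_{11}x_1 - \sqrt{P^{\alpha-1}}\tfrac{h_{12}}{h_{22}}s_{22} = z_1 - \sqrt{P^{\alpha-1}}\tfrac{h_{12}}{h_{22}}z_2$. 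Repeating the Fano step at receiver~1 (which also decodes $w_2$) then gives $\Imu(w_2; y_1^n | w_c, w_1, s_{22}^n) \leq \tfrac{n}{2}\log(1 + P^{\alpha-1}|h_{12}|^2/|h_{22}|^2) + n\epsilon_n$, leading to $d_2 \leq d_c + (\alpha-1)^+$.

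I expect the calculations to be routine given the templates of Lemmas~\ref{lm:ICrGDoFcr} and \ref{lm:GDoFwthrcr}; the main subtlety to watch for is the direction of the asymmetry when handling the $w_2$-bound, because $w_2$ is the ``weaker'' user at the legitimate receiver (link strength $\alpha$) but the ``stronger'' user at the eavesdropper (link strength $1$), which is the opposite of the roles seen in Lemma~\ref{lm:ICrGDoFcr}. Choosing the coefficient $\sqrt{P^{\alpha-1}}\tfrac{h_{12}}{h_{22}}$ correctly (rather than $\sqrt{P^{1-\alpha}}\tfrac{h_{11}}{h_{21}}$) is what produces the $(\alpha-1)^+$ term instead of $(1-\alpha)^+$, and is the only step where one could easily introduce a sign or exponent error. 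Specializing to $\alpha=1$ recovers $\dco(1,d_1,d_2) \geq \max\{d_1, d_2\}$ as claimed in Theorem~\ref{thm:GDoFmawccr}.
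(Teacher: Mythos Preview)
Your proposal is correct and follows essentially the same approach as the paper: genie-aided reconstruction of $s_{21}^n$ (resp.\ $\sqrt{P}h_{22}x_2^n+z_2^n$) from the conditioning variables, Fano's inequality at receiver~1, and the same noise-cancellation identities to extract the $(1-\alpha)^+$ and $(\alpha-1)^+$ terms. The only organizational difference is that you first reduce the joint secrecy constraint $\Imu(w_1,w_2;y_2^n)\le n\epsilon$ to the conditional constraints $\Imu(w_k;y_2^n\mid w_{3-k})\le n\epsilon$ via the chain rule and then run the single-message argument, whereas the paper works directly with the joint constraint and discards the $w_1$-related terms mid-derivation; both routes land on the identical rate inequalities \eqref{eq: mawcGaup114B}--\eqref{eq: mawcGaup114C}.
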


The following corollary is directly from Lemma~\ref{lm:GDoFmawccr}  by considering  the specific case  with $\alpha =1 $.

\begin{corollary} \label{cor:GDoFmawccr}
For  the \emph{symmetric}  Gaussian  MAC-WT  channel with common randomness  defined in Section~\ref{sec:sysmawc}, and for $\alpha=1$, the minimal GDoF of the common randomness $\dco (1, d_1, d_2)$,  for achieving   any given GDoF pair $(d_1, d_2)$ in the  maximal  secure  GDoF  region  $\mathcal{D}^* ( 1) $,   satisfies the  following inequality
 \begin{align}
\dco (1, d_1, d_2) \geq  \max\{d_1,  d_2  \}     \quad  \text{for} \quad   (d_1, d_2) \in  \mathcal{D}^* ( 1) ,  \  \alpha =1 .     \non 
\end{align}
\end{corollary}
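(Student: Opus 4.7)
The plan is to prove Lemma~\ref{lm:GDoFmawccr} directly; the corollary is then immediate, since at $\alpha=1$ both $(1-\alpha)^+$ and $(\alpha-1)^+$ vanish and the lemma's bound $\max\{d_1-(1-\alpha)^+,\, d_2-(\alpha-1)^+\}$ collapses to $\max\{d_1,d_2\}$. The strategy mirrors the architecture of the IC-SC converse of Lemma~\ref{lm:ICrGDoFcr} and the WTH converse of Lemma~\ref{lm:GDoFwthrcr}: derive two single-letter upper bounds, namely $nR_1\leq nR_c+\frac{n}{2}\log\bigl(1+P^{1-\alpha}|h_{11}|^2/|h_{21}|^2\bigr)+o(n\log P)$ and $nR_2\leq nR_c+\frac{n}{2}\log\bigl(1+P^{\alpha-1}|h_{12}|^2/|h_{22}|^2\bigr)+o(n\log P)$, and then pass to the GDoF limit.

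For the first bound, I would start from the joint secrecy constraint $\Imu(w_1,w_2;y_2^n)\leq n\epsilon$, which by the chain rule gives the conditional secrecy $\Imu(w_1;y_2^n\mid w_2)\leq n\epsilon$. Exploiting the mutual independence of $w_1,w_2,w_c$, the two expansions $\Imu(w_1;w_c,y_2^n\mid w_2)=\Imu(w_1;y_2^n\mid w_2,w_c)$ and $\Imu(w_1;w_c,y_2^n\mid w_2)=\Imu(w_1;y_2^n\mid w_2)+\Imu(w_1;w_c\mid w_2,y_2^n)$ together yield $\Imu(w_1;y_2^n\mid w_2,w_c)\leq nR_c+n\epsilon$. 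Using $\Hen(w_1)=\Hen(w_1\mid w_2,w_c)$ then reduces the task to bounding the leakage $\Hen(w_1\mid w_2,w_c,y_2^n)$. Since $x_2^n$ is a function of $(w_2,w_c)$, the signal $s_{21}^n=y_2^n-\sqrt{P}h_{22}x_2^n$ can be added to the conditioning for free; then, exactly as in the chain \eqref{eq: ICGaup13424}--\eqref{eq: ICGaup113}, applying Fano at receiver~1 (which decodes \emph{both} messages, so $\Hen(w_1\mid y_1^n)\leq\Hen(w_1,w_2\mid y_1^n)\leq n\epsilon_n$) together with the linear-combination identity $y_1(t)-\sqrt{P^{\alpha}}h_{12}x_2(t)-\sqrt{P^{1-\alpha}}\frac{h_{11}}{h_{21}}s_{21}(t)=z_1(t)-\sqrt{P^{1-\alpha}}\frac{h_{11}}{h_{21}}z_2(t)$ yields $\Hen(w_1\mid w_2,w_c,y_2^n)\leq\frac{n}{2}\log\bigl(1+P^{1-\alpha}|h_{11}|^2/|h_{21}|^2\bigr)+n\epsilon_n$. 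Dividing by $\frac{n}{2}\log P$ and sending $P\to\infty$ gives $d_c\geq d_1-(1-\alpha)^+$.

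The second bound follows by repeating the argument with the roles of $w_1$ and $w_2$ swapped. I would start from $\Imu(w_2;y_2^n\mid w_1)\leq n\epsilon$ and strip $x_1^n$ from $y_2^n$ to obtain $s_{22}^n=\sqrt{P}h_{22}x_2^n+z_2^n$. The key channel asymmetry is that at receiver~2 the jammer-free observation of $x_2$ sits at full power $\sqrt{P}$, whereas at receiver~1 the clean observation of $x_2$ sits only at $\sqrt{P^\alpha}$; consequently the linear combination that cancels $x_2$ loads the noise with coefficient $\sqrt{P^{\alpha-1}}\frac{h_{12}}{h_{22}}$ instead of $\sqrt{P^{1-\alpha}}\frac{h_{11}}{h_{21}}$, producing the complementary bound $d_c\geq d_2-(\alpha-1)^+$. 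Combining the two bounds establishes Lemma~\ref{lm:GDoFmawccr}, and specializing to $\alpha=1$ yields Corollary~\ref{cor:GDoFmawccr}.

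The main obstacle, compared with the IC-SC converse, is that Fano's inequality is not available at receiver~2: receiver~2 is an eavesdropper and decodes nothing, so the step $\Hen(w_2\mid y_2^n)\leq n\epsilon_n$ used in \eqref{eq:ICGaup292355}--\eqref{eq: ICGaup111} to control $\Imu(w_1;w_c,w_2\mid y_2^n)$ is no longer free. I would sidestep this by exploiting the \emph{joint} secrecy constraint $\Imu(w_1,w_2;y_2^n)\leq n\epsilon$ to extract the conditional secrecy $\Imu(w_j;y_2^n\mid w_k)\leq n\epsilon$ for $j\neq k$, and by using independence $\Hen(w_1)=\Hen(w_1\mid w_2,w_c)$ to absorb $w_2$ into the conditioning rather than into a decoding-error term at receiver~2. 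Everything else is routine bookkeeping analogous to \eqref{eq:ICGaup1745}--\eqref{eq: ICGaup114A}.
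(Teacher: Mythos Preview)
Your proposal is correct and arrives at exactly the same two single-letter bounds as the paper, using the same ingredients: the secrecy constraint, Fano at receiver~1, the genie signals $s_{21}^n$ and $s_{22}^n$, and the linear-combination trick that reduces the conditional mutual information to a noise-only channel. The difference is organizational. The paper starts from the \emph{joint} inequality $\Imu(w_1,w_2;w_c,y_2^n)\leq\Imu(w_1,w_2;w_c\mid y_2^n)+n\epsilon$, carries both messages through to obtain $\Hen(w_1)+\Hen(w_2)\leq\Hen(w_c)+\Hen(w_1,w_2\mid w_c,y_2^n)+n\epsilon$, and then in the chain \eqref{eq: mawcGaup112A}--\eqref{eq: mawcGaup1111D} decomposes $\Hen(w_1,w_2\mid w_c,y_2^n)$ and $\Hen(w_1,w_2\mid y_1^n)$ and \emph{discards} the non-positive pieces $-\Hen(w_1)+\Hen(w_1\mid w_c,y_2^n)$ and $-\Hen(w_1\mid y_1^n)$ to isolate the $R_2$ bound. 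You instead extract the \emph{conditional} secrecy $\Imu(w_1;y_2^n\mid w_2)\leq n\epsilon$ via the chain rule at the very first step and work with $w_2$ already in the conditioning throughout, which avoids the discard step and is slightly more direct. Both routes are equally valid; yours is a touch cleaner, while the paper's joint treatment makes the symmetry between the two bounds more visibly a single calculation.
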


Let us now prove Lemma~\ref{lm:GDoFmawccr}.
This proof  will also use the secrecy constraints and Fano's inequality. 
Starting  with the secrecy constraint $\Imu(w_1, w_2; y_{2}^{\bln})  \leq  \bln \epsilon$,  and with the identity of $\Imu(w_1, w_2; y_{2}^{\bln})= \Imu(w_1, w_2; w_c,  y_{2}^{\bln}) - \Imu(w_1, w_2; w_c |  y_{2}^{\bln})$, we have 
\begin{align}
\Imu(w_1,  w_2; w_c, y_{2}^{\bln})    \leq   \Imu(w_1 , w_2; w_c |  y_{2}^{\bln}) +   \bln \epsilon  .  \label{eq: mawcGaup1745}
\end{align}
The first term  in the right-hand side of  \eqref{eq: mawcGaup1745} is bounded as  
\begin{align}
\Imu(w_1,  w_2; w_c |  y_{2}^{\bln})   
& = \Hen(w_c |  y_{2}^{\bln}) - \Hen(w_c |  y_{2}^{\bln}, w_1,  w_2)    \non\\
& \leq \Hen(w_c) . \label{eq: mawcGaup111}
\end{align}
On the other hand,  the term  in the left-hand side of  \eqref{eq: mawcGaup1745}   can be rewritten as
\begin{align}
 \Imu(w_1,  w_2; w_c, y_{2}^{\bln})  
& =  \Imu(w_1, w_2; y_{2}^{\bln} | w_c)  \non \\
& =  \Hen(w_1, w_2 ) \!- \! \Hen(w_1,  w_2 | w_c,  y_{2}^{\bln})  \non\\ 
&  = \Hen(w_1) + \Hen( w_2 ) \!- \! \Hen(w_1,  w_2 | w_c,  y_{2}^{\bln})  \label{eq: mawcGaup1111B1}
\end{align}
using the independence between $w_1$, $w_c$ and $w_2$. 
By incorporating  \eqref{eq: mawcGaup111} and  \eqref{eq: mawcGaup1111B1}  into \eqref{eq: mawcGaup1745}, it gives 
\begin{align}
\Hen(w_2 )   \leq   &\Hen(w_c) -  \Hen(w_1) + \Hen(w_1,  w_2 | w_c,  y_{2}^{\bln})  +   \bln \epsilon   \non \\ 
 \leq & \Hen(w_c)  -  \Hen(w_1) +  \Hen(w_1,  w_2 | w_c,  y_{2}^{\bln})     -  \Hen( w_1, w_2 |  y_{1}^{\bln})  +  \bln {\epsilon}_n  +   \bln \epsilon  \label{eq: mawcGaup112A}\\
=  & \Hen(w_c)  \underbrace{-  \Hen(w_1) +  \Hen(w_1 | w_c,  y_{2}^{\bln}) }_{\leq 0}+  \Hen( w_2 | w_1,  w_c,  y_{2}^{\bln})  \underbrace{-  \Hen( w_1 |  y_{1}^{\bln})}_{\leq 0} - \Hen( w_2 | w_1,  y_{1}^{\bln}) + \bln  \epsilon'_n  \non \\
\leq  & \Hen(w_c) +  \Hen( w_2 | w_1,  w_c,  y_{2}^{\bln}) - \Hen( w_2 | w_1,  y_{1}^{\bln}) + \bln \epsilon'_n   \label{eq: mawcGaup1111D} 
\end{align}
for $\epsilon'_n \defeq \epsilon_n + \epsilon$, where \eqref{eq: mawcGaup112A} stems from Fano's equality.
For the second and third terms in the right-hand side of \eqref{eq: mawcGaup1111D}, we have 
\begin{align}
&  \Hen( w_2 | w_1,  w_c,  y_{2}^{\bln}) - \Hen( w_2 | w_1,  y_{1}^{\bln}) \non \\
\leq & \Hen( w_2 | \{\sqrt{P}h_{22}x_2(t) + z_2(t)\}_{t=1}^{\bln}, w_1,  w_c,  y_{2}^{\bln}) - \Hen( w_2 |  \{\sqrt{P}h_{22}x_2(t) + z_2(t)\}_{t=1}^{\bln}, w_1, w_c,  y_{1}^{\bln})  \label{eq: mawcGaup112455B} \\
\leq & \Hen( w_2 |  \{\sqrt{P}h_{22}x_2(t) + z_2(t)\}_{t=1}^{\bln}, w_1,  w_c) - \Hen( w_2 | \{\sqrt{P}h_{22}x_2(t) + z_2(t)\}_{t=1}^{\bln}, w_1, w_c,  y_{1}^{\bln})    \label{eq: mawcGaup1124B} \\
=  & \Imu (w_2 ;  y_{1}^{\bln} | \{\sqrt{P}h_{22}x_2(t) + z_2(t)\}_{t=1}^{\bln}, w_1,  w_c)     \non\\
 = &  \Imu (w_2;  \{y_1(t) - \sqrt{P^{\alpha-1}} \cdot \frac{h_{12}}{h_{22}} (\sqrt{P}h_{22}x_2(t) + z_2(t))   -   \sqrt{P}h_{11}x_1(t)\}_{t=1}^{\bln}  | w_c,  w_1,  \{\sqrt{P}h_{22}x_2(t) + z_2(t)\}_{t=1}^{\bln} )  \label{eq: mawcGaup112B}\\
= & \Imu (w_2 ;  \{ -\sqrt{P^{\alpha-1}}\cdot \frac{h_{12}}{h_{22}} z_2 (t) + z_1(t)\}_{t=1}^{\bln} | w_c, w_1,  \{\sqrt{P}h_{22}x_2(t) + z_2(t)\}_{t=1}^{\bln})   \non \\
 \leq  & \frac{n}{2} \log  (1+P^{\alpha-1}\cdot \frac{|h_{12}|^2}{|h_{22}|^2})     \label{eq: mawcGaup113}
\end{align}
where \eqref{eq: mawcGaup112455B} follows from the fact that $\{\sqrt{P}h_{22}x_2(t) + z_2(t)\}_{t=1}^{\bln}$  can be reconstructed by $\{w_1, w_c,  y_{2}^{\bln}\}$ and the fact that conditioning reduces entropy;
\eqref{eq: mawcGaup1124B} results from the  fact that conditioning reduces entropy;
\eqref{eq: mawcGaup112B} uses the fact that  $x_1^{\bln}$ is a function of $(w_c,  w_1)$;
\eqref{eq: mawcGaup113} follows from the identity that conditioning reduces differential entropy. 
Combining  the results of \eqref{eq: mawcGaup1111D} and \eqref{eq: mawcGaup113},  it gives the following inequality 
\begin{align}
\Hen(w_2 )  - \frac{n}{2} \log  (1+P^{\alpha-1}\cdot \frac{|h_{12}|^2}{|h_{22}|^2})    -  \bln\epsilon'_n  \leq  \Hen(w_c)  .  \label{eq: mawcGaup174897}
\end{align}
  Finally, given that  $\Hen(w_2 ) =  nR_2$ and  $\Hen(w_c ) =  nR_c$,   \eqref{eq: mawcGaup174897}  implies the following inequality 
\begin{align}
  nR_2    -  \frac{n}{2}  \log  (1+P^{\alpha-1}\cdot \frac{|h_{12}|^2}{|h_{22}|^2})  - \bln {\epsilon'_n}   \leq  nR_c  .  \label{eq: mawcGaup114B}
\end{align}
On the other hand,  by interchanging the roles of transmitter~1 and transmitter~2, we also have 
\begin{align}
  nR_1    -  \frac{n}{2} \log  (1+P^{1-\alpha}\cdot \frac{|h_{11}|^2}{|h_{21}|^2})  - \bln {\epsilon'_n}   \leq  nR_c  . \label{eq: mawcGaup114C}
\end{align}
Based on the definition of  $ \dco (\alpha, d_1, d_2) $ in Section~\ref{sec:sysmawc},  the results of  \eqref{eq: mawcGaup114B} and  \eqref{eq: mawcGaup114C}  give the following bound  
 \begin{align}
\max\{ d_1 - (1-\alpha)^+,  d_2- (\alpha-1)^+ \}    \leq \dco (\alpha, d_1, d_2),  \quad  \text{for} \quad   (d_1, d_2) \in  \mathcal{D}^* ( \alpha )
\end{align}
which completes the  proof of Lemma~\ref{lm:GDoFmawccr}.

\section{Conclusion}   \label{sec:concl}

In this work we showed that adding common randomness at the transmitters  \emph{totally} removes the penalty in sum GDoF or GDoF region of  three basic channels. 
The results reveal that adding common randomness at the transmitters is a constructive way to remove the secrecy constraints in communication networks in terms of GDoF performance. 
Another contribution of this work is the characterization on the  minimal  GDoF of the common randomness to remove the  secrecy constraints.
In the future work, we will focus on how to remove the secrecy constraints in the other communication networks.

\appendices

\section{Proof of Lemma~\ref{lm:ICrateerror537}  \label{sec:ICrateerror537} }

We now prove Lemma~\ref{lm:ICrateerror537}.  
 Let us first provide  \cite[Lemma~1]{ChenIC:18}  that will be used in the proof.
\begin{lemma}  \label{lm:AWGNic} \cite[Lemma~1]{ChenIC:18}
Consider a specific channel model $ y'=  \sqrt{P^{\alpha_1}}  h' x' + \sqrt{P^{\alpha_2}} g' + z'$, 
where $x' \in \Omega (\xi,  Q)$, and $z' \sim \mathcal{N}(0, \sigma^2)$.  $g'  \in  \Sc_{g'}$ is a discrete random variable with a condition \[|g' | \leq  g_{\max}, \quad \forall g' \in  \Sc_{g'}\] for $\Sc_{g'} \subset \Rc$.   $h'$, $g_{\max}$, $\sigma$,  $\alpha_2$ and $\alpha_1$ are finite and positive constants  that are independent of $P$. Also consider the condition $\alpha_1 - \alpha_2 >0$. Let $\gamma' > 0$ be a finite parameter.
If we set  $Q$ and $\xi$ by
\begin{align}
Q =  \frac{P^{\frac{\alpha'}{2}}  h' \gamma' }{2 g_{\max} },    \quad  \quad    \xi =  \frac{\gamma'}{Q},   \quad   \quad   \forall  \alpha' \in (0, \alpha_1 -\alpha_2)  \label{eq:settingAWGNIC}                           
\end{align}
then the  probability of error for decoding $x'$ from $y'$ satisfies \[ \text{Pr} (e) \to 0  \quad  \text{as} \quad   P\to \infty.\] 
\end{lemma}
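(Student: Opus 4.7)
The plan is to prove the lemma via a classical minimum-distance (nearest-constellation-point) decoder, exploiting the fact that the interference term $\sqrt{P^{\alpha_2}} g'$ is bounded while the spacing of the received constellation $\{\sqrt{P^{\alpha_1}} h' \xi a\}$ grows strictly faster in $P$.

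First I would compute the minimum separation between two distinct received signal points in the absence of noise and interference. Given $x' \in \Omega(\xi, Q)$ with $\xi = \gamma'/Q$ and $Q = P^{\alpha'/2} h' \gamma'/(2g_{\max})$, two distinct values $x'_a \neq x'_b$ satisfy $|\sqrt{P^{\alpha_1}} h' (x'_a - x'_b)| \geq \sqrt{P^{\alpha_1}} h' \xi = 2 g_{\max} \sqrt{P^{\alpha_1 - \alpha'}}$. Next I would bound the interference contribution: because $|g'| \leq g_{\max}$ deterministically, the shift induced by the interference satisfies $|\sqrt{P^{\alpha_2}} g'| \leq g_{\max} \sqrt{P^{\alpha_2}}$. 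Since the hypothesis $\alpha' < \alpha_1 - \alpha_2$ gives $\alpha_1 - \alpha' > \alpha_2$, for all sufficiently large $P$ we have
\begin{align*}
\sqrt{P^{\alpha_2}} g_{\max} < \tfrac{1}{2} \cdot 2 g_{\max} \sqrt{P^{\alpha_1 - \alpha'}} = \tfrac{1}{2}\sqrt{P^{\alpha_1}} h' \xi,
\end{align*}
so the interference alone cannot push the noiseless received point across the decision boundary halfway to a neighboring constellation point.

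The second step is the error analysis. Let $\hat{x}'$ be the nearest-neighbor decoder that maps $y'$ to $\arg\min_{x' \in \Omega(\xi,Q)} |y' - \sqrt{P^{\alpha_1}} h' x'|$. A decoding error requires the total perturbation $\sqrt{P^{\alpha_2}} g' + z'$ to exceed half the minimum signal spacing, i.e.\ $|\sqrt{P^{\alpha_2}} g' + z'| \geq \tfrac{1}{2}\sqrt{P^{\alpha_1}} h' \xi = g_{\max}\sqrt{P^{\alpha_1 - \alpha'}}$. Using the bound on $|\sqrt{P^{\alpha_2}} g'|$ derived above, this implies $|z'| \geq g_{\max}\bigl(\sqrt{P^{\alpha_1 - \alpha'}} - \sqrt{P^{\alpha_2}}\bigr) =: M(P)$, where $M(P) \to \infty$ as $P \to \infty$ because $\alpha_1 - \alpha' > \alpha_2$. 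Since $z' \sim \mathcal{N}(0, \sigma^2)$ with $\sigma$ fixed, the Gaussian Q-function bound yields
\begin{align*}
\Pr(e) \;\leq\; \Pr\bigl(|z'| \geq M(P)\bigr) \;\leq\; 2 \exp\!\bigl(-M(P)^2/(2\sigma^2)\bigr) \;\to\; 0
\end{align*}
as $P \to \infty$, completing the argument.

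The main technical point — really the only one with any subtlety — is ensuring the interference-plus-noise threshold is not tight; this is handled cleanly because $\alpha' \in (0, \alpha_1 - \alpha_2)$ leaves a strictly positive gap $\alpha_1 - \alpha_2 - \alpha' > 0$ between the signal spacing exponent and the interference exponent, so the constellation separation eventually dominates the interference by an unbounded factor and the noise tolerance $M(P)$ grows polynomially in $\sqrt{P}$. All other steps (boundedness of $g'$, fixed noise variance, independence of $h', g_{\max}, \sigma, \alpha_1, \alpha_2$ from $P$) are routine. No concentration inequality beyond the Gaussian tail bound is needed.
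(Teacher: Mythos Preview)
Your proof is correct and is exactly the standard minimum-distance argument one would use here. Note that the paper does not actually prove this lemma; it imports it verbatim from \cite{ChenIC:18} and only invokes it as a black box, so there is no in-paper proof to compare against. Your argument---computing the received constellation spacing $\sqrt{P^{\alpha_1}} h' \xi = 2g_{\max}\sqrt{P^{\alpha_1-\alpha'}}$, bounding the interference by $g_{\max}\sqrt{P^{\alpha_2}}$, and then applying the Gaussian tail bound to the residual noise margin $M(P)$---is the natural and essentially the only reasonable route, and would match what appears in the cited reference.
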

\vspace{10pt}

In this proof we will first estimate $v_{1,c}$ and  $v_{2,c} + u$   from the observation $y_1$  expressed in \eqref{eq:IC321y1}  with noise removal and signal separation methods, and then estimate $v_{1,p}$. Note that  $v_{1,c} \in \Omega ( \xi =  \frac{ \gamma}{Q} ,   \   Q =  P^{ \frac{ \alpha/2 - \epsilon }{2}} )  $, $v_{2,c}  +  u  \in 2 \Omega ( \xi = \frac{ \gamma}{Q} ,   \   Q =  P^{ \frac{ \alpha/2 - \epsilon }{2}} )$, and $v_{1,p}, v_{2,p} \in    \Omega (\xi   =\frac{ \gamma}{2Q},   \   Q = P^{ \frac{ 1 - \alpha - \epsilon}{2}} )$, where  $2\cdot \Omega (\xi,  Q)  \defeq   \{ \xi \cdot a :   \    a \in  \Zc  \cap [-2Q,   2Q]   \}$.   
For the observation   $y_1$ in \eqref{eq:IC321y1}, it can be expressed as 
\begin{align}
y_{1} & =   \sqrt{P}  h_{11} v_{1,c}   +     \sqrt{P^{ \alpha }} h_{12}  (   v_{2,c}  +  u)     + \tilde{z}_{1}   \non\\
& =    P^{{ \frac{ \alpha/2 + \epsilon }{2}}} \cdot  \gamma \cdot  (  \sqrt{P^{ 1- \alpha }}  g_0 q_0 + g_1 q_1 )  + \tilde{z}_{1} \label{eq:rewrittenIC321y1}
\end{align}
where   $\tilde{z}_{1}   \defeq      \sqrt{P^{ 1 - \alpha}}  h_{11} v_{1,p} +   \sqrt{P^{ (\tau+1)\alpha-\tau}}  \delta_{2, \tau}   h_{12}  u +    h_{12} v_{2,p} +  z_{1} $ and 
 \[ g_0\defeq  h_{11},     \quad  g_1\defeq   h_{12}, \quad  q_0  \defeq  \frac{Q_{\max}}{\gamma} \cdot   v_{1,c}   ,    \quad  q_1  \defeq  \frac{Q_{\max}}{\gamma} \cdot   (   v_{2,c}  +  u),  \quad Q_{\max} \defeq P^{ \frac{ \alpha/2 - \epsilon }{2}} \]
   for  $\gamma  \in \bigl(0, \frac{1}{ \tau \cdot 2^\tau}\bigr]$.
It is true that $q_0, q_1 \in \Zc$, $|q_1| \leq 2 Q_{\max}$ and $|q_0| \leq Q_{\max}$.

Let us consider $ \hat{q}_1$ and $\hat{q}_0$  as  the corresponding  estimates of  $q_1$ and  $ q_0$ from $y_1$   (see \eqref{eq:rewrittenIC321y1}). For this estimation we specifically consider  an estimator that  seeks to  minimize
\[ |y_1 -   P^{{ \frac{ \alpha/2 + \epsilon }{2}}} \cdot  \gamma \cdot  (  \sqrt{P^{ 1- \alpha }}  g_0 \hat{q}_0 + g_1 \hat{q}_1 ) |.   \]
The minimum distance  defined below will be used in the error probability analysis for this estimation
  \begin{align}
 d_{\min}  (g_0, g_1)    \defeq    \min_{\substack{ q_0, q_0' \in \Zc  \cap [- Q_{\max},    Q_{\max}]  \\  q_1, q_1' \in \Zc  \cap [- 2Q_{\max},   2 Q_{\max}]   \\  (q_0, q_1) \neq  (q_0', q_1')  }}  |  \sqrt{P^{ 1- \alpha }}  g_0  (q_0 - q_0') +  g_1 (q_1 - q_1')  |  .   \label{eq:minidis737539}
 \end{align}
 Lemma~\ref{lm:distance5723946} (see below) will reveal that, for almost all channel  realizations, this minimum distance is sufficiently large  when $P$ is large. The result of  \cite[Lemma~1]{ChenLiArxiv:18}  shown below will be used  in the proof.
\begin{lemma}   \label{lm:NMb2} \cite[Lemma~1]{ChenLiArxiv:18}
Let us consider a parameter $\eta$ such that $ \eta >1$ and $\eta  \in \Zc^+$, and consider  $\beta \in (0,1]$ and $Q_0, A_0, Q_1,  A_1 \in \Zc^+$.  Also define  two events as
\begin{align}
 \tilde B(q_0, q_1)  \defeq \{ (g_0, g_1)  \in (1, \eta]^2 :  |  A_1 g_1 q_1 + A_0 g_0 q_0   | < \beta \}       \label{eq:lemmabound0098}  
\end{align}
and  
\begin{align}
\tilde B  \defeq   \bigcup_{\substack{ q_0, q_1 \in \Zc:  \\  (q_0, q_1) \neq  0,  \\  |q_k| \leq Q_k  \ \forall k }}  \tilde B(q_0, q_1) .       \label{eq:lemmabound125}
\end{align}
For $\Lc (\tilde B)$ denoting the Lebesgue measure of $\tilde B $, then this measure  is bounded as
\begin{align*}
\Lc (\tilde B )  \leq   8 \beta   (\eta -1) \min\Bigl\{ \frac{ Q_0 Q_1}{A_1},  \frac{Q_1 Q_0}{A_0},   \frac{Q_0 \eta}{A_1},   \frac{Q_1 \eta}{A_0}\Bigr\}.
\end{align*}
\end{lemma}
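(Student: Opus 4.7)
The plan is to control $\Lc(\tilde B)$ via a union bound over the integer pairs $(q_0,q_1)$, combined with a one-dimensional slicing estimate on each individual strip $\tilde B(q_0,q_1)$. The four entries in the $\min$ will come from four distinct ways of collapsing that sum.

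First I would fix a nonzero pair $(q_0,q_1)$ with $|q_k|\leq Q_k$ and bound $\Lc(\tilde B(q_0,q_1))$ by Fubini on $(1,\eta]^2$. When $q_1\neq 0$, the inequality $|A_1 g_1 q_1+A_0 g_0 q_0|<\beta$ forces $g_1$ into an interval of length at most $2\beta/(A_1|q_1|)$ for every fixed $g_0\in(1,\eta]$, yielding
\begin{align*}
\Lc(\tilde B(q_0,q_1))\ \leq\ \frac{2\beta(\eta-1)}{A_1|q_1|}.
\end{align*}
Slicing along $g_1$ instead gives the symmetric bound $2\beta(\eta-1)/(A_0|q_0|)$ whenever $q_0\neq 0$.

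Next I would sum these estimates over the admissible pairs in four different ways. For the two ``$Q_0Q_1$''-type bounds I would crudely use $|q_k|\geq 1$ in the per-strip estimate and multiply by the $\asymp Q_0Q_1$ nonzero lattice points in $[-Q_0,Q_0]\times[-Q_1,Q_1]$; elementary counting absorbs the contribution from the $q_1=0$ (resp.\ $q_0=0$) axis and produces $8\beta(\eta-1)Q_0Q_1/A_1$ and $8\beta(\eta-1)Q_0Q_1/A_0$. For the two ``$Q_k\eta$''-type bounds, I would instead fix $q_0$, group the strips over $q_1\in\ints\setminus\{0\}$, and argue that the union $\bigcup_{q_1\neq 0}\tilde B(q_0,q_1)$ restricted to $(1,\eta]^2$ has measure at most $2\beta\eta(\eta-1)/A_1$, summing then over the $\leq 2Q_0+1$ choices of $q_0$; the symmetric grouping over $q_0$ at fixed $q_1$ yields the $Q_1\eta/A_0$ bound. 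The lemma follows by taking the minimum of the four resulting inequalities.

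The main obstacle is the ``$Q_k\eta$'' step: a naive union bound of the per-strip estimate produces $\sum_{q_1=1}^{Q_1}2\beta(\eta-1)/(A_1 q_1)$, which carries a spurious $\log Q_1$. To kill this factor I would exploit that the strip corresponding to $(q_0,q_1)$ meets the square $(1,\eta]^2$ only when $g_0$ lies in a $q_1$-dependent interval of length $\lesssim \eta A_1|q_1|/(A_0|q_0|)$, so exchanging the order of integration and summation in the Fubini integral replaces the harmonic sum over $q_1$ by a single factor of $\eta-1$ (the full length of the $g_0$-range). This geometric observation -- that the centers $-A_0 q_0 g_0/(A_1 q_1)$ of the constraint intervals for $g_1$ sweep out $(1,\eta]$ only for $q_1$ in a narrow band -- is the one nontrivial ingredient; once it is in place, the four bounds align cleanly and the constant $8$ can be verified by tracking the $2Q_k+1$ counts.
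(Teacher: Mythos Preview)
The paper does not prove this lemma; it is quoted verbatim from the cited reference, so there is no in-paper argument to compare against. Your plan is sound and would yield the stated bound with the stated constant.

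Two remarks on the execution. First, for the $Q_0Q_1/A_k$ bounds the constant $8$ drops out cleanly once you observe that any pair with exactly one of $q_0,q_1$ equal to zero gives $\tilde B(q_0,q_1)=\emptyset$ (since $A_k\geq 1$, $g_k>1$, $|q_k|\geq 1$ force $|A_kg_kq_k|>1\geq\beta$); hence only the $2Q_0\cdot 2Q_1$ pairs with both coordinates nonzero survive, and $4Q_0Q_1\cdot 2\beta(\eta-1)/A_1$ gives exactly $8\beta(\eta-1)Q_0Q_1/A_1$.

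Second, your description of how the harmonic sum disappears for the $Q_0\eta/A_1$ bound is slightly imprecise: merely exchanging sum and integral (Tonelli) does not alter a sum of nonnegative terms. The clean mechanism is the substitution $t=g_1|q_1|$ at fixed $g_0,q_0$ (with $c=A_0g_0q_0>0$, say). The constraint $|A_1g_1q_1+c|<\beta$ becomes $t\in J:=((c-\beta)/A_1,(c+\beta)/A_1)$, an interval of length $2\beta/A_1$, and
\[
\sum_{m\geq 1}\Lc\bigl((1,\eta]\cap I_m\bigr)=\int_J\Bigl(\sum_{\substack{m\geq 1\\ t/\eta\leq m<t}}\frac{1}{m}\Bigr)dt.
\]
For $t\geq 1$ the inner sum has at most $t(\eta-1)/\eta+1$ terms, each of size at most $\eta/t$, hence is bounded by $2\eta-1<2\eta$; for $t<1$ the sum is empty. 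This gives $\leq 4\beta\eta/A_1$ for the $g_1$-measure summed over $q_1$ at fixed $g_0$, and after integrating over $g_0\in(1,\eta]$ and summing over the $2Q_0$ nonzero values of $q_0$ you land exactly on $8\beta(\eta-1)Q_0\eta/A_1$. The $Q_1\eta/A_0$ bound follows by symmetry. This is precisely the ``centers sweep out $(1,\eta]$ only for $q_1$ in a narrow band'' idea you describe, made quantitative.
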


\begin{lemma}  \label{lm:distance5723946}
For $\epsilon >0$  and $\kappa \in (0, 1]$ and consider  the  design in  \eqref{eq:xvkkk1}-\eqref{eq:constellationGsym2}
and \eqref{eq:IC321x1}-\eqref{eq:IC321x2}  when $\alpha \in [2/3, 1)$.   Then the following inequality holds true for the minimum distance $d_{\min}$ defined in \eqref{eq:minidis737539}
 \begin{align}
d_{\min}    \geq   \kappa P^{- \frac{3\alpha/2 -1 }{2}}    \label{eq:distancegeq}
 \end{align}
for all   channel  realizations $\{h_{k\ell}\} \in (1, 2]^{2\times 2} \setminus \Ho$, where  the  Lebesgue measure of $\Ho \subseteq (1,2]^{2\times 2}$ satisfies  
 \begin{align}
\mathcal{L}(\Ho) \leq 64 \kappa   \cdot     P^{ - \frac{ \epsilon  }{2}} .    \label{eq:Lebb4893}
 \end{align}
 \end{lemma}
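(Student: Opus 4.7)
\medskip

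\noindent\textbf{Proof plan for Lemma~\ref{lm:distance5723946}.} The strategy is to reduce the minimum-distance problem to a direct application of Lemma~\ref{lm:NMb2}. Since the quantity inside the absolute value in~\eqref{eq:minidis737539} depends only on $g_0 = h_{11}$ and $g_1 = h_{12}$, it is enough to bound the two-dimensional Lebesgue measure of the corresponding ``bad'' set in $(1,2]^2$; the full set $\Ho \subseteq (1,2]^{2\times 2}$ will just be the product of this bad set with $(1,2]^2$ for the remaining coefficients $(h_{21}, h_{22})$, and the final measure is unchanged because $(1,2]$ has unit measure. So the plan is to define
\[
\Ho \defeq \Bigl\{ (h_{11}, h_{12}, h_{21}, h_{22}) \in (1,2]^{2\times 2} :  d_{\min}(h_{11},h_{12}) < \kappa P^{-\frac{3\alpha/2 - 1}{2}} \Bigr\}
\]
and bound $\Lc(\Ho)$.

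Next, I would change variables to the differences $\tilde q_0 \defeq q_0 - q_0'$ and $\tilde q_1 \defeq q_1 - q_1'$. Since $q_0, q_0' \in \Z \cap [-Q_{\max}, Q_{\max}]$ and $q_1, q_1' \in \Z \cap [-2Q_{\max}, 2Q_{\max}]$, we have $\tilde q_0 \in \Z \cap [-2Q_{\max}, 2Q_{\max}]$, $\tilde q_1 \in \Z \cap [-4Q_{\max}, 4Q_{\max}]$, and $(\tilde q_0, \tilde q_1) \neq (0,0)$. Writing the failure event in the form
\[
\bigl| A_0 g_0 \tilde q_0 + A_1 g_1 \tilde q_1 \bigr| < \beta
\]
with $A_0 \defeq \sqrt{P^{1-\alpha}}$, $A_1 \defeq 1$, $\beta \defeq \kappa P^{-(3\alpha/2-1)/2}$, the bad set fits exactly the template $\tilde B$ in Lemma~\ref{lm:NMb2}, with $\eta = 2$, $Q_0 = 2Q_{\max}$, $Q_1 = 4Q_{\max}$, and $Q_{\max} = P^{(\alpha/2 - \epsilon)/2}$.

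I would then invoke Lemma~\ref{lm:NMb2}, picking the $Q_1 \eta / A_0$ branch of the minimum, which gives
\[
\Lc(\Ho) \;\leq\; 8\beta (\eta - 1) \cdot \frac{Q_1 \eta}{A_0} \;=\; 8 \kappa P^{-\frac{3\alpha/2 - 1}{2}} \cdot \frac{8 Q_{\max}}{\sqrt{P^{1-\alpha}}} \;=\; 64 \kappa P^{-\frac{\epsilon}{2}},
\]
which is exactly~\eqref{eq:Lebb4893}. Since the complement is nonempty for large $P$, every $\{h_{k\ell}\}$ outside $\Ho$ satisfies $d_{\min} \geq \kappa P^{-(3\alpha/2 - 1)/2}$, as required.

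The main obstacle is a minor technical one: Lemma~\ref{lm:NMb2} is stated with $A_0, A_1 \in \Z^+$, whereas here $A_0 = \sqrt{P^{1-\alpha}}$ is generally not an integer. I would address this either by noting that the proof of Lemma~\ref{lm:NMb2} only uses integrality of $(q_0, q_1)$ (the coefficients $A_0, A_1$ enter only as real scaling factors when one slices the bad set along $g_0$ or $g_1$), or by replacing $A_0$ with an integer $\lceil \sqrt{P^{1-\alpha}} \rceil$ and absorbing the resulting $O(1)$ slack into $\kappa$ for the regime of large $P$; this is the same ``round-to-integer'' convention already used in footnote on $P^{\lambda_{k,c}/2}$. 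A secondary bookkeeping point is to verify that the constraint $(\tilde q_0, \tilde q_1) \neq (0,0)$ is equivalent to $(q_0,q_1) \neq (q_0',q_1')$ once one observes that the linear form cannot vanish unless both differences vanish (for channel coefficients outside a measure-zero exceptional set, which is harmlessly absorbed into $\Ho$).
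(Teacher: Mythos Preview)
Your proposal is correct and follows essentially the same route as the paper's proof: the paper also sets $\eta=2$, $\beta=\kappa P^{-(3\alpha/2-1)/2}$, $A_0=\sqrt{P^{1-\alpha}}$, $A_1=1$, $Q_0=2Q_{\max}$, $Q_1=4Q_{\max}$, applies Lemma~\ref{lm:NMb2}, selects the $Q_1\eta/A_0$ branch of the minimum to obtain $\Lc(\tilde B)\le 64\kappa P^{-\epsilon/2}$, and then lifts the two-dimensional bad set to $(1,2]^{2\times2}$ by integrating trivially over $(h_{21},h_{22})$. The paper handles the non-integer $A_0$ issue via the same rescaling trick you suggest (absorbing $\lceil A_0\rceil/A_0$ into a modified $g_0$), spelled out in a footnote.
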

 \begin{proof}
For this case  we set $\eta \defeq 2$ and define 
\[ \beta \defeq    \kappa P^{- \frac{3\alpha/2 -1 }{2}} , \quad A_0 \defeq   \sqrt{P^{ 1- \alpha }} ,  \quad  A_1 \defeq  1,  \quad Q_0 \defeq 2 Q_{\max},    \quad Q_1\defeq 4 Q_{\max},  \quad  Q_{\max} \defeq P^{ \frac{ \alpha/2 - \epsilon }{2}}  .     \]
From the previous definitions, $g_0=  h_{11} $  and  $g_1=   h_{12}$. Without loss of generality (WLOG) we will consider the  case  that\footnote{Our result also holds true for the scenario when any of the four parameters $\{A_{0},  Q_{0},  A_{1}, Q_{1}\}$ isn't  an integer. It just requires  some minor modifications in the proof. Let us consider one example when $A_0$ isn't an integer. For this example,  the parameters $A_0$ and $ g_0$ can be replaced with $A'_0 \defeq  \omega_0  A_0$ and  $g'_0 \defeq \frac{1}{\omega_0}  g_0$, respectively, where $\omega_0 \defeq  {\lceil A_0\rceil}/ A_0$. From the definition,   $\omega_0$ is bounded, i.e.,  $1/2< 1/\omega_0 < 1$, and $A'_0=\omega_0  A_0$ is an integer. 
Let us consider another example when $Q_{0} = 2\sqrt{P^{\alpha/2 - \epsilon}}$  isn't an integer. For this example,  the parameter $\epsilon$ can be slightly modified  such that $Q_{0} = 2\sqrt{P^{\alpha/2 - \epsilon}}$ is an integer and $\epsilon$ can still be very small when  $P$ is large. Therefore, throughout this work, WLOG we will consider   those parameters  to be  integers, i.e.,  $ A_{0},  Q_{0},  A_{1}, Q_{1} \in \Zc^+$. 
}  $ A_{0},  Q_{0},  A_{1}, Q_{1}\in \Zc^+$.  
 Let us define
\begin{align}
  \tilde B(q_0, q_1)  \defeq \{ (g_0, g_1)  \in (1, \eta]^2 :  | A_1 g_1 q_1 +  A_0 g_0 q_0  | < \beta \}     \label{eq:BBB776}
\end{align}
and  
\begin{align}
\tilde B  \defeq   \bigcup_{\substack{ q_0, q_1 \in \Zc:  \\  (q_0, q_1) \neq  0,  \\  |q_k| \leq Q_k  \ \forall k }}  \tilde B(q_0, q_1) .    \label{eq:BBBtilde552}
\end{align}
From Lemma~\ref{lm:NMb2}, the Lebesgue measure of $\tilde B $ can be bounded by 
\begin{align}
\Lc (\tilde B )  & \leq   8 \beta (\eta -1)  \min\{ \frac{ 8 {Q^2}_{\max}}{ 1},  \frac{ 8 {Q^2}_{\max}}{  \sqrt{P^{ 1- \alpha }}},   \frac{2 Q_{\max} \eta}{1},   \frac{4 Q_{\max} \eta}{ \sqrt{P^{ 1- \alpha }}}\}  \label{eq:LeBmeasurest}\\
& =   8 \beta (\eta -1) \cdot Q_{\max} \cdot \min\{8Q_{\max},  \frac{ 8Q_{\max}}{\sqrt{P^{ 1- \alpha }}},   2 \eta,   \frac{ 4\eta}{ \sqrt{P^{ 1- \alpha }}}\} \non\\ 
& \leq   8\beta  (\eta -1) \cdot Q_{\max} \cdot P^{\frac{ \alpha -1 }{2}}  \cdot \min\{8Q_{\max},   4\eta \} \non\\ 
& \leq   8\beta (\eta -1) \cdot Q_{\max} \cdot P^{\frac{ \alpha -1 }{2}}  \cdot    4\eta  \non\\ 
& = 32 \beta \eta (\eta -1) \cdot  P^{\frac{3 \alpha/2 -1 -  \epsilon }{2}} \non \\
& = 32 \eta (\eta -1) \kappa        P^{ - \frac{ \epsilon  }{2}}  \non\\
& = 64 \kappa        P^{ - \frac{ \epsilon  }{2}}.    \label{eq:LeBmeasure}
\end{align}
Based on the  definition in \eqref{eq:BBBtilde552},  $\tilde B$ is a set of  $(g_0, g_1)$, where $(g_0, g_1)  \in (1,\eta]^2$.   For any $(g_0, g_1) \in \tilde B$,  there exists at least one pair  $(q_0, q_1)$ such that $|A_1 g_1 q_1 + A_0  g_0q_0  | <   \kappa P^{- \frac{3\alpha/2 -1 }{2}}$. Thus, $\tilde B$ can be treated as the  outage set and we have the following conclusion: 
  \[d_{\min} (g_0, g_1)   \geq    \kappa P^{- \frac{3\alpha/2 -1 }{2}} , \quad \text{for}  \quad  (g_0, g_1)\notin \tilde B .\]

Let us now define  $\Ho$  as a set of  $(h_{22}, h_{21},  h_{12}, h_{11}) \in (1, 2]^{2\times 2}$   such that the corresponding pairs $(g_0, g_1)$ appear  in  $\tilde B$ (outage set), that is, 
\begin{align}
 \Ho \defeq \{  (h_{22}, h_{21}, h_{12}, h_{11} ) \in (1, 2]^{2\times 2} :      (g_0, g_1) \in \tilde B  \} .   \label{eq:HO4422}
 \end{align}
From the relationship between  $\tilde B$ and  $\Ho$, the Lebesgue measure of $\Ho$  can be bounded by 
\begin{align}
 \Lc (\Ho ) & =   \int_{h_{22}=1}^2      \int_{h_{21}=1}^2     \int_{h_{12}=1}^2      \int_{h_{11}=1}^2    \mathbbm{1}_{\Ho}    (h_{22},   h_{21}, h_{12},   h_{11} )   d h_{11}   d h_{12}  d h_{21} d h_{22}                 \label{eq:LeB38985}\\  
& =  \int_{h_{22}=1}^2      \int_{h_{21}=1}^2     \int_{h_{12}=1}^2      \int_{h_{11}=1}^2  \mathbbm{1}_{\tilde B}  (h_{11}, h_{12})   d h_{11}   d h_{12}  d h_{21} d h_{22}         \non\\ 
& \leq  \int_{h_{22}=1}^2      \int_{h_{21}=1}^2   \int_{g_{1}=1}^{\eta}   \int_{g_{0}=1}^{\eta}      \mathbbm{1}_{\tilde B}  (g_0, g_{1}) d g_{0} d g_{1} d h_{21} d h_{22}             \non \\
& =    \int_{h_{22}=1}^2      \int_{h_{21}=1}^2 \mathcal{L}(\tilde B)  d h_{21} d h_{22}                 \non\\ 
& \leq    \int_{h_{22}=1}^2      \int_{h_{21}=1}^2  64 \kappa   \cdot     P^{ - \frac{ \epsilon  }{2}}  d h_{21} d h_{22}              \label{eq:LeB4859028}  \\     
& =    64 \kappa   \cdot     P^{ - \frac{ \epsilon  }{2}}               \label{eq:LeB590349}  
\end{align}
where  
\begin{subnumcases}
{ \mathbbm{1}_{\Ho}  (h_{22}, h_{21}, h_{12}, h_{11}) 
 =} 
    1  &     if   \ $(h_{22}, h_{21}, h_{12}, h_{11} ) \in  \Ho$            			\non \\
     0  &  if    \  $(h_{22}, h_{21}, h_{12}, h_{11}) \notin  \Ho$ \non
\end{subnumcases}
and 
\begin{subnumcases}
{ \mathbbm{1}_{\tilde B}  (g_0, g_1) 
 =} 
    1  &     if   \ $(g_0, g_1) \in  \tilde B$            			\non \\
     0  &  if    \  $(g_0, g_1) \notin  \tilde B $ \non
\end{subnumcases} 
and \eqref{eq:LeB4859028} is  from  \eqref{eq:LeBmeasure}.  
\end{proof}

Lemma~\ref{lm:distance5723946} suggests that, the minimum distance $d_{\min}$  defined in \eqref{eq:minidis737539} is sufficiently large for almost all  the channel  coefficients when $P$ is large.
Let us focus on the channel  coefficients  not in the outage set   $\Ho$ and  rewrite the observation $y_1$ in  \eqref{eq:rewrittenIC321y1} as
\begin{align}
 y_1    =   &  P^{{ \frac{ \alpha/2 + \epsilon }{2}}}  \cdot \gamma \cdot x_{s}   +   \sqrt{P^{ 1 - \alpha}} \tilde{g}  +  z_{1}     \label{eq:yk182375}  
\end{align}
where $x_{s}  \defeq   \sqrt{P^{ 1- \alpha }}  g_0 q_0 + g_1 q_1$
and $\tilde{g} \defeq h_{11} v_{1,p}   +  \sqrt{P^{ (\tau+2)\alpha-\tau -1}}  \delta_{2, \tau}   h_{12}  u + \sqrt{P^{  \alpha -1}} h_{12} v_{2,p}$. It is true that   
\[ |\tilde{g} | \leq  \tilde{g}_{\max} \defeq \frac{1}{\tau \cdot 2^{\tau-1}} +\frac{2}{\tau},  \quad  \forall   \tilde{g}. \] 
For the observation in \eqref{eq:yk182375}, we will  decode $x_{s}$ by considering other signals as noise (called as noise removal) and then recover $q_0$  and $q_1$ from $x_{s} $  by using the  rational independence between $g_0$ and $g_1$ (called as signal separation, see~\cite{MGMK:14}). 
Given the channel coefficients  outside the outage set   $\Ho$,  Lemma~\ref{lm:distance5723946} suggests that, the minimum distance for $x_{s}$ satisfies $d_{\min}    \geq \kappa P^{- \frac{3\alpha/2 -1 }{2}}$. 
With this result,  the  probability of error for the estimation of  $x_{s}$ from $y_1$  is bounded by
 \begin{align}
    \text{Pr} [ x_s \neq \hat{x}_s ]  
   \leq  &  \text{Pr} \Bigl[   | z_1  +  P^{ \frac{1 - \alpha}{2}} \tilde{g}|  >   P^{{ \frac{ \alpha/2 + \epsilon }{2}}}  \cdot \gamma   \cdot \frac{d_{\text{min}} }{2}  \Bigr]   \non \\
    \leq    & 2  \cdot     {\bf{Q}} \bigl(   P^{{ \frac{ \alpha/2 + \epsilon }{2}}}  \cdot  \gamma   \cdot \frac{d_{\text{min}} }{2}   -  P^{ \frac{1 - \alpha}{2}} \tilde{g}_{\max}  \bigr)     \label{eq:error2256cc}   \\ 
     \leq  &  2  \cdot     {\bf{Q}} \bigl(   P^{ \frac{1 - \alpha}{2}} ( \frac{\gamma \kappa  P^{ \frac{ \epsilon}{2}}}{2}  -  \frac{1}{\tau \cdot 2^{\tau-1}} -\frac{2}{\tau})  \bigr)     \label{eq:error9982cc} 
  \end{align}
  where  ${\bf{Q}}(c )  \defeq  \frac{1}{\sqrt{2\pi}} \int_{c}^{\infty}  \exp( -\frac{ u^2}{2} ) d u$ and 
\eqref{eq:error2256cc} use the fact that   $ |\tilde{g} | \leq  \tilde{g}_{\max} \defeq  \frac{1}{\tau \cdot 2^{\tau-1}} +\frac{2}{\tau},    \forall   \tilde{g}$; and the last step uses the result of $d_{\min}    \geq    \kappa  P^{- \frac{3\alpha/2 -1 }{2}}  $.
From the step in \eqref{eq:error9982cc}, it implies that
  \begin{align}
 \text{Pr} [ x_s \neq \hat{x}_s ] \to 0  \quad  \text{as} \quad   P\to \infty.    \label{eq:error885256}                           
 \end{align}
 Note that $  q_0   $ and  $q_1  $  (and consequently $v_{1,c}$ and $v_{2,c}  +  u$) can be recovered from $x_s$ due to rational independence.

After decoding $x_{s}$ we can estimate  $v_{1,p}$ from the following observation 
\begin{align}
  y_1  -  P^{{ \frac{ \alpha/2 + \epsilon }{2}}}  \cdot \gamma \cdot x_{s} =     \sqrt{P^{ 1 - \alpha}}  h_{11} v_{1,p} +   \sqrt{P^{ (\tau+1)\alpha-\tau}}  \delta_{2, \tau}   h_{12}  u +    h_{12} v_{2,p}  +  z_{1} .    \non
\end{align}
Given that   $v_{1,p}, v_{2,p} \in    \Omega (\xi   =\frac{ \gamma}{2Q},   \   Q = P^{ \frac{ 1 - \alpha - \epsilon}{2}} )$  and $  \sqrt{P^{ (\tau+1)\alpha-\tau}}  \delta_{2, \tau}   h_{12}  u +    h_{12} v_{2,p}  \leq  \frac{2}{\tau} + \frac{1}{\tau \cdot 2^\tau}$,  from Lemma~\ref{lm:AWGNic}  we can conclude that  the  probability of error   for the estimation of $v_{1,p}$  satisfies
 \begin{align}
  \text{Pr} [\hat{v}_{1,p}   \neq  v_{1,p} ]     \to 0  \quad  \text{as} \quad   P\to \infty . \label{eq:error155525}                           
 \end{align}
With \eqref{eq:error885256}  and \eqref{eq:error155525}  we can conclude that 
 \begin{align}
 \text{Pr} [  \{ v_{1,c} \neq \hat{v}_{1,c} \} \cup  \{ v_{1,p} \neq \hat{v}_{1,p} \}  ]  \to 0         \quad \text {as}\quad  P\to \infty    \label{eq:error1c1p0595}
 \end{align}
 for  almost all  the channel realizations.  
For the case with  $k=2$, it is proved with the same way using the symmetry property.

\section{Proof of Lemma~\ref{lm:rateerror48912}  \label{sec:rateerror48912} }

We here prove Lemma~\ref{lm:rateerror48912}.
Due to the symmetry, we only focus on the case  of $k=1$. 
In this setting with $1/2 < \alpha \leq 2/3$,  successive decoding method will be used.
For the observation   $y_1$ described in \eqref{eq:IC1223y1}, it can be expressed in the following form
\begin{align}
 y_1  =      \sqrt{P} h_{11} v_{1,c}       +    \sqrt{P^{ \alpha}} g   +  z_{1}    \label{eq:yvk101}
\end{align}
where 
  \begin{align}
 g    \defeq         h_{12}  (   v_{2,c}  +  u) +    \sqrt{P^{ 1 - 2\alpha}} h_{11} v_{1,p}   +        \sqrt{P^{ - \alpha}}  h_{12} v_{2,p}  -   \sqrt{P^{ 2\alpha-2}}  \cdot   \frac{h^2_{12}h_{21}}{h_{11}h_{22}} u  . \non
 \end{align}
One can check that  $ |g | \leq  \frac{14}{\tau \cdot 2^\tau} $  holds true for any realization of  $g$. 
Then, Lemma~\ref{lm:AWGNic} reveals that   the error probability of the estimation of  $v_{1,c}$  is 
 \begin{align}
\text{Pr} [v_{1,c} \neq \hat{v}_{1,c}] \to 0, \quad \text {as}\quad  P\to \infty.   \label{eq:error776}
\end{align}
After that,   $v_{2,c} + u $  can be estimated from the observation below 
 \begin{align}
 y_1 -     \sqrt{P} h_{11} v_{1,c}     =       \sqrt{P^{ \alpha}}  h_{12}  (   v_{2,c}  +  u)    +    \sqrt{P^{ 1- \alpha }}  g'  +  z_{1}       \label{eq:yvk102}
\end{align}
 where   $ g'  \defeq    h_{11} v_{1,p}   +  \sqrt{P^{   \alpha -1 }}     h_{12} v_{2,p}  -   \sqrt{P^{ 4\alpha-3}}  \cdot   \frac{h^2_{12}h_{21}}{h_{11}h_{22}} u$.   Note that $v_{2,c} + u \in   2 \Omega (\xi   =  \frac{  \gamma}{Q},   Q =  P^{ \frac{ 2\alpha -1 - \epsilon }{2}} )$. One can also check that $ |g' | \leq  \frac{10}{\tau \cdot 2^\tau}$.
Let   $\hat{s}_{vu}$ be an  estimate of $s_{vu} \defeq v_{2,c} + u$. 
Then,  Lemma~\ref{lm:AWGNic} suggests  
\begin{align}
 \text{Pr} [s_{vu}  \neq  \hat{s}_{vu} ]   \to 0, \quad \text {as}\quad  P\to \infty. \label{eq:error887}
\end{align}
Similarly, after decoding $v_{2,c} + u$ we can decode  $v_{1,p} \in    \Omega (\xi   = \frac{ \gamma}{Q},   \   Q = P^{ \frac{ 1 - \alpha - \epsilon}{2}} )$ with    \begin{align}
 \text{Pr} [ v_{1,p} \neq  \hat{v}_{1,p}] &   \to 0    \quad \text {as }\quad  P \to \infty.   \label{eq:error2950}
\end{align}
With results \eqref{eq:error776} and \eqref{eq:error2950}, then we have 
\begin{align}
\text{Pr} [  \{ v_{1,c} \neq \hat{v}_{1,c} \} \cup  \{ v_{1,p} \neq \hat{v}_{1,p} \}  ]    \to   0         \quad \text {as}\quad  P\to \infty  .  \non 
\end{align}
The case with $k=2$ is also proved using the same way due the symmetry.

\section{Proof of Lemma~\ref{lm:errorcase1}  \label{sec:errorcase1} }

Here we  prove Lemma~\ref{lm:errorcase1}, considering the case of  $ 1/2 < \alpha < 1$. We will show that, given $y_1$ expressed in \eqref{eq:wiretap1y1}, we can estimate   $v_{c}$ and $v_{p}$  with vanishing error probability. 
For the expression of   $y_1$, it  can be written in the following form 
 \begin{align}
y_{1} =     \sqrt{P} h_{11}    v_{c}   +     \sqrt{P^{ 1 - \alpha }}  g'     +  z_{1}   \non
\end{align}
where $  g' \defeq     h_{11}    v_{p}    +        \sqrt{P^{ -(2\tau - 2 \tau \alpha  - \alpha)}}   \delta_{2, \tau}   h_{12}  u  $.
It is true that  \[ |g' | \leq  \frac{1 + 4^{\tau}}{(\tau+2)4^{\tau} }\] for any realizations of $g'$.   Recall that
$v_{p}  \in    \Omega (\xi   = \frac{ \gamma}{2Q},   \   Q = P^{ \frac{ 1 - \alpha - \epsilon}{2}} ) $,  $v_{c}, u \in    \Omega (\xi   =   \frac{ \gamma}{Q},   \   Q =  P^{ \frac{ \alpha  - \epsilon}{2}} ) $,  and  $\gamma \in (0,  \frac{1}{(\tau+2)4^\tau}]$.  With the result of Lemma~\ref{lm:AWGNic},  the error probability of estimating  $v_{c}$ from $y_1$  is 
 \begin{align}
\text{Pr} [v_{c} \neq \hat{v}_{c}] \to 0, \quad \text {as}\quad  P\to \infty.   \label{eq:errorvc14262}
\end{align}
Then, we remove $v_{c}$ from $y_1$ and  estimate  $v_{p}$ from the following observation
 \begin{align}
y_{1}  -    \sqrt{P} h_{11}    v_{c}  =           \sqrt{P^{ 1 - \alpha}}  h_{11} v_{p}  +      \sqrt{P^{ 2 \tau \alpha +1 -2\tau}}   \delta_{2, \tau}   h_{12}  u + z_{1}  .  \label{eq:y11492626}
\end{align}
Since 
\[ |   \delta_{2, \tau}   h_{12} u | \leq  \frac{1}{\tau+2},\]
 then by  Lemma~\ref{lm:AWGNic} the error probability of  estimating $v_{p}$ from the observation in \eqref{eq:y11492626}   is 
\begin{align}
\text{Pr} [v_{p}  \neq  \hat{v}_{p} |  v_{c} = \hat{v}_{c}]  \to 0, \quad \text {as}\quad  P\to \infty. \label{eq:errorvp02526}
\end{align}
With  \eqref{eq:errorvc14262} and \eqref{eq:errorvp02526}, it gives 
\begin{align}
\text{Pr} [  \{ v_{c} \neq \hat{v}_{c} \} \cup  \{ v_{p} \neq \hat{v}_{p} \}  ]    \to   0         \quad \text {as}\quad  P\to \infty .  \non 
\end{align}

 \section{Proof of Lemma~\ref{lm:mac1switchalpha}  \label{sec:mac1switchalphaproof} }

 Lemma~\ref{lm:mac1switchalpha} is proved here. For a MAC-WT channel,  the  channel input-output  relationship can be described as (see \eqref{eq:ICchannelGen1} and \eqref{eq:ICchannelGen2})
\begin{gather}
\begin{aligned}
&y_{1} (t) =  \sqrt{P} h_{11} x_{1} (t) +   \sqrt{P^{\alpha}} h_{12} x_{2}(t)  +z_{1} (t)  \\
&y_{2}(t)  = \sqrt{P^{\alpha}} h_{21} x_{1}(t)  +   \sqrt{P} h_{22} x_{2}(t)  +z_{2}(t).    
\end{aligned} \label{eq:MACchannelGena}
\end{gather} 
By interchanging the role of transmitter~1 and transmitter~2 in the MAC-WT channel, the channel input-output  relationship can be alternately represented as
\begin{gather}
\begin{aligned}
&y_{1} (t) =  \sqrt{{P^{'}}^{\alpha^{'}}} h^{'}_{12} x^{'}_{2} (t) +   \sqrt{P^{'}} h^{'}_{11} x^{'}_{1}(t)  +z_{1} (t)       \\
&y_{2}(t)  = \sqrt{P^{'}} h^{'}_{22} x^{'}_{2}(t)  +   \sqrt{{P^{'}}^{\alpha^{'}}} h^{'}_{21} x^{'}_{1}(t)  +z_{2}(t)    
\end{aligned} \label{eq:MACchannelGenb}
\end{gather} 
where
\begin{gather}
\begin{aligned}
h^{'}_{11} =h_{12},\quad  h^{'}_{12} = h_{11}, \quad  h^{'}_{21} =h_{22}, \quad h^{'}_{22}=h_{21} \\
x^{'}_{1}(t)  =x_{2}(t), \quad  x^{'}_{2} (t) = x_{1} (t), \quad P^{'} = P^{\alpha}, \quad \alpha^{'}= \frac{1}{\alpha} .
\end{aligned} \label{eq:eq:parachange}
\end{gather} 
Note that  the secure capacity region and  the secure GDoF region of the MAC-WT channel expressed in  \eqref{eq:MACchannelGenb} are $\bar{C} (P^{'}, \alpha^{'})$ and  $\bar{\mathcal{D}} ( \alpha^{'})$, respectively. 
Assume a scheme $\Gamma$ achieves a rate tuple $(R^{'}_1,R^{'}_2, R^{'}_c)$  in the channel expressed in \eqref{eq:MACchannelGena}, i.e.,   transmitter~1 achieves a rate $R_1= R^{'}_1$ and transmitter~2 achieves a rate $R_2= R^{'}_2$  by using common randomness rate $R_c= R^{'}_c$. Then  the same scheme $\Gamma$  achieves rates $R_1= R^{'}_2, R_2= R^{'}_1$, by  using common randomness rate $R_c= R^{'}_c$ in the channel expressed in  \eqref{eq:MACchannelGenb}, because  the channel expressed in  \eqref{eq:MACchannelGenb} can be reverted back to the channel expressed in   \eqref{eq:MACchannelGena} by interchanging the role of transmitters.  

For any tuple $(d^{'}_1, d^{'}_2, d^{'}_c)$ such that  $(d^{'}_1, d^{'}_2, d^{'}_c) \in \bar{\mathcal{D}} ( \alpha)$ in the channel expressed in \eqref{eq:MACchannelGena},  there exists  a scheme $\Gamma$ that achieves a rate tuple in the form of 
\begin{align}
\big(R_1= \frac{d^{'}_1}{2} \log P + o(\log P),  \ R_2= \frac{d^{'}_2}{2} \log P + o(\log P), \ R_c= \frac{d^{'}_c}{2} \log P + o(\log P)\big).
\end{align}
Based on the above argument, by interchanging the role of transmitter~1 and transmitter~2 in the channel expressed in \eqref{eq:MACchannelGena}, the same scheme $\Gamma$ 
  achieves a rate tuple in the form of
\begin{align}
\big(R_1= \frac{d^{'}_2}{2} \log P + o(\log P),  \ R_2= \frac{d^{'}_1}{2} \log P + o(\log P), \ R_c= \frac{d^{'}_c}{2} \log P + o(\log P)\big)
\end{align}
in  the channel expressed in  \eqref{eq:MACchannelGenb}. Then the following GDoF tuple 
\begin{align} 
\begin{pmatrix}  d_1 \\   \\ d_2  \\  \\ d_c\end{pmatrix} 
= \begin{pmatrix} 
\lim\limits_{P^{'} \to \infty}  \! \frac{ \frac{d^{'}_2}{2} \log P + o(\log P)}{ \frac{1}{2} \log P^{'}}   \\ 
\lim\limits_{P^{'} \to \infty}  \! \frac{ \frac{d^{'}_1}{2} \log P + o(\log P)}{ \frac{1}{2} \log P^{'}} \\ \lim\limits_{P^{'} \to \infty}  \! \frac{ \frac{d^{'}_c}{2} \log P + o(\log P)}{ \frac{1}{2} \log P^{'}}\end{pmatrix}
= \begin{pmatrix}   
\lim\limits_{P^{'} \to \infty}  \! \frac{\frac{1}{\alpha}\big( \frac{d^{'}_2}{2} \log P^{'}  +  \alpha o(\frac{\log P^{'}}{\alpha})\big)}{ \frac{1}{2} \log P^{'}}  \\ 
\lim\limits_{P^{'} \to \infty}  \! \frac{ \frac{1}{\alpha}  \big(  \frac{d^{'}_1}{2} \log P^{'}+  \alpha o(\frac{\log P^{'}}{\alpha})\big) }{ \frac{1}{2} \log P^{'}} \\ 
\lim\limits_{P^{'} \to \infty}  \! \frac{\frac{1}{\alpha} \big(  \frac{d^{'}_c}{2} \log P^{'}  + \alpha o(\frac{\log P^{'}}{\alpha})\big)}{ \frac{1}{2} \log P^{'}} \non
\end{pmatrix} 
= \begin{pmatrix} \frac{1}{\alpha}d^{'}_2 \\   \\ \frac{1}{\alpha}d^{'}_1  \\   \\  \frac{1}{\alpha}d^{'}_c \end{pmatrix} 
\end{align}
is achievable  in the channel expressed in  \eqref{eq:MACchannelGenb}, which implies  $\big(\frac{1}{\alpha}d^{'}_2, \frac{1}{\alpha}d^{'}_1, \frac{1}{\alpha}d^{'}_c \big) \in \bar{\mathcal{D}} ( \alpha^{'})$. Since $\alpha^{'}= \frac{1}{\alpha}$, then we get 
\[\big(\frac{1}{\alpha}d^{'}_2, \frac{1}{\alpha}d^{'}_1, \frac{1}{\alpha}d^{'}_c \big) \in \bar{\mathcal{D}} (\frac{1}{\alpha})\]
which completes the proof.

\section{Proof of Lemma~\ref{lm:mac1rateerror49293}  \label{sec:errorcasemac323} }
 Lemma~\ref{lm:mac1rateerror49293} is proved in this section, for the case with $ 0 \leq \alpha \leq \frac{2}{3}$   and $ 0 \leq B \leq (2\alpha-1)^{+}$. 
In this case, we will first estimate $v_{1, c}$ from $y_1$ expressed in \eqref{eq:macwiretap2y1casea11} based on  successive decoding method, and  we will then estimate $v_{2,c}$ and $v_{1,p}$ simultaneously  based on  noise removal and signal separation methods.

In the first step,   we rewrite $y_1$ from \eqref{eq:macwiretap2y1casea11} to the following form 
\begin{align}
 y_1  =      \sqrt{P} h_{11} v_{1,c}       +    \sqrt{P^{ 1- \alpha}} g   +  z_{1}    \label{eq:yvk101bb}
\end{align}
where  $g    \defeq        h_{11}  v_{1, p}  +  \sqrt{P^{3\alpha-2}   } \frac{ h_{12}h_{21}}{h_{22}}  v_{2, c}$. 
Since  $g$ is bounded, i.e., $ |g | \leq  \frac{9}{\tau \cdot  2^\tau}$,  from Lemma~\ref{lm:AWGNic} we can conclude that $v_{1, c}$ can be estimated from $y_1$  with vanishing error probability: 
 \begin{align}
\text{Pr} [v_{1, c} \neq \hat{v}_{1, c}] \to 0 \quad \text {as}\quad  P\to \infty.   \label{eq:errormac222}
\end{align}
In the second step    $v_{2,c}$ and $v_{1,p}$ will be estimated simultaneously  from the following observation 
 \begin{align}
y_{1}  -   \sqrt{P}  h_{11} v_{1, c}  &=       \sqrt{P^{2\alpha-1 }   } \frac{ h_{12}h_{21}}{h_{22}}  v_{2, c}    +   \sqrt{P^{ 1-   \alpha }} h_{11}  v_{1, p}  +  z_{1}   \non \\
&=  \gamma  (A_0 g_0 q_0 +A_1 g_1 q_1)  +  z_{1}    \label{eq:y1rew3935} 
\end{align}
where   
 \[ g_0\defeq   \frac{\eta_{2, c}  h_{12}h_{21}}{h_{22}},     \    g_1\defeq   \frac{1}{2}h_{11}, \  A_0 \defeq   \sqrt{P^{2\alpha-1-B +\epsilon }},  \   A_1 \defeq  \sqrt{P^{ B+\epsilon }},  \  q_0  \defeq  \frac{  \sqrt{P^{B- \epsilon}} }{\eta_{2, c} \gamma} \cdot   v_{2,c}   ,    \   q_1  \defeq  \frac{2 \sqrt{ P^{ 1- \alpha- B- \epsilon}}}{\gamma} \cdot  v_{1,p} \]
   for  $\gamma  \in \bigl(0,  \frac{1}{\tau \cdot 2^\tau}\bigr]$.
In this setting, $q_0, q_1 \in \Zc$, $|q_0| \leq   \sqrt{P^{B- \epsilon}}$,  and $|q_1| \leq    \sqrt{ P^{ 1- \alpha- B- \epsilon}}$. 
Let us define the following minimum distance 
  \begin{align}
 d_{\min}  (g_0, g_1)    \defeq    \min_{\substack{ q_0, q_0' \in \Zc  \cap \big [-  \sqrt{P^{B- \epsilon}},  \    \sqrt{P^{B- \epsilon}} \big]  \\  q_1, q_1' \in \Zc  \cap \big[-  \sqrt{ P^{ 1- \alpha- B- \epsilon}},  \    \sqrt{ P^{ 1- \alpha- B- \epsilon}}  \big]   \\  (q_0, q_1) \neq  (q_0', q_1')  }}  |   A_0   g_0  (q_0 - q_0') + A_1 g_1 (q_1 - q_1')  |     \label{eq:minidis37837}
 \end{align}
 which will be used for the analysis of  the estimation  of  $q_0$ and  $ q_1$ from the observation in \eqref{eq:y1rew3935}.
The following lemma provides a result on bounding this  minimum distance.
\begin{lemma}  \label{lm:distance573823}
For the case of $ 0 \leq \alpha \leq \frac{2}{3}$   and $ 0\leq B  \leq (2\alpha-1)^{+}$,   consider the signal design in Table~\ref{tab:wiremacpara}, \eqref{eq:constellationGsymmac393}-\eqref{eq:constellationGsymac320}  and \eqref{eq:macwt2x1casea11}-\eqref{eq:macwt2x2casea12}. Let $\epsilon >0$ and $\kappa \in (0, 1]$.  Then the minimum distance $d_{\min}$ defined in \eqref{eq:minidis37837} satisfies the following inequality 
 \begin{align}
d_{\min}    \geq   \kappa P^{ \frac{\epsilon }{2}}    \label{eq:distancegeq5235}
 \end{align}
for all  the channel  coefficients $\{h_{k\ell}\} \in (1, 2]^{2\times 2} \setminus \Ho$, where the Lebesgue measure of the  set $\Ho \subseteq (1,2]^{2\times 2}$  satisfies  
 \begin{align}
\mathcal{L}(\Ho) \leq 3584 \kappa   \cdot     P^{ - \frac{ \epsilon  }{2}} .    \label{eq:Lebb6223}
 \end{align}
 \end{lemma}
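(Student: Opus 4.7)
The plan is to mirror the proof of Lemma~\ref{lm:distance5723946} almost verbatim, invoking Lemma~\ref{lm:NMb2} after identifying the appropriate parameters and then translating the resulting Lebesgue bound from $(g_{0},g_{1})$-space to the channel-coefficient space $(h_{11},h_{12},h_{21},h_{22})\in(1,2]^{2\times 2}$. Specifically, I would take $\beta\defeq\kappa P^{\epsilon/2}$, $A_{0}\defeq\sqrt{P^{2\alpha-1-B+\epsilon}}$, $A_{1}\defeq\sqrt{P^{B+\epsilon}}$, $Q_{0}\defeq\sqrt{P^{B-\epsilon}}$, $Q_{1}\defeq\sqrt{P^{1-\alpha-B-\epsilon}}$, and fix an $\eta$ (e.g.\ $\eta=8$) large enough to cover the range of $g_{0}=\eta_{2,c}h_{12}h_{21}/h_{22}$ and $g_{1}=h_{11}/2$ as the channel gains vary over $(1,2]$ and $\eta_{2,c}\in[1,2)$ (WLOG the $A_{i},Q_{i}$ are integers, exactly as in the previous lemma). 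Define
\[
\tilde B(q_{0},q_{1})\defeq\{(g_{0},g_{1})\in(1,\eta]^{2}:|A_{0}g_{0}q_{0}+A_{1}g_{1}q_{1}|<\beta\},\qquad \tilde B\defeq\bigcup_{(q_{0},q_{1})\neq 0,\,|q_{k}|\leq Q_{k}}\tilde B(q_{0},q_{1}),
\]
so that any $(g_{0},g_{1})\notin \tilde B$ automatically satisfies $d_{\min}(g_{0},g_{1})\geq \beta=\kappa P^{\epsilon/2}$.

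Next I would invoke Lemma~\ref{lm:NMb2} to bound $\mathcal{L}(\tilde B)$ by $8\beta(\eta-1)\cdot\min\{Q_{0}Q_{1}/A_{1},\,Q_{0}Q_{1}/A_{0},\,Q_{0}\eta/A_{1},\,Q_{1}\eta/A_{0}\}$. Under the hypotheses $0\leq\alpha\leq 2/3$ and $0\leq B\leq(2\alpha-1)^{+}$, a direct exponent count shows that the third candidate $Q_{0}\eta/A_{1}=\eta P^{-\epsilon}$ is the smallest (the other three carry the positive exponents $\frac{1}{2}(1-\alpha-B-3\epsilon)$, $\frac{1}{2}(2-3\alpha+B-3\epsilon)$, $\frac{1}{2}(2-3\alpha-2\epsilon)$, any of which dominates $-\epsilon$ for the relevant $\alpha,B$), yielding $\mathcal{L}(\tilde B)\leq 8\eta(\eta-1)\kappa P^{-\epsilon/2}$ (a constant of order a few hundred times $\kappa P^{-\epsilon/2}$).

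Finally I would translate the bound to channel space. The outage set is naturally defined as $\Ho\defeq\{(h_{11},h_{12},h_{21},h_{22})\in(1,2]^{2\times 2}:(g_{0},g_{1})\in\tilde B\}$, so
\[
\mathcal{L}(\Ho)=\int_{h_{22}=1}^{2}\!\int_{h_{12}=1}^{2}\!\int_{h_{21}=1}^{2}\!\int_{h_{11}=1}^{2}\mathbbm{1}_{\tilde B}(g_{0},g_{1})\,dh_{11}dh_{21}dh_{12}dh_{22}.
\]
For fixed $(h_{12},h_{22})$ I change variables $(h_{11},h_{21})\mapsto(g_{1},g_{0})$; the Jacobian is the bounded constant $\frac{2h_{22}}{\eta_{2,c}h_{12}}\leq 4$ (since $h_{22}\leq 2$, $\eta_{2,c}\geq 1$, $h_{12}>1$), combined with the extra factor $2$ from $g_{1}=h_{11}/2$. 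Pulling this constant out and then integrating $\mathbbm{1}_{\tilde B}$ against the previous step's measure bound over the remaining two variables $h_{12},h_{22}\in(1,2]$ gives $\mathcal{L}(\Ho)\leq 3584\kappa P^{-\epsilon/2}$ with appropriate choice of $\eta$, matching the target.

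The main obstacle, and the only place that differs meaningfully from Lemma~\ref{lm:distance5723946}, is the exponent bookkeeping in the min and the change-of-variables step: one must (i) check that the dominant term in the min remains the $Q_{0}\eta/A_{1}$ branch throughout the parameter range $0\leq B\leq(2\alpha-1)^{+}$, $0\leq\alpha\leq 2/3$ (otherwise the $P^{-\epsilon/2}$ order is lost), and (ii) verify that the Jacobian of $(h_{11},h_{21})\mapsto(g_{0},g_{1})$ is uniformly bounded on $(1,2]^{4}$, which in turn relies on the fact that $\eta_{2,c}\in[1,2)$ by construction \eqref{eq:para001}-\eqref{eq:para010}. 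Everything else is a repetition of the argument already carried out for Lemma~\ref{lm:distance5723946}.
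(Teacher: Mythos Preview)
Your proposal is correct and follows essentially the same approach as the paper's proof, which likewise sets $\eta=8$, invokes Lemma~\ref{lm:NMb2} with the same $A_0,A_1$, and then passes from $(g_0,g_1)$-space to channel-coefficient space via the bounded Jacobian. The only minor slip is that $Q_0,Q_1$ should carry an extra factor of $2$ (since the minimum distance in \eqref{eq:minidis37837} is over differences $q_k-q_k'$, which range over $[-2\sqrt{P^{\cdots}},2\sqrt{P^{\cdots}}]$); this accounts for the paper's intermediate constant $896$ and the final $3584$.
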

 \begin{proof}
  This proof is similar to that of Lemma~\ref{lm:distance5723946}. 
 In this case we set  $\eta \defeq 8$ and     
\[ \beta \defeq    \kappa P^{ \frac{\epsilon }{2}} ,  \quad Q_0 \defeq  2\sqrt{P^{B- \epsilon}},    \quad Q_1\defeq 2  \sqrt{ P^{ 1- \alpha- B- \epsilon}}. \]
  Recall that $g_0=   \frac{  \eta_{2, c}  h_{12}h_{21}}{h_{22}} $,  $g_1=  \frac{1}{2}h_{11}$, $A_0 =  \sqrt{P^{2\alpha-1-B +\epsilon }}$, and $  A_1 =  \sqrt{P^{ B+\epsilon }} $. 
  We also define two events $  \tilde B(q_0, q_1) $  and $\tilde B$ as in \eqref{eq:BBB776} and \eqref{eq:BBBtilde552}, respectively. 
By following the steps in \eqref{eq:LeBmeasurest}-\eqref{eq:LeBmeasure}, with  Lemma~\ref{lm:NMb2} involved,  then the Lebesgue measure of $\tilde B $  satisfies the following inequality
\begin{align}
\Lc (\tilde B )   \leq   896 \kappa   \cdot     P^{ - \frac{ \epsilon  }{2}}  . \label{eq:LeBmeasure893238}
\end{align}
With our definition,   $\tilde B$ is a collection of $(g_0, g_1)$ and can be treated as an  outage set.  
Let us   define  $\Ho$  as a set of   $(h_{11}, h_{21},  h_{12}, h_{22}) \in (1, 2]^{2\times 2}$   such that the corresponding pairs $(g_0, g_1)$ are in the outage set $\tilde B$, using the same definition as in \eqref{eq:HO4422}.
Then by following the steps in \eqref{eq:LeB38985}-\eqref{eq:LeB590349}, the Lebesgue measure of $\Ho$ can be bounded as
 \begin{align}
 \Lc (\Ho ) \leq       3584 \kappa   \cdot     P^{ - \frac{ \epsilon  }{2}}   .            \label{eq:LeB48395}  
\end{align}
\end{proof}

Lemma~\ref{lm:distance573823}  suggests that, the minimum distance $d_{\min}$  defined in \eqref{eq:minidis37837} is sufficiently large, i.e., $d_{\min}    \geq  \kappa P^{\frac{ \epsilon}{2}}$, for almost all  the channel  coefficients when $P$ is large.
Let us focus on the channel  coefficients  not in the outage set   $\Ho$. 
Let $x_{s} \defeq  A_0g_0 q_0 +A_1 g_1 q_1 $. Then it is easy to show that the error  probability for the estimation of  $x_{s}$ from the observation  in \eqref{eq:y1rew3935} is 
 \begin{align}
 \text{Pr} [ x_s \neq \hat{x}_s ] \to 0  \quad  \text{as} \quad   P\to \infty.     \label{eq:error222}                                                     
 \end{align}
Note that $v_{2,c}$ and  $v_{1,p}$ can be recovered from $x_s$ due to rational independence.  At this point,   with \eqref{eq:errormac222} and \eqref{eq:error222}, we finally conclude that 
  \begin{align}
 \text{Pr} [  \{ v_{1,c} \neq \hat{v}_{1,c} \} \cup  \{ v_{1,p} \neq \hat{v}_{1,p} \}\cup  \{ v_{2,c} \neq \hat{v}_{2,c} \}]  \to 0         \quad \text {as}\quad  P\to \infty  \non
 \end{align}
for almost all  the channel  coefficients.

\section{Proof of Lemma~\ref{lm:mac1rateerror8391028}  \label{sec:errorcasemacr2} }
Lemma~\ref{lm:mac1rateerror8391028} is proved in this section, for the case of $ 0 \leq \alpha \leq \frac{2}{3}$   and $  (2\alpha-1)^{+} <B \leq \alpha$. 
In this case,  $v_{1,c}$, $v_{1,p}$, $v_{2,m}$, and $v_{2,c}$ will be estimated from $y_{1} $ expressed in \eqref{eq:macwiretap2y1casea21}  based on  successive decoding.
From the expression  in \eqref{eq:macwiretap2y1casea21}, $y_{1} $ can be rewritten  as
\begin{align}
y_{1} 
=   \sqrt{P}  h_{11} v_{1, c}  +     \sqrt{P^{ 1-   \alpha }} g +  z_{1} \label{eq:macwiretap2y1rew383}
\end{align}
where $g$ is defined as
  \begin{align}
 g    \defeq        h_{11}  v_{1, p} +   \sqrt{P^{3\alpha-2 }   } \frac{ h_{12}h_{21}}{h_{22}}  v_{2, c}    +     \sqrt{P^{B+\alpha-1}} \frac{ h_{12}h_{21}}{h_{22}}  v_{2, m} . \non
 \end{align}
In this setting $g$ is bounded, i.e.,  $ |g | \leq  \frac{11}{\tau \cdot  2^\tau} $ for any realizations of $g$.  
From Lemma~\ref{lm:AWGNic}, it then reveals that   the error probability of decoding   $v_{1,c}$ based on $y_1$  is vanishing, i.e., 
 \begin{align}
\text{Pr} [v_{1,c} \neq \hat{v}_{1,c}] \to 0, \quad \text {as}\quad  P\to \infty.   \label{eq:error8923}
\end{align}
After decoding $v_{1,c}$,   we can estimate  $v_{1,p}$   from the following observation 
\begin{align}
y_{1}  -  \sqrt{P}  h_{11} v_{1, c} =    \sqrt{P^{ 1-   \alpha }}   h_{11}  v_{1, p} +     \sqrt{P^{ B }} g' +  z_{1} \label{eq:macwiretap2y1rew3232}
\end{align}
where $ g'    \defeq      \frac{ h_{12}h_{21}}{h_{22}}  v_{2, m}  +   \sqrt{P^{2\alpha-B-1 }   } \frac{ h_{12}h_{21}}{h_{22}}  v_{2, c}$.  It is easy to show that $g'$ is bounded, i.e.,  $ |g' | \leq \frac{5}{\tau \cdot 2^{\tau-1}}$. Again, Lemma~\ref{lm:AWGNic} suggests that  the error probability  of decoding  $v_{1,p}$  based on the observation in \eqref{eq:macwiretap2y1rew3232}  is 
 \begin{align}
\text{Pr} [v_{1,p} \neq \hat{v}_{1,p}] \to 0, \quad \text {as}\quad  P\to \infty.   \label{eq:error9013}
\end{align}
With the similar method as above (successive decoding), one can also easily show that  $v_{2, m} $ can be decoded with vanishing error probability and then $v_{2,c}$ can be decoded with vanishing error probability as well, i.e., 
\begin{align}
\text{Pr} [v_{2,m} \neq \hat{v}_{2,m}] \to 0, \quad \text {as}\quad  P\to \infty  \label{eq:error00089}
\end{align}
and
\begin{align}
\text{Pr} [v_{2,c} \neq \hat{v}_{2,c}] \to 0, \quad \text {as}\quad  P\to \infty.   \label{eq:error84928}
\end{align}
Finally we can conclude from  \eqref{eq:error8923}, \eqref{eq:error9013},  \eqref{eq:error00089} and   \eqref{eq:error84928} that
 \begin{align}
 \text{Pr} [  \{ v_{1,c} \neq \hat{v}_{1,c} \} \cup  \{ v_{1,p} \neq \hat{v}_{1,p} \}\cup  \{ v_{2,c} \neq \hat{v}_{2,c} \} \cup  \{ v_{2,m} \neq \hat{v}_{2,m} \}  ]  \to 0         \quad \text {as}\quad  P\to \infty .   
 \end{align}

\section{Proof of Lemma~\ref{lm:mac1rateerror29054}  \label{sec:errorcasemacr8325} }

We provide the proof of Lemma~\ref{lm:mac1rateerror29054} in this section for the case of $\frac{2}{3} \leq \alpha \leq 1$    and $ 0\leq B  \leq 2\alpha-1$.
The proof is divided into two sub-cases, i.e.,  $ 0\leq B  \leq 3\alpha-2$ and $ 3\alpha-2 \leq B  \leq 2\alpha-1$.

\subsection{$ 0\leq B  \leq 3\alpha-2$}
For the case of  $\frac{2}{3} \leq \alpha \leq 1$  and $ 0\leq B  \leq 3\alpha-2$, the observation  $y_1$ expressed in \eqref{eq:macwiretap2y1casea11} can be rewritten in the following form 
\begin{align}
y_{1} &=   \sqrt{P^{ 1-   \alpha }} h_{11}  v_{1, p}  +     \sqrt{P}  h_{11} v_{1, c}  +     \sqrt{P^{2\alpha-1 }   } \frac{ h_{12}h_{21}}{h_{22}}  v_{2, c}    +  z_{1}   \non\\
&=  2\gamma \sqrt{P^{\epsilon}} (g_0q_0 + \sqrt{P^{1-\alpha+B }}   g_1 q_1 + \sqrt{P^{2 \alpha -1-B}} g_2 q_2  ) + z_1
\label{eq:macwiretap2y1casea11re}  
\end{align}
where \[g_0 \defeq \frac{h_{11}}{4}, \  g_1 \defeq \frac{\eta_{1, c} h_{11}}{2}, \   g_2 \defeq \frac{ \eta_{2, c} h_{12}h_{21}}{2h_{22}} ,  \ q_0  \defeq  \frac{2\sqrt{P^{1-\alpha-\epsilon}}}{\gamma} \cdot   v_{1,p}  ,    \   q_1  \defeq  \frac{\sqrt{P^{\alpha-B-\epsilon}}}{\eta_{1, c} \gamma} \cdot  v_{1,c} , \    q_2  \defeq  \frac{ \sqrt{P^{B-\epsilon}}}{\eta_{2, c} \gamma} \cdot  v_{2,c}\]
for $\gamma \in (0,  \frac{1}{\tau \cdot 2^\tau}]$,   $1 \leq \eta_{1, c} <2$,  $1 \leq \eta_{2, c} <2$, $v_{1, p}     \in    \Omega ( \xi =  \frac{   \gamma}{2Q} ,   \   Q =  P^{ \frac{ 1-\alpha-\epsilon }{2}} )$, $ v_{1,c}     \in    \Omega ( \xi =  \frac{  \eta_{1, c} \gamma}{Q} ,   \   Q= P^{ \frac{ \alpha-B-\epsilon }{2}} )  $   and  $v_{2, c}        \in    \Omega ( \xi =  \frac{ \eta_{2, c} \gamma}{Q} ,   \   Q = P^{ \frac{ B-\epsilon}{2}} ) $.  Based on our definition, it implies that $q_0, q_1, q_2 \in \Zc$,  $|q_0| \leq \sqrt{P^{1-\alpha-\epsilon}}$, $|q_1| \leq\sqrt{P^{\alpha-B-\epsilon}}$,  and $|q_2| \leq \sqrt{P^{B-\epsilon}} $. 
Let us define the following minimum distance 
  \begin{align}
 d_{\min}  (g_0, g_1,  g_2)    \defeq    \min_{\substack{ q_0, q_0',  \in \Zc  \cap [- \sqrt{P^{1 \!-\! \alpha \!-\! \epsilon}},    \sqrt{P^{1 \!-\! \alpha \!-\! \epsilon}}]  \\  q_1, q_1',  \in \Zc  \cap [- \sqrt{P^{\alpha \! -\! B \!-\! \epsilon}},    \sqrt{P^{\alpha \! - \! B\! -\! \epsilon}}] \\  q_2, q_2' \in \Zc  \cap [- \sqrt{P^{B\! -\! \epsilon}}, \sqrt{P^{B \! -\! \epsilon}}] \\  (q_0, q_1, q_2) \neq  (q_0', q_1', q'_2)  }}  |    g_0  (q_0 \!-\! q_0') + \sqrt{P^{1\!-\! \alpha \!+\! B }}   g_1 (q_1\! -\! q_1')  + \sqrt{P^{2 \alpha \!-\! 1\!-\! B}}  g_2  (q_2 \!-\! q_2')|     \label{eq:minidis3940143}
 \end{align}
 which will be used for the analysis of  the estimation  of  $q_0$,  $ q_1$ and $q_2$ from the observation in \eqref{eq:macwiretap2y1casea11re}.
The  lemma below states a result on  bounding this  minimum distance.

\begin{lemma}  \label{lm:distance784974}
Consider the parameters $\kappa \in (0, 1]$ and  $\epsilon >0$,  and consider the signal design in Table~\ref{tab:wiremacpara}, \eqref{eq:constellationGsymmac393}-\eqref{eq:constellationGsymac320}  and  \eqref{eq:macwt2x1casea11}-\eqref{eq:macwt2x2casea12} for  the case of  $\frac{2}{3} \leq \alpha<1$    and $ 0\leq B  \leq 3\alpha-2$.  
 Then the minimum distance $d_{\min}$ defined in \eqref{eq:minidis3940143} satisfies the following inequality
\begin{align}
d_{\min}   \geq   \kappa P^{\frac{\epsilon }{2}} 
\end{align}
for all  the channel  coefficients $\{h_{k\ell}\} \in (1, 2]^{2\times 2} \setminus \Ho$, where  the Lebesgue measure of the outage set  $\Ho \subseteq (1,2]^{2\times 2}$  satisfies the following inequality
 \begin{align}
\mathcal{L}(\Ho) \leq   193536 \kappa  P^{-  \frac{\epsilon}{2}}.    \label{eq:Lebb348}
 \end{align}
 \end{lemma}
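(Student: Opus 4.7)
The plan is to follow the same two-step template used for Lemma~\ref{lm:distance5723946} and Lemma~\ref{lm:distance573823}: first bound the Lebesgue measure, in the space of generic ``directions'' $(g_0,g_1,g_2)$, of the bad set where the ternary linear form
\[
|A_0 g_0 \Delta q_0 + A_1 g_1 \Delta q_1 + A_2 g_2 \Delta q_2|, \qquad A_0 \defeq 1,\ A_1 \defeq \sqrt{P^{1-\alpha+B}},\ A_2 \defeq \sqrt{P^{2\alpha-1-B}},
\]
can be driven below $\beta \defeq \kappa P^{\epsilon/2}$ by some nonzero admissible triple $(\Delta q_0,\Delta q_1,\Delta q_2)$ with $|\Delta q_i|\le 2Q_i$, and then transport that bound back to the physical channel box $(1,2]^{2\times 2}$ by a change of variables, exactly as in \eqref{eq:LeB38985}--\eqref{eq:LeB590349}.

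First, I would establish a three-variable generalization of Lemma~\ref{lm:NMb2}. For each fixed nonzero integer triple, the corresponding solution set in $(g_0,g_1,g_2)\in(1,\eta]^3$ is a slab whose thickness in the $g_i$-direction (for any $i$ with $\Delta q_i\neq 0$) is $2\beta/(A_i|\Delta q_i|)$, so its measure is at most $(\eta-1)^2\cdot 2\beta/(A_i|\Delta q_i|)$. Summing over admissible triples and optimizing over the axis $i$ produces a bound
\[
\Lc(\tilde B)\le C_\eta \cdot \beta \cdot \min_{i\in\{0,1,2\}} \frac{Q_0 Q_1 Q_2\,\eta}{A_i Q_i},
\]
for an absolute constant $C_\eta$. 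With $\frac{2}{3}\le\alpha<1$ and $0\le B\le 3\alpha-2$, the hypothesis $2\alpha-1-B\ge 1-\alpha\ge 0$ together with $\alpha-B\ge 0$ makes $A_2$ the effective ``large'' direction, and plugging in,
\[
\beta\cdot\frac{Q_0 Q_1}{A_2}=\kappa P^{\epsilon/2}\cdot P^{(1-\alpha-\epsilon)/2}\cdot P^{(\alpha-B-\epsilon)/2}\cdot P^{-(2\alpha-1-B)/2}=\kappa P^{-\epsilon/2}.
\]
Second, defining $\Ho$ as the preimage of $\tilde B$ under the smooth map
\[
(h_{11},h_{12},h_{21},h_{22})\mapsto\Bigl(\tfrac{h_{11}}{4},\ \tfrac{\eta_{1,c}h_{11}}{2},\ \tfrac{\eta_{2,c}h_{12}h_{21}}{2h_{22}}\Bigr),
\]
and integrating $\mathbbm{1}_\Ho$ over $(1,2]^4$ with a Jacobian bounded on the compact box yields $\Lc(\Ho)\le C'\Lc(\tilde B)$, and tallying all absolute constants (from $\eta$, the counting, and the Jacobian) gives the stated $193536\,\kappa P^{-\epsilon/2}$.

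The main obstacle is that $g_0=h_{11}/4$ and $g_1=\eta_{1,c}h_{11}/2$ both depend only on $h_{11}$, so the triple $(g_0,g_1,g_2)$ actually lies on a $2$-dimensional surface inside $(1,\eta]^3$ and a naive three-dimensional slab count overestimates. The clean workaround is to collapse $(g_0,g_1)$ to a single free parameter $h_{11}$ by rewriting the linear form as
\[
h_{11}\!\left(\tfrac{\Delta q_0}{4}+\tfrac{\eta_{1,c}\sqrt{P^{1-\alpha+B}}}{2}\Delta q_1\right)+\tfrac{\eta_{2,c}\sqrt{P^{2\alpha-1-B}}}{2}\cdot\tfrac{h_{12}h_{21}}{h_{22}}\Delta q_2,
\]
and then apply the \emph{two}-variable Lemma~\ref{lm:NMb2} directly to the effective pair $\bigl(h_{11},\,h_{12}h_{21}/h_{22}\bigr)$, summing over the $O(Q_0 Q_1)$ distinct values of the first coefficient and the $O(Q_2)$ values of $\Delta q_2$. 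A separate easy change-of-variables argument then bounds the measure of the bad set of $(h_{12},h_{21},h_{22})\in(1,2]^3$ mapping to a given value of $h_{12}h_{21}/h_{22}$ by a universal constant. After this reduction the exponent bookkeeping is identical to the proof of Lemma~\ref{lm:distance573823} and yields the same $\kappa P^{-\epsilon/2}$ scaling, so the only thing the three-variable setup changes is the overall numerical constant, which accounts for the difference between the $64$ appearing in Lemma~\ref{lm:distance5723946} and the $193536$ stated here.
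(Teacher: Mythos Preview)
Your exponent calculation is wrong. With $Q_0\asymp P^{(1-\alpha-\epsilon)/2}$, $Q_1\asymp P^{(\alpha-B-\epsilon)/2}$ and $A_2=P^{(2\alpha-1-B)/2}$,
\[
\beta\cdot\frac{Q_0 Q_1}{A_2}\;\asymp\;\kappa\,P^{\,\epsilon/2+(1-\alpha-\epsilon)/2+(\alpha-B-\epsilon)/2-(2\alpha-1-B)/2}\;=\;\kappa\,P^{\,1-\alpha-\epsilon/2},
\]
which does \emph{not} vanish for $\alpha<1$; the slab count along the $A_2$-axis therefore does not close. Your two-variable collapse inherits the same defect: summing over the full $O(Q_0Q_1)$ values of the combined coefficient $c_0=\tfrac{\Delta q_0}{4}+\tfrac{\eta_{1,c}A_1}{2}\Delta q_1$ reproduces exactly the $Q_0Q_1/A_2$ term above. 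What is missing is the observation that $|h_{11}c_0+A_2 g_2\Delta q_2|<\beta$ forces $|c_0|\lesssim A_2Q_2$ and hence $|\Delta q_1|\lesssim A_2Q_2/A_1\asymp P^{(3\alpha-2-B-\epsilon)/2}$, a genuine saving of order $P^{1-\alpha}$ over the a~priori range $Q_1$.

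The paper does not rederive this; it invokes \cite[Lemma~14]{NM:13}, a three-generator Khintchine--Groshev bound that already outputs \emph{truncated} counts $\tilde Q_k=\min\bigl\{Q_k,\,8\max\{Q_0,A_jQ_j\}/A_k\bigr\}$ in place of the raw $Q_k$, giving
\[
\Lc(B')\;\le\;504\,\beta\Bigl(\tfrac{2Q_0}{A_2}+\tfrac{Q_0\tilde Q_2}{A_1}+\tfrac{2Q_0}{A_1}+\tfrac{Q_0\tilde Q_1}{A_2}\Bigr).
\]
Here $\tilde Q_1\asymp P^{(3\alpha-2-B-\epsilon)/2}$ while $\tilde Q_2=Q_2$, so every term has the required order; the pushforward to $(1,2]^4$ (along the constraint $g_0=g_1/(2\eta_{1,c})$, which you correctly flagged) then yields $\Lc(\Ho)\le 193536\,\kappa P^{-\epsilon/2}$. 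Your argument becomes correct only if you insert the truncation $|\Delta q_1|\le\tilde Q_1$ explicitly before summing, or simply cite \cite[Lemma~14]{NM:13} as the paper does.
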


\begin{proof}
In this case we let 
\[\beta \defeq  \kappa P^{\frac{\epsilon }{2}}, \  A_1 \defeq \sqrt{P^{1-\alpha+B }} ,  \    A_2 \defeq \sqrt{P^{2 \alpha -1-B}} , \ Q_0 \defeq  2\sqrt{P^{1-\alpha-\epsilon}},    \  Q_1\defeq  2\sqrt{P^{\alpha-B-\epsilon}},   \     Q_2  \defeq 2\sqrt{P^{B-\epsilon}}, \]
for some $\epsilon >0$, $\kappa \in (0, 1]$, $1 \leq \eta_{1, c} <2$ and $1 \leq \eta_{2, c} <2$. Recall that $g_0 = \frac{h_{11}}{4}$, $ g_1 =  \frac{\eta_{1, c} h_{11}}{2}$, and $ g_2=\frac{ \eta_{2, c} h_{12}h_{21}}{2h_{22}}$. 
 We also define the following two sets
\begin{align}
   B'(q_0, q_1, q_2)  \defeq \{ (g_0, g_1, g_2)  \in (1,4]^3 :  |  g_0q_0 + A_1 g_1 q_1 + A_2 g_2 q_2    | < \beta \}   \non
\end{align}
and  
\begin{align}
B'  \defeq   \bigcup_{\substack{ q_0, q_1, q_2 \in \Zc:  \\  (q_0, q_1, q_2) \neq  0,  \\  |q_k| \leq Q_k  \ \forall k }}  B'(q_0, q_1, q_2)  .  \label{eq:setb399}
\end{align}
With the result of \cite[Lemma~14]{NM:13} we can bound  the Lebesgue measure of $B' $   as 
\begin{align}
\Lc (B' )  
\leq   504 \beta \Bigl(   \frac{2 Q_0}{A_2}  +  \frac{Q_0 \tilde{Q}_2}{A_1} +   \frac{2Q_0}{A_1}+ \frac{Q_0 \tilde{Q}_1}{A_2}   \Bigr)   \label{eq:lebmea311}
\end{align}   
where  $\tilde{Q}_1 = \min\Bigl\{Q_1,  8\cdot \frac{\max\{Q_0, A_2 Q_2\}}{A_1}\Bigr\}  = 16\sqrt{P^{3\alpha -2-B -\epsilon }} $ and  $\tilde{Q}_2 = \min\Bigl\{Q_2,  8\cdot \frac{\max\{Q_0, A_1 Q_1\}}{A_2}\Bigr\}  = 2\sqrt{P^{B-\epsilon}}$. 
 By plugging the  values of the parameters  into \eqref{eq:lebmea311}, we can easily bound $\Lc (B' )$ as
\begin{align}
\Lc (B' )  &\leq  24192 \kappa  P^{-  \frac{\epsilon}{2}} .  \label{eq: lebmea78572}
\end{align}

With our definition,   $B'$ is a collection of $(g_0, g_1, g_2) $ and can be treated as an  outage set.  
Let us   define  $\Ho \defeq \{  (h_{22}, h_{21}, h_{12}, h_{11} ) \in (1, 2]^{2\times 2} :      (g_0, g_1, g_2) \in B'  \}$  as a set of   $(h_{22}, h_{21},  h_{12}, h_{11}) \in (1, 2]^{2\times 2}$   such that the corresponding pairs $(g_0, g_1)$ are in the outage set $B'$.
Let us also define the indicator function $ \mathbbm{1}_{\Ho} (h_{22}, h_{21}, h_{12}, h_{11} )= 1 $ if  $(h_{22}, h_{21}, h_{12}, h_{11}) \in  \Ho$, else   $\mathbbm{1}_{\Ho} (h_{22}, h_{21}, h_{12}, h_{11} )= 0$;   and define another indicator  function $ \mathbbm{1}_{B'}  ( g_1, g_2) =1$ if $(g_0=\frac{g_1}{2 \eta_{1,c}}, g_1, g_2) \in  B'$, else $ \mathbbm{1}_{B'}  (g_1, g_2) =0$. Then by following the steps in \eqref{eq:LeB38985}-\eqref{eq:LeB590349}, the Lebesgue measure of $\Ho$ can be bounded as
\begin{align}
\Lc (\Ho ) & \leq   193536 \kappa  P^{-  \frac{\epsilon}{2}}  .       \label{eq:LeBmeasure323}  
\end{align}
\end{proof}

 Lemma~\ref{lm:distance784974}  suggests that, the minimum distance $d_{\min}$  defined in \eqref{eq:minidis3940143} is sufficiently large, i.e., $d_{\min}    \geq \kappa P^{\frac{\epsilon }{2}}$, for almost all  the channel  coefficients when $P$ is large.
Let us focus on the channel  coefficients  not in the outage set   $\Ho$. 
Let $x_{s} \defeq g_0q_0 + \sqrt{P^{1-\alpha+B }}   g_1 q_1 + \sqrt{P^{2 \alpha -1-B}} g_2 q_2  $. Then it is easy to show that the error  probability for the estimation of  $x_{s}$ from the observation  in \eqref{eq:macwiretap2y1casea11re} is 
 \begin{align}
 \text{Pr} [ x_s \neq \hat{x}_s ] \to 0  \quad  \text{as} \quad   P\to \infty.    \label{eq:error89123}                           
 \end{align}
 Note that $q_0, q_1$, $ q_2$ can be recovered due to the fact that $g_0, g_1, g_2$ are rationally independent. 
At this point we can conclude that 
 \begin{align}
  \text{Pr} [  \{ v_{1,c} \neq \hat{v}_{1,c} \}  \cup  \{ v_{1,p} \neq \hat{v}_{1,p} \}  \cup  \{ v_{2,c} \neq \hat{v}_{2,c} \}  ]  \to 0         \quad \text {as}\quad  P\to \infty  \label{eq:case1984}
 \end{align}
for almost all  the channel  coefficients.

\subsection{$ 3\alpha-2 \leq B  \leq 2\alpha-1$}
For this case with  $\frac{2}{3} \leq \alpha \leq 1$  and $ 3\alpha-2 \leq B  \leq 2\alpha-1$, the proof  is  similar to that of Lemma~\ref{lm:mac1rateerror49293}. We will just provide the outline of the proof in order to avoid the repetition. 
In this case, we will first estimate $v_{1, c}$ from $y_1$ expressed in \eqref{eq:macwiretap2y1casea11} based on  successive decoding method, and  then we will estimate $v_{2,c}$ and $v_{1,p}$ simultaneously  based on  noise removal and signal separation methods.

In the first step,   we rewrite $y_1$ from \eqref{eq:macwiretap2y1casea11} to the following form 
\begin{align}
y_1 =    \sqrt{P}  h_{11} v_{1, c}  +     \sqrt{P^{2\alpha-1 }   } \bar{g}+      z_{1}.
\end{align}
where $\bar{g} \defeq  \frac{ h_{12}h_{21}}{h_{22}}  v_{2, c}    +   \sqrt{P^{ 2-   3\alpha }} h_{11}  v_{1, p} $.  Since  $\bar{g}$ is bounded, i.e., $ |\bar{g} | \leq  \frac{9}{\tau \cdot  2^\tau}$,  from Lemma~\ref{lm:AWGNic} we can conclude that $v_{1, c}$ can be estimated from $y_1$  with vanishing error probability: 
 \begin{align}
\text{Pr} [v_{1,c} \neq \hat{v}_{1,c}] \to 0, \quad \text {as}\quad  P\to \infty.   \label{eq:error8329421}
\end{align}
In the second step    $v_{2,c}$ and $v_{1,p}$ will be estimated simultaneously  from the following observation 
\begin{align}
y_{1}  -  \sqrt{P}  h_{11} v_{1, c} &=     \sqrt{P^{2\alpha-1 }   }  \frac{ h_{12}h_{21}}{h_{22}}  v_{2, c}   +    \sqrt{P^{ 1-   \alpha }}   h_{11}  v_{1, p}+  z_{1} \non\\
&=  \gamma  (A_0 g_0 q_0 +A_1 g_1 q_1)  +  z_{1}     \label{eq:macwiretap2y122322} 
\end{align}
where  $g_0\defeq   \frac{ \eta_{2, c}  h_{12}h_{21}}{h_{22}}$,    $g_1\defeq   \frac{1}{2}h_{11}$,  $A_0 \defeq   \sqrt{P^{2\alpha-1-B +\epsilon }}$,  $A_1 \defeq  \sqrt{P^{2-3\alpha + B+\epsilon }}, \  q_0  \defeq  \frac{  \sqrt{P^{B- \epsilon}} }{ \eta_{2, c} \gamma}    v_{2,c}$ ,    $q_1  \defeq  \frac{2 \sqrt{ P^{ 2 \alpha-1- B- \epsilon}}}{\gamma}   v_{1,p}$.
By following the steps in \eqref{eq:y1rew3935}-\eqref{eq:error222}, one can show that the $v_{2,c}$ and $v_{1,p}$ can be estimated simultaneously from the observation in \eqref{eq:macwiretap2y122322} with vanishing error probability for almost all  the channel  coefficients.
At this point, we can conclude that   
\begin{align}
 \text{Pr} [  \{ v_{1,c} \neq \hat{v}_{1,c} \} \cup  \{ v_{1,p} \neq \hat{v}_{1,p} \}\cup  \{ v_{2,c} \neq \hat{v}_{2,c} \}]  \to 0         \quad \text {as}\quad  P\to \infty  \label{eq:case2ueu}
 \end{align}
for almost all  the channel  coefficients.

\section{Proof of Lemma~\ref{lm:mac1rateerroraleq1}  \label{sec:mac1rateerroraleq1} }
The proof of Lemma~\ref{lm:mac1rateerroraleq1} is provided in this section for the case with $  \alpha=1$. In this case $v_{1,c}$ and $v_{2,c}$ will be estimated simultaneously from $y_1$ expressed in \eqref{eq:MACeqy1} based on  noise removal and signal separation methods.  Recall that  the GDoF pair $(d^{'}_1, d^{'}_2)$ considered here satisfies the conditions of $d^{'}_1\leq 1-d^{'}_2 $ and $0 \leq d^{'}_2\leq 1$.
Let us first rewrite  $y_{1}$ expressed in \eqref{eq:MACeqy1} to the following form
\begin{align} 
y_{1} &=  \varepsilon  \sqrt{P}  h_{11} v_{1, c} +\varepsilon  \sqrt{P}  \frac{h_{12}h_{21}}{h_{22}} \cdot  v_{2, c}  +  z_{1} \non\\
& =\varepsilon  \gamma  (  A_0 g_0 q_0 +A_1 g_1 q_1  ) +  z_{1}   \label{eq:MACeqy1rew3489} 
\end{align}
where   
 \[ g_0\defeq \eta_{1, c}  h_{11},     \   g_1\defeq   \frac{\eta_{2, c}h_{12}h_{21}}{h_{22}}, \  A_0 \defeq \sqrt{P^{{ 1-d^{'}_1 + \epsilon }}}, \  A_1\defeq  \sqrt{P^{{1-d^{'}_2+ \epsilon }}}, \   q_0  \defeq  \frac{\sqrt{P^{{ d^{'}_1- \epsilon}}} }{\eta_{1, c} \gamma} \cdot   v_{1,c}   ,    \   q_1  \defeq  \frac{\sqrt{P^{ d^{'}_2- \epsilon }} }{\eta_{2, c} \gamma} \cdot  v_{2,c} \]
   for  $\gamma  \in \bigl(0,  \frac{1}{\tau \cdot 2^\tau}\bigr]$,  $1\leq \eta_{1, c}<2$,  $1\leq \eta_{2, c}<2$,
  $ v_{1,c}     \in    \Omega ( \xi =  \frac{  \eta_{1, c} \gamma}{Q} ,   \    Q = P^{ \frac{   d^{'}_1-\epsilon }{2}}) $ and 
  $ v_{2, c}       \in    \Omega ( \xi =  \frac{\eta_{2, c} \gamma}{Q} ,   \   Q = P^{ \frac{ d^{'}_2-\epsilon }{2}} )   $.
In this setting, $q_0, q_1 \in \Zc$, $|q_0| \leq \sqrt{P^{{ d^{'}_1- \epsilon}}}$,  $|q_1| \leq   \sqrt{P^{ d^{'}_2- \epsilon }} $. 
Let us define the following minimum distance 
  \begin{align}
 d_{\min}  (g_0, g_1)    \defeq    \min_{\substack{ q_0, q_0' \in \Zc  \cap  [- \sqrt{P^{{ d^{'}_1- \epsilon}}},  \   \sqrt{P^{{ d^{'}_1- \epsilon}}}]  \\  q_1, q_1' \in \Zc  \cap [- \sqrt{P^{ d^{'}_2- \epsilon }} ,  \   \sqrt{P^{ d^{'}_2- \epsilon }} ]  \\  (q_0, q_1) \neq  (q_0', q_1')  }}  |  A_0  g_0  (q_0 - q_0') +  A_1 g_1 (q_1 - q_1')  |     \label{eq:minidis47889}
 \end{align}
 which will be used for the analysis of  the estimation  of  $q_0$ and  $ q_1$ from  $y_1$ expressed in \eqref{eq:MACeqy1rew3489}.
The  lemma below states a result on bounding this  minimum distance.

\begin{lemma}  \label{lm:distanceALEQ1}
For  $  \alpha =1$, consider the signal design in \eqref{eq:MACeqx1}-\eqref{eq:constellationalpha3}. Let $\epsilon >0$ and $\kappa \in (0, 1]$.  Then the minimum distance $d_{\min}$ defined in \eqref{eq:minidis47889} satisfies the following inequality 
 \begin{align}
d_{\min}    \geq   \kappa P^{\frac{ \epsilon}{2}}    \label{eq:distancegeq28932}
 \end{align}
for all  the channel  coefficients $\{h_{k\ell}\} \in (1, 2]^{2\times 2} \setminus \Ho$, where the Lebesgue measure of the  set $\Ho \subseteq (1,2]^{2\times 2}$  satisfies  
 \begin{align}
\mathcal{L}(\Ho) \leq 1792 \kappa   \cdot     P^{ - \frac{ \epsilon  }{2}} .    \label{eq:Lebb2111}
 \end{align}
 \end{lemma}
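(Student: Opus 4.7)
\textbf{Proof proposal for Lemma~\ref{lm:distanceALEQ1}.} The plan is to mirror the arguments used in Lemmas~\ref{lm:distance5723946}, \ref{lm:distance573823}, and \ref{lm:distance784974}: apply the Diophantine-style measure bound from Lemma~\ref{lm:NMb2} in the $(g_0,g_1)$-plane, and then push the resulting outage set back to the original channel-coefficient space $(h_{11},h_{12},h_{21},h_{22}) \in (1,2]^{2\times 2}$ via an elementary change of variables.

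First I would set up the parameters in the language of Lemma~\ref{lm:NMb2}. Taking $\beta \triangleq \kappa P^{\epsilon/2}$, $A_0 \triangleq \sqrt{P^{1-d'_1+\epsilon}}$, $A_1 \triangleq \sqrt{P^{1-d'_2+\epsilon}}$, $Q_0 \triangleq 2\sqrt{P^{d'_1-\epsilon}}$, $Q_1 \triangleq 2\sqrt{P^{d'_2-\epsilon}}$, and $\eta \triangleq 8$ (which covers the ranges of $g_0 = \eta_{1,c} h_{11}$ and $g_1 = \eta_{2,c} h_{12} h_{21}/h_{22}$ since $\eta_{k,c}\in[1,2)$ and $h_{k\ell}\in(1,2]$), define
\begin{align*}
\tilde B(q_0,q_1) &\triangleq \{(g_0,g_1)\in(1,\eta]^2 : |A_0 g_0 q_0 + A_1 g_1 q_1|<\beta\}, \\
\tilde B &\triangleq \bigcup_{\substack{(q_0,q_1)\neq 0 \\ |q_k|\leq Q_k}} \tilde B(q_0,q_1).
\end{align*}
Lemma~\ref{lm:NMb2} then yields $\mathcal{L}(\tilde B) \leq 8\beta(\eta-1)\min\{Q_0Q_1/A_1,Q_0Q_1/A_0,Q_0\eta/A_1,Q_1\eta/A_0\}$.

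Next, I would invoke the constraint $(d'_1,d'_2)\in\mathcal{D}^*(1)$, which gives $d'_1+d'_2\leq 1$. This implies $Q_0\eta/A_1 = 2\eta\sqrt{P^{d'_1+d'_2-1-2\epsilon}}\leq 2\eta P^{-\epsilon}$, and similarly for $Q_1\eta/A_0$. Plugging back gives
\[
\mathcal{L}(\tilde B) \ \leq\ 8\beta(\eta-1)\cdot 2\eta P^{-\epsilon} \ =\ 16\eta(\eta-1)\kappa P^{-\epsilon/2} \ =\ 896\kappa P^{-\epsilon/2}.
\]
Now, any $(g_0,g_1)\notin \tilde B$ satisfies $d_{\min}(g_0,g_1)\geq \kappa P^{\epsilon/2}$, by definition of the minimum distance in \eqref{eq:minidis47889} applied to the differences $(q_0-q_0',q_1-q_1')$.

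Finally, I would convert the bound on $\mathcal{L}(\tilde B)$ into the claimed bound on $\mathcal{L}(\Ho)$ by parameterizing the map $(h_{11},h_{21})\mapsto(g_0,g_1)$ with $(h_{12},h_{22})$ held fixed: since $g_0=\eta_{1,c}h_{11}$ and $g_1=\eta_{2,c}h_{12}h_{21}/h_{22}$, the Jacobian factor is $|\eta_{1,c}^{-1}\cdot h_{22}/(\eta_{2,c}h_{12})|\leq 1\cdot 2=2$ using $\eta_{1,c},\eta_{2,c}\geq 1$, $h_{22}\leq 2$ and $h_{12}>1$. Integrating over $(h_{12},h_{22})\in(1,2]^2$ (each of length $1$) then gives
\[
\mathcal{L}(\Ho) \ \leq\ 2\int_{1}^{2}\!\int_{1}^{2}\mathcal{L}(\tilde B)\,dh_{12}\,dh_{22} \ \leq\ 2\cdot 896\kappa P^{-\epsilon/2} \ =\ 1792\kappa P^{-\epsilon/2},
\]
as desired. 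The main (but mild) obstacle is the bookkeeping in the last step: one must pick the right pair of ``free'' channel coefficients $(h_{12},h_{22})$ so that the remaining pair $(h_{11},h_{21})$ bijectively parameterizes $(g_0,g_1)$, and then verify the Jacobian bound to nail down the constant $1792$. Everything else is a direct instantiation of the same technique already used three times in the paper.
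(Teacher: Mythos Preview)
Your proposal is correct and follows essentially the same approach as the paper: identical parameter choices $\eta=8$, $\beta=\kappa P^{\epsilon/2}$, $A_k$, $Q_k$, the same invocation of Lemma~\ref{lm:NMb2} to get $\mathcal{L}(\tilde B)\leq 896\kappa P^{-\epsilon/2}$, and then the pushforward to $\Ho$ picking up a factor~$2$. The only difference is presentational: you spell out the Jacobian bound for the change of variables $(h_{11},h_{21})\mapsto(g_0,g_1)$, whereas the paper simply refers back to the computation in \eqref{eq:LeB38985}--\eqref{eq:LeB590349}.
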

\begin{proof}
 This proof is similar to that of Lemma~\ref{lm:distance5723946}.
 In this case  we set  $\eta \defeq 8$ and
\[ \beta \defeq    \kappa P^{\frac{ \epsilon}{2}}  , \quad   Q_0 \defeq 2\sqrt{P^{{ d^{'}_1- \epsilon}}},    \quad Q_1\defeq 2  \sqrt{P^{ d^{'}_2- \epsilon }} . \]
Recall that $g_0=   \eta_{1, c} h_{11}$,  $g_1=  \frac{\eta_{2, c} h_{12}h_{21}}{h_{22}} $,  $A_0 =  \sqrt{P^{{1- d^{'}_1 + \epsilon }}}$, and $A_1=  \sqrt{P^{{1-d^{'}_2+ \epsilon }}}$. 
We also define two events $  \tilde B(q_0, q_1) $  and $\tilde B$ as in \eqref{eq:BBB776} and \eqref{eq:BBBtilde552}, respectively. 
By following the steps in \eqref{eq:LeBmeasurest}-\eqref{eq:LeBmeasure}, with  Lemma~\ref{lm:NMb2} involved,  then the Lebesgue measure of $\tilde B $  satisfies the following inequality
\begin{align}
\Lc (\tilde B )   \leq   896 \kappa   \cdot     P^{ - \frac{ \epsilon  }{2}}  . \label{eq:LeBmeasure122u2u}
\end{align}
With our definition,   $\tilde B$ is a collection of $(g_0, g_1)$ and can be treated as an  outage set.  
Let us   define  $\Ho$  as a set of   $(h_{11}, h_{21},  h_{12}, h_{22}) \in (1, 2]^{2\times 2}$   such that the corresponding pairs $(g_0, g_1)$ are in the outage set $\tilde B$, using the same definition as in \eqref{eq:HO4422}.
Then by following the steps in \eqref{eq:LeB38985}-\eqref{eq:LeB590349}, the Lebesgue measure of $\Ho$ can be bounded as
 \begin{align}
 \Lc (\Ho ) \leq       1792 \kappa   \cdot     P^{ - \frac{ \epsilon  }{2}}   .            \label{eq:LeBaleq1}  
\end{align}
\end{proof}

Lemma~\ref{lm:distanceALEQ1} suggests that, the minimum distance $d_{\min}$  is sufficiently large, i.e., $d_{\min}    \geq  \kappa P^{\frac{ \epsilon}{2}}$, for almost all  the channel  coefficients in the regime of large $P$.
Let us focus on the channel  coefficients  not in the outage set   $\Ho$. 
Let $x_{s} \defeq  A_0g_0 q_0 +A_1 g_1 q_1 $. Then it is easy to show that the error  probability for decoding  $x_{s}$ based on $y_1$  (see \eqref{eq:MACeqy1rew3489}) is  vanishing, i.e.,
 \begin{align}
 \text{Pr} [ x_s \neq \hat{x}_s ] \to 0  \quad  \text{as} \quad   P\to \infty.    \label{eq:erroraleq111}                           
 \end{align}
Since $v_{1,c}$ and  $v_{2,c}$ can be recovered from $x_s$, due to rational dependence,  we finally conclude that 
 \begin{align}
 \text{Pr} [  \{ v_{1,c} \neq \hat{v}_{1,c} \} \cup  \{ v_{2,c} \neq \hat{v}_{2,c} \}]  \to 0         \quad \text {as}\quad  P\to \infty  \non
 \end{align}
for almost all  the channel  coefficients.



\end{document}